\def\~{{\sim}}
\def\lb{\linebreak}
\newcommand{\user}[2]{{#2}}
\providecommand\phantomsection{}
\theoremstyle{definition}
\newtheorem{defn}{Definition}[chapter]
\newtheorem{pr}[defn]{Proposition}
\newtheorem{remark}[defn]{Remark}
\newtheorem{theorem}[defn]{Theorem}
\newtheorem{lemma}[defn]{Lemma}
\newtheorem{ex}[defn]{Example}
\let\Oldcdots\cdots
\renewcommand{\cdots}{{\mathsmaller{\Oldcdots}}}
\let\Oldwedge\wedge
\renewcommand{\wedge}{{\mathsmaller{\Oldwedge}}}
\let\Oldcirc\circ
\renewcommand{\circ}{{\mathsmaller{\mathsmaller{\Oldcirc}}\,}}
\begin{document}
\begin{titlepage}
\begin{center}
{\large PALACK\'Y UNIVERSITY OLOMOUC} 
\end{center}
\begin{center}
{FACULTY OF SCIENCE} 
\end{center}
\begin{center}
{DEPARTMENT OF ALGEBRA AND GEOMETRY} 
\end{center}
\begin{center}
\Huge{PARAMETER INVARIANT 
LAGRANGIAN FORMULATION OF KAWAGUCHI GEOMETRY}
\end{center}
\addvspace{15mm}
\begin{center}
{Ph.D. Thesis}
\end{center}
\addvspace{10mm}
\begin{center}
{\Large ERICO TANAKA}
\end{center}
\addvspace{28mm}
\begin{center}
\resizebox{30mm}{!}{\includegraphics{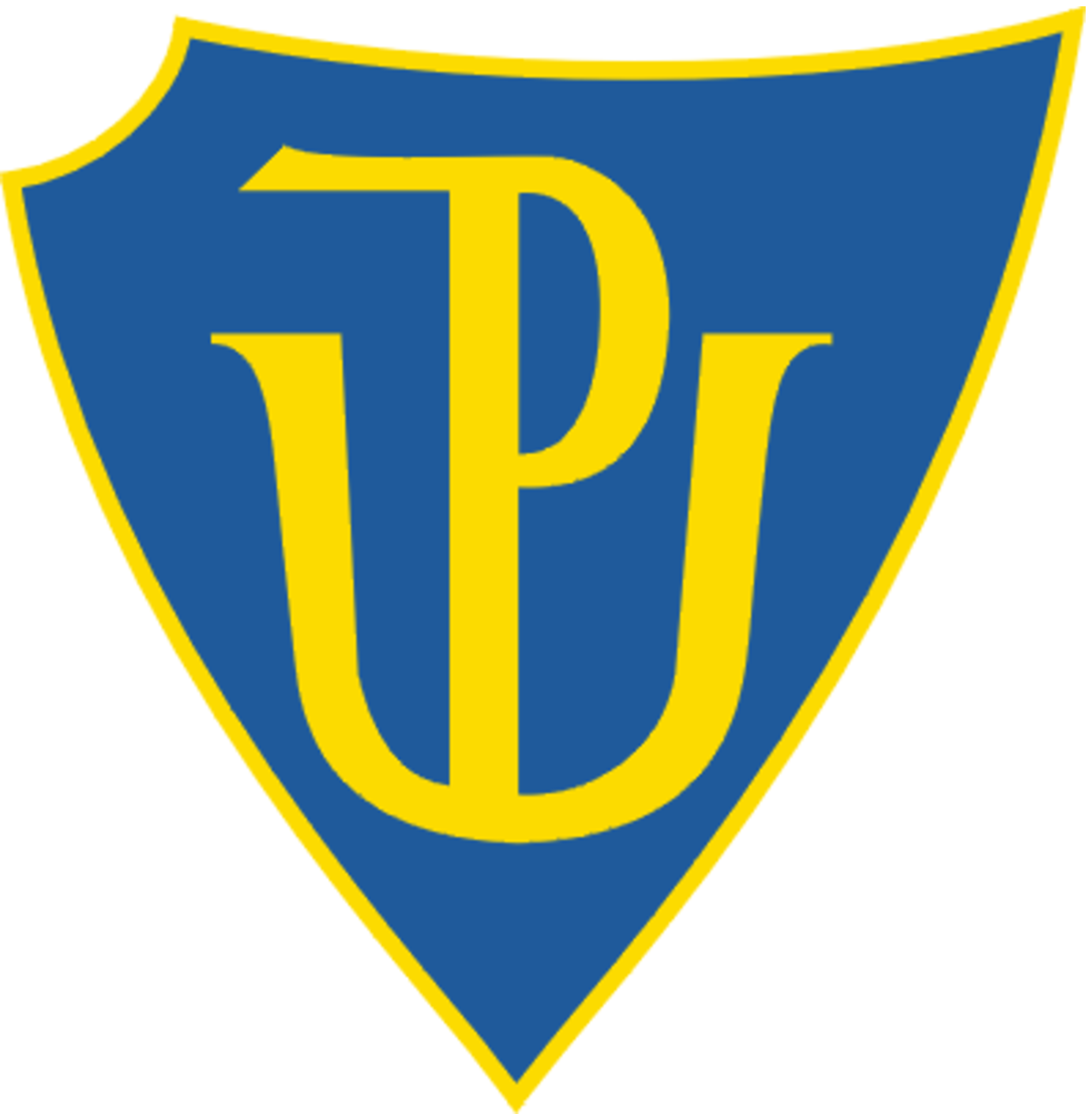}}
\end{center}
%\addvspace{28mm}
\addvspace{13mm} 
%~ $^{}$\thanks{e-mail: erico@cosmos.phys.ocha.ac.jp} \quad \\
\begin{center}
{Program: P1102 Mathematics,} 
\end{center}
\begin{center}
{Global Analysis and Mathematical Physics} 
\end{center}
\addvspace{2mm}
\begin{center}
{\large Supervisor: PROF. RNDR. DEMETER KRUPKA, DRSC.}
\end{center}
\addvspace{2mm}
\begin{center}
{\large OLOMOUC 2013}
\end{center}
\addvspace{30cm}
\date{}
\end{titlepage}

\begin{center}
I declare that this dissertation is my own work. \\
All sources used for the text is either acknowledged or listed in the references.\\
\end{center}
\vspace{3cm}
\begin{flushleft}
At Olomouc 25. 6. 2013, \\
\end{flushleft}
\begin{flushright}
Erico Tanaka  \\
\vspace{2cm} 
Revised version 15.10.2013
\end{flushright}

%\begin{abstract}

\allowdisplaybreaks
%\include{thesis2012_bibliographicalid}
%%%%%%%%%%%%%%%%%%%%%%%%%%%%%%%%%%%%%%%%%%%%%%%%%%%%%%%%%%%%%%
\section*{Preface}
\phantomsection
\addcontentsline{toc}{section}{Preface}

This thesis was submitted to Palack\'y university, Faculty of Science, 
in partial fulfillment of the requirements for the degree of Ph.D. in Mathematics. 
Related publications are~\cite{OT2,Ta2,Ta3,OTY3}. 

The work is devoted to the constructions of Lagrangian formulation, 
which has reparameterisation invariant property. 
In the basic courses of analytical mechanics, the time is taken as the parameter, 
and the motion of a particle is described by a trajectory in a $n$-dimensional configuration space. 
Therefore, position and velocity (or momentum) of the particle is expressed as a function of time, 
and if the equations of motion which determines this trajectory could be derived from a Lagrangian, 
this Lagrangian was given as a function of these variables; namely time, position and velocity. 
Such view point matches our intuition well, 
and indeed covers wide range of physical phenomena we experience in our everyday life. 
However, since our concept on time and space changed drastically after the emerging of the theory of relativity, 
it became a major movement in theoretical and mathematical physics to reconstruct the existing theory 
in such a way that does not distinguish time as a special coordinate among the others of the same spacetime. 
The theory which treats time and space equally is said to have the property of {\it covariance}. 
In case of mechanics, the motion of a particle will be realised as a trajectory on a $(n+1)$-dimensional manifold $M$, and each point on the trajectory is the position of the particle in $M$; {\it i.e}, its local coordinate expression is given by the coordinate functions on $M$, without any preferrence to a specific one as time. In the case of relativity theory, such case was considered with the aid of Riemannian geometry. 
In this thesis, we will try to consider such situation with Finsler and Kawaguchi geometry, 
which is a generalisation of Riemannian geometry. 
While Finsler is viable for first order (velocity) Lagrangian mechanics, 
Kawaguchi is considered for higher order case. 
These geometries have further possibilities to express more complicated physical theories such as 
irreversible systems or hysterisis, etc. 
However, these problems are out of the scope of this thesis, and would be left for future research.  

There are further issues similar to the case of mechanics, in the case of field theory. 
Modern theoritical physics especially in particle physics has sought a theory, which does not depend on the choice of $k$-dimensional spacetime $M$. Now our target is shifted from position and velocity to field configurations and its derivative with respect to the spacetime. If we are to consider the ``covariant'' version of such theory in the analogy of mechanics, we should now consider a spacetime $M$ lying in a greater manifold; the total space, which is also made from the fields. 
This way of thinking is not possible with Riemannian geometry, or Finsler geometry, 
and we have to use the Kawaguchi geometry, which is still not well-established. 
In this thesis, we will treat the case only where spacetime lying in the total space is diffeomorphic to 
closed $k$-rectangle of $\mathbb{R}^k$. For the case of higher order, we will also restrict ourselves to where the total space is 
$\mathbb{R}^n$, $n \ge k$. Such approach essentialy means, 
that we do not distinguish between the fields and spacetime. 
We will call such property as {\it generalised covariance}. 

We will provide the mathematical foundations for constructing such theories, 
and introduce the Lagrange formulation in the above context. 
Some elementary examples are given to make comparison to the conventional theory. 
Though the theory is far from complete, we hope that further research will supplement the imperfection, 
and our foundations will become useful in constructing more precise and viable theory for 
both mathematics and physics. 
\\ \\

I would like to thank Prof. Demeter Krupka for his generous and enduring help for this work, 
also for giving me the chance to study at Palack\'y university. 
I also thank Prof. Lorenzo Fatibene and 
Prof. Mauro Francaviglia for both scientific discussions and support during my stay at Torino University. 
I also found the discussion with Prof. David Saunders, Prof. Takayoshi Ootsuka, invaluable and inspiring. 
Thanks also goes a lot to Prof. Zby\v{n}ek Urban and Ph.D. students at Torino University, 
who encouraged me a lot and gave me impressive questions. 
This thesis would not have finished without the help of Prof. Jan Brajer\v{c}ik and Prof. Milan Demko. 
Also I thank them for the hospitality they gave me during my stay at 
Pre\v{s}ov University. 

During my Ph.D., I was financially supported by several grants and facilities, 
Czech Science Foundation (grant 201/09/0981), 
Palack\'y University (PrF-2010-008, PrF-2011-022, PrF-2012-017), 
Czech Ministry of Education, Youth, and Sports (grant MEB 041005), 
JSPS Institutional Program for Young Researcher Overseas Visits, 
Slovensk\'a akademick\'a informa\v{c}n\'a agent\'ura (National Scholarship Programme of the Slovak Republic 
for the Support of Mobility of Researchers), 
Yukawa Institute Computer Facility, I espcially thank Prof. Olga Rossi and Prof. Jan Kuhr for the support. 

Finally, I both thank and apologise to my parents who took so much pain for supporting me abroad. 
\\ \\ \\
Jun. 2013     Olomouc, 
\begin{flushright}
Erico Tanaka
\end{flushright}

%%%%%%%%%%%%%%%%%%%%%%%%%%%%%%%%%%%%%%%%%%%%%%%%%%%%%%%%%%%%%%
\newpage

\section*{Basic symbol list (unless otherwise stated):}
\phantomsection
\addcontentsline{toc}{section}{Basic symbol list}

$\mathbb{R}$ : real numbers \\
${\mathbb{R}^n}$: real $n$-dimensional Cartesian vector space with its natural topology \\
$V$: vector space over $\mathbb{R}$\\
${V^*}$: dual vector space of $V$\\
$M$: ${C^\infty }$-differentiable manifold \\
$TM$: tangent bundle of $M$ \\
${\Lambda ^k}TM$: all $k$-multivectors over $M$ or $k$-multivector bundle of $M$ \\
${C^\infty }(M)$: module of ${C^\infty }$-functions \\
${\mathfrak{X}}(M)$: ${C^\infty }(M)$-module of all vector fields over $M$ \\
${{\mathfrak{X}}^k}(M)$ : ${C^\infty }(M)$-module of all $k$-multivector fields over $M$ \\
${{\mathfrak{X}}^{\wedge k}}(M)$ : ${C^\infty }(M)$-module of all decomposable $k$-multivector fields over $M$ \\
${\tilde {\mathfrak{X}}^{\wedge k}}(M)$ : ${C^\infty }(M)$-module of all locally decomposable $k$-multivector fields over $M$ \\
${\Omega ^k}(M)$: ${C^\infty }(M)$-module of all differential $k$-forms over $M$ \\
${J^r}Y$ : $r$-th jet-prolongation of fibred manifold $Y$ \\
$\mathrm{pr}_k$ : Cartesian product projection onto the $k$-th set of product \\
$_n{C_k}$: Combination $\displaystyle{_n{C_k}: = \frac{{n!}}{{(n - k)!k!}}}$

\tableofcontents
\newpage
%%%%%%%%%%%%%%%%%%%%%%%%%%%%%%%%%%%%%%%%%%%%%%%%
\chapter{Introduction} \label{chap_1}
%%%%%%%%%%%%%%%%%%%%%%%%%%%%%%%%%%%%%%%%%%%%%%%%

	In this thesis, we will discuss the parameterisation invariant theory of Lagrangian formulation 
in terms of Finsler and Kawaguchi geometry. 
By setting up a clear and simple mathematical construction, 
we hope the theory to be viable for considering and extending the basic theories of physics, 
such as mechanics and field theory. 
The Finsler geometry is the foundation we will use for the first order mechanics, 
while Kawaguchi geometry is considered for the higher order mechanics and field theories. 
However, we would like to emphasise that in this thesis, 
these geometries are not the direct objects of our research, 
in contrary, we will only use their basic properties to build the structures we need for Lagrangian formulation. 
For instance, no fundamental tensor or connection will be discussed. 
Also it must be noted that our definition of Finsler geometry is much looser than those introduced in 
standard textbooks~\cite{ChernChenLam, Chern2}, 
for the aim to make it more applicable to the problems of physics. 
The only crucial condition we require for the Finsler function is the homogeneity condition. 
We also consider the Hilbert form as a fundamental structure rather than the Finsler function, 
which we will take as our Lagrangian.  

Kawaguchi geometry, which is the generalisation of Finsler geometry, is still in its developing state, 
and there are no standard definitions written in modern mathematical language. 
There are two directions of generalisation of Finsler geometry, 
one to higher order and another to multi-dimensional parameter space. 
In this thesis, we propose a new definition of Kawaguchi geometry, 
especially for the second order $1$-dimensional parameter space, and first order $k$-dimensional parameter space, 
using multivector bundle and a global differential form, which we call as Kawaguchi form. 
The Kawaguchi form is constructed in a way such that it satisfies similar properties as the Hilbert form in the case of Finsler geometry. We will take this Kawaguchi form as our Lagrangian. 
We will also consider the structure of second order $k$-dimensional parameter space, but only locally. 

Using these structures, we will consider the Lagrange formulation, 
and obtain the Euler-Lagrange equations that are reparameterisation invariant. 
Examples on simple case as Newton dynamics and De Broglie field is presented, 
and the results are compared with the standard formulation, which is parameter dependent. 

The reason that we expect Finsler and Kawaguchi geometry (in the above context) and the Lagrange formulation considered on these setting to be important in constructing viable theories in physics is because of the parameterisation invariance and its extendability compared to, for example, Riemannian geometry. 
Ootsuka, Tanaka and Yahagi proposed concrete examples of such application in~\cite{Oo1,OT2,OTY3}. 
This thesis will provide the mathematical background for these discussions. 
Especially, we intended to prepare a foundation that can provide a classical field theory a geometrically natural extension, which unifies the spacetime and field, in language of physics. 
Mathematically, this means we will consider the spacetime as submanifold embedded in a higher dimensional space, 
without any fibration over the parameter space. 
In this thesis, we will only consider the case where spacetime is diffeomorphic to a closed $k$-dimensional rectangle in $\mathbb{R}^k$. 

The parameter invariant theories of calculus of variations are also considered in different mathematical settings, 
notably by Grigore, D. Krupka, M. Krupka, Saunders, Urban, 
in terms of Jets and Grassmannian fibrations~\cite{Grigore1, Krupka-MKrupka, Saunders1,Urban1}. 

The structure of this thesis will be as the following. 
In the following chapter, we will begin with setting up the basic structures and definitions used in the theory. 
Basics definitions such as bundles, multivector bundles, induced charts, multi-tangent maps, 
integration on a submanifold are given. 
In Chapter 3, we will introduce the Finsler geometry and its basic properties.  
Some historical concepts are described briefly. Curves, arc segment, 
its parameterisation and length are described. 
We will also discuss the reason why the standard definition of Finsler geometry is too strict for the application to physics. 
The relation between Hilbert form and Cartan form is also presented here. 
In Chapter 4, we will introduce the Kawaguchi geometry and its basic properties.
We propose a global definition of Kawaguchi geometry, such that for the higher order case, 
the length of an arc segment will be invariant with respect to reparameterisation. 
For the multi dimensional case, we will also introduce $k$-curve, $k$-patches and its $k$-area. 
Similarly as in the previous case, 
we propose a definition of Kawaguchi geometry, such that the $k$-area of a 
$k$-patch remain unchanged by reparameterisation. We especially construct the second order 
$1$-dimensional parameter case, first order $k$-dimensional parameter case globally, 
and finally the second order $k$-dimensional parameter case locally 
(namely, the total space is $\mathbb{R}^{n}$, with $n \ge k$). 

In Chapter 5 we finally discuss on Lagrange formulation, 
using the structures introduced in the previous chapters. 
First for the Finsler case, then Kawaguchi case for second order $1$-dimensional parameter case, 
and first order $k$-dimensional case. 
The obtained results are compared for concrete example such as Newton dynamics and De Broglie field. 
We will summarise our results in Chapter 6. 
\\ \\
About the references: 

For the basic structures as manifolds, coordinate charts, tangent vectors, vector fields, we referred to the text by D. Krupka, "Advanced Analysis on Manifolds (to be published)", Y. Matsumoto, "Foundation of Manifolds('½—l'Ì'ÌŠî'b)", B. O'Neill, "Semi-Riemannian Geometry With Applications to Relativity". 
The book "Metrical differential geometry(Œv—Ê"÷•ªŠô‰½Šw)" by M. Matsumoto is one of the basic references for Finsler and Kawaguchi geometry, which unfortunately is not translated to English. 
Other references will be stated when appeared.
\\ \\
About the notations: 

Unless otherwise stated, the double occurrence of indices in the formula means summation, following the standard convention of Einstein. 
The symmetrisation of indices is denoted by round parenthesis, for example,
\begin{align}
{A_{(i}}{B_{j)}}: = \frac{1}{{2!}}\left( {{A_i}{B_j} + {A_j}{B_i}} \right)
\end{align}
The anti-symmetrisation of indices is denoted by square parenthesis, for example,
\begin{align}
{A_{[i}}{B_{j]}}: = \frac{1}{{2!}}\left( {{A_i}{B_j} - {A_j}{B_i}} \right)
\end{align}
The bases of $k$-multivector at a point $p \in M$ is often expressed $\displaystyle{{\left( {\frac{\partial }{{\partial {x^{{\mu _1}}}}} \wedge  \cdots  \wedge \frac{\partial }{{\partial {x^{{\mu _k}}}}}} \right)_p}}$, 
which abbreviates $\displaystyle{{\left( {\frac{\partial }{{\partial {x^{{\mu _1}}}}}} \right)_p} \wedge  \cdots  \wedge {\left( {\frac{\partial }{{\partial {x^{{\mu _k}}}}}} \right)_p}}$.
Throughout this thesis, we will consider a real smooth manifold, which has Hausdorff, second-countable and connected topology.
\\ \\
About the conventions:
 
The differential forms (cotangent vectors) are related to the tensor product 
by the following convention, 
\begin{align}
\alpha  \wedge \beta  := \alpha  \otimes \beta  - \beta  \otimes \alpha. 
\end{align} 
$\alpha ,\beta $ are 1-forms(cotangent vectors). 
The $k$-covector in the form 
$\displaystyle{{\alpha _1} \wedge {\alpha _2} \wedge  \cdots  \wedge {\alpha _k} 
\in {\Lambda ^k}{V^*}}$,  maps $k$ vectors ${X_1},{X_2}, \cdots ,{X_k} \in V$ to a number by \\
$\displaystyle{{\alpha _1} \wedge {\alpha _2} \wedge  \cdots  \wedge {\alpha _k}({X_1},{X_2}, \cdots ,{X_k}): = 
\det ({\alpha _i}({X_j}))}$. 
The general $k$-covector maps $k$ vectors ${X_1},{X_2}, \cdots ,{X_k} \in V$ 
to a number by its linear extension. 
In coordinate basis 
\begin{align}
&\alpha  = \frac{1}{{k!}}{\alpha _{{i_1} \cdots {i_k}}}d{x^{{i_1}}} 
\wedge  \cdots  \wedge d{x^{{i_k}}},  \quad 
{X_i} = {X_i}^j\frac{\partial }{{\partial {x^j}}}, \\ 
&\alpha ({X_1},{X_2}, \cdots ,{X_k}): = {\alpha _{{i_i} \cdots {i_k}}}{X_1}^{{j_1}} 
\cdots X_k^{{j_k}}\delta _{{i_1}}^{[{j_1}} \cdots \delta _{{i_k}}^{{j_k}]}
={\alpha _{{i_i} \cdots {i_k}}}{X_1}^{{i_1}} \cdots X_k^{{i_k}}.
\end{align}

%%%%%%%%%%%%%%%%%%%%%%%%%%%%%%%%%%%%%%%%%%%%%%%%
\chapter{Basic structures} \label{chap_2}
%%%%%%%%%%%%%%%%%%%%%%%%%%%%%%%%%%%%%%%%%%%%%%%%
Our aim here is to set up the space where Finsler and Kawaguchi structure will be endowed, 
and the calculus of variation could be carried out. 
We begin by defining standard vector bundle structures and 
the products of them, and introduce the $k$-multivector bundle which is a 
$k$-fold antisymmetric tensor product bundle. 
This bundle will be the stage for considering the theory of calculus of variation 
for first order $k$ parameter space, i.e., first order field theory, 
where the parameter space corresponds to the spacetime. 
For constructing the stage for second order field theory, we will extend this concept to second order 
$k$-antisymmetric tensor product space by standard manipulations on vector bundles. 
We will begin by introducing the fibred manifold and fibred coordinates, 
trivialisation and concept of bundles, then additionally the properties of a vector space to consider 
vector bundles and its product spaces. Then we will further consider the sub-bundles of them, 
by certain bundle isomorphism. The obtained final bundle will be the stage for our parameterisation 
invariant theory of calculus of variation, which we apply to the field theory. 
This chapter mostly refers to the textbook by D. J. Saunders, 
"The Geometry of Jet Bundles", with slight changes in notations. 

%%%%%%%%%%%%%%%%%%%%%%%%%%%%%%%%%%%%%%%%%%%%%%%%
\section{Bundles}  \label{sec_bundles}
%%%%%%%%%%%%%%%%%%%%%%%%%%%%%%%%%%%%%%%%%%%%%%%%

\begin{defn}Fibred manifold \\
A {\it fibred manifold} is a triple $(E,\pi ,M)$ where $E$ and $M$ are manifolds, and 
$\pi :E \to M$ is a surjective submersion. $E$ is called the {\it total space}, $M$ the {\it base space}, 
and $\pi $ is a {\it projection}. 
The subset ${\pi ^{ - 1}}(p) \subset E$ over each point $p \in M$ is called a {\it fibre}, 
and is usually denoted by ${E_p}$. 
\end{defn}
We occasionally use the projection $\pi $ to denote the total fibred manifold, instead of writing the triple.

\begin{defn}   Adapted chart of a fibred manifold \\
Let $(E,\pi ,M)$ be a fibred manifold such that $\dim M = n,\dim E = n + m$. 
The {\it adapted chart} of an open set $V \subset E$ is a chart 
$(V,\psi ),\psi  = ({y^\nu }),\nu  = 1, \cdots ,n + m$, such that for any points 
$a,b \in V$, $\pi (a) = \pi (b) = p$, $p \in M$, 
then $p{r_1}(\psi (a)) = p{r_1}(\psi (b))$, where $p{r_1}: = {\mathbb{R}^{n + m}} \to {\mathbb{R}^n}$
. 
\end{defn}

The existence of such adapted chart is guaranteed by the following lemma. 

\begin{lemma}
Let $f:E \to M$ be a ${C^r}(1 \leqslant r \leqslant \infty )$ map, 
and $\dim M = n,\;\dim E = n + m$. 
If $f$ is a submersion at ${q_0} \in E$, that is, if for the neighbourhood $V$ of ${q_0}$, 
the tangent map ${T_q}f : {T_q}E \to {T_{f(q)}}M$ at $\forall q \in V$ is surjective and has constant rank, 
then there exists a chart $(V,\psi ), \psi  = ({y^\nu })$ on $E$ such that, 
for any $a, b \in V$, $f(a) = f(b) = p$, $p \in M$, then $p{r_1}(\psi (a)) = p{r_1}(\psi (b))$, 
where $p{r_1}: = {\mathbb{R}^{n + m}} \to {\mathbb{R}^n}$.
\end{lemma}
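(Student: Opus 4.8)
The plan is to recognise this as the local normal form for submersions and to build the adapted chart explicitly from the surjectivity of $T_{q_0}f$. First I would fix a chart $(U,\varphi)$ on $M$ about $p=f(q_0)$, with $\varphi=(x^1,\dots,x^n)$, and any chart $(W,\chi)$ on $E$ about $q_0$, with $\chi=(z^1,\dots,z^{n+m})$, arranging that $W\subseteq V$ and $f(W)\subseteq U$. Setting $g^j:=x^j\circ f$ for $j=1,\dots,n$, the hypothesis that $T_{q_0}f$ is surjective is equivalent to the linear independence of the $n$ differentials $dg^1,\dots,dg^n$ at $q_0$: indeed $\langle dg^j,X\rangle=\langle dx^j,T_{q_0}f\cdot X\rangle$, so a nontrivial relation among the $dg^j$ would yield a covector on $M$ annihilating the image of $T_{q_0}f$, contradicting surjectivity.

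Next I would complete $g^1,\dots,g^n$ to a full coordinate system. Since $\{dg^1,\dots,dg^n\}$ is linearly independent in $T^*_{q_0}E$ while $\{dz^1,\dots,dz^{n+m}\}$ spans that space, the Steinitz exchange lemma lets me select $m$ indices $i_1<\dots<i_m$ so that
\[
dg^1,\dots,dg^n,\,dz^{i_1},\dots,dz^{i_m}
\]
is a basis of $T^*_{q_0}E$. Define $\psi:=(g^1,\dots,g^n,z^{i_1},\dots,z^{i_m}):W\to\mathbb{R}^{n+m}$. By construction its differential at $q_0$ is an isomorphism, so the inverse function theorem produces an open neighbourhood $V'\subseteq W$ of $q_0$ on which $\psi$ is a diffeomorphism onto its image; thus $(V',\psi)$ is an admissible chart of $E$.

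Finally I would verify the adapted property directly. By the definition of $\psi$, its first $n$ components are exactly $g^j=x^j\circ f$, so $p{r_1}\circ\psi=\varphi\circ f$ on $V'$. Hence for any $a,b\in V'$ with $f(a)=f(b)=p$ one gets $p{r_1}(\psi(a))=\varphi(f(a))=\varphi(p)=\varphi(f(b))=p{r_1}(\psi(b))$, which is precisely the required condition.

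I expect the only delicate points to be the passage from the coordinate-free surjectivity of $T_{q_0}f$ to the linear independence of the $dg^j$, and the correct invocation of the exchange lemma to choose the complementary coordinates. The constant-rank part of the hypothesis is not strictly needed for the existence of a single adapted chart, since surjectivity at $q_0$ alone already forces $p{r_1}\circ\psi=\varphi\circ f$; rather, it guarantees that $f$ stays a submersion throughout $V$, so that the same construction applies at every point and the resulting adapted charts cover $E$.
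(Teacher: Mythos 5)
Your proof is correct: the paper itself does not prove this lemma (it defers to Krupka and Saunders), and your argument is exactly the standard submersion normal-form construction those references contain --- pull back the base coordinates to get $g^j=x^j\circ f$ with independent differentials, complete to a chart by the exchange lemma and the inverse function theorem, and read off $pr_1\circ\psi=\varphi\circ f$. The only cosmetic point is that your chart lives on a possibly smaller neighbourhood $V'\subseteq V$ than the $V$ named in the statement, which is the intended reading, and your closing observation that surjectivity of $T_{q_0}f$ at the single point already suffices (making the constant-rank clause redundant) is accurate.
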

For the proof, we refer to~\cite{Krupka1}, \cite{Saunders2}.

The adapted chart $(V,\psi )$ on the total space $E$ induces a chart on the base space $M$ by $\pi $. 
This induced chart can be denoted as $(\pi (V),\varphi )$, where the coordinate function 
$\varphi :\pi (V) \to {\mathbb{R}^n}$ is given by setting 
$\varphi (\pi (a)) = p{r_1}(\psi (a)),a \in V$. 
It is convenient to use the same notation for the first $n$ coordinate functions for both 
$\psi $ and $\varphi $, so that, $(V,\psi )$, $\psi  = ({x^1}, \cdots ,{x^n},{y^1}, \cdots ,{y^m})$ and 
$(\pi (V), \varphi )$, $\varphi  = ({x^1}, \cdots ,{x^n})$.

\begin{defn}   Local trivialisation \\
Let $(E,\pi ,M)$ a fibred manifold and $p \in M$. 
A {\it local trivialisation of $\pi $ around $p$} is a triple 
$({U_p},{F_p},{t_p})$, where ${U_p}$ is a neighbourhood of $p$, 
${F_p}$ is a manifold, and 
\begin{align}
{t_p}:{\pi ^{ - 1}}({U_p}) \to {U_p} \times {F_p}
\end{align}
is a diffeomorphism satisfying the condition 
$p{r_1} \circ {t_p} = \pi {|_{{\pi ^{ - 1}}({U_p})}}$, with 
$p{r_1}$ denoting the Cartesian product projection onto the first set. 
\end{defn}

\begin{defn}   Bundle \\
A fibred manifold which has at least one local trivialisation around each point on 
$M$ is called a {\it bundle}. 
\end{defn}

We occasionally use the projection $\pi$ to denote the bundle itself, instead of writing the triple.

\begin{ex}  Tangent bundle \\
For the case $E = TM$, a tangent space of $M$, 
the triple $(TM, \tau ,M)$ with $\tau $ being a natural projection which sends each tangent vector 
${v_p} \in {T_p}M$ at point $p \in M$ to $p \in M$ becomes a bundle, and is called the tangent bundle. 
\end{ex}

Each chart on $M$ induces a local trivialisation of $TM$. 
The local trivialisation of the tangent bundle around $p \in M$ could be introduced in the following way. 
Let $({U_p},\varphi )$, $\varphi  = ({x^\mu }), \mu  = 1, \cdots ,n$, such that 
$p \in {U_p}$ be a chart on the base space $M$. 
For any element $\xi$ of ${\tau ^{ - 1}}({U_p}) \subset TM$ at $p$ have the coordinate expression, 
$\displaystyle{\xi  = {\xi ^\mu }{\left( {\frac{\partial }{{\partial {x^\mu }}}} \right)_p}}$. 
Then the trivialisation $({U_p}, {F_p}, {t_p})$ induced by the chart $({U_p}, \varphi )$ on the base space 
$M$ is $({U_p}, {\mathbb{R}^n}, {t_p})$, where ${t_p}$ is given by 
${t_p}(\xi ) = (\tau (\xi ), ({\xi ^1}, \cdots , {\xi ^n}))$. 
Let $(U,\varphi)$ be a chart on $M$.
In general, the set ${\pi ^{ - 1}}(U)$, $U \subset M$ of a bundle 
$(E,\pi ,M)$ cannot be covered by a single chart. H
owever, for the case of a tangent bundle this is possible. 
The charts on $M$ induces such specific charts on $TM$ via local trivialisation. 

\begin{defn}   Induced charts of a tangent bundle  \label{TMchart} \\
Let $(TM,\tau ,M)$ be a tangent bundle with $\dim M = n$, 
and $(U,\varphi)$, $\varphi = ({x^\mu })$ a chart on $U \subset M$. 
Since for any $p \in M$ we have the local trivialisation 
$({U_p}, {\mathbb{R}^n}, {t_p})$, ${t_p}({\tau ^{ - 1}}({U_p})) = {U_p} \times {\mathbb{R}^n}$, 
we define the {\it induced chart of a tangent bundle} on ${\tau ^{ - 1}}({U_p}) \subset TM$ for any 
$p \in M$ by $({\tau ^{ - 1}}({U_p}),\psi )$, 
$\psi (\xi ) = (\varphi (\tau (\xi )),({\xi ^1}, \cdots ,{\xi ^n}))$, where 
$\xi  \in {\tau ^{ - 1}}({U_p})$. \\
Especially, we may use the convenient expressions as 
$\psi  = ({x^1},{x^2}, \cdots ,{x^n},{y^1},{y^2}, \cdots ,{y^n})$, where 
${y^\mu }(\xi ) = {\xi ^\mu },\mu  = 1, \cdots ,n$. 
\end{defn}

These induced charts define on $TM$ the structure of ${C^\infty }$-manifold of dimension $2n$.

\begin{defn}   Global trivialisation \\
Let $(E,\pi ,M)$ be a fibred manifold. A {\it global trivialisation of $\pi $} 
is a triple $(M,F,t)$, where $F$ is a manifold, 
and $t:E \to M \times F$ is a diffeomorphism satisfying the condition 
$p{r_1} \circ t = \pi $, with $p{r_1}$ denoting the Cartesian product projection onto the first set. 
$F$ is called a {\it typical fibre of $\pi $}.
\end{defn}

\begin{defn}   Trivial fibre bundle \\
A fibred manifold which has a global trivialisation is called a {\it trivial fibre bundle}. 
\end{defn}

We show in Figure \ref{fig_fibrebundles}, three diagrams of fibred manifoldsF general, bundle, and trivial. 

\begin{figure}
  \centering
  \includegraphics[width=15cm]{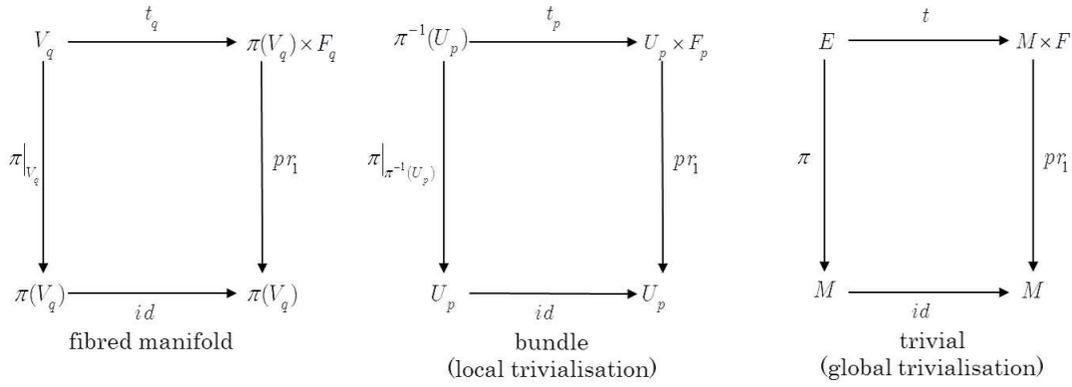}
  \caption{Fibre bundles} \label{fig_fibrebundles}
\end{figure}
 
\begin{defn} Fibred product bundles \\
Let $(E,\pi ,M)$ and $(H,\rho ,M)$ be bundles over the same base manifold $M$. 
The {\it fibred product bundle} is a triple 
$(E{ \times _M} H, \pi { \times _M}\rho ,M)$, where the total space $E{ \times _M}H$ is defined by 
\begin{align}
E \times_M H :=  \{ (p,q) \in E \times H:\pi (p) = \rho (q)\}
\end{align}
and the projection map $\pi { \times _M}\rho $ is defined by 
\begin{align}
(\pi { \times _M}\rho )(p,q) = \pi (p) = \rho (q).
\end{align}
\end{defn}

Bellow we will check that the triple $(E{ \times _M}H,\pi { \times _M}\rho ,M)$ indeed has a bundle structure. 

First, the total space $E{ \times _M}H$ is a submanifold of $E \times H$, 
since $E{ \times _M}H = {(\pi  \times \rho )^{^ - 1}}{\Delta _M}$, where ${\Delta _M}$ is the diagonal set 
\begin{align}
{\Delta _M} = \{ (p,q) \in M \times M \,|\, p,q \in M, \, p = q \}.
\end{align}
Suppose the adapted charts on ${V_E} \subset E$, ${V_H} \subset H$, such that 
$\pi ({V_E}) \cap \rho ({V_H}) \ne \emptyset $ are given by $({V_E}, {\psi _E})$, 
${\psi _E} = ({x^\mu },{y^a})$, $({V_H},{\psi _H}),{\psi _H} = ({x^\mu },{z^A})$, 
then it induces a chart on $U = (\pi ({V_E}) \cap \rho ({V_H})) \subset M$, namely, 
$(U,\varphi)$, $\varphi  = ({x^\mu })$. 
Let the adapted chart on ${V_{E \times H}} \subset E \times H$ be 
$({V_{E \times H}}$, ${\psi _{E \times H}})$, with 
${V_{E \times H}} = {(\pi  \times \rho )^{ - 1}}U$, 
${\psi _{E \times H}}(p,q) = \left( {\varphi (\pi (p)), {y^a}(p), \varphi (\rho (q)),{z^A}(q)} \right)$, 
where 
$(p,q) \in E \times H$. 
Then by considering the coordinate expressions of equations of a submanifold 
$ \varphi(\pi (p)) = \varphi(\rho (q))$, the total space $E{ \times _M}H$ has an adapted chart 
$({V_{E{ \times _M}H}}, {\psi _{E{ \times _M}H}})$, where
\begin{align}
&{V_{E{ \times _M}H}} = (E{ \times _M}H) \cap {V_{E \times H}}, \nonumber  \\
&{\psi _{E{ \times _M}H}}(p,q) = \left( {\varphi (\pi (p)),{y^a}(p),{z^A}(q)} \right),
\end{align}
with $(p,q) \in E \times H,\pi (p) = \rho (q)$. 

These charts define on $E{ \times _M}H$ the structures of a ${C^\infty }$-manifold. 

Next, we will obtain the local trivialisation $({U_r},{F_r},{t_r})$ on every point of $r \in M$ 
in the following way. 

Since $(E,\pi ,M)$ and $(H,\rho ,M)$ are both bundles, 
they have a local trivialisation around each point $r \in M$. 
Denote these as $({U_r},{E_r},{s_r})$, 
$({V_r}, {H_r}, {u_r})$, then ${s_r} : {\pi ^{ - 1}}({U_r}) \to {U_r} \times {E_r}$ ,
${u_r} : {\rho ^{ - 1}}({V_r}) \to {V_r} \times {H_r}$. 
Then for any $r \in M$, we can obtain the local trivialisation of 
$(E{ \times _M}H, \pi { \times _M}\rho, M)$ around $r$ by 
${t_r} : {(\pi { \times _M} \rho )^{ - 1}}({U_r} \cap {V_r}) \to ({U_r} \cap {V_r}) \times {F_r}$, 
where ${F_r} = {E_r} \times {H_r}$, and 
${t_r}(p,q) = (\pi (p), {y^a }(p), {z^A }(q)) = (\rho (q), {y^a }(p), {z^A }(q))$. 
Accordingly, the triple $(E{ \times _M}H, \pi { \times _M} \rho, M)$ becomes a bundle. 

\begin{defn}  Bundle morphism \\
If $(E,\pi ,M)$ and $(H,\rho ,N)$ are bundles, then a {\it bundle morphism from $\pi $ to $\rho $} is a pair 
$(f,\bar f)$ where $f:E \to H$, $\bar f:M \to N$ and $\rho  \circ f = \bar f \circ \pi $. 
The map $\bar f$ is called a {\it projection} of $f$. 
\end{defn}

\begin{ex}  \label{ex_tangentbundlemorphism}
Let $(E,\pi ,M)$ be a bundle, and $(TE,T\pi ,TM)$ its tangent bundle. There is a bundle morphism from 
$T\pi $ to $\pi $, which is a pair of tangent bundle projections $({\tau _E},{\tau _M})$. 
\end{ex}

\begin{ex}  
Let $(E,\pi ,M)$ be a bundle, and $(X,\bar X)$ a bundle morphism to $(TE,T\pi ,TM)$. 
$X$ is called a {\it projectable vector field}, and its projection $\bar X$ satisfies the relation, 
$\bar X \circ \pi  = T\pi  \circ X$. 
\end{ex}

\begin{defn} Pull-back bundle \\
Let $(E,\pi ,M)$ be a bundle, and $f:N \to M$ a smooth map. 
The pull-back of the bundle $\pi $ by $f$ is denoted by 
$({f^*}E,{f^*}\pi ,N)$,
where the total space ${f^*}E$ is defined by, 
\begin{align}
{f^*}E = \{ (u,x) \in E \times N \, | \, \pi (u) = f(x)\},
\end{align}
that is, ${f^*}E = E{ \times _M} N$, and the projection map ${f^*}\pi $ is defined by 
${f^*}\pi (u,x) = x$. 
The bundle ${f^*}\pi $ is called a {\it pull-back bundle} of $\pi $ by $f$.  
\end{defn}

\begin{defn}  Sub-bundle \\
If $(E,\pi ,M)$ is a bundle and $E' \subset E$ is a submanifold such that the triple 
$(E',{\left. \pi  \right|_{E'}},\pi (E'))$ is itself a bundle, then 
${\left. \pi  \right|_{E'}}$ is called a {\it sub-bundle} of $\pi $. 
\end{defn}

\begin{defn}  Vector bundle \\
A {\it vector bundle} is a quintuple $(E,\pi ,M,\sigma ,\mu )$ with the following structures, \\
1. $(E,\pi ,M)$ is a bundle \\
2. Denote the fibre over $p \in M$ as ${E_p}$. Namely, ${E_p} = {\pi ^{ - 1}}(p)$. Then, \\
(a) $\sigma :E{ \times _M}E \to E$ satisfies, for each $p \in M$, $\sigma ({E_p} \times {E_p}) \subset {E_p}$ \\
(b) $\mu :\mathbb{R} \times E \to E$ satisfies, for each $p \in M$, $\mu (\mathbb{R} \times {E_p}) \subset {E_p}$ \\
(c) $({E_p},{\left. \sigma  \right|_{{E_p} \times {E_p}}},{\left. \mu  \right|_{\mathbb{R} \times {E_p}}})$ is a real vector space for each $p \in M$ \\
3. for each $p \in M$, there is a local trivialization $({W_p},{\mathbb{R}^n},{t_p})$ called a {\it linear local trivialisation}, satisfying the condition that, for $q \in {W_p}$, the map $p{r_2} \circ {\left. {{t_p}} \right|_{{E_p}}}:{E_p} \to {\mathbb{R}^n}$ where 
${\left. {{t_p}} \right|_{{E_p}}}:{E_p} \to \{ q\}  \times {\mathbb{R}^n}$ and $p{r_2}:\{ q\}  \times {\mathbb{R}^n} \to {\mathbb{R}^n}$, is a linear isomorphism. 
\end{defn}
Under the linear local trivialization, the maps $\sigma $ and $\mu $ correspond to addition of vectors on ${\mathbb{R}^n}$, and scalar multiplication of vectors on ${\mathbb{R}^n}$, respectively. 

\begin{ex} 
The tangent bundle $(TM,\tau ,M)$ is a vector bundle. The linear local trivialisation around each point $p \in M$ is given by $({U_p},{\mathbb{R}^n},{t_p})$, for each $p \in M$, the fibre ${T_p}M$ has the property of vector space, therefore $a{v_1} + b{v_2} \in {T_p}M,\;a,b \in \mathbb{R},\;{v_1},{v_2} \in {T_p}M$. 
\end{ex}

\begin{defn}  Vector bundle adapted charts   \label{def_localsectionofvb} \\ 
Let $m = \dim M$, $n = \dim E$, and $({\pi ^{ - 1}}(W),\psi ),\psi  = ({u^1}, \cdots ,{u^n})$ the adapted coordinates on $(E,\pi ,M)$ induced by the chart $(W,\varphi ),\varphi  = ({x^1}, \cdots ,{x^m})$ on $W \subset M$, chosen to be linear on $\pi $. Such charts are called {\it vector bundle adapted charts}. 
\end{defn}
In such charts, the elements of $E$ can be expressed by $\xi  = {\xi ^\alpha }{e_\alpha }$, $\alpha  = 1, \cdots ,m$, where the base ${e_\alpha } \in {\Gamma _W}(\pi )$ is a family of local section defined by ${u^\beta }({e_\alpha }(p)) = \delta _\alpha ^\beta $ for all $p \in W$. In this way, linear operations on sections could be defined pointwise. 

\begin{ex} 
The vector bundle adapted chart of the tangent bundle $(TM,\tau ,M)$, 
is the induced chart $({\tau ^{ - 1}}(U),\psi )$,$\psi (\xi ) = ({x^\mu }(\tau (\xi )),{\xi ^\mu })$ 
given in (\ref{TMchart}), and the corresponding local sections are the vector fields 
$\displaystyle{\frac{\partial }{{\partial {x^\mu }}}}$. 
\end{ex}
 
\begin{ex} 
The vector bundle adapted chart of the cotangent bundle $({T^*}M,{\tau ^*},M)$, is the induced chart $({({\tau ^*})^{ - 1}}(U),{\psi ^*})$,${\psi ^*}(\alpha ) = ({x^\mu }({\tau ^*}(\alpha )),{\alpha _\mu })$, where $\alpha  \in {T^*}M$, and the corresponding local sections are the 1-forms $d{x^\mu }$.  
\end{ex}

\begin{defn}  Tensor product \\
Let $(E,\pi ,M)$ and $(F,\rho ,M)$ be vector bundles with fibres ${E_p}$, ${F_p}$ respectively. The {\it tensor product} of $\pi $ and $\rho $ is the vector bundle with fibres ${E_p} \otimes {F_p}$ and is denoted $(E \otimes F,\pi  \otimes \rho ,M)$. 
\end{defn}

\begin{defn} :  Antisymmetric/symmetric tensor product \\
Let $({E_1},{\pi _1},M), \cdots ,({E_k},{\pi _k},M)$ be vector bundles. 
The {\it $k$-fold antisymmetric tensor product} is a vector bundle with completely antisymmetric properties 
denoted by $({E_1} \wedge {E_2} \wedge  \cdots  \wedge {{\rm E}_k}, \lb[3] 
{\pi _1} \wedge {\pi _2} \wedge  \cdots  \wedge {\pi _k},M)$. 
For $k = 2$, every fibre is defined by the antisymmetric product 
\begin{align}
{E_{1,p}} \wedge {E_{2,p}} = {E_{1,p}} \otimes {E_{2,p}} - {E_{2,p}} \otimes {E_{1,p}}
\end{align}
for all $p \in M$. 
Similarly, the {\it $k$-fold symmetric product} is a vector bundle with completely symmetric properties 
denoted by $({E_1} \odot {E_2} \odot  \cdots  \odot {{\rm E}_k},{\pi _1} \odot {\pi _2} \odot  
\cdots  \odot {\pi _k},M)$. For $k = 2$, every fibre is defined by the symmetric product 
\begin{align}
{E_{1,p}} \odot {E_{2,p}} = {E_{1,p}} \otimes {E_{2,p}} + {E_{2,p}} \otimes {E_{1,p}}
\end{align}
for all $p \in M$. 
\end{defn}
Any tensor product of tangent and cotangent bundles which has the same base space $M$ become a tensor product bundle over $M$. 

%%%%%%%%%%%%%%%%%%%%%%%%%%%%%%%%%%%%%%%%%%%%%%%%%%%%%%%%%%%%%%%%%%%
\subsection{Second order tangent bundle } \label{subsec_second_TM}
%%%%%%%%%%%%%%%%%%%%%%%%%%%%%%%%%%%%%%%%%%%%%%%%%%%%%%%%%%%%%%%%%%%
Here we will review the definitions of ${T^2}M$ and its basic properties which will be used for 
considering the theory of second order mechanics. 
Higher order theory could be constructed by similar induction, 
and is introduced briefly in the following section. 
These method of construction will be also used in order to 
create the basic structures for the higher order field theory, which later will be introduced in this chapter.

\begin{defn}  Second order tangent bundle ${T^2}M$ over $TM$ \label{def_2ndorder_TMoverTM} \\
Let $(TM, {\tau _M}, M)$ be the tangent bundle with base space $M$ and 
$(TTM, {\tau _{TM}}, TM)$ the tangent bundle with the base space $TM$. 
The tangent map $T{\tau _M}$ of ${\tau _M}$ also induces the bundle $(TTM, T{\tau _M}, TM)$. 
Denote the subset of elements $\xi  \in TTM$ which satisfy the equations of submanifold 
\begin{align}
{\tau _{TM}}(\xi ) = T{\tau _M}(\xi )
\end{align}
as ${T^2}M$, and a map from 
${T^2}M$ to $TM$ by $\tau _M^{2,1}: = {\left. {{\tau _{TM}}} \right|_{{T^2}M}}$. 
The triple $({T^2}M,\tau _M^{2,1},TM)$ becomes a bundle, which is a sub-bundle of 
$(TTM,{\tau _{TM}},TM)$. We will call this a {\it second order tangent bundle over $TM$}. 
\end{defn}

To check $({T^2}M,\tau _M^{2,1},TM)$ is a bundle, we first introduce manifold structure on ${T^2}M$, 
namely we will introduce charts, and prove that these charts define a smooth structure on ${T^2}M$. 
Let $\dim M = n$, and the chart on $U \subset M$ be $(U,\varphi ),\varphi  = ({x^\mu })$. 
The induced chart on $TM$ and $TTM$ is 
$(V,\psi )$, $V = \tau _M^{ - 1}(U)$, $\psi = ({x^\mu }, {\dot x^\mu })$, 
and $({\tilde V^2}, {\tilde \psi ^2})$, ${\tilde V^2} = \tau _{TM}^{ - 1}(V)$, 
${\tilde \psi ^2} = ({x^\mu },{\dot x^\mu },{y^\mu },{\dot y^\mu })$ respectively. 
The elements of ${T^2}M$ have a specific form, namely they satisfy the condition 
${\tau _{TM}}(\xi ) = T{\tau _M}(\xi )$, $\xi  \in TTM$. 
Let the local expression of ${\xi _q} \in {T_q}TM,q \in V \subset TM$ be 
\begin{align}
{\xi _q} = \xi _1^\mu {\left( {\frac{\partial }{{\partial {x^\mu }}}} \right)_q} 
+ \xi _2^\mu {\left( {\frac{\partial }{{\partial {{\dot x}^\mu }}}} \right)_q},
\end{align}
then ${\tau _{TM}}({\xi _q}) = q$ and $\displaystyle{T{\tau _M}({\xi _q}) 
= \xi _1^\mu {\left( {\frac{\partial }{{\partial {x^\mu }}}} \right)_{{\tau _M}(q)}}}$. 
In coordinate expressions, 
\begin{align}
({x^\mu }(q),{\dot x^\mu }(q)) = \left( {{x^\mu }\left( {\xi _1^\nu 
{{\left( {\frac{\partial }{{\partial {x^\nu }}}} \right)}_{{\tau _M}(q)}}} \right),{{\dot x}^\mu }
\left( {\xi _1^\nu {{\left( {\frac{\partial }{{\partial {x^\nu }}}} \right)}_{{\tau _M}(q)}}} \right)} \right) 
= \left( {({x^\mu } \circ {\tau _M})(q),\xi _1^\mu } \right).
\end{align}
Therefore, the elements of ${T^2}U$ have the form 
\begin{align}
{v_q} = {\dot x^\mu }(q){\left( {\frac{\partial }{{\partial {x^\mu }}}} \right)_q} 
+ {v^\mu }{\left( {\frac{\partial }{{\partial {{\dot x}^\mu }}}} \right)_q}, 
\end{align}
for $q \in V$. 
Furthermore, since every elements of ${T^2}U$ in chart expression are in the form, 
\begin{align}
({x^\mu }({v_q}) , {\dot x^\mu }({v_q}) , {y^\mu }({v_q}) , {\dot y^\mu }({v_q})) 
= (({x^\mu } \circ {\tau _M}(q) , {\dot x^\mu }(q) , {\dot x^\mu }(q) , {v^\mu }),
\end{align} 
we can choose a chart 
$ ({V^2} , {\psi ^2})$, ${\psi ^2} = ({x^\mu } , {\dot x^\mu } , {\ddot x^\mu })$, 
${V^2} = {\tilde V^2} \cap {T^2}M, \, \mu  = 1, \dots ,n$ on ${T^2}M$. 

Clearly the fibres of $\tau _M^{2,1}$ are not vector spaces, 
since in general the scalar multiplication of ${v_q}$ does not belong to the same fibre. 

Now, let $({V^2}, {\psi ^2})$, ${\psi ^2} = ({x^\mu }, {y^\mu }, {z^\mu })$ and 
$({\bar V^2}, {\bar \psi ^2})$, ${\bar \psi ^2} = ({\bar x^\mu }, {\bar y^\mu }, {\bar z^\mu })$ 
be two charts on ${T^2}M$, with ${V^2} \cap {\bar V^2} \ne \emptyset$. 
Then express the element 
${\xi _q} \in {V^2} \cap {\bar V^2}$, $q \in TM$, by these charts, 
\begin{align}
{\xi _q} = {y^\mu }(q){\left( {\frac{\partial }{{\partial {x^\mu }}}} \right)_q} 
 + {\xi ^\mu }{\left( {\frac{\partial }{{\partial {y^\mu }}}} \right)_q} 
 = {\bar y^\mu }(q){\left( {\frac{\partial }{{\partial {{\bar x}^\mu }}}} \right)_q} 
 + {\bar \xi ^\mu }{\left( {\frac{\partial }{{\partial {{\bar y}^\mu }}}} \right)_q},
\end{align}
with ${z^\mu }({\xi _q}) = {\xi ^\mu },\,{\bar z^\mu }({\xi _q}) = {\bar \xi ^\mu }$. 
The bases of $TM$ will transform as 
\begin{align}
&\frac{\partial }{{\partial {{\bar x}^\mu }}} 
= \frac{{\partial {x^\nu }}}{{\partial {{\bar x}^\mu }}}\frac{\partial }{{\partial {x^\nu }}} 
+ \frac{{\partial {y^\nu }}}{{\partial {{\bar x}^\mu }}}\frac{\partial }{{\partial {y^\nu }}}, \nonumber \\
&\frac{\partial }{{\partial {{\bar y}^\mu }}} 
= \frac{{\partial {y^\nu }}}{{\partial {{\bar y}^\mu }}}\frac{\partial }{{\partial {y^\nu }}} 
= \frac{{\partial {x^\nu }}}{{\partial {{\bar x}^\mu }}}\frac{\partial }{{\partial {y^\nu }}}, 
\end{align}
and we have the transformation rule for the new coordinate ${z^\mu }$ by 
\begin{align}
{{\bar{z}}^{\mu }}=\frac{\partial {{{\bar{x}}}^{\mu }}}{\partial {{x}^{\nu }}}{{z}^{\nu }}
+\frac{\partial {{{\bar{y}}}^{\mu }}}{\partial {{x}^{\nu }}}{{y}^{\nu }}
=\frac{\partial {{{\bar{x}}}^{\mu }}}{\partial {{x}^{\nu }}}{{z}^{\nu }}
+\frac{{{\partial }^{2}}{{{\bar{x}}}^{\mu }}}{\partial {{x}^{\nu }}\partial {{x}^{\rho }}}{{y}^{\nu }}{{y}^{\rho }}.
\end{align}
These transformations are smooth, and the charts form a smooth atlas on ${T^2}M$. 
The natural inclusion $\iota :{{T}^{2}}M\to TTM$ has a coordinate expression, 
\begin{align}
{{x}^{\mu }}\circ \iota ={{x}^{\mu }},\,{{\dot{x}}^{\mu }}\circ \iota ={{y}^{\mu }}, 
\,{{y}^{\mu }}\circ \iota ={{y}^{\mu }},\,{{\dot{y}}^{\mu }}\circ \iota ={{z}^{\mu }} \label{inclusionmap_2}.  
\end{align}
The {\it l.h.s.} denotes coordinates on $TTM$ while the {\it r.h.s.} denotes coordinates on $T^2 M$. 
This shows that 
${T^2}M$ is a submanifold of $TTM$. 
Now $\tau _M^{2,1}$ is a surjective submersion by definition, so it remains to check the local trivialisation. 
The local trivialisation of $({T^2}M, \tau _M^{2,1}, TM)$ around any point $p \in TM$ is given by 
$({V_p}, {\mathbb{R}^n}, {t_p})$, 
\begin{align}
{t_p}:{(\tau _M^{2,1})^{ - 1}}({V_p}) \to {V_p} \times {\mathbb{R}^n}, \quad p \in {V_p}, 
\end{align}
where ${V_p}$ is an open set of $TM$, which in chart expression for any 
$\xi  \in {(\tau _M^{2,1})^{ - 1}}({V_p})$ is 
\begin{align}
{t_p}(\xi ) = (\tau _M^{2,1}(\xi ),{\ddot x^\mu }(\xi )).
\end{align}
Therefore, $({T^2}M,\tau _M^{2,1},TM)$ is indeed a bundle. 

\begin{defn} Second order tangent bundle \\ 
The triple $({T^2}M,\tau _M^{2,0},M)$ with $\tau _M^{2.0} = {\tau _M} \circ \tau _M^{2,1}$ 
is also a bundle with the trivialisation 
$({U_p},{\mathbb{R}^{2n}},{t_p})$, 
${t_p}:{(\tau _M^{2,0})^{ - 1}}({U_p}) \to {U_p} \times {\mathbb{R}^{2n}}$, around any $p \in M$, 
which in chart expression for any 
$\xi  \in {(\tau _M^{2,0})^{ - 1}}({U_p})$ is 
\begin{align}
{t_p}(\xi ) 
= (\tau _M^{2,0}(\xi ),{\dot x^\mu }(\xi ),{\ddot x^\mu }(\xi )).
\end{align}
We will call this a {\it second order tangent bundle over $M$}, or simply, {\it second order tangent bundle}. 
\end{defn}

In the theory of second order mechanics, 
the dynamical variables are the section of the second order tangent bundle. 

%%%%%%%%%%%%%%%%%%%%%%%%%%%%%%%%%%%%%%%%%%%%%%%%%%%%%%%%%%%%%%%%%%%%%
\subsection{Higher order tangent bundle} \label{higher_TM}
%%%%%%%%%%%%%%%%%%%%%%%%%%%%%%%%%%%%%%%%%%%%%%%%%%%%%%%%%%%%%%%%%%%%%
Now we will briefly introduce the higher order tangent bundles. 
This concept will be used to define the total derivatives of the higher order (Chapter 5), 
when deriving the reparameterisation invariant Euler-Lagrange equations. 
We will especially give the construction of $({T^3}M,\tau _M^{3,2},{T^2}M)$, 
and the $r$-th order $({{T}^{r}}M,\tau _{M}^{r,r-1},{{T}^{r}}M)$ 
could be obtained iteratively. 
\begin{defn}  Third order tangent bundle ${{T}^{3}}M$ over ${{T}^{2}}M$ \label{def_3rdorder_TMoverT2M} \\
Consider the bundle morphism 
$(T\tau _{M}^{2,1}, \tau _{M}^{2,1})$ from $(T{{T}^{2}}M, {{\tau }_{{{T}^{2}}M}}, {{T}^{2}}M)$ 
to $(TTM, {{\tau }_{TM}}, \linebreak[3] TM)$. 
We define the set ${{T}^{3}}M$ by, 
\begin{align}
{{T}^{3}}M:=\{u\in T{{T}^{2}}M\,|\,\iota \circ \tau_{T^2 M}(u)=T\tau _{M}^{2,1}(u)\},
\end{align}
where $\iota $ is the inclusion map $\iota :{{T}^{2}}M\to TTM$, 
and its coordinate expression given by (\ref{inclusionmap_2}).
Let $({{\tilde{V}}^{2}}, {{\tilde{\psi }}^{2}})$, ${{\tilde{\psi }}^{2}} = 
({{x}^{\mu }},{{y}^{\mu }}, {{\dot{x}}^{\mu }}, {{\dot{y}}^{\mu }})$ be the induced chart on $TTM$ and 
$({{\tilde{V}}^{3}}, {{\tilde{\psi }}^{3}})$, 
${{\tilde{\psi }}^{3}} = ({{x}^{\mu }}, {{y}^{\mu }}, {{z}^{\mu }}, {{\dot{x}}^{\mu }}, {{\dot{y}}^{\mu }}, 
{{\dot{z}}^{\mu }})$ be the induced chart on $T{{T}^{2}}M$. 
The elements of $T{{T}^{2}}M$ have the local expressions 
\begin{align}
{{u}_{q}}=u_{1}^{\mu }{{\left( \frac{\partial }{\partial {{x}^{\mu }}} \right)}_{q}}
+ u_{2}^{\mu }{{\left( \frac{\partial }{\partial {{y}^{\mu }}} \right)}_{q}} 
+ u_{3}^{\mu }{{\left( \frac{\partial }{\partial {{z}^{\mu }}} \right)}_{q}}, \, \, q\in {{T}^{2}}M
\end{align}
and the submanifold equations will give 
\begin{align}
&  {x^\mu }({T_q}\tau _M^{2,1}(u)) = {x^\mu } \circ \iota (q), \hfill \nonumber \\
&  {y^\mu }({T_q}\tau _M^{2,1}(u)) = {y^\mu } \circ \iota (q), \hfill \nonumber \\
&  {{\dot x}^\mu }({T_q}\tau _M^{2,1}(u)) = u_1^\mu  = {y^\mu } \circ \iota (q), \hfill \nonumber \\
&  {{\dot y}^\mu }({T_q}\tau _M^{2,1}(u)) = u_2^\mu  = {z^\mu } \circ \iota (q), \hfill  
\end{align}
where the coordinate functions in the {\it l.h.s.} are on $TTM$, 
while the {\it r.h.s} represents the coordinate functions of ${T^2}M$.  
Then we will have for coordinate functions on $T{{T}^{2}}M$, 
\begin{align}
& {{x}^{\mu }}(u)={{x}^{\mu }}\circ \iota (q), \nonumber \\ 
& {{y}^{\mu }}(u)={{y}^{\mu }}\circ \iota (q), \nonumber \\ 
& {{{\dot{x}}}^{\mu }}(u)=u_{1}^{\mu }={{y}^{\mu }}\circ \iota (q), \nonumber \\ 
& {{{\dot{y}}}^{\mu }}(u)=u_{2}^{\mu }={{z}^{\mu }}\circ \iota (q), 
\end{align}
therefore, the elements of ${{T}^{3}}M$ will have the form 
\begin{align}
{{w}_{q}}={{y}^{\mu }}(q){{\left( \frac{\partial }{\partial {{x}^{\mu }}} \right)}_{q}}
+ {{z}^{\mu }}(q){{\left( \frac{\partial }{\partial {{y}^{\mu }}} \right)}_{q}}
+ {{w}^{\mu }}{{\left( \frac{\partial }{\partial {{z}^{\mu }}} \right)}_{q}},
\end{align}
and we will take the induced chart on 
${{T}^{3}}M$ as $({{V}^{3}}, {{\psi }^{3}})$, ${{\psi }^{3}} = ({{x}^{\mu }}, {{y}^{\mu }}, {{z}^{\mu }}, {{w}^{\mu }})$. 
We can check the set $({{T}^{3}}M, \tau _{M}^{3,2}, {{T}^{2}}M)$ with 
$\tau _{M}^{3,2} := {{\left. {{\tau }_{{{T}^{2}}M}} \right|}_{{{T}^{3}}M}}$ is a bundle in the similar way 
we did for the case of $({{T}^{2}}M, \tau _{M}^{2,1}, TM)$. 
Clearly, $\tau _{M}^{3,2}$ is a sub-bundle of $(T{{T}^{2}}M, {{\tau }_{{{T}^{2}}M}}, {{T}^{2}}M)$. 
We will call the set $({{T}^{3}}M, \tau _{M}^{3,2}, {{T}^{2}}M)$ the 
{\it third order tangent bundle over ${{T}^{2}}M$}. 
\end{defn} 
We can similarly construct the $r$-th order tangent bundle over ${{T}^{r-1}}M$; 
namely $({T^r}M, \lb[3] \tau _M^{r,r - 1},{T^{r - 1}}M)$ by induction. 
The bundle projection is defined by, 
\begin{align}
\tau _M^{r,r - 1}: = {\left. {{\tau _{{T^{r - 1}}M}}} \right|_{{T^r}M}}.  
\end{align}
Consider the bundle morphism $(T\tau _{M}^{r-1,r-2}, \lb[3] \tau _{M}^{r-1,r-2})$ from 
$(T{{T}^{r-1}}M, \lb[3] {{\tau }_{{{T}^{r-1}}M}}, \lb[3] {{T}^{r-1}}M)$ to  
$(T{{T}^{r-2}}M,\lb[3] {{\tau }_{{{T}^{r-2}}M}},\linebreak[3] {{T}^{r-2}}M)$. 
Then we will define the total space $T^r M$ by 
\begin{align}
{T^r}M: = \{ u \in T{T^{r - 1}}M{\mkern 1mu} |{\mkern 1mu} T\tau _M^{r - 1,r - 2}(u) 
= {\iota _{r-1}} \circ {\tau _{{T^{r - 1}}M}}(u)\}, \label{higherbundle_r}
\end{align}
where  ${\iota _{r-1}}:{T^{r-1}}M \to T{T^{r - 2}}M$ is the inclusion map. 
$({T^r}M,\tau _M^{r,r - 1},{T^{r - 1}}M)$ is a sub-bundle of 
$(T{T^{r - 1}}M,{\tau _{{T^{r - 1}}M}},{T^{r - 1}}M)$. 

%%%%%%%%%%%%%%%%%%%%%%%%%%%%%%%%%%%%%%%%%%%%%%%%%%%%%%%%%%%%%%%%%%%%%%%%
\subsection{Multivector bundle} \label{subsec_multivectorbundles}
%%%%%%%%%%%%%%%%%%%%%%%%%%%%%%%%%%%%%%%%%%%%%%%%%%%%%%%%%%%%%%%%%%%%%%%%
The completely antisymmetric tensor product bundles are the structures we need for further discussions on calculus of variation, especially when the dimension of parameter space is greater than one, namely for the consideration of a field theory. It is related to the concept of multivectors and multivector fields, which is introduced below. 

\begin{defn}   $k$-multivectors \\
A {\it $k$-multivector} with $k \leqslant n$ is an element of exterior algebra over a vector space $V$ denoted by ${\Lambda ^k}(V)$. It is a linear combination of the multivectors of the form ${v_1} \wedge  \cdots  \wedge {v_k}$, ${v_1}, \cdots ,{v_k} \in V$. 
When the $k$-multivector is in the form ${v_1} \wedge  \cdots  \wedge {v_k}$, it is called a {\it decomposable multivector}.  
\end{defn}
This space ${\Lambda ^k}(V)$ has a natural space of the vector space. 

One geometric way to understand the vector field was to see it as a section of a tangent bundle, 
$(TM,\tau ,M)$. 
Similarly, the multivector fields could be understood as a section of the 
$k$-fold antisymmetric tensor product bundle of  $({\Lambda ^k}TM,{\Lambda ^k}{\tau _M},M)$. 
We denote 
\begin{align}
{\Lambda ^k}TM := \bigcup\limits_{p \in M} {{\Lambda ^k}{T_p}M} 
\end{align}
and ${\Lambda ^k}{\tau _M} := {\tau _M} \wedge  \cdots  \wedge {\tau _M}$ ($k$-alternating products). 
Below we will give the definition, and see that this is indeed a vector bundle. 

\begin{defn}   $k$-Multivector bundle  \label{lkTM} \\
The triple $({\Lambda ^k}TM,\pi ,M)$ where $\pi  = {\Lambda ^k}{\tau _M}$ is a projection 
$\pi (v) = p$ for $v \in {\Lambda ^k}TM$, $p \in M$, has a vector bundle structure, and is called the {\it $k$-multivector bundle}. 
\end{defn}

First, we will introduce a smooth structure on ${\Lambda ^k}TM$. 
Let the chart on the base space $M$ be $(U,\varphi ), 
\varphi  = ({x^\mu })$, and $v \in {\pi ^{ - 1}}(p),p \in U$. 
The bases of $k$-multivector constructed from natural bases 
$\displaystyle{\left( {{{\left( {\frac{\partial }{{\partial {x^1}}}} \right)}_p}, \cdots , 
{{\left( {\frac{\partial }{{\partial {x^n}}}} \right)}_p}} \right)}$ on $U$ are in the form 
$\displaystyle{{\left( {\frac{\partial }{{\partial {x^{{i_1}}}}}} \right)_p} \wedge  \cdots  
\wedge {\left( {\frac{\partial }{{\partial {x^{{i_k}}}}}} \right)_p}}$, 
where ${i_1}, \cdots ,{i_k}$ are integers taken from $1, \cdots ,n$ without overlapping, 
and we denote this by the abbreviation 
$\displaystyle{{\left( {\frac{\partial }{{\partial {x^{{i_1}}}}} 
\wedge  \cdots  \wedge \frac{\partial }{{\partial {x^{{i_k}}}}}} \right)_p}}$. 
The coordinate expression of $v$ by these bases is  
\begin{align}
v = \frac{1}{{k!}}{v^{{i_1} \cdots {i_k}}}{\left( {\frac{\partial }{{\partial {x^{{i_1}}}}} \wedge  \cdots  
\wedge \frac{\partial }{{\partial {x^{{i_k}}}}}} \right)_p},
\end{align}
where ${v^{{i_1} \cdots {i_k}}}$ are real numbers with alternating superscripts. 
We may also use the local expression of $v$ using ordered bases, i.e., 
\begin{align}
v = \sum\limits_{{i_1} < {i_2} <  \cdots  < {i_k}} {{v^{{i_1} \cdots {i_k}}}{{
\left( {\frac{\partial }{{\partial {x^{{i_1}}}}} \wedge  \cdots  
\wedge \frac{\partial }{{\partial {x^{{i_k}}}}}} \right)}_p}}.
\end{align}
Define the functions ${y^{{\mu _1} \cdots {\mu _k}}}$ on 
${\pi ^{ - 1}}(U)$ by 
${y^{{\mu _1} \cdots {\mu _k}}}(v) = {v^{{\mu _1} \cdots {\mu _k}}}$, 
then we can obtain the induced chart on 
$V \subset {\Lambda ^k}TM$, by $(V,\psi ),\; 
V = {\pi ^{ - 1}}(U),\;\psi  = ({x^\mu },{y^{{\mu _1} \cdots {\mu _k}}})$. 
When considering an exact value, we assume the superscripts are ordered. 

To see the coordinate transformations, 
let $(V,\psi )$, $V = {\pi ^{ - 1}}(U)$, $\psi  = ({x^\mu },{y^{{\mu _1} \cdots {\mu _k}}})$ and 
$(\bar V,\bar \psi )$, $\bar V = {\pi ^{ - 1}}(\bar U)$, 
$\bar \psi  = ({\bar x^\mu },{\bar y^{{\mu _1} \cdots {\mu _k}}})$ be two charts on 
${\Lambda ^k}TM$, with $V \cap \bar V \ne \emptyset $. 
Then express the element ${v_p} \in V \cap \bar V$, 
$p \in U \cap \bar U \subset M$ by these charts, 
\begin{align}
{v_p} = \frac{1}{{k!}}{y^{{\mu _1} \cdots {\mu _k}}}(p){
\left( {\frac{\partial }{{\partial {x^{{\mu _1}}}}} \wedge  \cdots  
\wedge \frac{\partial }{{\partial {x^{{\mu _k}}}}}} \right)_p} 
= \frac{1}{{k!}}{\bar y^{{\mu _1} \cdots {\mu _k}}}(p){
\left( {\frac{\partial }{{\partial {{\bar x}^{{\mu _1}}}}} \wedge  \cdots  
\wedge \frac{\partial }{{\partial {{\bar x}^{{\mu _k}}}}}} \right)_p}.\nonumber \\
\end{align}
The bases of ${\Lambda ^k}TM$ will transform as 
\begin{align}
\frac{\partial }{{\partial {{\bar x}^{{\mu _1}}}}} \wedge  \cdots  
\wedge \frac{\partial }{{\partial {{\bar x}^{{\mu _k}}}}} 
= \frac{{\partial {x^{{\nu _1}}}}}{{\partial {{\bar x}^{{\mu _1}}}}} \cdots 
\frac{{\partial {x^{{\nu _k}}}}}{{\partial {{\bar x}^{{\mu _k}}}}}
\frac{\partial }{{\partial {x^{{\nu _1}}}}} \wedge  \cdots  \wedge 
\frac{\partial }{{\partial {x^{{\nu _k}}}}}, 
\end{align}
so the transformation equation for the coordinate functions on ${\Lambda ^k}TM$ are 
\begin{align}
&{\bar x^\mu } = {\bar x^\mu }({x^\nu }), \\
&{\bar y^{{\nu _1} \cdots {\nu _k}}} = 
\frac{{\partial {{\bar x}^{{\nu _1}}}}}{{\partial {x^{{\mu _1}}}}} \cdots 
\frac{{\partial {{\bar x}^{{\nu _k}}}}}{{\partial {x^{{\mu _k}}}}}{y^{{\mu _1} \cdots {\mu _k}}}. 
\end{align}
These transformations are smooth. 
Such induced charts define on ${\Lambda ^k}TM$ the structure of 
${C^\infty }$-manifold of dimension $n + {}_n{C_k}$.

Now, $({\Lambda ^k}TM,\pi ,M)$ naturally becomes a bundle by the projection, 
$\pi (v) = p$ for $v \in {\Lambda ^k}{T_p}M,p \in M$, since for any $v \in {\Lambda ^k}TM$, 
there always exist a unique $p \in M$, such that $v \in {\Lambda ^k}{T_p}M$, 
and we assumed $\dim M$ and therefore $\dim {\Lambda ^k}TM$ are both constant. 

This bundle structure can be also constructed by taking a 
{\it $k$-fold alternating product of $(TM,\tau ,M)$}, 
which is denoted by $({\Lambda ^k}TM,{\Lambda ^k}\tau ,M)$, 
and is a sub-bundle of the tensor product bundle $({ \otimes ^k}TM,{ \otimes ^k}\tau ,M)$. 

\begin{defn}   $k$-multivector field \\
{\it$k$-multivector field} is a section of $({\Lambda ^k}TM, \pi ,M)$. 
We denote all sections of $\pi$ by $\Gamma ({\Lambda ^k}TM)$, 
or equivalently ${{\mathfrak{X}}^k}(M)$. 
\end{defn}

\begin{defn}   Local coordinate expression of $k$-multivector field \\
Let $(U,\varphi ),\varphi  = ({x^\mu })$ be a chart on $M$, 
the local expression of $Y \in \Gamma ({\Lambda ^k}TU)$ is 
\begin{align}
Y = \frac{1}{{k!}}{f^{{i_1} \cdots {i_k}}}\frac{\partial }{{\partial {x^{{i_1}}}}} \wedge  \cdots  
\wedge \frac{\partial }{{\partial {x^{{i_k}}}}} 
\end{align}
with ${f^{{i_1} \cdots {i_k}}} \in {C^\infty }(U)$ alternating in all the superscripts. 
\end{defn}

\begin{defn}  Decomposable $k$-multivector field \\
Let $X$ be a $k$-multivector field. $X$ is a {\it decomposable $k$-multivector field} 
iff there exists ${X_1}, \cdots ,{X_k} \in {\mathfrak{X}}(M)$ such that $X = {X_1} \wedge  \cdots  \wedge {X_k}$. 
We denote all decomposable $k$-multivector fields by ${{\mathfrak{X}}^{ \wedge k}}(M)$. 
\end{defn}

\begin{defn}  Locally decomposable $k$-multivector field \\
Let $X$ be a $k$-multivector field. We say that $X$ is {\it decomposable at $p \in M$}, if there exists a neighbourhood ${U_p} \subset M$ of $p$, and ${X_1}, \cdots ,{X_k} \in {\mathfrak{X}}({U_p})$, such that $X = {X_1} \wedge  \cdots  \wedge {X_k}$ on ${U_p}$. 
$X$ is called {\it locally decomposable $k$-multivector field} iff for every $p \in M$ there exists a neighbourhood ${U_p} \subset M$ and ${X_1}, \cdots ,{X_k} \in {\mathfrak{X}}({U_p})$ such that $X = {X_1} \wedge  \cdots  \wedge {X_k}$ on each ${U_p}$.
\end{defn}
We denote all locally decomposable $k$-multivector fields by ${\widetilde {\mathfrak{X}}^{ \wedge k}}(M)$. 
$k$-multivector fields forms a ${C^\infty }(M)$-module, and it is a dual concept to the $k$-form. 
We will show this in the following Lemma \ref{lem_multivec}, and for this purpose, we introduce the theorem by Morita. 

\begin{theorem} Correspondence of alternating map and Forms (Morita)  \label{thm_Morita} \\
Let $M$ be a n-dimensional ${C^\infty }$-manifold, and ${\Omega ^k}(M)$ a module of all 
$k$-forms over $M$. ${\Omega ^k}(M)$ can be naturally identified with the module of ${C^\infty }(M)$ multi-linear, 
alternating mapping ${\mathfrak{X}}(M) \times  \dots  \times {\mathfrak{X}}(M) \to {C^\infty }(M)$. 
\end{theorem}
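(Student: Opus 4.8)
The plan is to exhibit the identification as a concrete isomorphism of $C^\infty(M)$-modules and to reduce everything to the pointwise, tensorial nature of the alternating map. First I would define the easy direction: to a $k$-form $\omega \in \Omega^k(M)$ associate the map $\Phi_\omega$ given by $\Phi_\omega(X_1,\dots,X_k)(p) := \omega_p((X_1)_p,\dots,(X_k)_p)$ for $X_1,\dots,X_k \in \mathfrak{X}(M)$ and $p \in M$. Since $\omega_p \in \Lambda^k T_p^* M$ acts multilinearly and alternatingly on tangent vectors, and since $(fX_i)_p = f(p)\,(X_i)_p$, the map $\Phi_\omega$ is immediately $C^\infty(M)$-multilinear and alternating; smoothness of the function $\Phi_\omega(X_1,\dots,X_k)$ follows by writing both $\omega$ and the $X_i$ in an adapted chart. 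This assignment $\omega \mapsto \Phi_\omega$ is $C^\infty(M)$-linear. For injectivity I would note that if $\Phi_\omega = 0$, then $\omega_p$ annihilates every $k$-tuple of tangent vectors at $p$, because each $v \in T_pM$ is realised as $(X)_p$ for some global vector field $X$; hence $\omega_p = 0$ for all $p$ and $\omega = 0$.

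The substance of the theorem is surjectivity, i.e. that an arbitrary $C^\infty(M)$-multilinear alternating $\Phi : \mathfrak{X}(M) \times \dots \times \mathfrak{X}(M) \to C^\infty(M)$ is of the form $\Phi_\omega$. The essential step, and the main obstacle, is to prove that $\Phi$ is tensorial: the number $\Phi(X_1,\dots,X_k)(p)$ depends only on the values $(X_1)_p,\dots,(X_k)_p$. I would establish this in two stages using bump functions. \textbf{Locality.} If some $X_i$ vanishes on an open set $U \ni p$, pick a bump function $\lambda$ with $\lambda(p)=1$ and support inside $U$; then $\lambda X_i \equiv 0$ on $M$, so by $C^\infty(M)$-linearity $\lambda(p)\,\Phi(\dots,X_i,\dots)(p) = \Phi(\dots,\lambda X_i,\dots)(p) = 0$, giving $\Phi(\dots,X_i,\dots)(p)=0$. \textbf{Pointwise dependence.} If $(X_i)_p = 0$, expand $X_i = f^j\,\partial/\partial x^j$ in a chart around $p$ with $f^j(p)=0$, globalise each $f^j$ and each $\partial/\partial x^j$ by multiplying with a bump function equal to $1$ near $p$, and then use locality together with $C^\infty(M)$-linearity to pull every $f^j$ outside the argument; since $f^j(p)=0$, the result vanishes at $p$.

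Granting tensoriality, I would define $\omega$ by setting $\omega_p(v_1,\dots,v_k) := \Phi(X_1,\dots,X_k)(p)$ for any vector fields with $(X_i)_p = v_i$; tensoriality makes this independent of the chosen fields, and the multilinearity and alternating property of $\Phi$ force each $\omega_p$ to be an alternating $k$-covector on $T_pM$, hence an element of $\Lambda^k T_p^* M$. Smoothness of $\omega$ is verified in a chart, where its components are the functions $\omega(\partial/\partial x^{i_1},\dots,\partial/\partial x^{i_k}) = \Phi(\partial/\partial x^{i_1},\dots,\partial/\partial x^{i_k}) \in C^\infty(M)$. By construction $\Phi_\omega = \Phi$, so the two directions are mutually inverse and the identification is complete.

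I expect every part except the pointwise-dependence step to be purely formal. The one delicate point is the localisation argument: one must be careful that the bump-function globalisation of the coordinate fields $\partial/\partial x^j$ and the coefficients $f^j$ does not alter the value of $\Phi$ at $p$, which is exactly what the preliminary locality lemma guarantees. Once that is in place, the extraction of the vanishing factor $f^j(p)=0$ is immediate from $C^\infty(M)$-linearity.
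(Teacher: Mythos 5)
Your proposal is correct and follows essentially the same route as the paper: the heart of both arguments is the bump-function tensoriality lemma showing that $\Phi(X_1,\dots,X_k)(p)$ depends only on $(X_i)_p$, followed by the pointwise definition of $\omega_p$. The paper compresses your two-stage localisation into a single decomposition $X_1 = f_jY_j + (1-h)X_1$, while you additionally spell out the easy direction and injectivity, but these are presentational differences only.
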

\begin{proof} 
Suppose we have a map $\omega $ which satisfies the above properties. 
We first prove that the value $\omega ({X_1}, \cdots ,{X_k})(p)$, 
is determined only by the values of ${X_1}, \cdots ,{X_k} \in {\mathfrak{X}}(M)$ at point $p \in M$. 
By multi-linearity, it is sufficient to prove that for certain ${X_i}$, with $1 \leqslant i \leqslant k$, 
$\omega ({X_1}, \cdots ,{X_k})(p) = 0$ for ${X_i}(p) = 0$. 
Take $i = 1$, and local coordinates around $p$ : $(U;{x_1}, \cdots ,{x_n})$. On $U$, 
the vector field ${X_1}$ is expressed by $\displaystyle{{X_1} = {f_j}\frac{\partial }{{\partial {x^j}}}}$, 
with ${f_i}(p) = 0$. 
Consider an open set $V$ which $\bar V \subset U$, 
and a function $h \in {C^\infty }(M)$ such that, 
\begin{align}
h(q) = \left\{ {\begin{array}{*{20}{c}}
  {1 \dots q \in V} \\ 
  {0 \dots q \notin U} 
\end{array}} \right. .
\end{align} 
Let $\displaystyle{{Y_j} = h\frac{\partial }{{\partial {x^j}}}}$, 
${Y_j} \in {\mathfrak{X}}(M)$, 
and ${\tilde f_j}: = h{f_j}$, then, ${\tilde f_j} \in {C^\infty }(M)$. 

By simple modification; ${X_1} = {f_j}{Y_j} + (1 - h){X_1}$, and by the multi-linearity of the map $\omega $, 
\begin{align}
&\omega ({X_1}, \cdots ,{X_k})(p) = \omega ({f_j}{Y_j} + (1 - h){X_1},{X_2}, \cdots ,{X_k})(p)  \nonumber \\
&= {f_j}(p)\omega ({Y_j},{X_2}, \cdots ,{X_k})(p) + (1 - h(p))\omega ({X_1}, \cdots ,{X_k})(p) = 0.  
\end{align}
Having this, we can obtain a $k$-form as the following. 
For every $p \in M$, and tangent vectors ${\tilde X_1}, \cdots ,{\tilde X_k} \in {T_p}M$, 
choose vector fields ${X_i},\;(i = 0, \cdots ,k)$, such that ${X_i}(p) = {\tilde X_i}$. 
Put ${\tilde \omega _p}({\tilde X_1}, \cdots ,{\tilde X_k}) = \omega ({X_1}, \cdots ,{X_k})(p)$. 
By the previous discussion, this does not depend on the choice of vector field, and it is obvious that 
${\tilde \omega _p}$ is ${C^\infty }$. Then, $\tilde \omega $ is the differential form identified with $\omega $. 
\end{proof}

We can prove a similar result regarding multivector fields in the following lemma. 

\begin{lemma} Duality of locally decomposable $k$-vector fields and $k$-form  \label{lem_multivec}\\ 
1. Suppose we have a map $\omega :{{\mathfrak{X}}^k}(M) \to {C^\infty }(M)$ 
which is ${C^\infty }(M)$-linear. Then there exists a unique $k$-form $\Omega $ on $M$ such that, 
on each chart $(U,\varphi ), \varphi  = ({x^\mu })$ of $M$ is related to 
$\omega $ by 
\begin{align}
\omega (X) = \frac{1}{{k!}}{X^{{\mu _1} \cdots {\mu _k}}} \Omega 
\left( {\frac{\partial }{{\partial {x^{{\mu _1}}}}}, \cdots , 
\frac{\partial }{{\partial {x^{{\mu _k}}}}}} \right)
\end{align}
for $X \in {{\mathfrak{X}}^k}(M)$, ${X^{{\mu _1} \cdots {\mu _k}}} \in {C^\infty }(M)$. \\
2. Let $\{(U_\iota,\varphi_\iota )\}$ be the set of charts covering $M$. 
The restriction ${\left. \omega  \right|_{L.D.}}: {\widetilde {\mathfrak{X}}^{ \wedge k}}(M) \to {C^\infty }(M)$ 
given by ${\left. \omega  \right|_{L.D.}}(X) = \omega (X)$ where 
$X \in {\widetilde {\mathfrak{X}}^{ \wedge k}}(M)$, is related to $\Omega$ by 
\begin{align}
\omega (X) = \Omega \left( {{X_1}, \cdots ,{X_k}} \right)
\end{align}
where ${X_1}, \cdots ,{X_k} \in {\mathfrak{X}}(U_\iota)$ are the arbitrary decomposition of $X$ on each $U_\iota$.
\end{lemma}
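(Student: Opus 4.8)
The plan is to mirror the proof of Morita's theorem (Theorem \ref{thm_Morita}), the one genuinely new ingredient being a tensoriality (locality) argument adapted to multivectors. First I would show that $\omega$ is \emph{pointwise}: if $X \in \mathfrak{X}^k(M)$ satisfies $X(p)=0$, then $\omega(X)(p)=0$. On a chart $(U,\varphi)$, $\varphi=(x^\mu)$, around $p$ write $X = \sum_{\mu_1 < \cdots < \mu_k} X^{\mu_1 \cdots \mu_k}\, \frac{\partial}{\partial x^{\mu_1}} \wedge \cdots \wedge \frac{\partial}{\partial x^{\mu_k}}$ with every $X^{\mu_1 \cdots \mu_k}(p)=0$. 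Choose $h \in C^\infty(M)$ equal to $1$ on a neighbourhood of $p$ with $\mathrm{supp}\,h \subset U$. Then $h\,X^{\mu_1 \cdots \mu_k}$ and $h^k\,\frac{\partial}{\partial x^{\mu_1}} \wedge \cdots \wedge \frac{\partial}{\partial x^{\mu_k}} = (h\frac{\partial}{\partial x^{\mu_1}}) \wedge \cdots \wedge (h\frac{\partial}{\partial x^{\mu_k}})$ extend by zero to globally defined objects, and
\[
h^{k+1} X = \sum_{\mu_1 < \cdots < \mu_k} \bigl( h\,X^{\mu_1 \cdots \mu_k} \bigr) \Bigl( h^k\, \frac{\partial}{\partial x^{\mu_1}} \wedge \cdots \wedge \frac{\partial}{\partial x^{\mu_k}} \Bigr)
\]
holds globally. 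Applying the $C^\infty(M)$-linearity of $\omega$ to both sides and evaluating at $p$ (where $h=1$) gives $\omega(X)(p) = \sum (h\,X^{\mu_1 \cdots \mu_k})(p)\,\omega(\cdots)(p) = 0$, since every coefficient vanishes at $p$. Hence $\omega(X)(p)$ depends only on the value $X(p)$.

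Next I would define the candidate form by $\Omega(X_1,\dots,X_k) := \omega(X_1 \wedge \cdots \wedge X_k)$ for $X_1,\dots,X_k \in \mathfrak{X}(M)$. The $C^\infty(M)$-linearity of $\omega$ together with the $C^\infty(M)$-multilinearity of the wedge makes $\Omega$ multilinear, the antisymmetry of $\wedge$ makes it alternating, and it is $C^\infty(M)$-valued because $X_1 \wedge \cdots \wedge X_k \in \mathfrak{X}^k(M)$; by Theorem \ref{thm_Morita}, $\Omega$ is identified with a genuine $k$-form. For the coordinate relation of Part 1, the pointwise property lets me localise the basis fields $\frac{\partial}{\partial x^{\mu_i}}$ by $h$ as above, so that at $p$
\[
\omega(X)(p) = \frac{1}{k!} X^{\mu_1 \cdots \mu_k}(p)\, \omega\Bigl( \frac{\partial}{\partial x^{\mu_1}} \wedge \cdots \wedge \frac{\partial}{\partial x^{\mu_k}} \Bigr)(p) = \frac{1}{k!} X^{\mu_1 \cdots \mu_k}(p)\, \Omega\Bigl( \frac{\partial}{\partial x^{\mu_1}}, \dots, \frac{\partial}{\partial x^{\mu_k}} \Bigr)(p),
\]
which is the asserted formula. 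Uniqueness follows by feeding into this relation a global field $X$ agreeing near $p$ with $\frac{\partial}{\partial x^{\nu_1}} \wedge \cdots \wedge \frac{\partial}{\partial x^{\nu_k}}$: the alternating property of any admissible $\Omega$ forces $\Omega(\frac{\partial}{\partial x^{\nu_1}}, \dots, \frac{\partial}{\partial x^{\nu_k}})(p) = \omega(X)(p)$, so the values of $\Omega$ on every coordinate frame are pinned down by $\omega$, and $\Omega$ is unique.

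For Part 2, let $X \in \widetilde{\mathfrak{X}}^{\wedge k}(M)$ and fix a chart $U_\iota$ on which $X = X_1 \wedge \cdots \wedge X_k$ with $X_i \in \mathfrak{X}(U_\iota)$. Evaluating $\Omega$ (a form, hence tensorial) and $\omega$ (pointwise, by the first step) at an arbitrary $p \in U_\iota$, using bump-function extensions of the $X_i$, both sides reduce to the same pointwise pairing, which yields $\omega(X) = \Omega(X_1,\dots,X_k)$ on $U_\iota$. Independence of the chosen decomposition is then automatic: the left-hand side $\omega(X)$ depends only on $X$, so any two local decompositions of $X$ give the same value of $\Omega$.

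I expect the main obstacle to be the tensoriality step. Unlike Morita's theorem, where one localises a single vector-field argument at a time with one bump function, here the whole multivector is the argument, so the bump function must be applied simultaneously to all $k$ factors and to the coefficients. Keeping track of the resulting powers of $h$ — so that the extended objects are globally defined and genuinely agree with $X$ near $p$ — is the delicate bookkeeping on which the coordinate formula, the uniqueness, and Part 2 all rest.
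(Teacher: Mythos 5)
Your proof is correct, but it follows a noticeably different route from the paper's. The paper argues chart-by-chart: it defines $\Omega$ on the coordinate basis multivectors by $\Omega\left( \frac{\partial}{\partial x^{\mu_1}}, \cdots, \frac{\partial}{\partial x^{\mu_k}} \right) = \omega\left( \frac{\partial}{\partial x^{\mu_1}} \wedge \cdots \wedge \frac{\partial}{\partial x^{\mu_k}} \right)$, verifies by an explicit Jacobian computation that the resulting expression for $\omega(Y)$ is the same on overlapping charts, invokes Theorem \ref{thm_Morita} only to identify $\Omega$ as a form on each $U$, and dismisses the passage to all of $M$ with ``globalisation is carried out similarly.'' You instead prove first that $\omega$ is pointwise (the $h^{k+1}X = \sum (hX^{\mu_1\cdots\mu_k})(h\frac{\partial}{\partial x^{\mu_1}})\wedge\cdots\wedge(h\frac{\partial}{\partial x^{\mu_k}})$ bookkeeping is exactly right and is the one genuinely new ingredient relative to Morita), then define $\Omega(X_1,\dots,X_k):=\omega(X_1\wedge\cdots\wedge X_k)$ on \emph{global} vector fields and feed the resulting multilinear alternating map into Theorem \ref{thm_Morita}. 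Your version buys a clean and honest globalisation — the step the paper leaves implicit — at the cost of not exhibiting the coordinate-transformation cancellation explicitly. The difference is most visible in Part 2: the paper shows independence of the local decomposition by comparing the antisymmetrised coefficient products $X_{\iota,1}^{[\mu_1}\cdots X_{\iota,k}^{\mu_k]}$ of two decompositions, whereas for you this is automatic because $\omega(X)$ was shown from the outset to depend only on the multivector $X$ pointwise. Both arguments are sound; yours is the more structural one, the paper's the more computational.
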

\begin{proof}
Suppose we have a $k$-form $\Omega $. Let ${X_1}, \cdots ,{X_k}$ be vector fields on $M$, 
and let $\displaystyle{{X_i} = X_i^\mu \frac{\partial }{{\partial {x^\mu }}}}$, 
$i = 1, \dots ,k$ 
be the chart expression of ${X_i}$ in the chart $(U,\varphi ), \varphi  = ({x^\mu })$. 
Then by multi-linearity of a $k$-form, 
\begin{align}
{\Omega ({X_1}, \cdots ,{X_k}) = X_1^{{\mu _1}} \cdots X_k^{{\mu _k}} \Omega 
\left( {\frac{\partial }{{\partial {x^{{\mu _1}}}}}, \cdots ,\frac{\partial }{{\partial {x^{{\mu _k}}}}}} \right)}.
\end{align}
On the other hand, the $k$-alternating product of ${X_1}, \cdots ,{X_k}$ forms a decomposable vector fields, 
$\displaystyle{{X_1} \wedge  \cdots  \wedge {X_k} = X_1^{{\mu _1}} \cdots X_k^{{\mu _k}}
\left( {\frac{\partial }{{\partial {x^{{\mu _1}}}}} \wedge  \cdots  
\wedge \frac{\partial }{{\partial {x^{{\mu _k}}}}}} \right)}$, where $X_1^{{\mu _1}}, \cdots , 
X_k^{{\mu _k}} \in {C^\infty }(M)$. 
Set 
\begin{align}
{\omega \left( {\frac{\partial }{{\partial {x^{{\mu _1}}}}} \wedge  \cdots  
\wedge \frac{\partial }{{\partial {x^{{\mu _k}}}}}} \right) = \Omega 
\left( {\frac{\partial }{{\partial {x^{{\mu _1}}}}}, \cdots , 
\frac{\partial }{{\partial {x^{{\mu _k}}}}}} \right)}.
\end{align} 
Now we extend $\omega$ on ${{\mathfrak{X}}^k}(M)$ by linearity. 
The chart $(U,\varphi ), \varphi  = ({x^\mu })$ induces the bases of multivectors in the form 
$\displaystyle{\frac{\partial }{{\partial {x^{{\mu _1}}}}} \wedge  \cdots  
\wedge \frac{\partial }{{\partial {x^{{\mu _k}}}}}}$, 
and any multivector field $Y \in {{\mathfrak{X}}^k}(M)$ has a coordinate expression 
\begin{align}
Y = \frac{1}{{k!}}{Y^{{\mu _1} \cdots {\mu _k}}}\frac{\partial }{{\partial {x^{{\mu _1}}}}} \wedge  \cdots  
\wedge \frac{\partial }{{\partial {x^{{\mu _k}}}}}
\end{align}
with ${Y^{{\mu _1} \cdots {\mu _k}}} \in {C^\infty }(U)$. 
Set 
\begin{align}
\omega (Y) = \frac{1}{{k!}}{Y^{{\mu _1} \cdots {\mu _k}}}\omega 
\left( {\frac{\partial }{{\partial {x^{{\mu _1}}}}} \wedge  \cdots  
\wedge \frac{\partial }{{\partial {x^{{\mu _k}}}}}} \right) = \frac{1}{{k!}}{Y^{{\mu _1} \cdots {\mu _k}}} \Omega 
\left( {\frac{\partial }{{\partial {x^{{\mu _1}}}}}, \cdots , \frac{\partial }{{\partial {x^{{\mu _k}}}}}} \right). 
\end{align}
We will show that this expression is independent of charts. 

Let $(\bar U,\bar \varphi ),\bar \varphi  = ({\bar x^\mu })$ be another chart on $M$ such that 
$U \cap \bar U \ne \emptyset $. 
On $U \cap \bar U$, we have also 
\begin{align}
\omega (Y) = \frac{1}{{k!}}{\bar Y^{{\mu _1} \cdots {\mu _k}}} \Omega 
\left( {\frac{\partial }{{\partial {{\bar x}^{{\mu _1}}}}}, \cdots , 
\frac{\partial }{{\partial {{\bar x}^{{\mu _k}}}}}} \right).
\end{align}
But since we have 
\begin{align}
Y = \frac{1}{{k!}}{Y^{{\mu _1} \cdots {\mu _k}}}\frac{\partial }{{\partial {x^{{\mu _1}}}}} \wedge  \cdots  
\wedge \frac{\partial }{{\partial {x^{{\mu _k}}}}} 
= \frac{1}{{k!}}{Y^{{\nu _1} \cdots {\nu _k}}} 
\frac{{\partial {{\bar x}^{{\mu _1}}}}}{{\partial {x^{{\nu _1}}}}} \cdots 
\frac{{\partial {{\bar x}^{{\mu _k}}}}}{{\partial {x^{{\nu _k}}}}} 
\frac{\partial }{{\partial {{\bar x}^{{\mu _1}}}}} \wedge  
\cdots  \wedge \frac{\partial }{{\partial {{\bar x}^{{\mu _k}}}}} 
\end{align}
we get 
\begin{align}
{\bar Y^{{\mu _1} \cdots {\mu _k}}} = {Y^{{\nu _1} \cdots {\nu _k}}} 
\frac{{\partial {{\bar x}^{{\mu _1}}}}}{{\partial {x^{{\nu _1}}}}} \cdots 
\frac{{\partial {{\bar x}^{{\mu _k}}}}}{{\partial {x^{{\nu _k}}}}},
\end{align}
and since
\begin{align}
\Omega \left( {\frac{\partial }{{\partial {{\bar x}^{{\mu _1}}}}}, \cdots , 
\frac{\partial }{{\partial {{\bar x}^{{\mu _k}}}}}} \right) 
= \frac{{\partial {x^{{\nu _1}}}}}{{\partial {{\bar x}^{{\mu _1}}}}} \cdots 
\frac{{\partial {x^{{\nu _k}}}}}{{\partial {{\bar x}^{{\mu _k}}}}}\Omega 
\left( {\frac{\partial }{{\partial {x^{{\nu _1}}}}}, \cdots , 
\frac{\partial }{{\partial {x^{{\nu _k}}}}}} \right),
\end{align}
we have 
\begin{align}
\omega (Y) = \frac{1}{{k!}}{Y^{{\mu _1} \cdots {\mu _k}}}\Omega 
\left( {\frac{\partial }{{\partial {x^{{\mu _1}}}}}, \cdots , 
\frac{\partial }{{\partial {x^{{\mu _k}}}}}} \right) = \frac{1}{{k!}}{\bar Y^{{\mu _1} \cdots {\mu _k}}} \Omega 
\left( {\frac{\partial }{{\partial {{\bar x}^{{\mu _1}}}}}, \cdots , 
\frac{\partial }{{\partial {{\bar x}^{{\mu _k}}}}}} \right). 
\end{align}
Therefore, this expression does not depend on the chart, and $\Omega $ is uniquely determined on $M$. 

Conversely, suppose we have a ${C^\infty }(M)$-linear map 
$\omega :{{\mathfrak{X}}^k}(M) \to {C^\infty }(M)$. By the linearity of $\omega $, 
for any multivector field $Y \in {{\mathfrak{X}}^k}(M)$ we have 
\begin{align}
\omega (Y) = \frac{1}{{k!}}{Y^{{\mu _1} \cdots {\mu _k}}}\omega \left( {\frac{\partial }{{\partial {x^{{\mu _1}}}}} \wedge  \cdots  \wedge \frac{\partial }{{\partial {x^{{\mu _k}}}}}} \right)
\end{align}
on each chart $(U,\varphi ),\varphi  = ({x^\mu })$ of $M$. 
Set 
\begin{align}
\Omega \left( \displaystyle{{\frac{\partial }{{\partial {x^{{\mu _1}}}}}, \cdots , 
\frac{\partial }{{\partial {x^{{\mu _k}}}}}}} \right) = \omega 
\left( \displaystyle{{\frac{\partial }{{\partial {x^{{\mu _1}}}}} \wedge  \cdots  \wedge 
\frac{\partial }{{\partial {x^{{\mu _k}}}}}}} \right),
\end{align} 
then since the multivectors are multi-linear and skew symmetric in the vector fields 
$\displaystyle{\frac{\partial }{{\partial {x^\mu }}}}$, by the Theorem \ref{thm_Morita}, 
$\Omega$ is identified as a $k$-form on $U$. 
Then for the general 
\begin{align}
\omega (Y) = \frac{1}{{k!}}{Y^{{\mu _1} \cdots {\mu _k}}} \Omega 
\left( {\frac{\partial }{{\partial {x^{{\mu _1}}}}}, \cdots , \frac{\partial }{{\partial {x^{{\mu _k}}}}}} \right), 
\end{align}
$\displaystyle{\frac{1}{{k!}}{Y^{{\mu _1} \cdots {\mu _k}}}\Omega }$ is also a $k$-form on $U$, 
since ${Y^{{\mu _1} \cdots {\mu _k}}}$ is a function on $U$. 
Globalisation is carried out similarly. 
Thus we have proved the first part of the lemma. 

Now consider the special case where $X \in {\widetilde {\mathfrak{X}}^{ \wedge k}}(M)$ is locally decomposable. 
Without any loss of generality, it is always possible to choose an open covering 
${\{ {U_\iota }\} _{\iota  \in I}}$ of $M$ such that for each ${U_\iota }$, 
$X$ is decomposable, namely $X = {X_{\iota ,1}} \wedge  \cdots  \wedge {X_{\iota ,k}}$, with ${X_{\iota ,1}}, 
\cdots ,{X_{\iota ,k}} \in {\mathfrak{X}}({U_\iota })$. 
$\iota  \in I$ is an index taken from countable index set $I$. 
Let ${\mathcal{A}} = {\{ ({U_\iota },{\varphi _\iota })\} _{\iota  \in I}}$ be an atlas of $M$, 
then on each ${U_\iota }$ we have $\omega (X) = \omega ({X_{\iota ,1}} \wedge  \cdots  \wedge {X_{\iota ,k}})$. 
We will prove that this does not depend on the decomposition of $X$. 
Suppose $X$ has another decomposition, 
$X = {\tilde X_{\iota ,1}} \wedge  \cdots  \wedge {\tilde X_{\iota ,k}}$, ${\tilde X_{\iota ,1}}, \cdots , 
{\tilde X_{\iota ,k}} \in {\mathfrak{X}}({U_\iota })$ on 
${U_\iota }$. Let the coordinates on ${U_\iota }$ be 
$\varphi = ({x^\mu })$, then the local expression of $X$ becomes, 
\begin{align}
X = X_{\iota ,1}^{{\mu _1}} \cdots X_{\iota ,k}^{{\mu _k}} 
{\frac{\partial }{{\partial {x^{{\mu _1}}}}} \wedge  \cdots  \wedge 
\frac{\partial }{{\partial {x^{{\mu _k}}}}} = \tilde X_{\iota ,1}^{{\mu _1}} \cdots 
\tilde X_{\iota ,k}^{{\mu _k}}\frac{\partial }{{\partial {x^{{\mu _1}}}}} \wedge  \cdots  
\wedge \frac{\partial }{{\partial {x^{{\mu _k}}}}}}, 
\end{align}
implying $X_{\iota ,1}^{[{\mu _1}} \cdots 
X_{\iota ,k}^{{\mu _k}]} = \tilde X_{\iota ,1}^{[{\mu _1}} \cdots \tilde X_{\iota ,k}^{{\mu _k}]}$. 
Then by the first part of the lemma, we have 
\begin{align}
&\omega ({X_{\iota ,1}} \wedge  \cdots  \wedge {X_{\iota ,k}})  \nonumber \\
&= X_{\iota ,1}^{{\mu _1}} \cdots 
  X_{\iota ,k}^{{\mu _k}}\Omega \left( {\frac{\partial }{{\partial {x^{{\mu _1}}}}}, \cdots , 
  \frac{\partial }{{\partial {x^{{\mu _k}}}}}} \right) = X_{\iota ,1}^{[{\mu _1}} \cdots 
  X_{\iota ,k}^{{\mu _k}]}\Omega \left( {\frac{\partial }{{\partial {x^{{\mu _1}}}}}, \cdots , 
  \frac{\partial }{{\partial {x^{{\mu _k}}}}}} \right) \nonumber \\
&= \tilde X_{\iota ,1}^{[{\mu _1}} \cdots \tilde X_{\iota ,k}^{{\mu _k}]}\Omega 
   \left( {\frac{\partial }{{\partial {x^{{\mu _1}}}}}, \cdots ,\frac{\partial }{{\partial {x^{{\mu _k}}}}}} \right) 
   = \omega ({{\tilde X}_{\iota ,1}} \wedge  \cdots  \wedge {{\tilde X}_{\iota ,k}}) 
\end{align}
Therefore, this map does not depend on the choice of decomposition, and also implies 
$\omega ({X_{\iota ,1}} \wedge  \cdots  \wedge {X_{\iota ,k}}) 
= \Omega ({X_{\iota ,1}}, \cdots ,{X_{\iota ,k}})$ on each ${U_\iota }$. 
\end{proof}

\begin{defn}  Tangent mapping of multivectors \\
Let $M,N$ be smooth manifolds and $f:M \to N$ a ${C^\infty }$-map. 
Let $(U,\varphi ),\varphi  = ({x^\mu })$ be a chart on $M$ and $(V,\psi ), \psi  = ({y^\nu })$ 
a chart on $N$, such that $f(U) \subset V$. 
We extend the bundle morphism $(Tf,f)$ of tangent bundles 
$Tf:TM \to TN$ and $f:M \to N$ to bundle morphism of multivector bundles, 
$({\Lambda ^k}TM,{\Lambda ^k}{\tau _M},M)$ to $({\Lambda ^k}TN, {\Lambda ^k}{\tau _N},N)$, 
similarly as in the case of contravariant tensor bundles. 
Let $v \in {\Lambda ^k}{T_p}M$, and let the coordinate expression be 
$\displaystyle{v = \frac{1}{{k!}}{v^{{\mu _1} \cdots {\mu _k}}}
{\left( {\frac{\partial }{{\partial {x^{{\mu _1}}}}} \wedge  \cdots  
\wedge \frac{\partial }{{\partial {x^{{\mu _k}}}}}} \right)_p}}$. 
Then we define the image of this multivector by the tangent map by 
\begin{align}
{\Lambda ^k}Tf(v) = \frac{1}{{k!}}{v^{{\mu _1} \cdots {\mu _k}}}
{\left( {Tf\left( {\frac{\partial }{{\partial {x^{{\mu _1}}}}}} \right) \wedge  \cdots  
\wedge Tf\left( {\frac{\partial }{{\partial {x^{{\mu _k}}}}}} \right)} \right)_{f(p)}}.
\end{align}
\end{defn}
To distinguish between the usual tangent mappings, we used the notation ${\Lambda ^k}Tf(v)$ 
to state that the map is acting on a $k$-multivector, and occasionally call them {\it multi-tangent map}.  

%%%%%%%%%%%%%%%%%%%%%%%%%%%%%%%%%%%%%%%%%%%%%%%%%%%%%%%%%%%%%%%%%%%%%%%%%%%%%%%%%%%
\subsection{Second order multivector bundle} \label{subsec_2nd_k_bundle}
%%%%%%%%%%%%%%%%%%%%%%%%%%%%%%%%%%%%%%%%%%%%%%%%%%%%%%%%%%%%%%%%%%%%%%%%%%%%%%%%%%%
In the previous sections, we have prepared the basics of multivectors and second order tangent bundle. 
With these foundations, we can now construct the underlying structures required for the second order field theory, 
which we call the bundle of {\it second order multivectors}, 
and denote by $({({\Lambda ^k}T)^2}M,{\Lambda ^k}\tau _M^{2,1},{\Lambda ^k}TM)$. 
The section of the bundle of second order multivectors is called {\it second order multivector field}, 
which corresponds to the physical fields as we shall see later. 

Before constructing the bundle of second order multivectors, we begin with some basic observations.
\begin{pr}  \label{prop_vectorbundle}
If $(E,\pi ,M)$ is a vector bundle, then $(TE,T\pi ,TM)$ is a vector bundle. 
\end{pr}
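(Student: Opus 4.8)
The plan is to verify the defining data of a vector bundle, namely the two fibrewise operations (fibre addition $\sigma_T$ and scalar multiplication $\mu_T$) and a family of linear local trivialisations, by applying the tangent functor to the corresponding data of $(E,\pi,M)$. First I would recall that we already know $(TE,T\pi,TM)$ is a bundle: this is essentially Example~\ref{ex_tangentbundlemorphism} together with the fact that a tangent bundle is itself a bundle. So the real task is to equip the fibres $(T\pi)^{-1}(w)$, $w \in TM$, with a vector-space structure compatible with some trivialisation. The key idea is that the fibre operations $\sigma \colon E \times_M E \to E$ and $\mu \colon \mathbb{R} \times E \to E$ are smooth maps, so one may apply the tangent functor to obtain $T\sigma$ and $T\mu$; after identifying $T(E\times_M E)$ with $TE \times_{TM} TE$ (using that $T$ preserves fibred products of submersions), these descend to well-defined operations $\sigma_T$ and $\mu_T$ on the fibres of $T\pi$.

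The concrete steps I would carry out are as follows. First, choose a vector bundle adapted chart $(\pi^{-1}(W),\psi)$, $\psi = (x^\mu, u^a)$, on $E$ linear on $\pi$, so that a point of $E$ over $p\in W$ is written $\xi^a e_a$ and the trivialisation $t_p \colon \pi^{-1}(W) \to W \times \mathbb{R}^n$ sends $\xi \mapsto (p,(\xi^1,\dots,\xi^n))$. Passing to the induced chart on $TE$ gives fibred coordinates $(x^\mu, u^a, \dot x^\mu, \dot u^a)$. I would then observe that an element of the fibre $(T\pi)^{-1}(w)$ over $w = (p,\dot x)\in TM$ consists of those tangent vectors whose $(x^\mu, \dot x^\mu)$ components are fixed, while $(u^a, \dot u^a)$ range freely over $\mathbb{R}^{2n}$; this exhibits the candidate trivialisation $t_w^{T} \colon (T\pi)^{-1}(W') \to W' \times \mathbb{R}^{2n}$, $\xi \mapsto (T\pi(\xi),(u^a(\xi),\dot u^a(\xi)))$, where $W' = \tau_M^{-1}(W)$. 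Because the original trivialisation $t_p$ is fibrewise linear, its transition functions are linear in the $u^a$; applying $Tf$ to a fibrewise-linear map produces again a fibrewise-linear map (the tangent map of a linear isomorphism is linear, and differentiating the linear transition $\bar u^a = g^a_b(x)\,u^b$ yields $\bar{\dot u}^a = g^a_b\,\dot u^b + \partial_\mu g^a_b\,\dot x^\mu\, u^b$, which is again linear in $(u^b,\dot u^b)$). Hence $t_w^{T}$ restricted to each fibre is a linear isomorphism onto $\{w\}\times\mathbb{R}^{2n}$.

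I would then define $\sigma_T := T\sigma$ and $\mu_T := T\mu$ (restricted appropriately) and check conditions 2(a), 2(b), 2(c) of the vector bundle definition: smoothness is automatic since $\sigma,\mu$ are smooth and $T$ sends smooth maps to smooth maps; the fibre-preserving property follows by applying $T\pi$ to $\pi\circ\sigma = \mathrm{pr}_1\circ(\cdots)$; and the vector-space axioms on each fibre hold because the tangent functor preserves commutative diagrams, so that associativity, distributivity and the scalar axioms for $(\sigma_T,\mu_T)$ follow from those for $(\sigma,\mu)$ by functoriality. Under the trivialisation $t_w^T$ these operations correspond exactly to the componentwise addition and scalar multiplication on $\mathbb{R}^{2n}$. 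I expect the main obstacle to be the careful identification $T(E\times_M E)\cong TE\times_{TM}TE$ and verifying that $T\sigma$ genuinely maps this into $(T\pi)^{-1}$ of a single fibre; once that functorial bookkeeping is in place, every remaining axiom is a formal consequence of functoriality of $T$, so I would state those verifications briefly rather than computing them in coordinates.
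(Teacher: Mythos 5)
Your proposal is correct and follows essentially the same route as the paper: both work in a vector bundle adapted chart, pass to the induced coordinates $(x^\mu,u^a,\dot x^\mu,\dot u^a)$ on $TE$, reorder them so that $(u^a,\dot u^a)$ become the fibre coordinates of $T\pi$ over a point of $TM$ (the paper packages this reordering as the ``swap map'' $S$), and read off a linear local trivialisation with typical fibre $\mathbb{R}^{2n}$. The one place you go beyond the paper is in explicitly constructing the fibrewise operations as $T\sigma$ and $T\mu$ (via the identification $T(E\times_M E)\cong TE\times_{TM}TE$ and the restriction of $T\mu$ to the zero section of $T\mathbb{R}$) and in recording the linear transition law $\bar u^a=g^a_b u^b$, $\dot{\bar u}^a=g^a_b\dot u^b+\partial_\mu g^a_b\,\dot x^\mu u^b$; the paper leaves these operations implicit in the trivialisation, so your version is, if anything, more complete on that point.
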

\begin{proof}
Let $m = \dim M$, $n = \dim E$, and $({\pi ^{ - 1}}(W),\psi ), \psi  = ({x^1}, \cdots {x^m},{u^1}, \cdots ,{u^n})$ 
a vector bundle adapted chart on $E$, induced by the chart $(W,\varphi ), \varphi  = ({x^1}, \cdots ,{x^m})$ 
on $W \subset M$. 
The element of the fibre ${E_p}$ would then have a chart expression 
$\phi  = {\phi ^i}{({e_i})_p}$, $i = 1, \cdots ,n$, where ${e_i}$ 
are the local sections defined from the chart as in Definition \ref{def_localsectionofvb}. 
Let $p \in U \subset W$, and denote the local trivialisation of $(E,\pi ,M)$ at $p$ by 
$({U_p},{\mathbb{R}^n},{s_p})$. 
The chart on $M$ and induces the local trivialisation on the bundles $(TM,{\tau _M},M)$ and 
$(TE,{\tau _E},E)$, by $({U_p},{\mathbb{R}^m},{t_{M,p}})$ and 
$({s_p}({\pi ^{ - 1}}({U_p})) = {U_p} \times {\mathbb{R}^n},{\mathbb{R}^{m + n}},{t_{E,q}})$, 
$q \in E,\,\;\pi (q) = p$, respectively. 
We take induced charts on $TM$ and $TE$ associated to these trivialisation as 
$({\tau _M}^{ - 1}({U_p}),{\varphi ^1}),{\varphi ^1} = ({x^\mu },{y^\mu })$ and 
$({\pi ^{ - 1}}({\tau _M}^{ - 1}({U_p})),{\psi ^1}),{\psi ^1} = ({x^\mu },{u^i},{\dot x^\mu },{\dot u^i})$. 
Furthermore, these trivialisations gives rise to the local trivialisation $({\hat U_p},{F_p},{\hat t_p})$ on 
$(TE,T\pi ,TM)$, by 
\begin{align}
{\hat t_p}:T{\pi ^{ - 1}}({U_p} \times {\mathbb{R}^m}) \to {U_p} 
\times {\mathbb{R}^m} \times {\mathbb{R}^{2n}},
\end{align}
where we denoted ${U_p} \times {\mathbb{R}^m} = {\hat U_p}$, and ${\mathbb{R}^{2n}} = {F_p}$. 
Here we used the isomorphism of the trivialisation, 
\begin{align}
S:{U_p} \times {\mathbb{R}^n} \times {\mathbb{R}^{m + n}} \to {U_p} 
\times {\mathbb{R}^m} \times {\mathbb{R}^{2n}},
\end{align}
which in local coordinates could be expressed as $S({x^\mu },{u^\alpha },{\dot x^\mu },{\dot u^\alpha }) 
= ({x^\mu },{\dot x^\mu },{u^\alpha },{\dot u^\alpha })$. $S$ is called a {\it swap map}. 
Let us see how this work. Let ${\xi _\phi } \in {T_\phi }E$, $\phi 
= {\phi ^i}{e_i}(p) \in {E_p}$. 
The local coordinate expression of ${\xi _\phi }$ 
is given by $\displaystyle{{\xi _\phi } = {\xi ^\mu }{\left( {\frac{\partial }{{\partial {x^\mu }}}} \right)_\phi } 
+ {\bar \xi ^i}{\left( {\frac{\partial }{{\partial {u^i}}}} \right)_\phi }}$ or equivalently, 
${\psi ^1}({\xi _\phi }) = ({x^\mu }(\phi ),{\phi ^i},{\xi ^\mu },{\bar \xi ^i})$. 
The swap map will take this to 
\begin{align}
S{\psi ^1}({\xi _\phi }) = S({x^\mu }(\phi ),{\phi ^i},{\xi ^\mu },{\bar \xi ^i}) 
= ({x^\mu }(\phi ),{\xi ^\mu },{\phi ^i},{\bar \xi ^i}),
\end{align}
which by the local section ${e_i},\;{\dot e_i}$ of $T\pi $, 
defined by ${u^i}({e_j}(z)) = \delta _j^i$, ${\dot u^i}({\dot e_j}(z)) 
= \delta _j^i$ for all $z \in {\tau _M}^{ - 1}({U_p})$, could be expressed by ${\xi _q} 
= {\phi ^i}{({e_i})_q} + {\bar \xi ^i}{({\dot e_i})_q} \in {(TE)_q}$, with 
$q = ({x^\mu }(p),{\xi ^\mu }) \in {\tau _M}^{ - 1}({U_p})$. It is easy to see that this point $q$ 
is consistent with the projection ${T_\phi }\pi ({\xi _\phi })$. 
The swap map $S({\xi _\phi }) = {\xi _q}$ simply changes the base point of the vector in $TE$. 
We can now see the fibres of $T\pi $ becomes a vector space over each point of 
$z \in {\tau _M}^{ - 1}({U_p}) \subset TM$. Considering for every $p \in M$, 
these process for each trivialisation, we can conclude that $T\pi $ is a vector bundle. 
\end{proof}
Proposition \ref{prop_vectorbundle} tells that the 
$k$-multivector bundle $({\Lambda ^k}TM,{\Lambda ^k}{\tau _M},M)$ 
induces a vector bundle $(T({\Lambda ^k}TM),T{\Lambda ^k}{\tau _M},TM)$ by the tangent map 
$T{\Lambda ^k}{\tau _M}:T({\Lambda ^k}TM) \to TM$. The element of $T({\Lambda ^k}TM)$ 
is a 1-vector at a point on ${\Lambda ^k}TM$, and if we choose a vector bundle adapted chart 
$(V,\psi )$, $V = {({\Lambda ^k}TM)^{ - 1}}({U_p})$, $\psi  = ({x^\mu },{y^{{\mu _1} \cdots {\mu _k}}})$ on 
${\Lambda ^k}TM$, induced by the chart $(U,\varphi ),\varphi  = ({x^1}, \cdots ,{x^m})$ on $M$, 
it has a coordinate expression 
\begin{align}
{\xi _v} = {\xi ^\mu }{\left( {\frac{\partial }{{\partial {x^\mu }}}} \right)_v} 
+ \frac{1}{{k!}}{\xi ^{{\mu_1} \cdots {\mu_k}}}{\left( {\frac{\partial }{{\partial {y^{{\mu _1} 
\cdots {\mu _k}}}}}} \right)_v}, \label{TlkTM}
\end{align}
with ${\xi _v} \in {T_v}({\Lambda ^k}TM)$, $v \in {({\Lambda ^k}TM)_p}$. 
or similarly, 
\begin{align}
{\psi ^1}({\xi _v}) = \left( {{x^\mu }(p),{v^{{\mu _1} \cdots {\mu _k}}},{\xi ^\mu },{\xi ^{{\mu _1} 
\cdots {\mu _k}}}} \right), 
\end{align}
by the induced chart $({V^1},{\psi ^1})$, ${V^1} = {({\tau _{{\Lambda ^k}TM}})^{ - 1}}(V)$, 
${\psi ^1} = ({x^\mu },{y^{{\mu _1} \cdots {\mu _k}}},{\dot x^\mu },{\dot y^{{\mu _1} \cdots {\mu _k}}})$ 
on $T({\Lambda ^k}TM)$. 
The swap map 
\begin{align}
S{\psi ^1}({\xi _v}) = S\left( {{x^\mu }(p),{v^{{\mu _1} 
\cdots {\mu _k}}},{\xi ^\mu },{\xi ^{{\mu _1} \cdots {\mu _k}}}} \right) 
= \left( {{x^\mu }(p),{\xi ^\mu },{v^{{\mu _1} \cdots {\mu _k}}},{\xi ^{{\mu _1} \cdots {\mu _k}}}} \right)
\end{align}
will give the multivector the expression 
\begin{align}
{\xi _q} = \frac{1}{{k!}}{v^{{\mu _1} \cdots {\mu _k}}}{(\frac{\partial }{{\partial {y^{{\mu _1} 
\cdots {\mu _k}}}}})_q} + \frac{1}{{k!}}{\xi ^{{\mu _1} \cdots {\mu _k}}}
{(\frac{\partial }{{\partial {{\dot y}^{{\mu _1} \cdots {\mu _k}}}}})_q} \in {(T{\Lambda ^k}TM)_q},
\end{align}
with $q = ({x^\mu }(p),{\xi ^\mu }) \in {\tau _M}^{ - 1}({U_p})$, 
where the local section 
\begin{align}
{e_{{\mu _1} \cdots {\mu _k}}} 
= \frac{\partial }{{\partial {y^{{\mu _1} \cdots {\mu _k}}}}},\;{\dot e_{{\mu _1} \cdots {\mu _k}}} 
= \frac{\partial }{{\partial {{\dot y}^{{\mu _1} \cdots {\mu _k}}}}}
\end{align} 
of $T{\Lambda ^k}{\tau _M}$ was defined by ${y^{{\mu _1} \cdots {\mu _k}}}({e_{^{{\nu _1} 
\cdots {\nu _k}}}}(z)) = k!\delta _{{\nu _1}}^{[{\mu _1}} \cdots \delta _{{\nu _k}}^{{\mu _k}]}$, 
${\dot y^{{\mu _1} \cdots {\mu _k}}}({\dot e_{^{{\nu _1} \cdots {\nu _k}}}}(z)) 
= k!\delta _{{\nu _1}}^{[{\mu _1}} \cdots \delta _{{\nu _k}}^{{\mu _k}]}$, 
for all $z \in {\Lambda ^k}{\tau _M}^{ - 1}({U_p})$. 
The map $T{\Lambda ^k}{\tau _M}$ 
sends ${\xi _v}$ to ${T_{{\Lambda ^k}{\tau _M}(v)}}M$ by 
\begin{align}
&T{\Lambda ^k}{\tau _M}({\xi _v}) 
= {\left. {\frac{{\partial {x^\mu }{\Lambda ^k}{\tau _M}{\psi ^{ - 1}}}}{{\partial {x^\nu }}}} 
\right|_{\psi (v)}}{\xi ^\nu }{\left( {\frac{\partial }{{\partial {x^\mu }}}} \right)_p} \nonumber \\
&\quad + \frac{1}{{k!}}{\left. {\frac{{\partial {x^\mu }{\Lambda ^k}{\tau _M}{\psi ^{ - 1}}}}{{\partial {y^{{\nu _1} 
\cdots {\nu _k}}}}}} \right|_{\psi (v)}}{\xi ^{{\nu _1} \cdots {\nu _k}}}
{\left( {\frac{\partial }{{\partial {x^\mu }}}} \right)_p} = {\xi ^\mu }
{\left( {\frac{\partial }{{\partial {x^\mu }}}} \right)_p}, 
\end{align}
which is indeed the base point $q$ of ${\xi _q}$, 
and we see that ${\left( {T{\Lambda ^k}{\tau _M}} \right)_q}$ becomes a vector space. 

The $k$-multivector bundle $({\Lambda ^k}TM,{\Lambda ^k}{\tau _M},M)$ 
also induces a second order bundle 
$({\Lambda ^k}T{\Lambda ^k}TM,{\Lambda ^k}{\tau _{{\Lambda ^k}TM}},{\Lambda ^k}TM)$ 
by iteration. We could introduce the induced charts for ${\Lambda ^k}T{\Lambda ^k}TM$ 
similarly as we did in Definition \ref{lkTM}. 
Let $(V,\psi ),\psi  = ({x^\mu },{y^{{\mu _1} \cdots {\mu _k}}})$ 
be a chart on ${\Lambda ^k}TM$, 
then the natural bases of ${\Lambda ^k}T{\Lambda ^k}TM$ on $V$ are 
\begin{align}
\left( {\frac{\partial }{{\partial {x^\mu }}}, \quad
\frac{\partial }{{\partial {y^{{\mu _1} \cdots {\mu _k}}}}}} \right),
\end{align} 
where ${\mu _1}, \cdots ,{\mu _k} = 1, \cdots ,n$ are alternating indices. 
The element of ${\Lambda ^k}T{\Lambda ^k}TM$ at a point $p \in {\Lambda ^k}TM$ will have the form 
\begin{align}
&w = \frac{1}{{k!}}{w^{{\mu _1} \cdots {\mu _k}}}{\left( {\frac{\partial }{{\partial {x^{{\mu _1}}}}} \wedge  
\cdots  \wedge \frac{\partial }{{\partial {x^{{\mu _k}}}}}} \right)_p} 
+ \frac{1}{{(k - 1)!}}{w^{I{\mu _1} \cdots {\mu _{k - 1}}}}{\left( {\frac{\partial }{{\partial {y^I}}} \wedge 
\frac{\partial }{{\partial {x^{{\mu _1}}}}} \wedge  \cdots  \wedge 
\frac{\partial }{{\partial {x^{{\mu _{k - 1}}}}}}} \right)_p} \nonumber \\
&\quad  + \frac{1}{{(k - 2)!2!}}{w^{{I_1}{I_2}{\mu _1} \cdots {\mu _{k - 2}}}}
{\left( {\frac{\partial }{{\partial {y^{{I_1}}}}} \wedge \frac{\partial }{{\partial {y^{{I_2}}}}} \wedge 
\frac{\partial }{{\partial {x^{{\mu _1}}}}} \wedge  
\cdots  \wedge \frac{\partial }{{\partial {x^{{\mu _{k - 2}}}}}}} \right)_p} +  \cdots  \nonumber \\
&\quad  + \frac{1}{{(k - 1)!}}{w^{{I_1}{I_2} \cdots {I_{k - 1}}\mu }}
{\left( {\frac{\partial }{{\partial {y^{{I_1}}}}} \wedge  \cdots  \wedge 
\frac{\partial }{{\partial {y^{{I_{k - 1}}}}}} \wedge 
\frac{\partial }{{\partial {x^\mu }}}} \right)_p} \nonumber \\
&\quad  + \frac{1}{{k!}}{w^{{I_1}{I_2} \cdots {I_k}}}
{\left( {\frac{\partial }{{\partial {y^{{I_1}}}}} \wedge 
\frac{\partial }{{\partial {y^{{I_2}}}}} \wedge  \cdots  
\wedge \frac{\partial }{{\partial {y^{{I_k}}}}}} \right)_p}.  
\end{align}
Here we have introduced a multi-index notation for visibility. 
The upper case latin index ${I_1},{I_2}, \cdots ,{I_k}$ 
is a combination of alternating $k$ indices, 
such that $I = {\mu _1} \cdots {\mu _k}$, and has its own unique label, 
and we suppressed the factorial coefficient which we will understand as already included in the summation over $I$ 
appearing consequently. 
For example, in the case $n=4$ and $k=2$, 
$I = {\mathbf{1}},{\mathbf{2}}, \cdots ,{}_4{C_2} = {\mathbf{6}}$, 
and 
\begin{align}
{\mathbf{1}} := (1\,2), {\mathbf{2}} := (1\,3), {\mathbf{3}} := (1\,4), {\mathbf{4}} := (2\,3), 
{\mathbf{5}} := (2\,4), {\mathbf{6}} := (3\,4), 
\end{align}
we are able to write explicitly
\begin{align}
&\frac{1}{2}{w^{IJ}}\frac{\partial }{{\partial {y^I}}} \wedge \frac{\partial }{{\partial {y^J}}} \nonumber \\
&  = {w^{{\mathbf{12}}}}\frac{\partial }{{\partial {y^{\mathbf{1}}}}} \wedge 
  \frac{\partial }{{\partial {y^{\mathbf{2}}}}} 
  + {w^{{\mathbf{13}}}}\frac{\partial }{{\partial {y^{\mathbf{1}}}}} \wedge 
  \frac{\partial }{{\partial {y^{\mathbf{3}}}}} 
  + {w^{{\mathbf{14}}}}\frac{\partial }{{\partial {y^{\mathbf{1}}}}} \wedge 
  \frac{\partial }{{\partial {y^{\mathbf{4}}}}} 
  + {w^{{\mathbf{15}}}}\frac{\partial }{{\partial {y^{\mathbf{1}}}}} \wedge 
  \frac{\partial }{{\partial {y^{\mathbf{5}}}}} 
  + {w^{{\mathbf{16}}}}\frac{\partial }{{\partial {y^{\mathbf{1}}}}} \wedge 
  \frac{\partial }{{\partial {y^{\mathbf{6}}}}} \nonumber \\
&  + {w^{{\mathbf{23}}}}\frac{\partial }{{\partial {y^{\mathbf{2}}}}} \wedge 
   \frac{\partial }{{\partial {y^{\mathbf{3}}}}} 
   + {w^{{\mathbf{24}}}}\frac{\partial }{{\partial {y^{\mathbf{2}}}}} \wedge 
   \frac{\partial }{{\partial {y^{\mathbf{4}}}}} 
   + {w^{{\mathbf{25}}}}\frac{\partial }{{\partial {y^{\mathbf{2}}}}} \wedge 
   \frac{\partial }{{\partial {y^{\mathbf{5}}}}} 
   + {w^{{\mathbf{26}}}}\frac{\partial }{{\partial {y^{\mathbf{2}}}}} \wedge 
   \frac{\partial }{{\partial {y^{\mathbf{6}}}}} \nonumber \\
&  + {w^{{\mathbf{34}}}}\frac{\partial }{{\partial {y^{\mathbf{3}}}}} \wedge 
   \frac{\partial }{{\partial {y^{\mathbf{4}}}}} 
   + {w^{{\mathbf{35}}}}\frac{\partial }{{\partial {y^{\mathbf{3}}}}} \wedge 
   \frac{\partial }{{\partial {y^{\mathbf{5}}}}} 
   + {w^{{\mathbf{36}}}}\frac{\partial }{{\partial {y^{\mathbf{3}}}}} \wedge 
   \frac{\partial }{{\partial {y^{\mathbf{6}}}}} \nonumber \\
&  + {w^{{\mathbf{45}}}}\frac{\partial }{{\partial {y^{\mathbf{4}}}}} \wedge 
   \frac{\partial }{{\partial {y^{\mathbf{5}}}}} 
   + {w^{{\mathbf{46}}}}\frac{\partial }{{\partial {y^{\mathbf{4}}}}} \wedge 
   \frac{\partial }{{\partial {y^{\mathbf{6}}}}} \nonumber \\
&   + {w^{{\mathbf{56}}}}\frac{\partial }{{\partial {y^{\mathbf{5}}}}} \wedge 
   \frac{\partial }{{\partial {y^{\mathbf{6}}}}} \nonumber \\
&  = {w^{(12)(13)}}\frac{\partial }{{\partial {y^{(12)}}}} \wedge 
   \frac{\partial }{{\partial {y^{(13)}}}} 
   + {w^{(12)(14)}}\frac{\partial }{{\partial {y^{(12)}}}} \wedge 
   \frac{\partial }{{\partial {y^{(14)}}}} 
   + {w^{(12)(23)}}\frac{\partial }{{\partial {y^{(12)}}}} \wedge 
   \frac{\partial }{{\partial {y^{(23)}}}} \nonumber \\
&  + {w^{(12)(24)}}\frac{\partial }{{\partial {y^{(12)}}}} \wedge 
   \frac{\partial }{{\partial {y^{(24)}}}} 
   + {w^{(12)(34)}}\frac{\partial }{{\partial {y^{(12)}}}} \wedge 
   \frac{\partial }{{\partial {y^{(34)}}}} 
   + {w^{(13)(14)}}\frac{\partial }{{\partial {y^{(13)}}}} \wedge 
   \frac{\partial }{{\partial {y^{(14)}}}} \nonumber \\
&  + {w^{(13)(23)}}\frac{\partial }{{\partial {y^{(13)}}}} \wedge 
   \frac{\partial }{{\partial {y^{(23)}}}} 
   + {w^{(13)(24)}}\frac{\partial }{{\partial {y^{(13)}}}} \wedge 
   \frac{\partial }{{\partial {y^{(24)}}}} 
   + {w^{(13)(34)}}\frac{\partial }{{\partial {y^{(13)}}}} \wedge 
   \frac{\partial }{{\partial {y^{(34)}}}} \nonumber  \\
& + {w^{(14)(23)}}\frac{\partial }{{\partial {y^{(14)}}}} \wedge 
   \frac{\partial }{{\partial {y^{(23)}}}} 
   + {w^{(14)(24)}}\frac{\partial }{{\partial {y^{(14)}}}} \wedge 
   \frac{\partial }{{\partial {y^{(24)}}}} 
   + {w^{(14)(34)}}\frac{\partial }{{\partial {y^{(14)}}}} \wedge 
   \frac{\partial }{{\partial {y^{(24)}}}} \nonumber \\
&  + {w^{(23)(24)}}\frac{\partial }{{\partial {y^{(23)}}}} \wedge 
   \frac{\partial }{{\partial {y^{(24)}}}} 
   + {w^{(23)(34)}}\frac{\partial }{{\partial {y^{(23)}}}} \wedge 
   \frac{\partial }{{\partial {y^{(34)}}}} 
   + {w^{(24)(34)}}\frac{\partial }{{\partial {y^{(24)}}}} \wedge 
   \frac{\partial }{{\partial {y^{(34)}}}}. 
\end{align}
Therefore in this notation, coordinate functions are labelled as 
\begin{align}
{y^I}: = {y^{{\mu _1} \cdots {\mu _k}}}, 
\quad {z^{{I_1}{I_2}{\nu _3} \cdots {\nu _k}}} := {z^{(\mu _1^1 \cdots \mu _k^1)(\mu _1^2 \cdots \mu _k^2){\nu _3} 
\cdots {\nu _k}}},
\end{align}
 etc. 
The summation conventions between ordered and non-ordered indices are, 
\begin{align}
&\quad \quad \quad \frac{{\partial K}}{{\partial {y^I}}}{y^I} 
:= \sum\limits_{{\mu _1} < {\mu _2} <  \cdots  < {\mu _k}} 
{\frac{{\partial K}}{{\partial {y^{{\mu _1} \cdots {\mu _k}}}}}{y^{{\mu _1} \cdots {\mu _k}}}} 
= \frac{1}{{k!}}\frac{{\partial K}}{{\partial {y^{{\mu _1} 
\cdots {\mu _k}}}}}{y^{{\mu _1} \cdots {\mu _k}}},   \\
&\frac{1}{{l!(k - l)!}}\frac{{\partial K}}{{\partial {z^{{I_1} \cdots {I_l}{\nu _{l + 1}} 
\cdots {\nu _k}}}}}{z^{{I_1} \cdots {I_l}{\nu _{l + 1}} \cdots {\nu _k}}} 
:= \sum\limits_{{I_1} < {I_2} <  \cdots  < {I_l}} 
{\sum\limits_{{\nu _{l + 1}} < {\nu _{l + 2}} <  \cdots  < {\nu _k}} 
{\frac{{\partial K}}{{\partial {z^{{I_1} \cdots {I_l}{\nu _{l + 1}} 
\cdots {\nu _k}}}}}{z^{{I_1} \cdots {I_l}{\nu _{l + 1}} \cdots {\nu _k}}}} } \nonumber \\ 
&= \sum\limits_{{\text{ordered by }}I{\text{ }}} 
{\sum\limits_{\mu _1^1 < \cdots  < \mu _k^1} { 
\cdots \sum\limits_{\mu _1^l < \cdots  < \mu _k^l} 
{\sum\limits_{{\nu _{l + 1}} < \cdots  < {\nu _k}} 
{\frac{{\partial K}}{{\partial {z^{(\mu _1^1 \cdots \mu _k^1) 
\cdots (\mu _1^l \cdots \mu _k^l){\nu _{l + 1}} 
\cdots {\nu _k}}}}}{z^{(\mu _1^1 \cdots \mu _k^1) 
\cdots (\mu _1^l \cdots \mu _k^l){\nu _{l + 1}} 
\cdots {\nu _k}}}} } } }, \nonumber \\
&\hspace{10cm}1 \leqslant l \leqslant k, 
\end{align}
and the induced chart of the space ${\Lambda ^k}T{\Lambda ^k}TM$ could be introduced as 
$({\tilde W^2},{\tilde \psi ^2})$, 
${\tilde W^2} = {({\Lambda ^k}{\tau _{{\Lambda ^k}TM}})^{ - 1}}(V)$, 
${\tilde \psi ^2} = ({x^\mu },{y^{{\mu _1} \cdots {\mu _k}}},{z^{{\mu _1} \cdots {\mu _k}}}, 
{z^{\operatorname{I} {\mu _1} \cdots {\mu _{k - 1}}}},{z^{{I_1}{I_2}{\mu _1} 
\cdots {\mu _{k - 2}}}}, \cdots ,{z^{{I_1} \cdots {I_k}}}).$ 

Now, consider the extended tangent map 
${\Lambda ^k}T{\Lambda ^k}{\tau _M}:{\Lambda ^k}T{\Lambda ^k}TM \to {\Lambda ^k}TM$. 
This map sends $k$-multivector at $p \in {\Lambda ^k}TM$ to $k$-multivector at 
${\Lambda ^k}{\tau _M}(p) \in M$, 
and induces a vector bundle $({\Lambda ^k}T{\Lambda ^k}TM, {\Lambda ^k}T{\Lambda ^k}{\tau _M}, {\Lambda ^k}TM)$. 
As we did in the case for constructing the bundle $({T^2}M, 
{\left. {{\tau _{TM}}} \right|_{{T^2}M}},TM)$, 
we can use the bundle isomorphism between the two bundles, 
$({\Lambda ^k}T{\Lambda ^k}TM, \lb[2] {\Lambda ^k}{\tau _{{\Lambda ^k}TM}},\lb[3] {\Lambda ^k}TM)$ and 
$({\Lambda ^k}T{\Lambda ^k}TM, \lb[2] {\Lambda ^k}T{\Lambda ^k}{\tau _M},\lb[3] {\Lambda ^k}TM)$, 
to construct the bundle of second order multivectors. Let $w \in {\Lambda ^k}T{\Lambda ^k}TM$ be the 
$k$-multivector at point $p \in {\Lambda ^k}TM$. 
Then by the previous definition, 
\begin{align}
&{\Lambda ^k}T{\Lambda ^k}{\tau _M}(w) = \frac{1}{{k!}}{w^{{\mu _1} 
  \cdots {\mu _k}}}{\left( {T{\Lambda ^k}{\tau _M} 
  \left( {\frac{\partial }{{\partial {x^{{\mu _1}}}}}} \right) \wedge 
  T{\Lambda ^k}{\tau _M}\left( {\frac{\partial }{{\partial {x^{{\mu _2}}}}}} \right) \wedge  
  \cdots  \wedge T{\Lambda ^k}{\tau _M}\left( 
  {\frac{\partial }{{\partial {x^{{\mu _k}}}}}} \right)} \right)_p} \nonumber \\
& + \frac{1}{{k!}}{w^{I{\mu _1} \cdots {\mu _{k - 1}}}}{\left( {T{\Lambda ^k}{\tau _M} 
  \left( {\frac{\partial }{{\partial {y^I}}}} \right) \wedge T{\Lambda ^k}{\tau _M} 
  \left( {\frac{\partial }{{\partial {x^{{\mu _1}}}}}} \right) \wedge  \cdots  \wedge T{\Lambda ^k}{\tau _M} 
  \left( {\frac{\partial }{{\partial {x^{{\mu _{k - 1}}}}}}} \right)} \right)_p} \nonumber \\
& +  \cdots  
  + \frac{1}{{k!}}{w^{{I_1}{I_2} \cdots {I_{k - 1}}\mu }}{\left( {T{\Lambda ^k}{\tau _M} 
  \left( {\frac{\partial }{{\partial {y^{{I_1}}}}}} \right) \wedge  \cdots  \wedge T{\Lambda ^k}{\tau _M} 
  \left( {\frac{\partial }{{\partial {y^{{I_{k - 1}}}}}}} \right) \wedge T{\Lambda ^k}{\tau _M} 
  \left( {\frac{\partial }{{\partial {x^\mu }}}} \right)} \right)_p} \nonumber \\
& + \frac{1}{{k!}}{w^{{I_1}{I_2} \cdots {I_k}}}{\left( {T{\Lambda ^k}{\tau _M}
  \left( {\frac{\partial }{{\partial {y^{{I_1}}}}}} \right) \wedge T{\Lambda ^k}{\tau _M}
  \left( {\frac{\partial }{{\partial {y^{{I_2}}}}}} \right) \wedge  
  \cdots  \wedge T{\Lambda ^k}{\tau _M}\left( {\frac{\partial }{{\partial {y^{{I_k}}}}}} \right)} \right)_p} \nonumber \\
& = \frac{1}{{k!}}{w^{{\mu _1} \cdots {\mu _k}}}{\left( {\frac{\partial }{{\partial {x^{{\mu _1}}}}} \wedge  
\cdots  \wedge \frac{\partial }{{\partial {x^{{\mu _k}}}}}} \right)_{{\Lambda ^k}{\tau _M}(p)}} 
\in {\Lambda ^k}{T_{{\Lambda ^k}{\tau _M}(p)}}M.
\end{align}
On the other hand, ${\Lambda ^k}{\tau _{{\Lambda ^k}TM}}(w) = p$, 
and the equation for the isomorphism is 
\begin{align}
\frac{1}{{k!}}{w^{{\mu _1} 
\cdots {\mu _k}}}{\left( {\frac{\partial }{{\partial {x^{{\mu _1}}}}} \wedge  
\cdots  \wedge \frac{\partial }{{\partial {x^{{\mu _k}}}}}} \right)_{{\Lambda ^k}{\tau _M}(p)}} = p,
\end{align}
and in coordinate expression becomes, 
\begin{align}
{x^\mu }(p) = {x^\mu }({\Lambda ^k}{\tau _M}(p)), \quad 
{y^{{\mu _1} \cdots {\mu _k}}}(p) 
= {w^{{\mu _1} \cdots {\mu _k}}} = {z^{{\mu _1} \cdots {\mu _k}}}(w).
\end{align} 
Therefore, we can take as the chart on ${({\Lambda ^k}T)^2}M$, 
$({W^2},{\psi ^2})$, ${W^2} = {({\left. {{\Lambda ^k}{\tau _{{\Lambda ^k}TM}}} 
\right|_{{{({\Lambda ^k}T)}^2}M}})^{ - 1}}(V)$, 
${\psi ^2} = ({x^\mu },{y^{{\mu _1} \cdots {\mu _k}}},{z^{\operatorname{I} {\mu _1} 
\cdots {\mu _{k - 1}}}},{z^{{I_1}{I_2}{\mu _1} \cdots {\mu _{k - 2}}}} \cdots ,{z^{{I_1} \cdots {I_k}}})$. 
The coordinate ${z^{{\mu _1} \cdots {\mu _k}}}$ 
is not present from the original chart on ${\Lambda ^k}T{\Lambda ^k}TM$ 
because of the above equation of submanifolds. 

\begin{defn}  Second order $k$-multivector bundle ${({\Lambda ^k}T)^2}M$ over 
${\Lambda ^k}TM$ \label{def_2ndorder_kTMoverkTM} \\
Let $({\Lambda ^k}TM,{\Lambda ^k}{\tau _M},M)$ be the 
$k$-multivector bundle with the base space $M$ and $({\Lambda ^k}T{\Lambda ^k}TM, \lb[3]
{\Lambda ^k}{\tau _{{\Lambda ^k}TM}},{\Lambda ^k}TM)$ 
the $k$-multivector bundle with the base space ${\Lambda ^k}TM$. 
Denote the subset of elements $w \in {\Lambda ^k}T{\Lambda ^k}TM$ 
which satisfy the equations of submanifold 
\begin{eqnarray}
{\Lambda ^k}{\tau _{{\Lambda ^k}TM}}(w) = {\Lambda ^k}T{\Lambda ^k}{\tau _M}(w)
\end{eqnarray} 
as ${({\Lambda ^k}T)^2}M$, 
and a map from ${({\Lambda ^k}T)^2}M$ to ${\Lambda ^k}TM$ by 
\begin{eqnarray}
{\Lambda ^k}\tau _M^{2,1}: = {\left. {{\Lambda ^k}{\tau _{{\Lambda ^k}TM}}} 
\right|_{{{({\Lambda ^k}T)}^2}M}}.
\end{eqnarray} 
The triple $({({\Lambda ^k}T)^2}M, {\Lambda ^k}\tau _M^{2,1}, {\Lambda ^k}TM)$ 
becomes a bundle, and we call them {\it second order $k$-multi-vector bundle over ${\Lambda ^k}TM$}. 
\end{defn}
Again this is not a vector bundle, i.e. the fibres of ${\Lambda ^k}\tau _M^{2,1}$ are not vector spaces, 
since in general, the scalar multiplication of ${w_p} \in {({\Lambda ^k}T)^2}M$, $p \in {\Lambda ^k}TM$ 
does not belong to the same fibre.
As in Section \ref{subsec_second_TM}, we can similarly check that 
$({({\Lambda ^k}T)^2}M,{\Lambda ^k}\tau _M^{2,1},{\Lambda ^k}TM)$ is a bundle. 
Namely we first introduce manifold structure on ${({\Lambda ^k}T)^2}M$, 
and then consider the local trivialisation. The charts on ${({\Lambda ^k}T)^2}M$ 
are already introduced, so let us check the coordinate transformations. \\
Let $({W^2},{\psi ^2})$, ${\psi ^2} = ({x^\mu },{y^{{\mu _1} 
\cdots {\mu _k}}},{z^{\operatorname{I} {\mu _1} \cdots {\mu _{k - 1}}}},{z^{{I_1}{I_2}{\mu _1} 
\cdots {\mu _{k - 2}}}} \cdots ,{z^{{I_1} \cdots {I_k}}})$  
and $({\bar W^2},{\bar \psi ^2})$, ${\bar \psi ^2} 
= \left({\bar x^\mu }, {\bar y^{{\mu _1} 
\cdots {\mu _k}}}, {\bar z^{\operatorname{I} {\mu _1} \cdots {\mu _{k - 1}}}}, {\bar z^{{I_1}{I_2}{\mu _1} 
\cdots {\mu _{k - 2}}}}, \cdots ,{\bar z^{{I_1} \cdots {I_k}}}\right)$ 
be two charts on 
${({\Lambda ^k}T)^2}M$, with ${W^2} \cap {\bar W^2} \ne \emptyset $. 
Then express the element 
${w_q} \in {W^2} \cap {\bar W^2}$, $q \in {\Lambda ^k}TM$, by these charts, 
\begin{align}
&{w_q} = \frac{1}{{k!}}{y^{{\mu _1} \cdots {\mu _k}}}(q){\left( {\frac{\partial }{{\partial {x^{{\mu _1}}}}} \wedge  
  \cdots  \wedge \frac{\partial }{{\partial {x^{{\mu _k}}}}}} \right)_q} 
  + \frac{1}{{(k - 1)!}}{w^{{I_1}{\mu _2} \cdots {\mu _k}}}{\left( {\frac{\partial }{{\partial {y^{{I_1}}}}} \wedge 
  \frac{\partial }{{\partial {x^{{\mu _2}}}}} \wedge  \cdots  \wedge 
  \frac{\partial }{{\partial {x^{{\mu _k}}}}}} \right)_q} \nonumber \\
&  \quad  + \frac{1}{{(k - 2)!2!}}{w^{{I_1}{I_2}{\mu _3} \cdots {\mu _k}}}
  {\left( {\frac{\partial }{{\partial {y^{{I_1}}}}} \wedge \frac{\partial }{{\partial {y^{{I_2}}}}} \wedge 
  \frac{\partial }{{\partial {x^{{\mu _3}}}}} \wedge  \cdots  \wedge 
  \frac{\partial }{{\partial {x^{{\mu _k}}}}}} \right)_q} \nonumber \\
& \quad  +  \cdots  + \frac{1}{{(k - l)!l!}}{w^{{I_1} \cdots {I_l}{\mu _{l + 1}} \cdots {\mu _k}}}
  {\left( {\frac{\partial }{{\partial {y^{{I_1}}}}} \wedge  \cdots  \wedge \frac{\partial }{{\partial {y^{{I_l}}}}} \wedge 
  \frac{\partial }{{\partial {x^{{\mu _{l + 1}}}}}} \wedge  \cdots  \wedge 
  \frac{\partial }{{\partial {x^{{\mu _k}}}}}} \right)_q} \nonumber \\
& \quad  +  \cdots  + \frac{1}{{k!}}{w^{{I_1}{I_2} \cdots {I_k}}}
  {\left( {\frac{\partial }{{\partial {y^{{I_1}}}}} \wedge \frac{\partial }{{\partial {y^{{I_2}}}}} \wedge  
  \cdots  \wedge \frac{\partial }{{\partial {y^{{I_k}}}}}} \right)_q} \nonumber \\
& = \frac{1}{{k!}}{{\bar y}^{{\mu _1} \cdots {\mu _k}}}(q)
   {\left( {\frac{\partial }{{\partial {{\bar x}^{{\mu _1}}}}} \wedge  \cdots  \wedge
    \frac{\partial }{{\partial {{\bar x}^{{\mu _k}}}}}} \right)_q} 
    + \frac{1}{{(k - 1)!}}{{\bar w}^{{I_1}{\mu _2} \cdots {\mu _k}}
    }{\left( {\frac{\partial }{{\partial {{\bar y}^{{I_1}}}}} \wedge 
    \frac{\partial }{{\partial {{\bar x}^{{\mu _2}}}}} \wedge  
    \cdots  \wedge \frac{\partial }{{\partial {{\bar x}^{{\mu _k}}}}}} \right)_q} \nonumber \\
& \quad  + \frac{1}{{(k - 2)!2!}}{{\bar w}^{{I_1}{I_2}{\mu _3} 
  \cdots {\mu _k}}}{\left( {\frac{\partial }{{\partial {{\bar y}^{{I_1}}}}} \wedge 
  \frac{\partial }{{\partial {{\bar y}^{{I_2}}}}} \wedge 
  \frac{\partial }{{\partial {{\bar x}^{{\mu _3}}}}} \wedge  
  \cdots \wedge \frac{\partial }{{\partial {{\bar x}^{{\mu _k}}}}}} \right)_q} \nonumber \\
& \quad  +  \cdots  + \frac{1}{{(k - l)!l!}}{{\bar w}^{{I_1} \cdots {I_l}{\mu _{l + 1}} 
  \cdots {\mu _k}}}{\left( {\frac{\partial }{{\partial {{\bar y}^{{I_1}}}}} \wedge  
  \cdots  \wedge \frac{\partial }{{\partial {{\bar y}^{{I_l}}}}} \wedge 
  \frac{\partial }{{\partial {{\bar x}^{{\mu _{l + 1}}}}}} \wedge  
  \cdots  \wedge \frac{\partial }{{\partial {{\bar x}^{{\mu _k}}}}}} \right)_q} \nonumber \\
& \quad  +  \cdots  + \frac{1}{{k!}}{{\bar w}^{{I_1}{I_2} \cdots {I_k}}}
{\left( {\frac{\partial }{{\partial {{\bar y}^{{I_1}}}}} \wedge 
\frac{\partial }{{\partial {{\bar y}^{{I_2}}}}} \wedge  
\cdots  \wedge \frac{\partial }{{\partial {{\bar y}^{{I_k}}}}}} \right)_q},
\end{align}
with 
\begin{align}
&{z^{{I_1}{\mu _2} \cdots {\mu _k}}}({w_q}) = {w^{{I_1}{\mu _2} \cdots {\mu _k}}},{z^{{I_1}{I_2}{\mu _3} 
  \cdots {\mu _k}}}({w_q}) = {w^{{I_1}{I_2}{\mu _3} \cdots {\mu _k}}}, 
  \cdots ,{z^{{I_1}{I_2} \cdots {I_k}}}({w_q}) = {w^{{I_1}{I_2} \cdots {I_k}}}, \nonumber \\
&{{\bar z}^{{I_1}{\mu _2} \cdots {\mu _k}}}({w_q}) = {{\bar w}^{{I_1}{\mu _2} 
  \cdots {\mu _k}}},{{\bar z}^{{I_1}{I_2}{\mu _3} \cdots {\mu _k}}}({w_q}) 
  = {{\bar w}^{{I_1}{I_2}{\mu _3} \cdots {\mu _k}}}, \cdots ,{{\bar z}^{{I_1}{I_2} 
  \cdots {I_k}}}({w_q}) = {w^{{I_1}{I_2} \cdots {I_k}}}. 
\end{align}
Since the base of one vectors transform as
\begin{align}
\frac{\partial }{{\partial {{\bar x}^\mu }}} 
= \frac{{\partial {x^\nu }}}{{\partial {{\bar x}^\mu }}}\frac{\partial }{{\partial {x^\nu }}} 
+ \frac{{\partial {y^I}}}{{\partial {{\bar x}^\mu }}}\frac{\partial }{{\partial {y^I}}}, 
\frac{\partial }{{\partial {{\bar y}^I}}} = \frac{{\partial {y^J}}}{{\partial {{\bar y}^I}}}
\frac{\partial }{{\partial {y^J}}} = k!\frac{{\partial {x^J}}}{{\partial {{\bar x}^I}}}\frac{\partial }{{\partial {y^J}}}, 
\end{align}
the bases of ${\Lambda ^k}T{\Lambda ^k}TM$ will transform as 
\begin{align}
&\frac{\partial }{{\partial {{\bar x}^{{\mu _1}}}}} \wedge  \cdots  \wedge \frac{\partial }{{\partial {{\bar x}^{{\mu _k}}}}} 
= \left( {\frac{{\partial {x^{{\nu _1}}}}}{{\partial {{\bar x}^{{\mu _1}}}}}\frac{\partial }{{\partial {x^{{\nu _1}}}}} 
+ \frac{{\partial {y^{{I_1}}}}}{{\partial {{\bar x}^{{\mu _1}}}}}\frac{\partial }{{\partial {y^{{I_1}}}}}} \right) \wedge  
\cdots  \wedge \left( {\frac{{\partial {x^{{\nu _k}}}}}{{\partial {{\bar x}^{{\mu _k}}}}}
\frac{\partial }{{\partial {x^{{\nu _k}}}}} + \frac{{\partial {y^{{I_k}}}}}{{\partial {{\bar x}^{{\mu _k}}}}} 
\frac{\partial }{{\partial {y^{{I_k}}}}}} \right) \nonumber \\
&  = \frac{{\partial {x^{{\nu _1}}}}}{{\partial {{\bar x}^{{\mu _1}}}}} 
   \cdots \frac{{\partial {x^{{\nu _k}}}}}{{\partial {{\bar x}^{{\mu _k}}}}}
   \frac{\partial }{{\partial {x^{{\nu _1}}}}} \wedge  \cdots  \wedge \frac{\partial }{{\partial {x^{{\nu _k}}}}} 
   + {}_k{C_1}\frac{{\partial {y^{{I_1}}}}}{{\partial {{\bar x}^{{\mu _1}}}}} 
   \frac{{\partial {x^{{\nu _2}}}}}{{\partial {{\bar x}^{{\mu _2}}}}} 
   \cdots \frac{{\partial {x^{{\nu _k}}}}}{{\partial {{\bar x}^{{\mu _k}}}}} 
   \frac{\partial }{{\partial {y^{{I_1}}}}} \wedge \frac{\partial }{{\partial {x^{{\nu _2}}}}} \wedge  
   \cdots  \wedge \frac{\partial }{{\partial {x^{{\nu _k}}}}} \nonumber \\
&   + {}_k{C_2}\frac{{\partial {y^{{I_1}}}}}{{\partial {{\bar x}^{{\mu _1}}}}} 
   \frac{{\partial {y^{{I_2}}}}}{{\partial {{\bar x}^{{\mu _2}}}}} 
   \frac{{\partial {x^{{\nu _3}}}}}{{\partial {{\bar x}^{{\mu _3}}}}} 
   \cdots \frac{{\partial {x^{{\nu _k}}}}}{{\partial {{\bar x}^{{\mu _k}}}}} 
   \frac{\partial }{{\partial {y^{{I_1}}}}} \wedge \frac{\partial }{{\partial {y^{{I_2}}}}} \wedge 
   \frac{\partial }{{\partial {x^{{\nu _3}}}}} \wedge  
   \cdots  \wedge \frac{\partial }{{\partial {x^{{\nu _k}}}}} \nonumber \\
&  +  \cdots \!  + \frac{{\partial {y^{{I_1}}}}}{{\partial {{\bar x}^{{\mu _1}}}}} 
   \frac{{\partial {y^{{I_2}}}}}{{\partial {{\bar x}^{{\mu _2}}}}} 
   \cdots \frac{{\partial {y^{{I_k}}}}}{{\partial {{\bar x}^{{\mu _k}}}}} 
   \frac{\partial }{{\partial {y^{{I_1}}}}} \wedge \frac{\partial }{{\partial {y^{{I_2}}}}} \wedge  
   \cdots  \wedge \frac{\partial }{{\partial {y^{{I_k}}}}}, \nonumber \\  \nonumber \\
&\frac{\partial }{{\partial {{\bar y}^{{I_1}}}}} \wedge \frac{\partial }{{\partial {{\bar x}^{{\mu _2}}}}} \wedge  
\cdots  \wedge \frac{\partial }{{\partial {{\bar x}^{{\mu _k}}}}} \nonumber \\
&= k!\frac{{\partial {x^{{J_1}}}}}{{\partial {{\bar x}^{{I_1}}}}}\frac{\partial }{{\partial {y^{{J_1}}}}} \wedge 
\left( {\frac{{\partial {x^{{\nu _2}}}}}{{\partial {{\bar x}^{{\mu _2}}}}}\frac{\partial }{{\partial {x^{{\nu _2}}}}} 
+ \frac{{\partial {y^{{I_2}}}}}{{\partial {{\bar x}^{{\mu _2}}}}}\frac{\partial }{{\partial {y^{{I_2}}}}}} \right) \wedge  
\cdots  \wedge \left( {\frac{{\partial {x^{{\nu _k}}}}}{{\partial {{\bar x}^{{\mu _k}}}}} 
\frac{\partial }{{\partial {x^{{\nu _k}}}}} 
+ \frac{{\partial {y^{{I_k}}}}}{{\partial {{\bar x}^{{\mu _k}}}}}\frac{\partial }{{\partial {y^{{I_k}}}}}} \right) \nonumber \\
&  = k!\left( {\frac{{\partial {x^{{J_1}}}}}{{\partial {{\bar x}^{{I_1}}}}} 
   \frac{{\partial {x^{{\nu _2}}}}}{{\partial {{\bar x}^{{\mu _2}}}}} 
   \cdots \frac{{\partial {x^{{\nu _k}}}}}{{\partial {{\bar x}^{{\mu _k}}}}}\frac{\partial }{{\partial {y^{{J_1}}}}} \wedge 
   \frac{\partial }{{\partial {x^{{\nu _2}}}}} \wedge  \cdots  \wedge \frac{\partial }{{\partial {x^{{\nu _k}}}}}} \right. \nonumber \\
&   + {}_{k - 1}{C_1}\frac{{\partial {x^{{J_1}}}}}{{\partial {{\bar x}^{{I_1}}}}} 
   \frac{{\partial {y^{{J_2}}}}}{{\partial {{\bar x}^{{\mu _2}}}}} 
   \frac{{\partial {x^{{\nu _3}}}}}{{\partial {{\bar x}^{{\mu _3}}}}} 
   \cdots \frac{{\partial {x^{{\nu _k}}}}}{{\partial {{\bar x}^{{\mu _k}}}}} 
   \frac{\partial }{{\partial {y^{{J_1}}}}} \wedge \frac{\partial }{{\partial {y^{{J_2}}}}} \wedge 
   \frac{\partial }{{\partial {x^{{\nu _3}}}}} \wedge  
   \cdots  \wedge \frac{\partial }{{\partial {x^{{\nu _k}}}}} \nonumber \\
& \left. { +  \cdots  + \frac{{\partial {x^{{J_1}}}}}{{\partial {{\bar x}^{{I_1}}}}} 
  \frac{{\partial {y^{{J_2}}}}}{{\partial {{\bar x}^{{\mu _2}}}}} 
  \cdots \frac{{\partial {y^{{I_k}}}}}{{\partial {{\bar x}^{{\mu _k}}}}}\frac{\partial }{{\partial {y^{{I_1}}}}} \wedge 
  \frac{\partial }{{\partial {y^{{I_2}}}}} \wedge  \cdots  \wedge \frac{\partial }{{\partial {y^{{I_k}}}}}} \right), \nonumber \\ \nonumber \\ 
& \vdots \nonumber \\  \nonumber \\
& \frac{\partial }{{\partial {{\bar y}^{{I_1}}}}} \wedge  
  \cdots  \wedge \frac{\partial }{{\partial {{\bar y}^{{I _k}}}}} = {(k!)^k} 
  \frac{{\partial {x^{{J_1}}}}}{{\partial {{\bar x}^{{I_1}}}}}\frac{\partial }{{\partial {y^{{J_1}}}}} \wedge 
  \cdots  \wedge \frac{{\partial {x^{{J_k}}}}}{{\partial {{\bar x}^{{I_k}}}}} 
  \frac{\partial }{{\partial {y^{{J_k}}}}} \nonumber \\
&  = {(k!)^k}\frac{{\partial {x^{{J_1}}}}}{{\partial {{\bar x}^{{I_1}}}}} 
   \cdots \frac{{\partial {x^{{J_k}}}}}{{\partial {{\bar x}^{{I_k}}}}} 
   \frac{\partial }{{\partial {y^{{J_1}}}}} \wedge  \cdots  \wedge \frac{\partial }{{\partial {y^{{J_k}}}}},  
\end{align}
and the coordinate transformations of $w \in \Lambda^k T \Lambda^k T M$ with base point on $\Lambda^k TM$ will be
\begin{align}
&w = \frac{1}{{k!}}{{\bar y}^{{\mu _1} \cdots {\mu _k}}} 
  \frac{{\partial {x^{{\nu _1}}}}}{{\partial {{\bar x}^{{\mu _1}}}}} 
  \cdots \frac{{\partial {x^{{\nu _k}}}}}{{\partial {{\bar x}^{{\mu _k}}}}} 
  \frac{\partial }{{\partial {x^{{\nu _1}}}}} \wedge  \cdots  \wedge 
  \frac{\partial }{{\partial {x^{{\nu _k}}}}} \nonumber \\
&  + \! \left( {\frac{1}{{k!}}{{\bar y}^{{\mu _1} \cdots {\mu _k}}}{}_k{C_1} 
   \frac{{\partial {y^{{J_1}}}}}{{\partial {{\bar x}^{{\mu _1}}}}} 
   \frac{{\partial {x^{{\nu _2}}}}}{{\partial {{\bar x}^{{\mu _2}}}}} 
   \cdots \frac{{\partial {x^{{\nu _k}}}}}{{\partial {{\bar x}^{{\mu _k}}}}} 
   + \frac{{k!}}{{(k - 1)!}}{{\bar w}^{{I_1}{\mu _2} \cdots {\mu _k}}}
   \frac{{\partial {x^{{J_1}}}}}{{\partial {{\bar x}^{{I_1}}}}} 
   \frac{{\partial {x^{{\nu _2}}}}}{{\partial {{\bar x}^{{\mu _2}}}}} 
   \cdots \frac{{\partial {x^{{\nu _k}}}}}{{\partial {{\bar x}^{{\mu _k}}}}}} \!  \right) \nonumber \\
&\hspace{8cm}   \times \frac{\partial }{{\partial {y^{{J_1}}}}} \wedge \frac{\partial }{{\partial {x^{{\nu _2}}}}} \wedge  
   \cdots  \wedge \frac{\partial }{{\partial {x^{{\nu _k}}}}} \nonumber \\
&  + \left( {\frac{1}{{k!}}{{\bar y}^{{\mu _1} \cdots {\mu _k}}}{}_k{C_2} 
   \frac{{\partial {y^{{J_1}}}}}{{\partial {{\bar x}^{{\mu _1}}}}} 
   \frac{{\partial {y^{{J_2}}}}}{{\partial {{\bar x}^{{\mu _2}}}}} 
   \frac{{\partial {x^{{\nu _3}}}}}{{\partial {{\bar x}^{{\mu _3}}}}} 
   \cdots \frac{{\partial {x^{{\nu _k}}}}}{{\partial {{\bar x}^{{\mu _k}}}}} 
   + \frac{{k!}}{{(k - 1)!}}{{\bar w}^{{I_1}{\mu _2} \cdots {\mu _k}}}{}_{k - 1}{C_1} 
   \frac{{\partial {x^{{J_1}}}}}{{\partial {{\bar x}^{{I_1}}}}}
   \frac{{\partial {y^{{J_2}}}}}{{\partial {{\bar x}^{{\mu _2}}}}}
   \frac{{\partial {x^{{\nu _3}}}}}{{\partial {{\bar x}^{{\mu _3}}}}} 
   \cdots \frac{{\partial {x^{{\nu _k}}}}}{{\partial {{\bar x}^{{\mu _k}}}}}} \right. \nonumber \\
& \left. { + \frac{{{{(k!)}^2}}}{{(k - 2)!2!}}{{\bar w}^{{I_1}{I_2}{\mu _3} 
  \cdots {\mu _k}}}{}_{k - 2}{C_0}\frac{{\partial {x^{{J_1}}}}}{{\partial {{\bar x}^{{I_1}}}}}
  \frac{{\partial {x^{{J_2}}}}}{{\partial {{\bar x}^{{I_2}}}}}
  \frac{{\partial {x^{{\nu _3}}}}}{{\partial {{\bar x}^{{\mu _3}}}}} 
  \cdots \frac{{\partial {x^{{\nu _k}}}}}{{\partial {{\bar x}^{{\mu _k}}}}}} \right) 
  \frac{\partial }{{\partial {y^{{J_1}}}}} \wedge \frac{\partial }{{\partial {y^{{J_2}}}}} \wedge 
  \frac{\partial }{{\partial {x^{{\nu _3}}}}} \wedge  \cdots  \wedge \frac{\partial }{{\partial {x^{{\nu _k}}}}} \nonumber \\
&  +  \cdots \! + \! \left( \frac{1}{{k!}}{{\bar y}^{{\mu _1} 
   \cdots {\mu _k}}}{}_k{C_l}\frac{{\partial {y^{{J_1}}}}}{{\partial {{\bar x}^{{\mu _1}}}}} 
   \cdots \frac{{\partial {y^{{J_l}}}}}{{\partial {{\bar x}^{{\mu _l}}}}}
   \frac{{\partial {x^{{\nu _{l + 1}}}}}}{{\partial {{\bar x}^{{\mu _{l + 1}}}}}} 
   \cdots \frac{{\partial {x^{{\nu _k}}}}}{{\partial {{\bar x}^{{\mu _k}}}}} \right. \nonumber \\
&  + \frac{{k!}}{{(k - 1)!}}{{\bar w}^{{I_1}{\mu _2} \cdots {\mu _k}}}{}_{k - 1}{C_{l - 1}} 
   \frac{{\partial {x^{{J_1}}}}}{{\partial {{\bar x}^{{I_1}}}}}
   \frac{{\partial {y^{{J_2}}}}}{{\partial {{\bar x}^{{\mu _2}}}}} 
   \cdots \frac{{\partial {y^{{J_l}}}}}{{\partial {{\bar x}^{{\mu _l}}}}}
   \frac{{\partial {x^{{\nu _{l + 1}}}}}}{{\partial {{\bar x}^{{\mu _{l + 1}}}}}} 
   \cdots \frac{{\partial {x^{{\nu _k}}}}}{{\partial {{\bar x}^{{\mu _k}}}}}  \nonumber \\
&  + \frac{{{{(k!)}^2}}}{{(k - 2)!2!}}{{\bar w}^{{I_1}{I_2}{\mu _3} 
   \cdots {\mu _k}}}{}_{k - 2}{C_{l - 2}}\frac{{\partial {x^{{J_1}}}}}{{\partial {{\bar x}^{{I_1}}}}}
   \frac{{\partial {x^{{J_2}}}}}{{\partial {{\bar x}^{{I_2}}}}} 
   \frac{{\partial {y^{{J_3}}}}}{{\partial {{\bar x}^{{\mu _3}}}}} 
   \cdots \frac{{\partial {y^{{J_l}}}}}{{\partial {{\bar x}^{{\mu _l}}}}} 
   \frac{{\partial {x^{{\nu _{l + 1}}}}}}{{\partial {{\bar x}^{{\mu _{l + 1}}}}}} 
   \cdots \frac{{\partial {x^{{\nu _k}}}}}{{\partial {{\bar x}^{{\mu _k}}}}} \nonumber \\
& \left. { +  \cdots \! + \frac{{{{(k!)}^l}}}{{(k - l)!l!}}{{\bar w}^{{I_1} 
  \cdots {I_l}{\mu _{l + 1}} \cdots {\mu _k}}}{}_{k - l}{C_0} 
  \frac{{\partial {x^{{J_1}}}}}{{\partial {{\bar x}^{{I_1}}}}} 
  \cdots \frac{{\partial {x^{{J_l}}}}}{{\partial {{\bar x}^{{I_l}}}}}
  \frac{{\partial {x^{{\nu _{l + 1}}}}}}{{\partial {{\bar x}^{{\mu _{l + 1}}}}}} 
  \cdots \frac{{\partial {x^{{\nu _k}}}}}{{\partial {{\bar x}^{{\mu _k}}}}}} \right) \nonumber \\
&\hspace{8cm}  \times \frac{\partial }{{\partial {y^{{J_1}}}}} \wedge  \cdots  \wedge 
  \frac{\partial }{{\partial {y^{{J_l}}}}} \wedge \frac{\partial }{{\partial {x^{{\nu _{l + 1}}}}}} \wedge  
  \cdots  \wedge \frac{\partial }{{\partial {x^{{\nu _k}}}}} \nonumber \\
&   +  \cdots \! +\! \left( {\frac{1}{{k!}}{{\bar y}^{{\mu _1} \cdots {\mu _k}}}
   \frac{{\partial {y^{{J_1}}}}}{{\partial {{\bar x}^{{\mu _1}}}}}
   \frac{{\partial {y^{{J_2}}}}}{{\partial {{\bar x}^{{\mu _2}}}}} 
   \cdots \frac{{\partial {y^{{J_k}}}}}{{\partial {{\bar x}^{{\mu _k}}}}} 
   + \frac{{k!}}{{(k - 1)!}}{{\bar w}^{{I_1}{\mu _2} \cdots {\mu _k}}}
   \frac{{\partial {x^{{J_1}}}}}{{\partial {{\bar x}^{{I_1}}}}}
   \frac{{\partial {y^{{J_2}}}}}{{\partial {{\bar x}^{{\mu _2}}}}} 
   \cdots \frac{{\partial {y^{{I_k}}}}}{{\partial {{\bar x}^{{\mu _k}}}}}} \right. \nonumber \\
&   +  \cdots \! + \frac{{{{(k!)}^l}}}{{(k - l)!l!}}{{\bar w}^{{I_1} 
  \cdots {I_l}{\mu _{l + 1}} \cdots {\mu _k}}}\frac{{\partial {x^{{J_1}}}}}{{\partial {{\bar x}^{{I_1}}}}} 
  \cdots \frac{{\partial {x^{{J_l}}}}}{{\partial {{\bar x}^{{I_l}}}}}
  \frac{{\partial {y^{{J_{l + 1}}}}}}{{\partial {{\bar x}^{{\mu _{l + 1}}}}}} 
  \cdots \frac{{\partial {y^{{I_k}}}}}{{\partial {{\bar x}^{{\mu _k}}}}} \nonumber \\
& \hspace{2cm} \left. +  \cdots \! + \! \frac{{{{(k!)}^k}}}{{k!}}{{\bar w}^{{I_1}{I_2} 
  \cdots {I_k}}}\frac{{\partial {x^{{J_1}}}}}{{\partial {{\bar x}^{{I_1}}}}} 
  \cdots \frac{{\partial {x^{{J_k}}}}}{{\partial {{\bar x}^{{I_k}}}}} \! \right)  
 \times \frac{\partial }{{\partial {y^{{J_1}}}}} \wedge  
\cdots  \wedge \frac{\partial }{{\partial {y^{{J_k}}}}}.   
\end{align}
To reduce the space, we made an abbreviation,  
\begin{align}
\frac{\partial {{x}^{{{J}_{1}}}}}{\partial {{{\bar{x}}}^{{{I}_{1}}}}}
:=\frac{\partial {{x}^{[{{j}_{1}}}}}{\partial {{{\bar{x}}}^{{{i}_{1}}}}}
\cdots \frac{\partial {{x}^{{{j}_{k}}]}}}{\partial {{{\bar{x}}}^{{{i}_{k}}}}}.
\end{align}
Now we will get the transformation rule for the coordinates as, 
\begin{align}
&{\bar y^{{\nu _1} \cdots {\nu _k}}} = \frac{{\partial {{\bar x}^{{\nu _1}}}}}{{\partial {x^{{\mu _1}}}}} 
\cdots \frac{{\partial {{\bar x}^{{\nu _k}}}}}{{\partial {x^{{\mu _k}}}}}{y^{{\mu _1} \cdots {\mu _k}}}, \nonumber \\
&{\bar z^{{J_1}{\nu _2} \cdots {\nu _k}}} = k!{z^{{I_1}{\mu _2} \cdots {\mu _k}}}
\frac{{\partial {{\bar x}^{{J_1}}}}}{{\partial {x^{{I_1}}}}}
\frac{{\partial {{\bar x}^{{\nu _2}}}}}{{\partial {x^{{\mu _2}}}}} 
\cdots \frac{{\partial {{\bar x}^{{\nu _k}}}}}{{\partial {x^{{\mu _k}}}}} 
+ {y^{{\mu _1} \cdots {\mu _k}}}\frac{{\partial {{\bar y}^{{J_1}}}}}{{\partial {x^{{\mu _1}}}}}
\frac{{\partial {{\bar x}^{{\nu _2}}}}}{{\partial {x^{{\mu _2}}}}} 
\cdots \frac{{\partial {{\bar x}^{{\nu _k}}}}}{{\partial {x^{{\mu _k}}}}}, \nonumber \\
&{{{\bar{z}}}^{{{J}_{1}}{{J}_{2}}{{\nu }_{3}}\cdots {{\nu }_{k}}}}
={{(k!)}^{2}}{{z}^{{{I}_{1}}{{I}_{2}}{{\mu }_{3}}\cdots {{\mu }_{k}}}}
\frac{\partial {{{\bar{x}}}^{{{J}_{1}}}}}{\partial {{x}^{{{I}_{1}}}}}
\frac{\partial {{{\bar{x}}}^{{{J}_{2}}}}}{\partial {{x}^{{{I}_{2}}}}}
\frac{\partial {{{\bar{x}}}^{{{\nu }_{3}}}}}{\partial {{x}^{{{\mu }_{3}}}}}
\cdots \frac{\partial {{{\bar{x}}}^{{{\nu }_{k}}}}}{\partial {{x}^{{{\mu }_{k}}}}} \nonumber \\ 
& \quad +2!2!k!{{z}^{{{I}_{1}}{{\mu }_{2}}\cdots {{\mu }_{k}}}}
\frac{\partial {{{\bar{x}}}^{[{{J}_{1}}}}}{\partial {{x}^{{{I}_{1}}}}}
\frac{\partial {{{\bar{y}}}^{{{J}_{2}}]}}}{\partial {{x}^{{{\mu }_{2}}}}}
\frac{\partial {{{\bar{x}}}^{{{\nu }_{3}}}}}{\partial {{x}^{{{\mu }_{3}}}}}
\cdots \frac{\partial {{{\bar{x}}}^{{{\nu }_{k}}}}}{\partial {{x}^{{{\mu }_{k}}}}}
+\,{{y}^{{{\mu }_{1}}\cdots {{\mu }_{k}}}}\frac{\partial {{{\bar{y}}}^{{{J}_{1}}}}}{\partial {{x}^{{{\mu }_{1}}}}}
\frac{\partial {{{\bar{y}}}^{{{J}_{2}}}}}{\partial {{x}^{{{\mu }_{2}}}}}
\frac{\partial {{{\bar{x}}}^{{{\nu }_{3}}}}}{\partial {{x}^{{{\mu }_{3}}}}}
\cdots \frac{\partial {{{\bar{x}}}^{{{\nu }_{k}}}}}{\partial {{x}^{{{\mu }_{k}}}}}, \nonumber  \\
& \vdots \nonumber  \\
& {{{\bar{z}}}^{{{J}_{1}}\cdots {{J}_{l}}{{\nu }_{l+1}}\cdots {{\nu }_{k}}}} 
 ={{(k!)}^{l}}{{z}^{{{I}_{1}}\cdots {{I}_{l}}{{\mu }_{l+1}}\cdots {{\mu }_{k}}}}
\frac{\partial {{{\bar{x}}}^{{{J}_{1}}}}}{\partial {{x}^{{{I}_{1}}}}}
\cdots \frac{\partial {{{\bar{x}}}^{{{J}_{l}}}}}{\partial {{x}^{{{I}_{l}}}}}
\frac{\partial {{{\bar{x}}}^{{{\nu }_{l+1}}}}}{\partial {{x}^{{{\mu }_{l+1}}}}}
\cdots \frac{\partial {{{\bar{x}}}^{{{\nu }_{k}}}}}{\partial {{x}^{{{\mu }_{k}}}}} \nonumber \\ 
& \quad +{}_{l}{{C}_{l-1}}l!{{(k!)}^{l-1}}{{z}^{{{I}_{1}}\cdots {{I}_{l-1}}{{\mu }_{l}}
\cdots {{\mu }_{k}}}}\frac{\partial {{{\bar{x}}}^{[{{J}_{1}}}}}{\partial {{x}^{{{I}_{1}}}}}
\cdots \frac{\partial {{{\bar{x}}}^{{{J}_{l-1}}}}}{\partial {{x}^{{{I}_{l-1}}}}}
\frac{\partial {{{\bar{y}}}^{{{J}_{l}}]}}}{\partial {{x}^{{{\mu }_{l}}}}}
\frac{\partial {{{\bar{x}}}^{{{\nu }_{l+1}}}}}{\partial {{x}^{{{\mu }_{l+1}}}}}
\cdots \frac{\partial {{{\bar{x}}}^{{{\nu }_{k}}}}}{\partial {{x}^{{{\mu }_{k}}}}} \nonumber \\ 
& \quad +{}_{l}{{C}_{l-2}}l!{{(k!)}^{l-2}}{{z}^{{{I}_{1}}\cdots {{I}_{l-2}}{{\mu }_{l-1}}
\cdots {{\mu }_{k}}}}\frac{\partial {{{\bar{x}}}^{[{{J}_{1}}}}}{\partial {{x}^{{{I}_{1}}}}}
\cdots \frac{\partial {{{\bar{x}}}^{{{J}_{l-2}}}}}{\partial {{x}^{{{I}_{l-2}}}}}
\frac{\partial {{{\bar{y}}}^{{{J}_{l-1}}}}}{\partial {{x}^{{{\mu }_{l-1}}}}}
\frac{\partial {{{\bar{y}}}^{{{J}_{l}}]}}}{\partial {{x}^{{{\mu }_{l}}}}}
\frac{\partial {{{\bar{x}}}^{{{\nu }_{l+1}}}}}{\partial {{x}^{{{\mu }_{l+1}}}}}
\cdots \frac{\partial {{{\bar{x}}}^{{{\nu }_{k}}}}}{\partial {{x}^{{{\mu }_{k}}}}} \nonumber \\ 
& \quad +\cdots +{}_{l}{{C}_{1}}l!(k!)\,{{z}^{{{I}_{1}}{{\mu }_{2}}\cdots {{\mu }_{k}}}}
\frac{\partial {{{\bar{x}}}^{[{{J}_{1}}}}}{\partial {{x}^{{{I}_{1}}}}}
\frac{\partial {{{\bar{y}}}^{{{J}_{2}}}}}{\partial {{x}^{{{\mu }_{2}}}}}
\cdots \frac{\partial {{{\bar{y}}}^{{{J}_{l}}]}}}{\partial {{x}^{{{\mu }_{l}}}}}
\frac{\partial {{{\bar{x}}}^{{{\nu }_{l+1}}}}}{\partial {{x}^{{{\mu }_{l+1}}}}}
\cdots \frac{\partial {{{\bar{x}}}^{{{\nu }_{k}}}}}{\partial {{x}^{{{\mu }_{k}}}}} \nonumber \\ 
& \quad +\,{{y}^{{{\mu }_{1}}\cdots {{\mu }_{k}}}}
\frac{\partial {{{\bar{y}}}^{{{J}_{1}}}}}{\partial {{x}^{{{\mu }_{1}}}}}
\cdots \frac{\partial {{{\bar{y}}}^{{{J}_{l}}}}}{\partial {{x}^{{{\mu }_{l}}}}}
\frac{\partial {{{\bar{x}}}^{{{\nu }_{l+1}}}}}{\partial {{x}^{{{\mu }_{l+1}}}}}
\cdots \frac{\partial {{{\bar{x}}}^{{{\nu }_{k}}}}}{\partial {{x}^{{{\mu }_{k}}}}}, \nonumber  \\
& \vdots \nonumber \\
& {{{\bar{z}}}^{{{J}_{1}}\cdots {{J}_{k}}}}  
={{(k!)}^{k}}{{z}^{{{I}_{1}}\cdots {{I}_{k}}}}
\frac{\partial {{{\bar{x}}}^{{{J}_{1}}}}}{\partial {{x}^{{{I}_{1}}}}}
\cdots \frac{\partial {{{\bar{x}}}^{{{J}_{k}}}}}{\partial {{x}^{{{I}_{k}}}}}
+{{(k!)}^{k}}{}_{k}{{C}_{k-1}}{{z}^{{{I}_{1}}\cdots {{I}_{k-1}}{{\mu }_{k}}}}
\frac{\partial {{{\bar{x}}}^{[{{J}_{1}}}}}{\partial {{x}^{{{I}_{1}}}}}
\cdots \frac{\partial {{{\bar{x}}}^{{{J}_{k-1}}}}}{\partial {{x}^{{{I}_{k-1}}}}}
\frac{\partial {{{\bar{y}}}^{{{J}_{k}}]}}}{\partial {{x}^{{{\mu }_{k}}}}} \nonumber \\ 
& \quad +\cdots +{{(k!)}^{l+1}}{}_{k}{{C}_{l}}{{z}^{{{I}_{1}}\cdots {{I}_{l}}{{\mu }_{l+1}}
\cdots {{\mu }_{k}}}}\frac{\partial {{{\bar{x}}}^{[{{J}_{1}}}}}{\partial {{x}^{{{I}_{1}}}}}
\cdots \frac{\partial {{{\bar{x}}}^{{{J}_{l}}}}}{\partial {{x}^{{{I}_{l}}}}}
\frac{\partial {{{\bar{y}}}^{{{J}_{l+1}}}}}{\partial {{x}^{{{\mu }_{l+1}}}}}
\cdots \frac{\partial {{{\bar{y}}}^{{{J}_{k}}]}}}{\partial {{x}^{{{\mu }_{k}}}}} \nonumber \\ 
& \quad +\cdots +{{(k!)}^{2}}{}_{k}{{C}_{1}}{{z}^{{{I}_{1}}{{\mu }_{2}}\cdots {{\mu }_{k}}}}
\frac{\partial {{{\bar{x}}}^{[{{J}_{1}}}}}{\partial {{x}^{{{I}_{1}}}}}
\frac{\partial {{{\bar{y}}}^{{{J}_{2}}}}}{\partial {{x}^{{{\mu }_{2}}}}
}\cdots \frac{\partial {{{\bar{y}}}^{{{J}_{k}}]}}}{\partial {{x}^{{{\mu }_{k}}}}}+{{y}^{{{\mu }_{1}}
\cdots {{\mu }_{k}}}}\frac{\partial {{{\bar{y}}}^{{{J}_{1}}}}}{\partial {{x}^{{{\mu }_{1}}}}}
\cdots \frac{\partial {{{\bar{y}}}^{{{J}_{k}}}}}{\partial {{x}^{{{\mu }_{k}}}}}.  \label{2nd_k_coordtrans}
\end{align}
These transformations are smooth, and the charts form a smooth atlas on ${({\Lambda ^k}T)^2}M$. 
The natural embedding 
\begin{align}
&({x^\mu },{y^{{\mu _1} \cdots {\mu _k}}},{z^{{I_1}{\mu _2} 
\cdots {\mu _k}}},{z^{{I_1}{I_2}{\mu _3} \cdots {\mu _k}}}, 
\cdots ,{z^{{I_1} \cdots {I_k}}}) \nonumber \\
&\hspace{1cm} \to ({x^\mu },{y^{{\mu _1} 
\cdots {\mu _k}}},{y^{{\mu _1} \cdots {\mu _k}}},{z^{{I_1}{\mu _2} 
\cdots {\mu _k}}},{z^{{I_1}{I_2}{\mu _3} \cdots {\mu _k}}}, \cdots ,{z^{{I_1} \cdots {I_k}}})
\end{align}
shows 
that ${({\Lambda ^k}T)^2}M$ is a submanifold of ${\Lambda ^k}T{\Lambda ^k}TM$. 
To save the space, we will frequently use the abbreviation such as 
$\displaystyle{{\bar y^I} = k!\frac{{\partial {{\bar x}^I}}}{{\partial {x^J}}}{y^J}}$, 
instead of $\displaystyle{{\bar y^{{\nu _1} \cdots {\nu _k}}} 
= \frac{{\partial {{\bar x}^{{\nu _1}}}}}{{\partial {x^{{\mu _1}}}}} 
\cdots \frac{{\partial {{\bar x}^{{\nu _k}}}}}{{\partial {x^{{\mu _k}}}}}{y^{{\mu _1} \cdots {\mu _k}}}}$. 
\\
In case of $k = 2$,
\begin{flalign}
&w = \frac{1}{2}{y^{{\mu _1}{\mu _2}}}\frac{\partial }{{\partial {x^{{\mu _1}}}}} \wedge 
\frac{\partial }{{\partial {x^{{\mu _2}}}}} + {w^{{I_1}{\mu _2}}}
\frac{\partial }{{\partial {y^{{I_1}}}}} \wedge \frac{\partial }{{\partial {x^{{\mu _2}}}}} 
+ \frac{1}{2}{w^{{I_1}{I_2}}}\frac{\partial }{{\partial {y^{{I_1}}}}} \wedge 
\frac{\partial }{{\partial {y^{{I_2}}}}} \nonumber \\
&   = \frac{1}{2}{{\bar y}^{{\mu _1}{\mu _2}}}\frac{{\partial {x^{{\nu _1}}}}}{{\partial {{\bar x}^{{\mu _1}}}}}
   \frac{{\partial {x^{{\nu _2}}}}}{{\partial {{\bar x}^{{\mu _2}}}}}
   \frac{\partial }{{\partial {x^{{\nu _1}}}}} \wedge \frac{\partial }{{\partial {x^{{\nu _2}}}}} 
   + \left( {{{\bar y}^{{\mu _1}{\mu _2}}}\frac{{\partial {y^{{J_1}}}}}{{\partial {{\bar x}^{{\mu _1}}}}}
   \frac{{\partial {x^{{\nu _2}}}}}{{\partial {{\bar x}^{{\mu _2}}}}} 
   + 2{{\bar w}^{{I_1}{\mu _2}}}\frac{{\partial {x^{{J_1}}}}}{{\partial {{\bar x}^{{I_1}}}}}
   \frac{{\partial {x^{{\nu _2}}}}}{{\partial {{\bar x}^{{\mu _2}}}}}} \right)
   \frac{\partial }{{\partial {y^{{J_1}}}}} \wedge \frac{\partial }{{\partial {x^{{\nu _2}}}}} \nonumber \\
&  \quad  + \frac{1}{2}\left( {{{\bar y}^{{\mu _1}{\mu _2}}}
  \frac{{\partial {y^{{J_1}}}}}{{\partial {{\bar x}^{{\mu _1}}}}}
  \frac{{\partial {y^{{J_2}}}}}{{\partial {{\bar x}^{{\mu _2}}}}} 
  + 4{{\bar w}^{{I_1}{\mu _2}}}\frac{{\partial {x^{{J_1}}}}}{{\partial {{\bar x}^{{I_1}}}}}
  \frac{{\partial {y^{{J_2}}}}}{{\partial {{\bar x}^{{\mu _2}}}}} 
  + 4{{\bar w}^{{I_1}{I_2}}}\frac{{\partial {x^{{J_1}}}}}{{\partial {{\bar x}^{{I_1}}}}}
  \frac{{\partial {x^{{J_2}}}}}{{\partial {{\bar x}^{{I_2}}}}}} \right)
  \frac{\partial }{{\partial {y^{{J_1}}}}} \wedge \frac{\partial }{{\partial {y^{{J_2}}}}}  \nonumber  \\
&{\bar y^{{\nu _1}{\nu _2}}} = \frac{{\partial {{\bar x}^{{\nu _1}}}}}{{\partial {x^{{\mu _1}}}}}
\frac{{\partial {{\bar x}^{{\nu _2}}}}}{{\partial {x^{{\mu _2}}}}}{y^{{\mu _1}{\mu _2}}}, \nonumber \\ 
&{\bar z^{{J_1}{\nu _2}}} = \,2{z^{{I_1}{\mu _2}}}\frac{{\partial {{\bar x}^{{J_1}}}}}{{\partial {x^{{I_1}}}}}
\frac{{\partial {{\bar x}^{{\nu _2}}}}}{{\partial {x^{{\mu _2}}}}} 
+ {y^{{\mu _1}{\mu _2}}}\frac{{\partial {{\bar y}^{{J_1}}}}}{{\partial {x^{{\mu _1}}}}}
\frac{{\partial {{\bar x}^{{\nu _2}}}}}{{\partial {x^{{\mu _2}}}}}, \nonumber \\
&{{\bar{z}}^{{{J}_{1}}{{J}_{2}}}}=4{{z}^{{{I}_{1}}{{I}_{2}}}}
\frac{\partial {{{\bar{x}}}^{{{J}_{1}}}}}{\partial {{x}^{{{I}_{1}}}}}
\frac{\partial {{{\bar{x}}}^{{{J}_{2}}}}}{\partial {{x}^{{{I}_{2}}}}}
+8{{z}^{{{I}_{1}}{{\mu }_{2}}}}\frac{\partial {{{\bar{x}}}^{[{{J}_{1}}}}}{\partial {{x}^{{{I}_{1}}}}}
\frac{\partial {{{\bar{y}}}^{{{J}_{2}}]}}}{\partial {{x}^{{{\mu }_{2}}}}}+{{y}^{{{\mu }_{1}}{{\mu }_{2}}}}
\frac{\partial {{{\bar{y}}}^{{{J}_{1}}}}}{\partial {{x}^{{{\mu }_{1}}}}}
\frac{\partial {{{\bar{y}}}^{{{J}_{2}}}}}{\partial {{x}^{{{\mu }_{2}}}}}.
\end{flalign}

Now ${\Lambda ^k}\tau _M^{2,1}$ is a surjective submersion by definition, 
so it remains to check the local trivialisation. 
The local trivialisation of 
$({({\Lambda ^k}T)^2}M,{\Lambda ^k}\tau _M^{2,1},{\Lambda ^k}TM)$ around any point 
$p \in {\Lambda ^k}TM$ is given by 
$({V_p},{\mathbb{R}^n},{t_p})$, 
${t_p}:{({\Lambda ^k}\tau _M^{2,1})^{ - 1}}({V_p}) \to {V_p} \times {\mathbb{R}^l}$, 
$l = {}_n{C_k}$, $p \in {V_p}$, where ${V_p}$ is an open set of 
${\Lambda ^k}TM$, which in chart expression for any 
$\xi  \in {({\Lambda ^k}\tau _M^{2,1})^{ - 1}}({V_p}) \subset {({\Lambda ^k}T)^2}M$ is 
\begin{align}
{t_p}(\xi ) 
= ({\Lambda ^k}\tau _M^{2,1}(\xi ),{z^{{I_1}{\mu _2} \cdots {\mu _k}}}(\xi ),{z^{{I_1}{I_2}{\mu _3} 
\cdots {\mu _k}}}(\xi ), \cdots ,{z^{{I_1} \cdots {I_k}}}(\xi )).
\end{align} 
Therefore, $({({\Lambda ^k}T)^2}M,{\Lambda ^k}\tau _M^{2,1},{\Lambda ^k}TM)$ is indeed a bundle.

\begin{defn} Second order $k$-multivector bundle \\
Similarly as in the case of mechanics $(k = 1)$, the triple 
$({({\Lambda ^k}T)^2}M,{\Lambda ^k}\tau _M^{2,0},M)$ 
with ${\Lambda ^k}\tau _M^{2.0} = {\Lambda ^k}{\tau _M} \circ {\Lambda ^k}\tau _M^{2,1}$, 
${\Lambda ^k}\tau _M^{2,1}: = {\left. {{\Lambda ^k}{\tau _{{\Lambda ^k}TM}}} 
\right|_{{{({\Lambda ^k}T)}^2}M}}$ 
is also a bundle with the trivialisation 
$({U_p},{\mathbb{R}^{N - n}},{t_p})$, 
${t_p}:{({\Lambda ^k}\tau _M^{2,0})^{ - 1}}({U_p}) \to {U_p} \times {\mathbb{R}^N}$, 
where ${U_p}$ is an open set of $M$, and 
\begin{align}
N = {}_{(n + l)}{C_k} = l + {}_n{C_{k - 1}} \times {}_l{C_1} + {}_n{C_{k - 2}} \times {}_l{C_2} 
+  \cdots  + {}_l{C_k},
\end{align} 
where $l = {}_n{C_k}$, around any $p \in {U_p} \subset M$. 
We call this a {\it Second order $k$-multivector bundle over $M$} or simply, {\it Second order $k$-multivector bundle}. 
\end{defn}

In the second order field theory, the dynamical variables are the section of the second order 
$k$-multivector bundle, and 
${({\Lambda ^k}T)^2}M$ will be the space where the Lagrangian should be defined. 
%This space is quite large, but when a parameterisation exists, 
%many of the variables becomes equivalent when its pull-back to the parameter space is considered. 
%This property can be used conveniently for considering particular applications, 
%and would be discussed later. 

%%%%%%%%%%%%%%%%%%%%%%%%%%%%%%%%%%%%%%%%%%%%%%%%%%%%%%%%%%%%%%%%%%%%%
\subsection{Higher order $k$-multivector bundle} \label{higher_kTM}
%%%%%%%%%%%%%%%%%%%%%%%%%%%%%%%%%%%%%%%%%%%%%%%%%%%%%%%%%%%%%%%%%%%%%
Here we will briefly introduce the higher order $k$-multivector bundles.
The construction is the same as the Section \ref{higher_TM}, 
and the $r$-th order $k$-multivector bundle over ${({\Lambda ^k}T)^{r-1}}M$;
namely, $({({\Lambda ^k}T)^r}M, {\Lambda ^k}\tau _M^{r,r-1},{({\Lambda ^k}T)^{r-1}}M)$ 
can be constructed by induction. 
The projection map is defined by 
\begin{align}
{\Lambda ^k} \tau _M^{r,r - 1}: = {\left. {{\Lambda ^k}{\tau _{{{({\Lambda ^k}T)}^{r - 1}}M}}} \right|_{{{({\Lambda ^k}T)}^r}M}}.
\end{align} 
Consider the bundle morphism 
$(({\Lambda ^k}T){\Lambda ^k}\tau _M^{r - 1,r - 2}, \lb[3] {\Lambda ^k}\tau _M^{r - 1,r - 2})$ 
from 
$({\Lambda ^k}T({({\Lambda ^k}T)^{r - 1}}M), \lb[3] {\Lambda ^k}{\tau _{{{({\Lambda ^k}T)}^{r - 1}}M}},  \lb[3] 
{({\Lambda ^k}T)^{r - 1}}M)$ to $({\Lambda ^k}T({({\Lambda ^k}T)^{r - 2}}M),  \lb[3] 
{\Lambda ^k}{\tau _{{{({\Lambda ^k}T)}^{r - 2}}M}},  \lb[3] {({\Lambda ^k}T)^{r - 2}}M)$. 
Then we will define the total space ${({\Lambda ^k}T)^r}M$ by 
\begin{align}
{({\Lambda ^k}T)^r}M: = \{ u \in {\Lambda ^k}T({({\Lambda ^k}T)^{r - 1}}M){\mkern 1mu} |{\mkern 1mu} 
({\Lambda ^k}T){\Lambda ^k}\tau _M^{r - 1,r - 2}(u) = {\iota _{r - 1}} 
\circ {\Lambda ^k}{\tau _{{{({\Lambda ^k}T)}^{r - 1}}M}}(u)\} ,  \label{higherbundle_k_r}
\end{align}
where ${\iota _{r - 1}}:{({\Lambda ^k}T)^{r - 1}}M \to {\Lambda ^k}T({({\Lambda ^k}T)^{r - 2}}M)$  
is the inclusion map.
$({({\Lambda ^k}T)^r}M, \lb[3] {\Lambda ^k}\tau _M^{r,r - 1}, \lb[3] {({\Lambda ^k}T)^{r - 1}}M)$ 
is a sub-bundle of $({\Lambda ^k}T({({\Lambda ^k}T)^{r - 1}}M), \lb[3] 
{\Lambda ^k}{\tau _{{{({\Lambda ^k}T)}^{r - 1}}M}}, \lb[3] {({\Lambda ^k}T)^{r - 1}}M)$.

%%%%%%%%%%%%%%%%%%%%%%%%%%%%%%%%%%%%%%%%%%%%%%%%%%%%%%%%%%%%%%%%%%%%%%%%%%%%%%
\section{Integration of differential forms} \label{sec_integrationofd-forms}
%%%%%%%%%%%%%%%%%%%%%%%%%%%%%%%%%%%%%%%%%%%%%%%%%%%%%%%%%%%%%%%%%%%%%%%%%%%%%%

For the calculus of variations on the Kawaguchi manifold, 
we need to integrate a Lagrangian which is a $k$-form on a $k$-dimensional submanifold. 
In this section, we will describe how to implement such integrals, 
and begin by introducing the method to define the integration of a 
$n$-form on a $n$-dimensional compact oriented manifold $M$. The integration of 
$k$-form $(k < n)$ on a $k$-dimensional compact oriented submanifold would then be given, 
first in the case where there is an immersion map (which is called parameterisation) from the parameter space 
to the total space $M$, and second for the case where there is no such map. 
We will begin with basic definitions and theorems. 

\begin{defn}   Orientation of two charts \\
Let $(U,\varphi ),\varphi  = ({x^i})$ and $(\bar U,\bar \varphi ),\bar \varphi  = ({y^j})$ be two charts on $M$ 
such that $U \cap \bar U \ne \emptyset $. 
We say that {\it $(U,\varphi )$ and $(\bar U,\bar \varphi )$ has the same orientation}, 
when 
\begin{align}
\det \left( {\frac{{\partial {y^i}}}{{\partial {x^j}}}} \right) > 0. 
\end{align}
\end{defn}
\pagebreak[3]
\begin{defn}   Orientable manifold \\ 
We say that a manifold {\it $M$ is orientable}, 
if there exist an atlas ${\mathcal{A}} = {\{ ({U_\iota },{\varphi _\iota })\} _{\iota  \in I}}$ 
such that for any pair of intersecting charts from ${\mathcal{A}}$, 
has the same orientation. The manifold which is given such an atlas is called an 
{\it oriented manifold}, and it has the {\it orientation} associated to this atlas. 
\end{defn}
Suppose we have an oriented manifold with an atlas ${\mathcal{A}}$, 
then by definition, this manifold has a specific orientation associated to 
${\mathcal{A}}$. Transfer one axis in opposite direction for every chart in 
${\mathcal{A}}$, and denote this new atlas by 
$\tilde{\mathcal{A}}$. Every chart in $\tilde{\mathcal{A}}$ still has the same orientation, but it is opposite to the orientation of ${\mathcal{A}}$. In this way, orientable manifold can always have two orientations. 
\begin{defn}Orientation preserving (reversing) map \\
Let $M$and $N$ be two smooth $n$-dimensional oriented manifolds, 
$\alpha :M \to N$ a diffeomorphism. In particular, the tangent mapping 
${T_p}\alpha :{T_p}M \to {T_{\alpha (p)}}N$ has constant rank $n$ for every $p \in M$. 
Choose a chart $(U,\varphi )$ on $M$ and a chart $(V,\psi )$ on $N$ such that $\alpha (U) \subset V$. 
We define a number ${\varepsilon _\alpha }$, equal to 1 or $ - 1$, by the chart expressions 
\begin{align}
|\det D\psi \alpha {\varphi ^{ - 1}}|\; = {\varepsilon _\alpha } \cdot \det D\psi \alpha {\varphi ^{ - 1}}.
\end{align}
This number is independent of the choice of charts that has the same orientation. 
To see this, let $(\bar U,\bar \varphi )$ be a chart on $M$ which has the same orientation as 
$(U,\varphi )$, and $(\bar V,\bar \psi )$ a chart on $N$ 
which has the same orientation as $(V,\psi )$. Then,  
\begin{align}
&  {\varepsilon _\alpha } = \operatorname{sgn} \det D\psi \alpha {\varphi ^{ - 1}} \nonumber \\
&  \quad  = \operatorname{sgn} (\det D\psi {{\bar \psi }^{ - 1}} 
\cdot \det D\bar \psi \alpha {{\bar \varphi }^{ - 1}} 
\cdot \det D\bar \varphi {\varphi ^{ - 1}}) \nonumber \\
&  \quad  = \operatorname{sgn} (\det D\psi {{\bar \psi }^{ - 1}} 
\cdot \det D\bar \varphi {\varphi ^{ - 1}}) \cdot \operatorname{sgn} 
\det D\bar \psi \alpha {{\bar \varphi }^{ - 1}} \nonumber \\
&  \quad  = \operatorname{sgn} \det D\bar \psi \alpha {{\bar \varphi }^{ - 1}}.  
\end{align}
Thus, the number ${\varepsilon _\alpha }$ is independent of the charts with the same orientation. 
We say that $\alpha$ is {\it orientation preserving map} (resp. {\it orientation-reversing map}), 
if ${\varepsilon _\alpha } = 1$ (${\varepsilon _\alpha } =  - 1$). 
\end{defn}
\begin{ex} 
The special case is when $M = U$, $N = {\mathbb{R}^n}$ and $\alpha  = \varphi $, 
where $\varphi  = ({x^\mu })$ are the coordinate functions of the domain $U$. 
We assume that we always choose the map $\varphi $ 
as to be orientation preserving with respect to the canonical coordinates on ${\mathbb{R}^n}$.  
\end{ex}
\begin{defn}  support of $k$-form$ (k \leqslant n)$ $\alpha $ \\
The {\it support} of $\alpha $ is the closure of $\{ p \in M|{\alpha _p} \ne 0\} $. 
We denote it by $\operatorname{supp} (\alpha )$
\end{defn}
\begin{defn}  $\sigma $-compact \\
We say that the topological space $X$ is $\sigma $-compact when $X$ 
is a union of countable number of compact subsets, ${X_i},{\kern 1pt} (i = 1,2,3, \cdots )$. 
Namely, $X = \bigcup\limits_{i = 1}^\infty  {{X_i}} $. 
\end{defn}
\begin{theorem}  Partition of Unity  \label{thm_partitionofunity}\\ 
Let $M$ be a $\sigma $-compact ${C^\infty }$-manifold, and 
${\{ {U^\iota }\} _{\iota  \in I}}$ an arbitrary open covering of $M$. 
$\iota  \in I$ is an index taken from countable index set $I$. 
Then, there exists countable number of ${C^\infty }$-functions ${h_j}:M \to \mathbb{R},\;j = 1,2,3, \cdots $, 
such that satisfies the following conditions: \\
1. $0 \leqslant {h_j} \leqslant 1$, for $\forall j = 1,2,3, \cdots $ \\
2. The family ${\{ \operatorname{supp} ({h_j})\} _{j \in I}}$ is a locally finite open covering of $M$ and also a refinement of ${\{ {U^\iota }\} _{\iota  \in I}}$  \\
3. $\displaystyle{\sum\limits_{j \in I} {{h_j}}  = 1.}$ \\
\end{theorem}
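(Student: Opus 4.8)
The plan is to reduce the statement to three ingredients: the existence of a smooth bump function on $\mathbb{R}^n$, an exhaustion of $M$ by compact sets, and a normalisation at the end. First I would construct, once and for all, a fixed plateau function on Euclidean space: starting from the classical $\lambda(t) = e^{-1/t}$ for $t>0$ and $\lambda(t)=0$ for $t\le 0$, one assembles a $C^\infty$ function $b:\mathbb{R}^n\to[0,1]$ with $b\equiv 1$ on the closed ball of radius $1$ and $\operatorname{supp}(b)$ contained in the ball of radius $2$. Transporting $b$ through coordinate charts provides the local bumps from which the $h_j$ are built.

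Next I would exploit the hypotheses. Since $M$ is a manifold it is locally compact, and being $\sigma$-compact (as well as second-countable) it admits an exhaustion by compact sets: a sequence $K_1\subset K_2\subset\cdots$ with $K_i\subset\operatorname{int}(K_{i+1})$ and $\bigcup_{i} K_i = M$. Such a sequence is produced inductively, covering each $K_i$ together with the next compact piece by finitely many relatively compact open sets and taking closures. The compact \emph{shells} $L_i := K_{i+1}\setminus\operatorname{int}(K_i)$ then cover $M$, each $L_i$ is disjoint from $K_{i-1}$ and contained in $\operatorname{int}(K_{i+2})$, and this confinement is the source of the local finiteness.

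For each point $p\in L_i$ I would choose an index $\iota$ with $p\in U^\iota$, a chart $(V_p,\varphi_p)$ with $p\in V_p\subset U^\iota\cap(\operatorname{int}(K_{i+2})\setminus K_{i-1})$, and pull back the plateau function $b$ to obtain a smooth $g_p\ge 0$ that is strictly positive on a neighbourhood of $p$ and supported in $V_p$. By compactness of $L_i$ finitely many such neighbourhoods cover $L_i$; collecting the corresponding $g_p$ over all $i$ gives a countable family $\{g_j\}$ whose supports refine $\{U^\iota\}$ and, by the shell construction, form a locally finite family. Consequently $g:=\sum_j g_j$ is a locally finite sum, hence $C^\infty$, and is strictly positive everywhere since the positivity neighbourhoods cover $M$. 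Setting $h_j := g_j/g$ yields smooth functions with $0\le h_j\le 1$, supports $\operatorname{supp}(h_j)=\operatorname{supp}(g_j)$ still refining the cover and locally finite, the open positivity sets covering $M$, and $\sum_j h_j = 1$.

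The main obstacle is \emph{local finiteness}: the normalisation $h_j=g_j/g$ only makes sense, and only produces smooth functions, if every point has a neighbourhood meeting the supports of finitely many $g_j$. All the care in the construction goes into arranging this, which is exactly why the compact exhaustion and the confinement of each bump to a shell $\operatorname{int}(K_{i+2})\setminus K_{i-1}$ are essential; second-countability (hence $\sigma$-compactness) is what keeps the whole collection countable. The remaining verifications — smoothness of $b$, smoothness of the pulled-back bumps, and the elementary properties $0\le h_j\le 1$ and $\sum_j h_j=1$ — are routine.
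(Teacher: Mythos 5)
Your construction is correct and is essentially the standard argument (exhaustion by compacts, shells, Euclidean bump functions, normalisation) found in the textbook the paper cites for this result; the paper itself omits the proof and refers to Spivak, so there is no in-paper argument to diverge from. The only cosmetic point worth noting is that the theorem as stated calls $\{\operatorname{supp}(h_j)\}$ an \emph{open} covering even though supports are closed sets; your remark that the positivity sets $\{h_j>0\}$ form the open cover is the right way to read that clause, and your handling of local finiteness via the confinement of each bump to $\operatorname{int}(K_{i+2})\setminus K_{i-1}$ is exactly what makes the normalisation $h_j=g_j/g$ legitimate.
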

The sum in 3. is over finite functions, since ${\{ \operatorname{supp} ({h_j})\} _{j \in I}}$ is a locally finite set. 
The proof could be found in the standard textbooks, e.g., Spivak~\cite{Spivak1}. 

\begin{defn}
The family ${\{ {h_i}\} _{i \in I}}$ satisfying the conditions 1., 2., 3. 
in Theorem \ref{thm_partitionofunity} is called a {\it partition of unity subordinate to} ${\{ {U^\iota }\} _{\iota  \in I}}$.
\end{defn}

\begin{defn} Rectangular region \\
Let $V$ be an open subset of $U$. We call {\it $V$ a rectangular region}, 
when there exists a chart $(U,\varphi )$, such that by appropriate shrinking of the domain, 
induces a chart $(V,\tilde \varphi )$, 
$\tilde \varphi (V) = \{ p \in {\mathbb{R}^n}|{\tilde \varphi ^i}(p) 
\in ( - {a^i},{a^i}),\;{a^i} \in \mathbb{R},\;i = 1, \cdots ,n\} $. 
We call $(V,\tilde \varphi )$ a {\it rectangle chart}. 
\end{defn}

Now we will introduce the integral of an $n$-form on a $n$ dimensional manifold $M$. 
\begin{defn} Integration of a top form \\
Let $\omega $ be a $n$-form on $M$, and $(U,\varphi ),\varphi  = ({x^1}, \cdots ,{x^n})$ a chart on $M$. 
First, suppose that $\operatorname{supp} (\omega ) \subset V$, where $V$ is a rectangular region of $U$. 
Let the local coordinate expression of $\omega $ be $\omega  = fd{x^1} \wedge  \cdots  \wedge d{x^n}$, with $f \in {C^\infty }(M)$. 
The integration of $\omega $ is defined by, 
\begin{align}
\int_M \omega   = \int_V \omega   = \int_{\tilde \varphi (V)} {{{\tilde \varphi }^*}(fd{x^1} \wedge  
\cdots  \wedge d{x^n}):}  = \int_{ - {a_1}}^{{a_1}} { \cdots \int_{ - {a_n}}^{{a_n}} {f({x^1}, 
\cdots ,{x^n}){\kern 1pt} d{x^1} \cdots d{x^n}} } , \label{integral1} 
\end{align}
where for simplicity, we denoted the pull back of the coordinates on $U$ to 
${\mathbb{R}^n}$ also as $({x^\mu })$. The right hand side is the standard multiple integral. 
\end{defn}

\begin{lemma}  
The Right hand side of (\ref{integral1}) does not depend on the choice of the rectangular region which 
contains the support of $\omega $, provided that it has the same orientation. 
\end{lemma}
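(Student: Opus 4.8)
The plan is to reduce the statement to the classical change-of-variables theorem for the (Riemann) multiple integral on $\mathbb{R}^n$, with the single delicate point being that the ``same orientation'' hypothesis is exactly what lets me discard the absolute value on the Jacobian that the change-of-variables theorem produces. First I would set up the two competing data: let $(V,\tilde\varphi)$, $\tilde\varphi = (x^1,\dots,x^n)$ and $(\bar V,\bar{\tilde\varphi})$, $\bar{\tilde\varphi} = (\bar x^1,\dots,\bar x^n)$ be two rectangle charts, each containing $\operatorname{supp}(\omega)$ and carrying the same orientation. Since $\omega$ vanishes identically off $\operatorname{supp}(\omega)$, its coefficient $f$ vanishes outside $V \cap \bar V$, and likewise $\bar f$; by additivity of the multiple integral the integral over $\tilde\varphi(V)$ equals the integral over $\tilde\varphi(V\cap\bar V)$, and symmetrically in the barred chart. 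Hence it suffices to compare the two integrals over the common open set $W := V \cap \bar V$.

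Next I would record how the top-form coefficient transforms. On $W$ one has both $\omega = f\,dx^1 \wedge \cdots \wedge dx^n$ and $\omega = \bar f\,d\bar x^1 \wedge \cdots \wedge d\bar x^n$; substituting $dx^i = (\partial x^i/\partial \bar x^j)\,d\bar x^j$ and using antisymmetry of the wedge gives
\begin{align}
dx^1 \wedge \cdots \wedge dx^n = \det\!\left( \frac{\partial x^i}{\partial \bar x^j} \right) d\bar x^1 \wedge \cdots \wedge d\bar x^n,
\end{align}
so that $\bar f = f\,\det(\partial x^i/\partial \bar x^j)$ on $W$. I would then apply the change-of-variables theorem to the transition diffeomorphism $\Phi := \tilde\varphi \circ \bar{\tilde\varphi}^{-1}:\bar{\tilde\varphi}(W) \to \tilde\varphi(W)$, whose Jacobian matrix is precisely $(\partial x^i/\partial \bar x^j)$:
\begin{align}
\int_{\tilde\varphi(W)} (f\circ\tilde\varphi^{-1})\,dx^1 \cdots dx^n = \int_{\bar{\tilde\varphi}(W)} (f\circ\bar{\tilde\varphi}^{-1})\,\left|\det\frac{\partial x^i}{\partial \bar x^j}\right| d\bar x^1 \cdots d\bar x^n,
\end{align}
where I used $\tilde\varphi^{-1}\circ\Phi = \bar{\tilde\varphi}^{-1}$ to simplify the composed integrand.

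The main obstacle, and the only place the hypothesis is genuinely used, is passing from the absolute value $|\det(\partial x^i/\partial \bar x^j)|$ delivered by the change-of-variables theorem to the signed determinant appearing in the pullback computation. Because the two rectangle charts share the same orientation, $\det(\partial x^i/\partial \bar x^j) > 0$ throughout $W$ — this is the very sign bookkeeping already carried out in the orientation-preserving/reversing discussion preceding the lemma — so the absolute value may be dropped. Combining this with the transformation law $\bar f = f\,\det(\partial x^i/\partial \bar x^j)$, the right-hand integrand becomes exactly $\bar f \circ \bar{\tilde\varphi}^{-1}$, and the right-hand side is by definition the integral of $\omega$ computed in the barred chart. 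Thus the two values coincide, proving independence of the chosen rectangular region. The remaining steps — the localization to $W$ and the smoothness/integrability needed to invoke the change-of-variables theorem — are routine, since $f$ is smooth and compactly supported and $\Phi$ is a diffeomorphism.
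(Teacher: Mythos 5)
Your proof is correct and follows essentially the same route as the paper's: the coefficient of the top form transforms by the Jacobian determinant, the change-of-variables theorem produces $\left|\det\right|$, and the same-orientation hypothesis makes the determinant positive so the absolute value can be dropped. The only cosmetic difference is that you localize to $V\cap\bar V$ using the compact support, whereas the paper instead encloses the image of the second chart in an auxiliary rectangle; both handle the same technical point.
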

\begin{proof}  
Suppose there is another chart $(\bar U,\bar \varphi ),\bar \varphi  = ({y^1}, \cdots ,{y^n})$, 
such that $U \cap \bar U \ne \emptyset $, and $V \subset U \cap \bar U$. 
By definition, the rectangle chart of $V$ on the first chart is 
\[\tilde \varphi (V) = \{ p \in {\mathbb{R}^n}|{\tilde \varphi ^i}(p) 
\in ( - {a^i},{a^i}),\;{a^i} \in \mathbb{R},\;i = 1, \cdots ,n\} .\] 
The second chart, $\bar \varphi (V)$ is not a rectangle chart in general, 
however in ${\mathbb{R}^n}$ we can always choose a open rectangle such that contains 
$\bar \varphi (V)$. For instance, choose 
\[L: = \{ ( - {b^1},{b^1}) \times  \cdots  \times ( - {b^n},{b^n})|{b^i} 
= \sup \{ \left| {{{\bar \varphi }^i}(p)} \right|,p \in V\}  \in \mathbb{R}\}, \] 
then since $\bar \varphi $ is a diffeomorphism, we can always choose a 
$\phi  \in \mbox{\it Diff}({\mathbb{R}^n})$ such that 
$\phi  \circ \bar \varphi (V) = L$. Let $\varphi ' = \phi  \circ \bar \varphi $, 
and the local coordinate expression of $\omega $ 
in this chart be $\omega  = gd{y^1} \wedge  \cdots  \wedge d{y^n}$, with $g \in {C^\infty }(M)$. 
Since 
\begin{align}
&  d{y^1} \wedge  \cdots  \wedge d{y^n} = \frac{1}{{k!}}{\varepsilon _{{i_1} \cdots {i_n}}}d{y^{{i_1}}} \wedge  
\cdots  \wedge d{y^{{i_n}}} = \frac{1}{{k!}}{\varepsilon _{{i_1} 
\cdots {i_n}}}\frac{{\partial {y^{{i_1}}}}}{{\partial {x^{{j_1}}}}} 
\cdots \frac{{\partial {y^{{i_n}}}}}{{\partial {x^{{j_n}}}}}d{x^{{j_1}}} \wedge  
\cdots  \wedge d{x^{{j_n}}} \nonumber \\
&   = \frac{1}{{k!}}{\varepsilon _{{i_1} \cdots {i_n}}}{\varepsilon ^{{j_1} 
\cdots {j_n}}}\frac{{\partial {y^{{i_1}}}}}{{\partial {x^{{j_1}}}}} 
\cdots \frac{{\partial {y^{{i_n}}}}}{{\partial {x^{{j_n}}}}}d{x^1} \wedge  
\cdots  \wedge d{x^n} = {\varepsilon ^{{j_1} \cdots {j_n}}}\frac{{\partial {y^1}}}{{\partial {x^{{j_1}}}}} 
\cdots \frac{{\partial {y^n}}}{{\partial {x^{{j_n}}}}}d{x^1} \wedge  \cdots  \wedge d{x^n} \nonumber \\
&   = \det \left( {\frac{{\partial {y^i}}}{{\partial {x^j}}}} \right)d{x^1} \wedge  
\cdots  \wedge d{x^n}, 
\end{align}
we will have 
\begin{align}
f = g\det \left( {\frac{{\partial {y^i}}}{{\partial {x^j}}}} \right).
\end{align}
We suppose our manifold $M$ is orientable, 
then it is always possible to choose two charts such that 
$\displaystyle{\det \left( {\frac{{\partial {y^i}}}{{\partial {x^j}}}} \right) > 0}$ on 
$U \cap \bar U$. Then, 
\begin{align}
\int_V \omega  &= \int_{\tilde \varphi (V)} {{{\tilde \varphi }^*}(fd{x^1} \wedge  
\cdots  \wedge d{x^n})}  = \int_{ - {a_1}}^{{a_1}} { \cdots \int_{ - {a_n}}^{{a_n}} {f({x^1}, 
\cdots ,{x^n}){\kern 1pt} d{x^1} \cdots d{x^n}} }  \nonumber \\
   &= \int_{ - {a_1}}^{{a_1}} { \cdots \int_{ - {a_n}}^{{a_n}} {g({y^1}, 
   \cdots ,{y^n}){\kern 1pt} \det \left( {\frac{{\partial {y^i}}}{{\partial {x^j}}}} \right)d{x^1} \cdots d{x^n}} }  \nonumber \\
   &= \int_{ - {a_1}}^{{a_1}} { \cdots \int_{ - {a_n}}^{{a_n}} {g({y^1}, \cdots ,{y^n}){\kern 1pt} {\kern 1pt} 
   \left| {\det \left( {\frac{{\partial {y^i}}}{{\partial {x^j}}}} \right)} \right|d{x^1} \cdots d{x^n}} }  \nonumber \\
   &= \int_{ - {b_1}}^{{b_1}} { \cdots \int_{ - {b_n}}^{{b_n}} {g({y^1}, \cdots ,{y^n}){\kern 1pt} d{y^1} 
   \cdots d{y^n}} } .  \label{integral2}
\end{align}
Accordingly, the value does not depend on the choice of charts. 
\end{proof}

Now, suppose $M$ is orientable and compact. Since every point $p \in M$ is in some rectangle region, 
$M$ could be covered by finite number of rectangle regions 
$\{ {V^1}, \cdots ,{V^s}\} $. Then, choose the atlas of $M$ as 
${\mathcal{A}} = \{ ({V^1},{\varphi ^1}), \cdots ,({V^s},{\varphi ^s})\} $, 
where all charts in ${\mathcal{A}}$ have the same orientation. 
Take a partition of unity ${\{ {h_i}\} _{i \in I}}$, subordinate to $\{ {V^1}, \cdots ,{V^s}\} $. 
Then the integral of $n$-form $\omega $ on $M$ can be calculated by 
\begin{align}
\int_M \omega  : = \int_M {\sum\limits_{i = 1}^\infty  {{h_i}\omega } }  
= \sum\limits_{i = 1}^\infty  {\int_M {{h_i}\omega } }  
= \sum\limits_{i = 1}^\infty  {\int_{\operatorname{supp} ({h_i})} {{h_i}\omega } }  
= \sum\limits_{j = 1}^s {\sum\limits_{i:\operatorname{supp} ({h_i}) \subset {V^j}} 
{\int_{{V^j}} {{h_i}\omega } } },   \label{integral3}
\end{align}
since $\operatorname{supp} ({h_i}\omega ) \subseteq \operatorname{supp} ({h_i})$, and 
${\{ \operatorname{supp} ({h_i})\} _{i \in N}}$ is a refinement of 
$\{ {V^1}, \cdots ,{V^s}\} $, for any $i \in \mathbb{N}$, 
$\operatorname{supp} ({h_i}\omega ) \subset {V^j}$ for some $j$, 
$1 \leqslant j \leqslant s$. 
We can calculate the {\it r.h.s.} of (\ref{integral3}) by (\ref{integral2}), 
\begin{align}
&\sum\limits_{j = 1}^s {\sum\limits_{i:\operatorname{supp} ({h_i}) \subset {V^j}} {\int_{{V^j}} {{h_i}\omega } } }  
= \sum\limits_{j = 1}^s {\sum\limits_{i:\operatorname{supp} ({h_i}) \subset {V^j}} 
{\int_{{{\tilde \varphi }_j}({V^j})} {{{\tilde \varphi }_j}^*({h_i}{f_j}dx_j^1 \wedge  
\cdots  \wedge dx_j^n)} } }  \nonumber \\
&= \sum \limits_{j = 1}^s \sum\limits_{i:\operatorname{supp} ({h_i}) \subset {V^j}} 
\int_{ - {a^j}_1}^{{a^j}_1}  \cdots \int_{ - {a^j}_n}^{{a^j}_n} 
({h_i} \circ {{\tilde \varphi }_j}) 
({f_j} \circ {{\tilde \varphi }_j}) ({x^1}, \cdots ,{x^n})
\det \left( {\frac{{\partial x_j^b}}{{\partial {x^a}}}} \right)d{x^1} \cdots d{x^n} , 
\end{align}
where 
\begin{align}
{f_j}dx_j^1 \wedge  \cdots  \wedge dx_j^n 
\end{align} 
is a local expression of $\omega $ on the rectangle chart $({V^j},{\tilde \varphi ^j}),{\tilde \varphi ^j} = (x_j^1, \cdots ,x_j^n)$, induced by the chart $({V^j},{\varphi ^j})$. 
This expression does not depend on the choice of partition of unity. 
To see this, consider just two coordinate patches ${V_1},{V_2}$. 
\begin{align}
\int_{{V^1} \cup {V^2}} \omega  : = \int_{{V^1} \cup {V^2}} {\sum\limits_{i \in I} {{h_i}\omega } } 
\end{align}
For ${V^1} \cap {V^2} = \emptyset $, 
\begin{align}
&\int_{{V^1} \cup {V^2}} {\sum\limits_{i = 1}^\infty  {{h_i}\omega } } 
 = \sum\limits_{i:\operatorname{supp} ({h_i}) \subset {V^1}} {\int_{{V^1}} {{h_i}\omega } }  
 + \sum\limits_{i:\operatorname{supp} ({h_i}) \subset {V^2}} {\int_{{V^2}} {{h_i}\omega } }  \nonumber \\
&= \sum\limits_{i \in I} {\int_{{V^1}} {{h_i}\omega } }  + \sum\limits_{i \in I} {\int_{{V^2}} {{h_i}\omega } }  
= \int_{{V^1}} {\sum\limits_{i \in I} {{h_i}\omega } }  + \int_{{V^2}} {\sum\limits_{i \in I} {{h_i}\omega } }  
= \int_{{V^1}} \omega   + \int_{{V^2}} \omega  . 
\end{align}
For ${V^1} \cap {V^2} \ne \emptyset $, 
\begin{align}
  \int_{{V^1} \cup {V^2}} \omega  &:= \int_{{V^1} \cup {V^2}} {\sum\limits_{i = 1}^\infty  {{h_i}\omega } }  \nonumber \\
   &= \sum\limits_{i:\operatorname{supp} ({h_i}) \subset {V^1}} {\int_{{V^1}{ \setminus }({V^1} \cap {V^2})} 
{{h_i}\omega } }  + \sum\limits_{i:\operatorname{supp} ({h_i}) \subset {V^2}}
 {\int_{{V^2}{ \setminus }({V^1} \cap {V^2})} {{h_i}\omega } }  
 + \sum\limits_{i \in I} {{h_i}} \int_{{V^1} \cap {V^2}} \omega   \nonumber \\
 &= \sum\limits_{i \in I} {\int_{{V^1}{ \setminus }({V^1} \cap {V^2})} {{h_i}\omega } }  
+ \sum\limits_{i \in I} {\int_{{V^2}{ \setminus }({V^1} \cap {V^2})} {{h_i}\omega } }  
+ \sum\limits_{i \in I} {{h_i}} \int_{{V^1} \cap {V^2}} \omega   \nonumber \\
  &= \int_{{V^1}{ \setminus }({V^1} \cap {V^2})} {\sum\limits_{i \in I} {{h_i}\omega } }  
+ \int_{{V^2}{ \setminus }({V^1} \cap {V^2})} {\sum\limits_{i \in I} {{h_i}\omega } }  
+ \int_{({V^1} \cap {V^2})} {\sum\limits_{i \in I} {{h_i}\omega } }  \nonumber \\
&= \int_{{V^1}{ \setminus }({V^1} \cap {V^2})} \omega   + \int_{{V^2}{ \setminus }({V^1} \cap {V^2})} \omega  
+ \int_{({V^1} \cap {V^2})} \omega  . 
\end{align}
The extension to the arbitrary number of covers is apparent. 

Now we introduce two ways of defining an integration of a $k$-form on a 
$k$-dimensional compact subset $S$ of $M$. 
The first is when $S$ is given by an inclusion (injective immersion) 
of $k$-dimensional manifold $P$ into $M$, namely $S$ is an immersed submanifold of $M$, 
and the second is when $S$ is an embedded submanifold of $M$. 
Though in the further discussion we always consider the case when we have the inclusion map, 
we will also introduce the definition for the second case as well. 
%To make the terminologies clear, we will begin with the definition of submanifold. 

\begin{defn} Integration of $k$-form on $n$-dimensional manifold \\
Let $S$ be a immersed submanifold of $M$, given by 
$S=\iota (P) \subset M$, 
where $P$ is a compact $k$-dimensional manifold and $\iota$ an inclusion map.
Suppose we have a $k$-form $\omega $ on $M$.
The {\it integration of $k$-form on $M$} is given by, 
\begin{align}
\int_{\iota (P)} \omega = \int_P {{\iota ^*}\omega } .
\end{align} 
Then since the {\it r.h.s.} is an integration of a $k$-form over a $k$-dimensional manifold, 
the previous definition could be applied to calculate the integral. 
\end{defn}
$S$ will be diffeomorphic to $P$, and will inherit the topology of $P$ by $\iota$. 
In general this is not the same as the subset topology. 
In the later chapters, we will consider especially when $P$ a closed $k$-rectangle.
Then $P$ is called a {\it parameter space} and the map $\iota $ is called a 
{\it parameterisation}.  

Now, suppose we have a submanifold $S$ of $M$, 
we can still define the integration of $k$-form on $S$ in terms of pieces and 
adapted charts of a submanifold. 

\begin{defn} Submanifold and adapted charts \\
Let $M$ be a n-dimensional ${C^\infty }$-manifold and choose a non-empty subset $S \subset M$. 
$S$ is said to be a $k$-dimensional submanifold of $M$, if to every point ${p_0} \in S$ there exists a chart 
$(U,\varphi ),\varphi  = ({x^\mu })$ on $M$ at ${p_0}$ such that the set $S \cap U$ is given by the equation, 
\begin{align}
{x^{k + 1}}(p) = 0,{x^{k + 2}}(p) = 0, \cdots ,{x^n}(p) = 0.
\end{align} 
for $\forall p \in S \cap U$. 
Coordinate system $(U,\varphi ),\varphi  = ({x^\mu })$ with the above properties is said to be 
{\it adapted to the submanifold $S$ at ${p_0}$}. 
\end{defn}

\begin{defn} Half space \\
Let ${t^1}, \cdots ,{t^n}$ be the canonical coordinates of ${\mathbb{R}^n}$, 
and $\mathbb{R}_ - ^n = \{ y \in {\mathbb{R}^n}|{t^1}(y) \geqslant 0\} $ and 
$\partial \mathbb{R}_ - ^n = \{ y \in \mathbb{R}_ - ^n|{t^n}(y) = 0\} $. 
The subset $\mathbb{R}_ - ^n$ of ${\mathbb{R}^n}$, considered with the subspace topology, 
is called the {\it half space} of${\mathbb{R}^n}$, and $\partial \mathbb{R}_ - ^n$ 
considered as the topological subspace of $\mathbb{R}_ - ^n$, 
is canonically isomorphic with ${\mathbb{R}^{n - 1}}$, and is called the {\it boundary} of $\mathbb{R}_ - ^n$. 
\end{defn}

\begin{defn} Open sets in $\mathbb{R}_ - ^n$ \\
Let $U$ be a subset of $\mathbb{R}_ - ^n$. 
We say that $U$ is open in $\mathbb{R}_ - ^n$ when $U = \Omega  \cap \mathbb{R}_ - ^n$, 
where $\Omega $ is some open subset of ${\mathbb{R}^n}$. 
\end{defn}

Let $M$ be a n-dimensional ${C^\infty }$-differentiable manifold, 
and $\Omega $ be a nonempty compact subset of $M$. 
Let ${p_0} \in \Omega $ be a fixed point, and $(U,\varphi ), \varphi  = ({x^\mu })$ a chart on $M$ such that ${p_0} \in U$. 
We say that the chart $(U,\varphi )$ is adapted to $\Omega $ at ${p_0}$, 
if $\varphi (\Omega  \cap U)$ is open in $\mathbb{R}_ - ^n$. 
If the chart $(U,\varphi )$ is adapted to $\Omega $ at ${p_0}$, 
it is adapted at every point of $p \in \Omega  \cap U$, and we say a chart $(U,\varphi )$ 
is adapted to $\Omega $. Clearly, since $\Omega $ is supposed to be compact, 
there exist finitely many points ${p_1},{p_2}, \cdots ,{p_N}$ of 
$\Omega $ and adapted charts $({U_1},{\varphi _1}), \cdots ,({U_N},{\varphi _N})$, 
such that ${p_1} \in {U_1},{p_2} \in {U_2}, \cdots ,{p_N} \in {U_N}$, 
and $\Omega  \subset \bigcup\limits_\iota  {{U_\iota }} $. 

\begin{defn} Pieces of a manifold \\
$\Omega  \subset M$ is called a {\it piece} of $M$, 
if it is compact and to each point $p \in \Omega $ there exists a chart at $p$ adapted to $\Omega $. 
\end{defn}

Let $\operatorname{int} \Omega $ be the set of interior points of $\Omega $, 
and set $\partial \Omega  = \Omega { \setminus }\operatorname{int} \Omega $. 
Let ${q_0} \in \partial \Omega $, and let $(U,\varphi ),\varphi  = ({x^\mu })$ 
be a chart adapted to $\Omega $ at ${q_0}$. Then for every $q \in \partial \Omega  \cap U$,
${x^n}(q) = 0$, and the set $\partial \Omega $ has on $\partial \Omega  \cap U$ 
the equation ${x^n} = 0$. Thus by the definition $\partial \Omega $ is a submanifold of $\Omega $, 
of dimension $\dim \partial \Omega  = n - 1$. 
The submanifold $\partial \Omega $ is called the {\it boundary} of $\Omega $. 
It is easily seen that $\partial \Omega $ is compact. 

\begin{ex} 
A ball ${B_1} = \{ (x,y) \in {\mathbb{R}^2}|{x^2} + {y^2} \leqslant 1\} $ is a piece of ${\mathbb{R}^2}$. 
${B_1}$ is compact so it suffices to show that to each point $p \in {B_1}$ there exists a chart at $p$ adapted to ${B_1}$. 
For the points $p \in \operatorname{int} {B_1}$, it is apparent we can find an open chart on 
${\mathbb{R}^2}$ which is adapted to ${B_1}$. 
For the points on the boundary, $p \in \partial {B_1}$, let $(U,\phi ), \phi  = (r,\theta )$ 
be a chart on ${\mathbb{R}^2}$, with $ - 1 < r < R$, $c < \varphi  < c + \pi $ with $R$ 
some constant greater than 0, and $c$ a constant such that $p \in U$. 
Then $\phi ({B_1} \cap U) = ( - 1, 0] \times (c,c + \pi ) = \{ ( - 1, R) \times (c, c + \pi )\}  \cap \mathbb{R}_ - ^2$ 
is open in $\mathbb{R}_ - ^n$. By choosing appropriate $c$, 
we can always find such adapted chart for any $p \in \partial {B_1}$. 
\end{ex} 

One can also show that if $\Omega $ is orientable, $\partial \Omega $ is also orientable. 
Let $\Omega $ be a piece of $M$, ${p_0} \in \partial \Omega $ a point. 
We say that a vector $\xi  \in {T_{{p_0}}}M$ is oriented outwards $\Omega $, 
if there exists a chart $(U,\varphi ),\varphi  = ({x^\mu })$ on $M$ adapted to $\Omega 
$ at ${p_0}$, such that the chart expression 
$\displaystyle{\xi  = {\xi ^i}{\left( {\frac{\partial }{{\partial {x^i}}}} \right)_{{p_0}}}}$ 
satisfies the condition ${\xi ^n} > 0$. 
We show that this definition is independent of the choice of adapted chart which has the same orientation. 
Let $(\bar U,\bar \varphi ),\bar \varphi  = ({\bar x^\mu })$ 
be the second adapted chart which has the same orientation as $(U,\varphi ), \varphi  = ({x^\mu })$. 
Then on this second chart, 
\begin{align}
\xi  = {\bar \xi ^i}{\left( {\frac{\partial }{{\partial {{\bar x}^i}}}} \right)_{{p_0}}}, 
\quad {\bar \xi ^n} = {\left( {\frac{{\partial {{\bar x}^n}}}{{\partial {x^i}}}} \right)_{{p_0}}}{\xi ^i}, 
\end{align}
but on $U \cap \bar U \cap \partial \Omega $, 
${\bar x^n}({x^1},{x^2}, \cdots ,{x^{n - 1}},0) = 0$. \\
Hence, $\displaystyle{\frac{{\partial {{\bar x}^n}}}{{\partial {x^1}}} = 0, 
\frac{{\partial {{\bar x}^n}}}{{\partial {x^2}}} = 0, \cdots , 
\frac{{\partial {{\bar x}^n}}}{{\partial {x^{n - 1}}}} = 0}$ 
at $\varphi ({p_0})$, and we have 
\begin{align}
{\bar \xi ^n} = \frac{{\partial {{\bar x}^n}}}{{\partial {x^n}}}{\xi ^n}.
\end{align} 
However, since both charts are adapted and has the same orientation, 
$\displaystyle{\frac{{\partial {{\bar x}^n}}}{{\partial {x^n}}} > 0}$, and therefore 
${\bar \xi ^n} > 0$. 
In particular if $(U,\varphi ),\varphi  = ({x^i})$ 
is a chart adapted to $\Omega $ at ${p_0} \in \partial \Omega $, 
then the vector $\displaystyle{{\left( {\frac{\partial }{{\partial {x^n}}}} \right)_{{p_0}}}}$ 
is oriented outwards $\Omega $.
Now suppose $\Omega $ is oriented with the orientation $S$. 
Then by the compactness of $\Omega $, 
we have finitely many charts $({U_k}, {\varphi _k}), {\varphi _k} = (x_k^i)$, 
$1 \leqslant k \leqslant N$, adapted to $\Omega $ with the same orientation, 
and $\Omega  \subset \bigcup\limits_k {{U_k}} $. We set ${V_k} = {U_k} \cap \partial \Omega ,
\quad {\psi _k} = (x_k^1,x_k^2, \cdots ,x_k^{n - 1})$, 
where the coordinate functions $x_k^1, \cdots ,x_k^{n - 1}$ are considered to be restricted to ${V_k}$.
The pairs $({V_k}, {\psi _k}),{\psi _k} = (x_k^i)$, $1 \leqslant k \leqslant N$ form a smooth atlas on 
$\partial \Omega $. 
Then by definition and from $\displaystyle{\frac{{\partial {{\bar x}^n}}}{{\partial {x^1}}} = 0, 
\frac{{\partial {{\bar x}^n}}}{{\partial {x^2}}} = 0, 
\cdots ,\frac{{\partial {{\bar x}^n}}}{{\partial {x^{n - 1}}}} = 0}$, 
for any pair of $1 \leqslant k,l \leqslant N$, $\displaystyle{ \det D{\varphi _k}{\varphi _l}^{ - 1} 
= \frac{{\partial x_k^n}}{{\partial x_l^n}} \cdot \det D{\psi _k}{\psi _l}^{ - 1}}$. 
But since $\det D{\varphi _k}{\varphi _l}^{ - 1} > 0$ and 
$\displaystyle{\frac{{\partial x_k^n}}{{\partial x_l^n}} > 0}$, $\det D{\psi _k}{\psi _l}^{ - 1} > 0$. 
Therefore by definition, $\partial \Omega $ is an orientable manifold. 
The orientation of $\partial \Omega $ defined by the atlas $({V_k},{\psi _k})$, 
$1 \leqslant k \leqslant N$ is said to be associated with the given orientation of $M$, 
defined by the charts $({U_k},{\varphi _k})$, $1 \leqslant k \leqslant N$. 

\begin{defn} Integration of a $k$-form on $n$-dimensional manifold (by submanifold chart) \\
Consider a manifold $M$ and an orientable submanifold $S \subset M$ with $\dim S = k$.
Suppose we have a $k$-form $\rho $ on the neighbourhood of $S$. 
We will now introduce the integration of this form on $S$ by means of adapted charts of a submanifold. 
Let $\Omega  \subset S$ be a compact piece (compact submanifold with boundary) which is covered by the adapted chart 
$(U,\varphi ), \varphi  = ({x^1}, \cdots ,{x^k},0, \cdots ,0)$. 
In this chart, the local expression of $\rho $ is 
\begin{align}
\rho  = \frac{1}{{k!}}{\rho _{{i_1} \cdots {i_k}}}d{x^{{i_i}}} \wedge  \cdots  \wedge d{x^{{i_k}}},
\end{align} 
where ${\rho _{{i_1} \cdots {i_k}}}$ is a function of ${x^1}, \cdots ,{x^k}$.
Then we can define the integral of $\rho $ on the piece $\Omega $ by 
\begin{align}
\int_\Omega  \rho   = \int_{\varphi (\Omega )} {{{({\varphi ^{ - 1}})}^*}} \rho .
\end{align}
If we choose $\Omega $ and $\varphi $ appropriately, this can be expressed as 
\begin{align}
\int_\Omega  \rho   = \int_{\varphi (\Omega )} {{{({\varphi ^{ - 1}})}^*}} 
\rho  = \int_{{a_1}}^{{b_1}} {\int_{{a_2}}^{{b_2}} { \cdots \int_{{a_k}}^{{b_k}} 
{\frac{1}{{k!}}{\varepsilon ^{{i_1} \cdots {i_k}}}{\rho _{{i_1} \cdots {i_k}}}}  
\circ {\varphi ^{ - 1}}} } \,d{x^1} \cdots d{x^k},  \label{submfd1}
\end{align}
since we assumed that the orientation are preserved by the mapping by $\varphi $. 

To consider the integral over whole $S$, consider the finite family of submanifold charts on X, 
$\{ ({U_1},{\varphi _1}),({U_2},{\varphi _2}), \ldots ,({U_N},{\varphi _N})\} $, 
such that the family of open sets (in $S$) $\{ {U_1} \cap S,{U_2} \cap S, \ldots ,{U_N} \cap S\} $ covers $S$. 
Let $\{ {\chi _1},{\chi _2}, \ldots ,{\chi _N}\} $ be a partition of unity, subordinate to this covering. 
Since by definition $\operatorname{supp} {\chi _j} \cap S$ is a closed subset of the compact set $S$, 
it must be compact; on the other hand, $\operatorname{supp} {\chi _j} \cap S$ is a subset of the set ${U_j} \cap S$, 
the chart neighbourhood of the induced chart by $({U_j},{\varphi _j})$ on $S$. 
In particular, the integral 
\begin{align}
\int_{\operatorname{supp} {\chi _j} \cap S} {{\chi _j}\eta } 
\end{align}
is defined by formula (\ref{submfd1}) for each $j$. We set 
\begin{align}\int_S \eta   = \sum\limits_{j = 1}^N {\int_{\operatorname{supp} {\chi _j} \cap S} {{\chi _j}\eta } } .  \label{submfd2}
\end{align}
The real number given by the formula (\ref{submfd2}), 
is called the {\it integral of the form $\eta $ on $S$}. 
\end{defn}
We show that the right-hand side of (\ref{submfd2}) is independent of the choice of the family of submanifold charts
 $\{ ({U_1},{\varphi _1}),({U_2},{\varphi _2}), \ldots ,({U_N},{\varphi _N})\} $ 
 and the partition of unity 
 $\{ {\chi _1},{\chi _2}, \ldots ,{\chi _N}\} $. 
 Let $\{ ({\bar U_1},{\bar \varphi _1}), ({\bar U_2},{\bar \varphi _2}), \ldots ,({\bar U_M},{\bar \varphi _M})\} $ 
 be another family of charts and $\{ {\bar \chi _1},{\bar \chi _2}, \ldots ,{\bar \chi _M}\} $ 
 the corresponding partition of unity. By (\ref{submfd2}), for each $j$,  
\begin{align}
\int_{\operatorname{supp} {\chi _j} \cap S} {{\chi _j}\eta }  
= \sum\limits_{i = 1}^M {\int_{\operatorname{supp} {{\bar \chi }_i} \cap 
\operatorname{supp} {\chi _j} \cap S} {{{\bar \chi }_i}{\chi _j}\eta } } ,
\end{align}
and similarly for each $i$, 
\begin{align}
\int_{\operatorname{supp} {{\bar \chi }_i} \cap S} {{{\bar \chi }_i}\eta }  
= \sum\limits_{j = 1}^N {\int_{\operatorname{supp} {\chi _j} \cap \operatorname{supp} {{\bar \chi }_i} \cap S} 
{{\chi _j}{{\bar \chi }_i}\eta } } .
\end{align}
Thus 
\begin{align}
\sum\limits_{j = 1}^N {\int_{\operatorname{supp} {\chi _j} \cap S} {{\chi _j}\eta } }  
&= \sum\limits_{j = 1}^N {\sum\limits_{i = 1}^M {\int_{\operatorname{supp} {{\bar \chi }_i} 
\cap \operatorname{supp} {\chi _j} \cap S} {{{\bar \chi }_i}{\chi _j}\eta } } }  
= \sum\limits_{i = 1}^M \sum\limits_{j = 1}^N \int_{\operatorname{supp} {{\bar \chi }_i} 
\cap \operatorname{supp} {\chi _j} \cap S} {{\chi _j}{{\bar \chi }_i}\eta }   \nonumber \\
&=  \sum\limits_{i = 1}^M \int_{\operatorname{supp} {{\bar \chi} _i} \cap S} {{{\bar \chi} _i}\eta } 
\end{align}
as required.

%以降，texを直接変更の事．ぱらめとりぜーしょんの絵は追加してもいいかも．
%%%%%%%%%%%%%%%%%%%%%%%%%%%%%%%%%%%
\chapter{Basics of Finsler geometry and parameterisation} \label{chap_3} 
%%%%%%%%%%%%%%%%%%%%%%%%%%%%%%%%%%%
In this chapter \ref{chap_3},  
we will briefly introduce the properties of Finsler geometry and some related structures 
that we will use for the considerations of calculus of variations. 
Since our motivation is to construct a theory applicable to concrete models of physics, 
there are certain aspects that may differ from the standard approach of a geometer, 
whose main interest lies on the construction or understanding of the geometrical structure itself. 
In particular, some of the standard definitions that allows further inquiries into problems of 
geometry may simply be unsuitable for tackling problems of physics. 
Therefore, in such cases, we have to modify or loosen some conditions. 
For instance, if we require strong convexity for the definition of Finsler manifold, 
most of the standard physical problems would be out of the scope. 
Also, if we require convexity (or ``regularity'' in some references), 
no gauge theories can be handled. 
We therefore propose to use only the minimal definitions, 
and use the name {\it Finsler manifold} in such broad sense. 
Nevertheless, for the construction of the theory of calculus of variation, 
the minimal definitions turn out to be sufficient, 
and no additional structures such as connections and curvature are required. 
We will begin with a very short historical review on how Finsler geometry was introduced, 
and then give the basic structures of Finsler geometry in the more modern terms, 
namely the tangent bundle and Finsler function, 
and introduce the Finsler length and its parameterisation. 
Then we will introduce the important concept of Finsler-Hilbert form, 
which is directly connected with Finsler length. 
These objects are the main tools for the calculus of variation, 
discussed in chapter 5. %\ref{chap_5}. 

%%%%%%%%%%%%%%%%%%%%%%%%%%%%%%%%%%%%%%%%%%%%%%%%%%%%
\section{Introduction to Finsler geometry} \label{sec_introFinsler}
%%%%%%%%%%%%%%%%%%%%%%%%%%%%%%%%%%%%%%%%%%%%%%%%%%%%
	Historically, in his inaugural lecture, Riemann already mentioned on the special case of 
Finsler metric by stating, 
`{\it …the line element can be an arbitrary homogeneous function of first degree in the quantities 
$dx$ which remains the same when all the quantities 
$dx$ change the sign, and in which the arbitrary constants are function of the quantities $x$.}'~\cite{Spivak1}, 
where he referred to $dx$ as an ``infinitesimal displacement'' 
from the position $x$. This statement can be translated to the formula 
\begin{align}
&F({x^1},{x^2}, \cdots ,{x^n},\lambda d{x^1},\lambda d{x^2}, \cdots ,\lambda d{x^n}) 
= \left| \lambda  \right|F({x^1},{x^2}, \cdots ,{x^n},d{x^1},d{x^2}, \cdots ,d{x^n}), \nonumber  \\
&  \hspace{10cm} \lambda  \in \mathbb{R}.   \label{Riemann-Finsler}
\end{align}
Then in the subsequent discussion, 
`{\it ...and consequently $ds$ equals the square root of an everywhere positive homogeneous function 
of the second degree in the quantities $dx$, 
in which the coefficients are continuous functions of the quantities $x$.}'
Therefore, by this statement he restricted this function $F = ds$ 
to a more special case where it is given by, 
$F({x^i},d{x^i}) = \sqrt {{g_{ij}}(x)d{x^i}d{x^j}}$, 
which is the infinitesimal length of a curve on a Riemannian manifold. 
After the development of tensor analysis and exterior differential calculus, 
the theory was reformed in such a way that infinitesimal displacements $dx$ was  
replaced by one forms, 
therefore allowing the concept to be treated in the realm of linear algebra, 
and the square of the structure $ds$ was replaced by a symmetric tensor 
$g={g_{ij}}(x)d{x^i} \odot d{x^j}$, 
which gives an inner product of tangent vectors at each point $p$ on $M$. 
In other words, the concept of infinitesimal displacement was in a sense abandoned; 
instead of considering a length of infinitesimal piece of a curve, 
the new structure $g$ defines the length of any finite sized vector at a point $p$ on $M$. 
Also, since $g$ is an inner product, 
it will give not only the length but also defines the angle between the two vectors. 

In this modern view, the geometry is described by considering the tangent bundle, 
the arena of ``vectors'', not only by its base manifold $M$ and the curve $C$, 
as in the original idea of Riemann. 
The additional structure at each point of $M$ is called a fibre in mathematics, 
and in physics it is frequently called the ``internal space''. 

	Finsler geometry also has these two perspectives, 
one from the study on the properties of infinitesimal length on $M$, 
that is, a view as a geometry of calculus of variations (Finsler, Carath{\' e}odory), 
and then another view from the study of bundles and tensor analysis (Synge, Berwald, Cartan). 
Now it has become more standard to understand the Riemann geometry in this latter perspective, 
and likewise can be said for Finsler geometry. 
In this and the following chapter \ref{chap_4}, we will also discuss on Finsler geometry 
and the further extensions of Kawaguchi geometry using this perspective. 
However, it is also very useful to remember the original Riemann's idea as well. 
Finsler geometry is simply a consideration of an arc length in more general setting than Riemann geometry. 
It has no inner product structure, and therefore more fundamental. 
The arc length of the line element $ds$ is a homogeneous function of degree one, 
homogeneous with respect to the infinitesimal dislocation $dx$, 
where $ds$ and $dx$ are simply functions on $M$. 
There are researches on Finsler and Kawaguchi geometry in this direction as well, 
especially in cases where calculus of variations are important~\cite{Oo1,OTY3}, 
and indeed we take such works as an inspiring reference to our later discussions in chapter 5. %\ref{chap_5}. 

The main reference used in this section is Matsumoto~\cite{matsumoto2}, 
Chern, Chen and Lam~\cite{ChernChenLam}, and Tamassy~\cite{Tamassy3}.  

%Comment on the different approaches of Matsumoto ( metric space )
%Chern ( without quadratic restrictions ) 
%Tamassy (Epson only on TM) 

%%%%%%%%%%%%%%%%%%%%%%%%%%%%%%%%%%%%%
\section{Basic definitions of Finsler geometry}
%%%%%%%%%%%%%%%%%%%%%%%%%%%%%%%%%%%%%
The geometric structure that defines the Finsler manifold is a function on the total space of 
a tangent bundle $(TM,{\tau _M},M)$. 
This structure is called Finsler function or Finsler metric in some references. 

Let $M$ be a ${C^\infty }$-differentiable manifold, 
$(TM, \tau_M, M)$ its tangent bundle, ${T^0}M: = TM{ \setminus }0$ the slit tangent bundle excluding the 
zero section from $TM$, and $(U,\varphi ), \varphi  = ({x^\mu },{y^\mu }),\mu  = 1, \cdots ,n$ an induced chart on $TM$. 

\begin{defn}   Finsler manifold \\
The {\it $n$-dimensional Finsler manifold} is a pair $(M,F)$ 
where $F$ is a ${C^0}$ function on $TM$ and ${C^\infty }$ function on ${T^0}M$, 
satisfying the following conditions. \\ 
(I) Homogeneity 
\begin{eqnarray}
F(\lambda v) = \lambda F(v),v \in {T^0}M,  \lambda  > 0   \label{hom1}
\end{eqnarray}
or in coordinate expression,  
\begin{eqnarray}
F({x^\mu },\lambda {v^\mu }) = \lambda F({x^\mu },{v^\mu }), \lambda  > 0.
\end{eqnarray} 
\end{defn}

This condition (I) also implies the condition of {\it Euler's homogeneous function theorem}, 
\begin{eqnarray}
\frac{{\partial F}}{{\partial {y^\mu }}}{y^\mu } = F.   \label{cond_hom2}
\end{eqnarray}
Function with such properties is called a {\it Finsler function}.

\begin{remark}
Depending on the authors, usually there are several additional properties required for a Finsler manifold. 
Besides the above homogeneity condition, other requirements are such as, \\
(I')  absolute homogeneity  
\begin{eqnarray}
F(\lambda v) = \left| \lambda  \right|F(v),\lambda  \in \mathbb{R}
\end{eqnarray}
This corresponds to the condition (\ref{Riemann-Finsler}) suggested by Riemann, 
sometimes it is called {\it symmetric Finsler manifold}.\\ 
(II)  non-negativity of $F$ 
\begin{eqnarray}
F:TM \to {\mathbb{R}^ + }.
\end{eqnarray}
(III)  Convexity (Regularity) \\
The determinant of the matrix 
\begin{eqnarray}
{g_{ij}}: = \frac{1}{2}\frac{{{\partial ^2}F}}{{\partial {y^i}\partial {y^j}}},
\end{eqnarray} 
$i = 1, \cdots ,n$ is non-zero. \\
The structure ${g_{ij}} \in {C^\infty }(TM)$ is usually called a 
{\it fundamental tensor}, or {\it fundamental form}. 
(Refer to :~\cite{matsumoto2,ChernChenLam,Chern2,Tamassy3})\\
(IV)  Strong Convexity 
\begin{eqnarray}
{g_{ij}}(v){v^i}{v^j}: = \frac{1}{2}{\left. 
{\frac{{{\partial ^2}F}}{{\partial {y^i}\partial {y^j}}}} \right|_v}{v^i}{v^j} > 0, 
\end{eqnarray}
$\forall v \ne 0 \in {T_p}M, \forall p \in M.$
(Refer to :~\cite{ChernChenLam,Chern2,Tamassy3}) \\
Together with (II) this equation is equivalent to the triangular inequality, 
\begin{eqnarray}
F(v) + F(w) > F(v + w),
\end{eqnarray}
for $\forall p \in M$, $v,w \in {T_p}M,\;v \ne w$.
However, the condition (II), (III), (IV) is too restrictive for the application to physics. 
Therefore, we will only require the minimal condition (I). 
This condition is sufficient for our following discussions. 
\end{remark}

The Finsler function $F$ is the main structure of Finsler geometry, 
and this definition gives the view to Finsler geometry as geometry of tangent bundle endowed with 
a specific feature. 
In the following, we will define the Finsler length of a curve on $M$, 
and show that the homogeneity condition is equivalent to the parameterisation 
invariant property of this arc length. 
In this way the two perspectives of Finsler geometry become connected. 

%%%%%%%%%%%%%%%%%%%%%%%%%%%%%%%%%%%%%%%%%%%%
\section{Parameterisation and Finsler length} \label{sec_paramFinslerlength}
%%%%%%%%%%%%%%%%%%%%%%%%%%%%%%%%%%%%%%%%%%%%
Here we introduce the curves, arc segments, their parameterisations and Finsler length. 
%Finsler length can be evaluated by considering either the parameterisation or the submanifold charts. 
%For simplicity, we consider the parameterisation, assuming that we have one, 
The notion of the Finsler length is naturally extended to 
$1$-dimensional immersed submanifolds. 

\begin{defn} $C^r$-curves \label{def_cr_curve}\\
Let $M$ be a smooth manifold, and $\sigma: I \to M$ a $C^r$-mapping, 
where $I$ is an open interval of a real line $\mathbb{R}$. 
We denote the image of $I$ by $C$; $C:=\sigma(I) \subset M$, 
and call $C$, the {\it $C^r$-curve} on $M$. 
\end{defn}

\begin{defn} Lift of $C^r$-curves \label{def_lift_curve}\\
Consider a tangent bundle $(TM,{\tau _M},M)$ where ${\tau _M}$ is the natural projection. 
By differentiating the map $\sigma: I \to M$, we get a natural mapping 
$\hat \sigma :I \to TM$. Denote the image of $I$ by $\hat C := \hat \sigma (I) \subset TM$, 
where ${\tau _M}(\hat C) = C$, and $\hat \sigma (t)$, $t \in I$ is a tangent vector at the point 
$\sigma (t) \in M$, namely 
\begin{align}
\hat \sigma (t): = {T_t}\sigma \left( {\frac{d}{{dt}}} \right) 
= {\left. {\frac{{d({x^\mu } \circ \sigma )}}{{dt}}} \right|_t}{\left( {\frac{\partial }{{\partial {x^\mu }}}} 
\right)_{\sigma (t)}}.
\end{align}
The map $\hat \sigma$ and its image $\hat C$ is called the {\it lift}, or the {\it tangent lift} of $\sigma$ (resp. $C$).   
\end{defn}

\begin{defn} Regularity of $\sigma$ \\
The $C^r$-map $\sigma: I \to M$ is called {\it regular}, 
if its lift $\hat \sigma$ is nowhere $0$. 
\end{defn}

\begin{defn} Parameterisation of an immersed curve \\
The $C^r$-map $\sigma: I \to M$ is called an {\it immersion}, 
if its lift $\hat \sigma$ is injective,  
and the image $C=\sigma(I)$ is called an {\it immersed curve}. 
The map $\sigma$ is called a {\it parameterisation} of immersed curve $C$, 
and $I$ is called a {\it parameter space}, 
when $\sigma$ is an immersion, and preserves orientation.
\end{defn}

\begin{defn} Parameterisation \label{def_parameterisation} \\
Let $\sigma: I \to M$ be a $C^r$-map, and $C=\sigma (I)$.
The map $\sigma $ is called a {\it parameterisation} of $C$, 
and $I$ is called a {\it parameter space}, 
when $\sigma$ is injective, and preserves orientation.
\end{defn}

\begin{defn} Lift of a parameterisation \label{def_liftedparameterisation} \\
We call $\hat \sigma$ the {\it lift of parameterisation} $\sigma$, 
when $\sigma$ is a parameterisation. 
\end{defn} 

Given a curve $C$ on $M$, more than one map and open interval may exist, 
namely for cases such as $C=\sigma(I)=\rho(J)$, where $\sigma, \, \rho$ are the maps, 
and $I, \, J$ are the open intervals in $\mathbb{R}$. 
We can classify the curves by considering the properties of these maps. 
Some example of the curves are shown in Figure \ref{fig_curves}.

\begin{defn} Regular curve \\ 
Let $C$ be a $C^r$-curve on $M$. 
$C$ is called a {\it regular curve} on $M$, 
if there exists a regular $C^r$ map $\sigma$ 
and an open interval $I$ of $\mathbb{R}$ such that 
$\sigma(I) = C$. 
\end{defn}

\begin{defn} Parameterisable immersed curve \\ 
Let $C$ be a $C^r$-curve on $M$. 
$C$ is called a {\it parameterisable immersed curve} on $M$, 
if there exists an immersion $\sigma$ 
and an open interval $I$ of $\mathbb{R}$ such that 
$\sigma(I) = C$.  
\end{defn}

\begin{defn} Parameterisable curve \\ 
Let $C$ be a $C^r$-curve on $M$. 
$C$ is called a {\it parameterisable curve} on $M$, 
if there exists an injective $C^r$-map $\sigma$ 
and an open interval $I$ of $\mathbb{R}$ such that 
$\sigma(I) = C$.  
\end{defn}

\begin{figure}
  \centering
  \includegraphics[width=12cm]{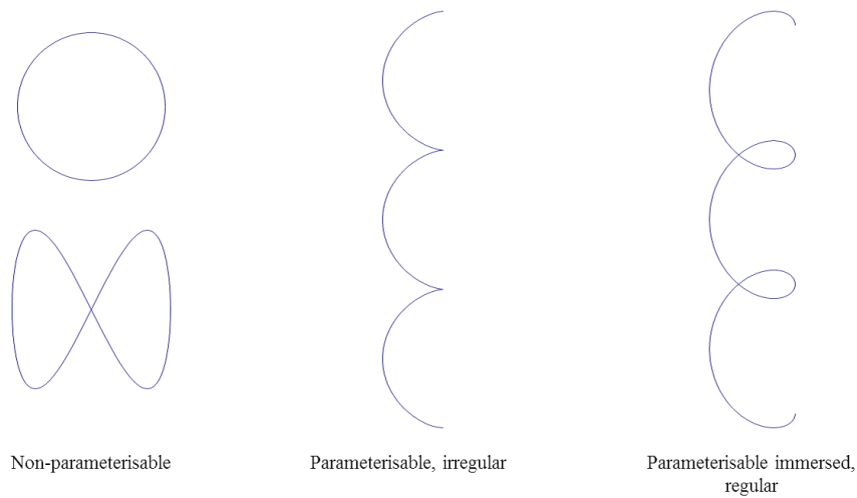}
  \caption{Example of curves in $\mathbb{R}^2$} \label{fig_curves}
\end{figure}

\begin{ex}
$S^1$ embedded in $\mathbb{R}^2$ is neither a parameterisable curve nor a pararameterisable immersed curve. 
For instance, consider a map $\sigma : \mathbb{R} \to \mathbb{R}^2$, which in coordinates are given by 
$\sigma(t) := (\cos(t), \sin(t))$. The lift $\hat \sigma$ will be $\hat \sigma = (-\sin(t), \cos(t))$, 
so nowhere zero, meaning it is a regular curve, but neither $\sigma$ nor $\hat \sigma$ are injective. 
Indeed we can consider different parameterisations, 
but since this is a closed curve, no injective map from an open subset of $\mathbb{R}$ exists. 
Furthermore, since it is also a smooth closed curve, no immersion from an open subset of $\mathbb{R}$ exists.
Nevertheless, by restricting $\sigma$ to finite interval $(0, 2 \pi)$, we can have the parameterisation 
of the corresponding part of the circle. 
\end{ex}

\begin{ex}
A curve ``$\infty$'' in $\mathbb{R}^2$ 
is neither a parameterisable curve nor a pararameterisable immersed curve, 
since it is smooth and closed.  
Nevertheless, we may consider a map such as $\sigma(t) := (\cos(t), \sin(t))$, 
and by restricting the open interval to finite interval $(0, 2 \pi)$, 
we can have the parameterisation of the corresponding part of the circle. 
\end{ex}

\begin{ex}
A curve defined by a map $\sigma : \mathbb{R} \to \mathbb{R}^2$, 
which in coordinates are given by 
$\sigma(t) := (\cos(t), \sin(t) - t)$, is a parameterisable curve, but not regular. 
The lift $\hat \sigma$ will be $\hat \sigma = (-\sin(t), \cos(t)-1)$, 
which becomes $0$ at $t=2 n \pi$, for integer $n$. 
\end{ex}

\begin{ex}
A spiral curve in $\mathbb{R}^2$ (on the right of Fig. \ref{fig_curves}) is a regular, 
and parameterisable immersed curve. 
There exists a regular parameterisation 
defined by a map $\sigma : \mathbb{R} \to \mathbb{R}^2$, 
which in coordinates are given by 
$\sigma(t) := (\cos(t), \sin(t) - t/2)$. 
The lift $\hat \sigma$ will be $\hat \sigma = (-\sin(t), \cos(t)-1/2)$, 
$\sigma$ is not injective since it gives the same point for $t$ such that $t-\sin(t)=\pi$. 
\end{ex}

\begin{ex}
In some cases, we can find a regular parameterisation of a curve that 
was originally given by a map which is not a regular parameterisation. 
Consider a curve defined by a map  $\sigma : I \to \mathbb{R}^2$, $I=(0, 2 \pi)$,  
which in coordinates are given by 
$\sigma(t) := (\sin(t), -\cos(2t))$. 
$\sigma$ is not regular, since its lift $\hat \sigma$ becomes $0$ at $t=\pi/2$.
However, there exists a regular parameterisation of this curve $C= \sigma(I)$ 
by the map $\tilde \sigma: J \to \mathbb{R}^2$, $J=(-1, 1)$,  
which in coordinates are given by 
$\hat \sigma(s) := (s, 2s^2 -1)$, and its lift is nowhere $0$.  
\end{ex}

Occasionally, we implicitly refer to the pair $(\sigma, I)$ by the parameterisation $\sigma$.
In the following discussion, we will only consider regular, parameterisable curves.

\begin{figure}
  \centering
  \includegraphics[width=5cm]{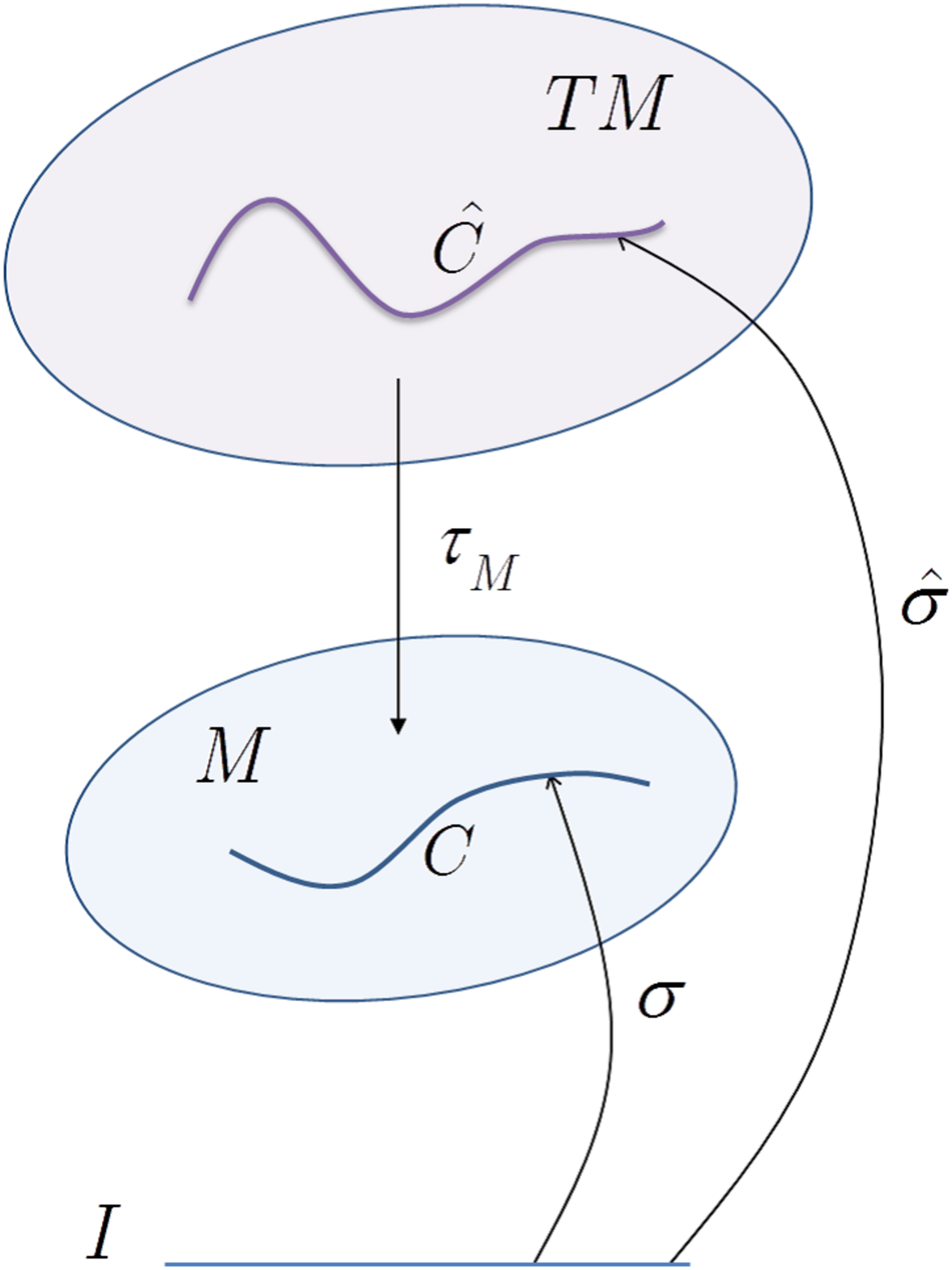}
  \caption{Parameterisation of a curve}
\end{figure}

%\begin{defn}
%The parameterisation where its tangent lift is nowhere $0$ is called a {\it regular parameterisation}.
%\end{defn}

Now we will introduce the concept of a length of a curve by integration on 
the parameter space. 
For simplicity, we will restrict ourselves to curves that are parameterisable, 
and consider its closed subset, which we define below. 

\begin{defn} arc segment \label{def_arc} \\ 
Let $\tilde{C}$ be a parameterisable curve on $M$ with some parameterisation $\sigma :I\to M$.
A subset of $\tilde{C}$ given by $C:=\sigma ([{{t}_{i}},{{t}_{f}}]) \subset \tilde{C}$, where $[{{t}_{i}},{{t}_{f}}]\subset I$ 
is called the {\it arc segment} on $M$, 
$\sigma$ is called the {\it parameterisation of the arc segment}
and the closed interval $[{{t}_{i}},{{t}_{f}}]$ is called the 
{\it parameter space of an arc segment}. 
\end{defn}

The Finsler function defines a geometrical length of an arc segment $C$ on $M$. 

\begin{defn} Finsler length  \label{def_Finserlength} \\
Let $(M,F)$ be the $n$-dimensional Finsler manifold, 
and $C$ the arc segment on $M$ such that $C=\sigma ([{{t}_{i}},{{t}_{f}}])$. 
We assign to $C$ the following integral  
\begin{align}
{{l}^{F}}(C)=\int_{{{t}_{i}}}^{{{t}_{f}}}{F\left( \hat{\sigma }(t) \right)dt}.   \label{def_FinslerCurve}
\end{align}
We call this number ${{l}^{F}}(C)$ the {\it Finsler length} of $C$.
\end{defn}

Let $(U,\varphi )$, $\varphi =({{x}^{\mu }},{{y}^{\mu }})$, 
$\mu =1,\cdots ,n$ be the induced chart on $TM$. 
By chart expression, (\ref{def_FinslerCurve}) is, 
\begin{align}
{{l}^{F}}(C)=\int_{{{t}_{i}}}^{{{t}_{f}}}{F\left( {{x}^{\mu }}(\hat{\sigma }(t)),{{y}^{\mu }}
(\hat{\sigma }(t)) \right)dt}
=\int_{{{t}_{i}}}^{{{t}_{f}}}{F\left( {{x}^{\mu }}(\sigma (t)),\frac{d{{x}^{\mu }}(\sigma (t))}{dt} \right)dt},  \label{FinslerCurve}
\end{align}
where we used the definition of $\hat{\sigma }$, and definition of induced coordinates of $TM$, 
\begin{align}
({{x}^{\mu }}\circ \hat{\sigma })(t)=({{x}^{\mu }}\circ \sigma )(t), \quad
({{y}^{\mu }}\circ \hat{\sigma })(t)={{\left. \frac{d({{x}^{\mu }}\circ \sigma )}{dt} \right|}_{t}}. 
\end{align}

Let $\rho :J \to C$, $J  \subset \mathbb{R}$ be another parameterisation of $C$. 
When there exists a diffeomorphism $\phi :J \to I$ 
such that $\rho  = \sigma  \circ \phi $, this gives an equivalence relation 
$\sigma \~\rho $. We are able to find an important property of the Finsler length, 
which is the following lemma: 
\begin{lemma} Reparameterisation invariance of Finsler length \label{lem_repinv_Finsler}\\
The Finsler length does not change by the reparameterisation $\rho  = \sigma  \circ \phi $, $\phi :J \to I$, 
where $\phi $ is a diffeomorphism such that preserves the orientation. 
\end{lemma}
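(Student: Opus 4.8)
The plan is to reduce everything to the definition of Finsler length in Definition \ref{def_Finserlength} together with the homogeneity condition (\ref{hom1}), and then to invoke the ordinary change-of-variables formula for a one-dimensional integral. First I would express the lift $\hat\rho$ of the new parameterisation in terms of $\hat\sigma$. Since $\rho = \sigma \circ \phi$, the chain rule for tangent maps gives, for $s \in J$,
\begin{align}
\hat\rho(s) = T_s\rho\left(\frac{d}{ds}\right) = T_{\phi(s)}\sigma \circ T_s\phi\left(\frac{d}{ds}\right) = \frac{d\phi}{ds}(s)\,\hat\sigma(\phi(s)),
\end{align}
which in the induced chart $(U,\varphi)$, $\varphi = (x^\mu, y^\mu)$ reads $x^\mu(\hat\rho(s)) = x^\mu(\sigma(\phi(s)))$ and $y^\mu(\hat\rho(s)) = \frac{d\phi}{ds}(s)\, y^\mu(\hat\sigma(\phi(s)))$. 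Thus $\hat\rho(s)$ is the scalar multiple of the tangent vector $\hat\sigma(\phi(s))$ by the factor $\lambda(s) := \frac{d\phi}{ds}(s)$, both vectors living in the same fibre $T_{\sigma(\phi(s))}M$.

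Next, because $\phi$ preserves orientation, $\lambda(s) = \frac{d\phi}{ds}(s) > 0$ for all $s$, so I may apply the homogeneity condition (\ref{hom1}) with this positive $\lambda$ to obtain
\begin{align}
F(\hat\rho(s)) = F\left(\lambda(s)\,\hat\sigma(\phi(s))\right) = \lambda(s)\, F(\hat\sigma(\phi(s))) = \frac{d\phi}{ds}(s)\, F(\hat\sigma(\phi(s))).
\end{align}
It is exactly here that degree-one positive homogeneity is the crucial hypothesis; the absolute homogeneity (I') is not required, precisely because the orientation-preserving assumption already supplies $\lambda > 0$.

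Finally I would substitute this into the defining integral and change variables $t = \phi(s)$, $dt = \frac{d\phi}{ds}(s)\,ds$. Writing $[s_i, s_f] = \phi^{-1}([t_i, t_f])$, and noting that orientation preservation gives $\phi(s_i) = t_i$ and $\phi(s_f) = t_f$, one computes
\begin{align}
l^F(C) = \int_{s_i}^{s_f} F(\hat\rho(s))\,ds = \int_{s_i}^{s_f} \frac{d\phi}{ds}(s)\, F(\hat\sigma(\phi(s)))\,ds = \int_{t_i}^{t_f} F(\hat\sigma(t))\,dt,
\end{align}
which is the Finsler length computed with $\sigma$. The one delicate point, and the heart of the argument, is the coincidence of the homogeneity factor coming from the lift with the Jacobian factor coming from the change of variables: both equal $\frac{d\phi}{ds}$, and they combine correctly only because $F$ is homogeneous of degree exactly one. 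I expect no further obstacle, the remainder being a direct computation once the lift $\hat\rho$ has been correctly identified.
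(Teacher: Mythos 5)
Your proposal is correct and follows essentially the same route as the paper's own proof: compute $\hat\rho(s)=\frac{d\phi}{ds}(s)\,\hat\sigma(\phi(s))$ via the chain rule, use positive homogeneity of degree one (valid since orientation preservation gives $\frac{d\phi}{ds}>0$) to pull out the factor, and absorb it by the change of variables $t=\phi(s)$. The only cosmetic difference is that the paper first subdivides $[t_i,t_f]$ so that $C$ lies in a single chart before writing the coordinate expressions, a point your tangent-map formulation handles implicitly.
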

\begin{proof}
Dividing the interval $[{{t}_{i}},{{t}_{f}}]$ if necessary into smaller closed sub-intervals 
we can suppose without loss of generality that the set $C=\sigma ([{{t}_{i}},{{t}_{f}}])$
lies in the coordinate neighbourhood of a chart $(U,\varphi )$, $\varphi =({{x}^{\mu }})$.
Then the lift of $\rho $ becomes, 
\begin{align}
\hat \rho (s) = {\left. {\frac{{d({x^\mu } \circ \sigma  \circ \phi )}}{{ds}}} 
\right|_s}{\left( {\frac{\partial }{{\partial {x^\mu }}}} \right)_{\sigma  \circ \phi (s)}} 
= {\left. {\frac{{d({x^\mu } \circ \sigma )}}{{dt}}} \right|_{\phi (s)}}{\left. {\frac{{d\phi }}{{ds}}} 
\right|_s}{\left( {\frac{\partial }{{\partial {x^\mu }}}} \right)_{\sigma  \circ \phi (s)}} 
= {\left. {\frac{{d\phi }}{{ds}}} \right|_s}\hat \sigma (\phi (s)),
\end{align}
for $s \in J$, and since $\rho $ is a regular parameterisation that preserves orientation, 
$\displaystyle{\frac{{d\phi }}{{ds}} > 0}$. 
It is easy to see that the length of $C$ is preserved by
\begin{align}
{l^F}(C) &= \int_{{s_i}}^{{s_f}} {F(\hat \rho (s))\,ds}  
= \int_{{s_i}}^{{s_f}} {F(\widehat {\sigma  \circ \phi }(s))\,ds} \hfill \nonumber \\
   &= \int_{{s_i}}^{{s_f}} {F\left( {\frac{{d\phi }}{{ds}}(s)\, 
\hat \sigma (\phi (s))} \right)\,ds}  
= \int_{{\phi ^{ - 1}}({t_i})}^{{\phi ^{ - 1}}({t_f})} 
{F(\hat \sigma (\phi (s)))\,\frac{{d\phi }}{{ds}}(s)ds}  \hfill \nonumber \\
&= \int_{{t_i}}^{{t_f}} {F(\hat \sigma (t)))\,dt} . \hfill  \label{Finsler_length}
\end{align}
${{s}_{i}},\,{{s}_{f}}$ are the pre-image of the boundary points ${{t}_{i}},\,{{t}_{f}}$ by $\phi$.
\end{proof} 
In the second line of (\ref{Finsler_length}), we have used the homogeneity condition of $F$. 
The homogeneity of $F$ and parameterisation invariance of Finsler length is an equivalent property. 

\begin{remark}
The ``Finsler length'' does not have the properties of a
``standard'' length, considered by Euclid or Riemannian geometry,  
since we require only homogeneity condition of the Finsler function. 
For instance, when one changes the orientation of the curve, in general, 
it gives different values (not just signatures). 
However, in our following discussion of the calculus of variations, 
we can still use this concept to obtain extremals and equations of motion, 
and it maybe also an interesting tool for considering differential geometry of submanifolds,
and possible generalisations of mechanics. 
\end{remark}

%%%%%%%%%%%%%%%%%%%%%%%%%%%%%%%%%%%%%%%%%%%%
\section{Finsler-Hilbert form} \label{sec_FinslerHilbertform}
%%%%%%%%%%%%%%%%%%%%%%%%%%%%%%%%%%%%%%%%%%%%

	Given a Finsler manifold, we can obtain a important geometrical structure which is called a 
{\it Hilbert form} by some authors~\cite{ChernChenLam}. 
In this thesis, we sometimes call them {\it Finsler-Hilbert form}, 
just to stress it is for the first order mechanics. 
In chapter \ref{chap_4}, 
we will generalise this concept to second order and higher dimensional parameter space. 

\begin{defn} Hilbert form \label{def_Hilbert_from} \\
Let $(V,\psi )$, $\psi  = ({x^\mu },{y^\mu })$, $\mu  = 1, \cdots ,n$ be an induced chart on $TM$. 
Consider the following $1$-form on ${T^0}M$,
which in local coordinates are expressed by 
\begin{align}
{\mathcal{F}} = \frac{{\partial F}}{{\partial {y^\mu }}}d{x^\mu }. \label{Hilbertform}
\end{align}
This form is invariant with respect to the coordinate transformations by 
\begin{align}
{x^\mu } \to {\tilde x^\mu } = {\tilde x^\mu }({x^\nu }), 
{y^\mu } \to {\tilde y^\mu } = \frac{{\partial {{\tilde x}^\mu }}}{{\partial {x^\nu }}}{y^\nu },
\end{align}
therefore, it is a globally defined form on ${T^0}M$.
We will call this global form with the local coordinate expression (\ref{Hilbertform}), 
{\it  Hilbert form}. 
\end{defn} 

\begin{lemma} 
Let $\mathcal{F}$ be the Hilbert $1$-form on ${{T}^{0}}M$, 
$C=\sigma (\bar I)$ the arc segment on $M$, with 
$\bar I=[{{t}_{i}},{{t}_{f}}]$ a closed interval in $\mathbb{R}$. Then, 
\begin{align}
\int_{{\hat{C}}}{\mathcal{F}}={{l}^{F}}(C).  \label{Hilbertform_length_id}
\end{align}
\end{lemma}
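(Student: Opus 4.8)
The plan is to recognise the left-hand side of (\ref{Hilbertform_length_id}) as the integral of a $1$-form over a $1$-dimensional immersed submanifold of $T^0 M$, and then collapse it to the Finsler length by pulling back along the lift and invoking Euler's homogeneous function theorem. First I would observe that $\hat C = \hat\sigma([t_i,t_f])$ is the image of a closed interval under the lifted parameterisation $\hat\sigma$, which is an injective immersion into $T^0 M$ precisely because $\sigma$ is a regular parameterisation (its lift is nowhere zero and injective). Hence $\hat C$ is an arc segment in $T^0 M$ with parameter space $[t_i,t_f]$, and by the definition of the integration of a $k$-form on an immersed submanifold applied with $k=1$,
\begin{align}
\int_{\hat C} \mathcal{F} = \int_{[t_i,t_f]} \hat\sigma^* \mathcal{F}.
\end{align}
As in Lemma \ref{lem_repinv_Finsler}, I may first subdivide $[t_i,t_f]$ into smaller closed sub-intervals so that each corresponding piece of $\hat C$ lies in a single induced chart $(V,\psi)$, $\psi = ({x^\mu},{y^\mu})$ of $T^0 M$; this is a routine technical step that permits working in coordinates.

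Next I would compute the pullback explicitly. Using the coordinate description of the lift, $x^\mu \circ \hat\sigma = x^\mu \circ \sigma$ and $y^\mu \circ \hat\sigma = d(x^\mu\circ\sigma)/dt$, one has
\begin{align}
\hat\sigma^* \mathcal{F} = \left( \frac{\partial F}{\partial y^\mu} \circ \hat\sigma \right) d(x^\mu \circ \sigma) = \left( \frac{\partial F}{\partial y^\mu}(\hat\sigma(t)) \right) y^\mu(\hat\sigma(t))\, dt .
\end{align}
The decisive step is then to apply Euler's homogeneous function theorem, equation (\ref{cond_hom2}), namely $\frac{\partial F}{\partial y^\mu} y^\mu = F$, which collapses the integrand to $F(\hat\sigma(t))\,dt$. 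Substituting back and using Definition \ref{def_Finserlength} gives
\begin{align}
\int_{\hat C} \mathcal{F} = \int_{t_i}^{t_f} F(\hat\sigma(t))\, dt = {l^F}(C),
\end{align}
which is the required identity (\ref{Hilbertform_length_id}).

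The only genuine subtlety, rather than a real obstacle, is justifying that $\hat C$ qualifies as an arc segment in $T^0 M$ so that the integration-via-pullback definition applies; this rests entirely on the regularity and injectivity of $\hat\sigma$, both of which hold by the standing assumption that we consider regular, parameterisable curves. The coordinate computation of the pullback and the single application of the homogeneity condition are both straightforward. In effect this lemma makes precise the observation recorded after Lemma \ref{lem_repinv_Finsler}: the homogeneity of $F$ is exactly the property that lets the Hilbert form $\mathcal{F}$ encode the Finsler length, so the proof is really just the coordinate shadow of that equivalence.
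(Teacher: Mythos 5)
Your proposal is correct and follows essentially the same route as the paper: pull back $\mathcal{F}$ along the lift $\hat\sigma$, identify $y^\mu\circ\hat\sigma$ with $d(x^\mu\circ\sigma)/dt$, and collapse the integrand via the Euler homogeneity relation $\frac{\partial F}{\partial y^\mu}y^\mu=F$. The extra remarks on chart subdivision and on $\hat C$ being an immersed arc segment are harmless elaborations of what the paper leaves implicit.
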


\begin{proof}
The simple calculation leads, 
\begin{align}
\int_{\hat C} {\mathcal{F}} &= \int_{\hat \sigma (\bar I)} 
  {\frac{{\partial F}}{{\partial {y^\mu }}}d{x^\mu }}  
  = \int_{{t_i}}^{{t_f}} {\frac{{\partial F}}{{\partial {y^\mu }}} 
  \circ \hat \sigma \,d({x^\mu } \circ \hat \sigma )}  \hfill \nonumber \\
&= \int_{{t_i}}^{{t_f}} {\frac{{\partial F}}{{\partial {y^\mu }}}(\hat \sigma (t))\, 
   \frac{{d({x^\mu }(\sigma (t))}}{{dt}}dt}  = \int_{{t_i}}^{{t_f}} 
   {\frac{{\partial F}}{{\partial {y^\mu }}}(\hat \sigma (t))\,{y^\mu }(\hat \sigma (t))dt}  \hfill \nonumber \\
&= \int_{{t_i}}^{{t_f}} {F(\hat \sigma (t))\,dt}  = {l^F}(C) \hfill  
\end{align}
where we used the pulled back homogeneity condition 
\begin{eqnarray}
\frac{{\partial F}}{{\partial {y^\mu }}} \circ \hat \sigma  
\cdot {y^\mu } \circ \hat \sigma  = F \circ \hat \sigma. 
\end{eqnarray}
\end{proof}

\begin{remark}
This lemma extends the notion of Finsler length given by (\ref{def_FinslerCurve}). 
Namely, since the Hilbert form can be integrated over {\it any} 
$1$-dimensional submanifold of $M$, the identity (\ref{Hilbertform_length_id}) 
suggests that it is possible to extend the integration 
over arc segments to {\it arbitrary} $1$-dimensional submanifold of $M$, 
by considering the Hilbert form.  
\end{remark}

\begin{remark}
Now that we showed that Finsler-Hilbert $1$-form gives the Finsler length 
(and in a more general situation of a submanifold), 
we can redefine the pair $(M,{\mathcal{F}})$ as the Finsler manifold instead of taking $(M,F)$. 
This is a more geometrical definition of a Finsler manifold, 
and also in close analogy to the case of Riemannian geometry, 
where the geometric structure is given by a tensor $g$, and not by a function. 
This observation is important also for the consideration of Kawaguchi geometry. 
\end{remark}

\begin{remark}\label{rem_HilbertCartan}
When given a Hilbert form, we can obtain the Cartan form, which is a one form defined
 on ${J^1}Y$, where $Y$ is a $n+1$-dimensional manifold, 
and ${J^1}Y$ is the prolongation of the bundle $(Y, \pi, \mathbb{R})$. 
( In most cases, $Y = \mathbb{R} \times Q$ is considered, and is called an 
extended configuration space. $Q$ is a configuration space of dimension $n$. 
In such case, the bundle $(Y, pr_1, \mathbb{R})$ becomes a trivial bundle. ) 
Let $(U, \psi) $, $\psi  = (t,q^i)$, $i = 1, \cdots , n$ 
be the adapted chart on $Y$, and the induced chart on $\mathbb{R}$ be 
$(\pi(U),t)$.
We denote the induced chart on ${J^1}Y$ by 
$({({\pi ^{1,0}})^{ - 1}}(U),{\psi ^1}),{\psi ^1} = (t,{q^i},{\dot q^i})$, 
where $\pi^{1,0}: J^1 Y \rightarrow Y$ is the prolongation of $\pi$.  
Suppose we have a Hilbert form on $TY$. Take the induced chart on $TY$ as 
$({({\tau _Y})^{ - 1}}(U),{\psi ^1}),\;{\psi ^1} = (x^0,x^i,y^0,y^i)$, $i = 1, \cdots ,n$. 
(In order to avoid confusion we use different symbols, 
but clearly ${x^0} = t \circ {\tau _Y},{x^i} = {q^i} \circ {\tau _Y}$.) 
Since both ${J^1}Y$and $TY$ are bundles over $Y$, around every $p \in Y$, 
there exists a local trivialisation. Take $p \in U$, and let $(U,{F_p},{t_p})$, 
${t_p}:{({\tau _Y})^{ - 1}}(U) \to U \times {F_p}$ 
be the local trivialisation of $TY$ and   $(U,{G_p},{\tilde t_p})$,
 ${\tilde t_p}:{({\pi ^{1,0}})^{ - 1}}(U) \to U \times {G_p}$ 
be the local trivialisation of ${J^1}Y$ where ${F_p} = {\mathbb{R}^{2(n + 1)}}$ and 
${G_p} = {\mathbb{R}^{2n + 1}}$.
Then there exists a natural inclusion $\iota :{({\tilde t_p})^{ - 1}}(U \times {G_p}) \hookrightarrow TM,\;$ 
$\iota ({({\tilde t_p})^{ - 1}}(U \times {G_p})) = {({t_p})^{ - 1}}(U \times {F_p})$, 
which in coordinate functions are given by  
\[{x^0} \circ \iota  = t,\;{x^i} \circ \iota  = {q^i},\;{y^0} \circ \iota  = 1,\;{y^i} \circ \iota  = {\dot q^i},\]
so that the submanifold equation is ${y^0} = 1$. 
The local coordinate expression of the Hilbert form is, 
\begin{eqnarray}
{\mathcal{F}} = \frac{{\partial F}}{{\partial {y^0}}}d{x^0} 
+ \frac{{\partial F}}{{\partial {y^i}}}d{x^i} = \frac{1}{{{y^0}}}
\left( {F - \frac{{\partial F}}{{\partial {y^i}}}{y^i}} \right)d{x^0} 
+ \frac{{\partial F}}{{\partial {y^i}}}d{x^i},
\end{eqnarray}
where the second equality holds by using the Euler's homogeneity condition. 
The pull back of this Hilbert form by $\iota$ is 
\begin{eqnarray}
{\iota ^*}{\mathcal{F}} = \frac{1}{{{y^0}}} \circ \iota 
\left( {F \circ \iota  - \frac{{\partial F}}{{\partial {y^i}}} \circ \iota  
\cdot {y^i} \circ \iota } \right)d({x^0} \circ \iota ) + \frac{{\partial F}}
{{\partial {y^i}}} \circ \iota  \cdot d({x^i} \circ \iota ) \hfill 
\end{eqnarray}
However, by the submanifold equation ${y^0} = 1$, this becomes exactly the Cartan form, 
\begin{eqnarray}
{\Theta _C} =  {\iota ^*}{\mathcal{F}} = \left( {{\mathcal{L}} 
- \frac{{\partial \user2{\mathcal{L}}}}{{\partial {{\dot q}^i}}}
{{\dot q}^i}} \right)dt + \frac{{\partial \user2{\mathcal{L}}}}
{{\partial {{\dot q}^i}}}d{x^i}, \label{HilbertCartanform}
\end{eqnarray}
where ${\mathcal{L}}: = F \circ \iota $ should be regarded as the ``conventional'' Lagrange 
function\footnote{We simply use the term 
``conventional'' to distinguish the Lagrangian function over $J^1 Y$, 
from our Lagrangian which is the Hilbert $1$-form over $TY$.} 
defined on ${J^1}Y$. 
In fact, the inclusion map $\iota$ can be globalised for all $J^1 Y$, and the 
relation (\ref{HilbertCartanform}) is global. 
In other words, Cartan form is a restriction of a Hilbert form to a submanifold ${J^1}Y$ in $TY$. 
%This is equivalent to saying, that when we have a fixed a trivial bundle $(Y,p{r_1},\mathbb{R})$, 
%Hilbert form becomes a Cartan form. 
Also, the base manifold of the bundle $(Y, \pi, \mathbb{R})$ 
is naturally considered as a parameter space, and for Cartan form such structure was needed, 
while Hilbert form does not need such fibre bundle structure of $Y$. 
In this sense, Hilbert form is a generalisation of the Cartan form that does not depend on 
specific bundle structures. 
\end{remark}
In chapter 5, we will also consider the converse and discuss how to obtain the 
Hilbert form when a conventional Lagrangian is given.

%%%%%%%%%%%%%%%%%%%%%%%%%%%%%%%%%%%%%%%%%%%%%%%%%%%%%%%%%%%%
\chapter{Basics of Kawaguchi geometry and parameterisation} \label{chap_4} 
%%%%%%%%%%%%%%%%%%%%%%%%%%%%%%%%%%%%%%%%%%%%%%%%%%%%%%%%%%%%
In this chapter \ref{chap_4}, we will introduce a geometry 
which was originally considered by A. Kawaguchi as a extension of Finsler geometry.
In contrast to Finsler geometry, 
Kawaguchi geometry still does not have a well-developed consensus yet, 
and it may be a bit early to be called as ``geometry''. 
Nevertheless, it follows the same line of thought that originates from Riemann, 
and with the hope of its future establishment, we will call so in this thesis. 

%%%%%%%%%%%%%%%%%%%%%%%%%%%%%
\section{Introduction to Kawaguchi geometry} \label{sec_IntroKawaguchi}
%%%%%%%%%%%%%%%%%%%%%%%%%%%%%
A. Kawaguchi considered the generalisation of Finsler geometry in two directions, 
one for the case of higher order derivative, and another for the case of $k$-dimensional parameter space, 
from the viewpoint of calculus of variations~\cite{AK5}. 
The latter was referred to as Areal space. 
In either case, the theory was presented in means of local expressions.  
In this thesis, we will make an original exposition of Kawaguchi geometry by 
using multivector bundle and differential forms, and extend its validity to global 
expressions for the second order $1$-dimensional parameter space and first order $k$-dimensional parameter space.  
The higher order $k$-dimensional parameter space is left for future research. 

In the case of Finsler geometry, the definition for the Finsler-Hilbert form 
was such as it gives an invariant length of a parameterisable $1$-dimensional submanifold. 
Namely, the homogeneity of the Finsler function and the parameterisation invariance of Finsler length 
was equivalent. 
For the Kawaguchi geometry, we can also consider a similar property as the main pillar for 
setting up the foundation. 

%%%%%%%%%%%%%%%%%%%%%%%%%%%%%
\section{Second order, $1$-dimensional parameter space} \label{sec_2nd_Kawaguchi}
%%%%%%%%%%%%%%%%%%%%%%%%%%%%%
Here in this section we will reconstruct the first direction of generalisation of Finsler geometry 
originally considered by Kawaguchi in a modern fashion, 
where the second order derivatives (acceleration) are considered. 
Kawaguchi considered the homogeneity condition of higher order functions by 
requiring the following invariance on 
integration under the change of parameterisation, 
\begin{align}
\int_{{t_1}}^{{t_2}} {K\left( {{x^\mu },\frac{{d{x^\mu }}}{{dt}},\frac{{{d^2}{x^\mu }}}{{d{t^2}}}} \right)dt}  
&= \int_{{t_1}}^{{t_2}} {K\left( {{x^\mu },\frac{{d{x^\mu }}}{{ds}}\frac{{ds}}{{dt}},
\frac{{{d^2}{x^\mu }}}{{d{s^2}}}{{\left( {\frac{{ds}}{{dt}}} \right)}^2} + \frac{{d{x^\mu }}}{{ds}}
\frac{{{d^2}s}}{{d{t^2}}}} \right)dt}  \nonumber \\
&= \int_{{t_1}}^{{t_2}} {K\left( {{x^\mu },\frac{{d{x^\mu }}}{{ds}},\frac{{{d^2}{x^\mu }}}{{d{s^2}}}} \right)ds}.
\end{align}
This requirement gives a condition that the length of a curve defined by such function $K$ becomes invariant, 
in other words it is a geometrical length. 
The above expression is in a single local chart, but we can prove that the condition could be extended globally. 
Below we will give a definition of the manifold with such properties.

%%%%%%%%%%%%%%%%%%%%%%%%%%%%%
\subsection{Basic definitions of Finsler-Kawaguchi geometry} \label{subsec_2nd_kawaguchi} 
%%%%%%%%%%%%%%%%%%%%%%%%%%%%%
We will first define the geometric structure of the second order Finsler-Kawaguchi manifold 
by a function on the total space of 
a second order tangent bundle $({T^2}M, \tau _M^{2,0}, \lb[3] M)$, 
such that gives a geometrical length to a curve on $M$.   
We will call this structure a second order Finsler-Kawaguchi function. 

\begin{defn}  Second order Finsler-Kawaguchi manifold (Second order $1$-dimensional parameter space) \label{def_2ndFinsler} \\
Let $(M,K)$ be a pair of $n$-dimensional ${C^\infty }$-differentiable manifold 
$M$ and a function $K \in {C^\infty }({T^2}M)$, which for a adapted chart on ${T^2}M$, $(V^2,\psi^2 ), 
\psi^2  = ({x^\mu },{y^\mu },{z^\mu })$, $\mu  = 1, \cdots ,n$, satisfies the 
{\it second order homogeneity condition}, 
\begin{align}
K({x^\mu },\lambda {y^\mu },{\lambda ^2}{z^\mu } + \rho {y^\mu }) = \lambda K({x^\mu },{y^\mu },{z^\mu }),
\quad \lambda  \in {\mathbb{R}^ + },\quad \rho  \in \mathbb{R}.  \label{Zermelo_2nd_mech} 
\end{align}
We will call the function with such properties, a {\it second order Finsler-Kawaguchi function}, 
and the pair $(M,K)$ a {\it $n$-dimensional second order Finsler-Kawaguchi manifold}. 
\end{defn}
Compared to the case of first order Finsler, since ${T^2}M$ is not a vector space, 
we do not have an expression such as $\lambda v$, the vector multiplied by a constant. 
The condition (\ref{Zermelo_2nd_mech}) implies the following, 
\begin{align}
\left\{ 
\begin{array}{l}
  \displaystyle{{y^\mu }\frac{{\partial K}}{{\partial {y^\mu }}} 
  + 2{z^\mu }\frac{{\partial K}}{{\partial {z^\mu }}} = K}, \hfill \\
  \displaystyle{{y^\mu }\frac{{\partial K}}{{\partial {z^\mu }}} = 0}. \hfill  	  \label{2nd_mech_Euler}
\end{array} \right.
\end{align}
which is called the {\it Zermelo's condition}. The proofs can be found in~\cite{Krupka-Urban}. 

These conditions (\ref{2nd_mech_Euler}) are coordinate independent. 
Take another chart $(\bar U;\bar \varphi)$, 
$\bar \varphi = ({\bar x^\mu },{\bar y^\mu },{\bar z^\mu })$ on $M$, 
and then from the coordinate transformation rules, we have, 
\begin{align}
&{\bar y^\mu } = \frac{{\partial {{\bar x}^\mu }}}{{\partial {x^\nu }}}{y^\nu }, \nonumber \\
&\frac{{\partial K}}{{\partial {{\bar y}^\mu }}} = \frac{{\partial K}}{{\partial {y^\rho }}}
\frac{{\partial {y^\rho }}}{{\partial {{\bar y}^\mu }}} 
+ \frac{{\partial K}}{{\partial {z^\rho }}}\frac{{\partial {z^\rho }}}{{\partial {{\bar y}^\mu }}} 
= \frac{{\partial K}}{{\partial {y^\rho }}}\frac{{\partial {x^\rho }}}{{\partial {{\bar x}^\mu }}} 
+ 2\frac{{\partial K}}{{\partial {z^\rho }}}\frac{{{\partial ^2}{x^\rho }}}{{\partial {{\bar x}^\alpha }
\partial {{\bar x}^\mu }}}{\bar y^\alpha }, \nonumber \\
&{\bar z^\mu } = \frac{{\partial {{\bar x}^\mu }}}{{\partial {x^\nu }}}{z^\nu } 
+ \frac{{{\partial ^2}{{\bar x}^\mu }}}{{\partial {x^\alpha }\partial {x^\beta }}}{y^\alpha }{y^\beta }, \nonumber \\
&\frac{{\partial K}}{{\partial {{\bar z}^\mu }}} 
= \frac{{\partial K}}{{\partial {z^\rho }}}\frac{{\partial {z^\rho }}}{{\partial {{\bar z}^\mu }}},
\end{align}
and since 
\[\frac{{\partial {{\bar x}^\mu }}}{{\partial {x^\nu }}}\frac{{{\partial ^2}{x^\rho }}}
{{\partial {{\bar x}^\alpha }\partial {{\bar x}^\mu }}}{\bar y^\alpha } 
= \frac{\partial }{{\partial {{\bar x}^\alpha }}}\left( {\frac{{\partial {{\bar x}^\mu }}}{{\partial {x^\nu }}}
\frac{{\partial {x^\rho }}}{{\partial {{\bar x}^\mu }}}{{\bar y}^\alpha }} \right) 
- \frac{{{\partial ^2}{{\bar x}^\mu }}}{{\partial {x^\sigma }\partial {x^\nu }}}
\frac{{\partial {x^\sigma }}}{{\partial {{\bar x}^\alpha }}}\frac{{\partial {x^\rho }}}
{{\partial {{\bar x}^\mu }}}{\bar y^\alpha } =  - \frac{{{\partial ^2}{{\bar x}^\mu }}}
{{\partial {x^\sigma }\partial {x^\nu }}}\frac{{\partial {x^\rho }}}{{\partial {{\bar x}^\mu }}}{y^\sigma },
\] 
we obtain 
\begin{align}
&{\bar y^\mu }\frac{{\partial K}}{{\partial {{\bar y}^\mu }}} 
+ 2{\bar z^\mu }\frac{{\partial K}}{{\partial {{\bar z}^\mu }}} 
= {y^\mu }\frac{{\partial K}}{{\partial {y^\mu }}} 
+ 2{z^\mu }\frac{{\partial K}}{{\partial {z^\mu }}} = K, \nonumber \\
&{\bar y^\mu }\frac{{\partial K}}{{\partial {{\bar z}^\mu }}} 
= {y^\mu }\frac{{\partial K}}{{\partial {z^\mu }}} = 0.
\end{align}
%%%%%%%%%%%%%%%%%%%%%%%%%%%%%%%%%%%%%%%%%%%
\subsection{Parameterisation invariant length of Finsler-Kawaguchi geometry} \label{subsec_paraminv_2nd_mech}
%%%%%%%%%%%%%%%%%%%%%%%%%%%%%%%%%%%%%%%%%%%
In Finsler geometry, the homogeneity condition of Finsler function implied the invariance of Finsler length 
and vice versa. We would similarly define the Finsler-Kawaguchi length, 
and then show that the condition (\ref{Zermelo_2nd_mech}) and the parameter independence of Finsler-Kawaguchi length 
is equivalent. We will begin by introducing the lift of a parameterisation. 

\begin{figure}
  \centering
  \includegraphics[width=5cm]{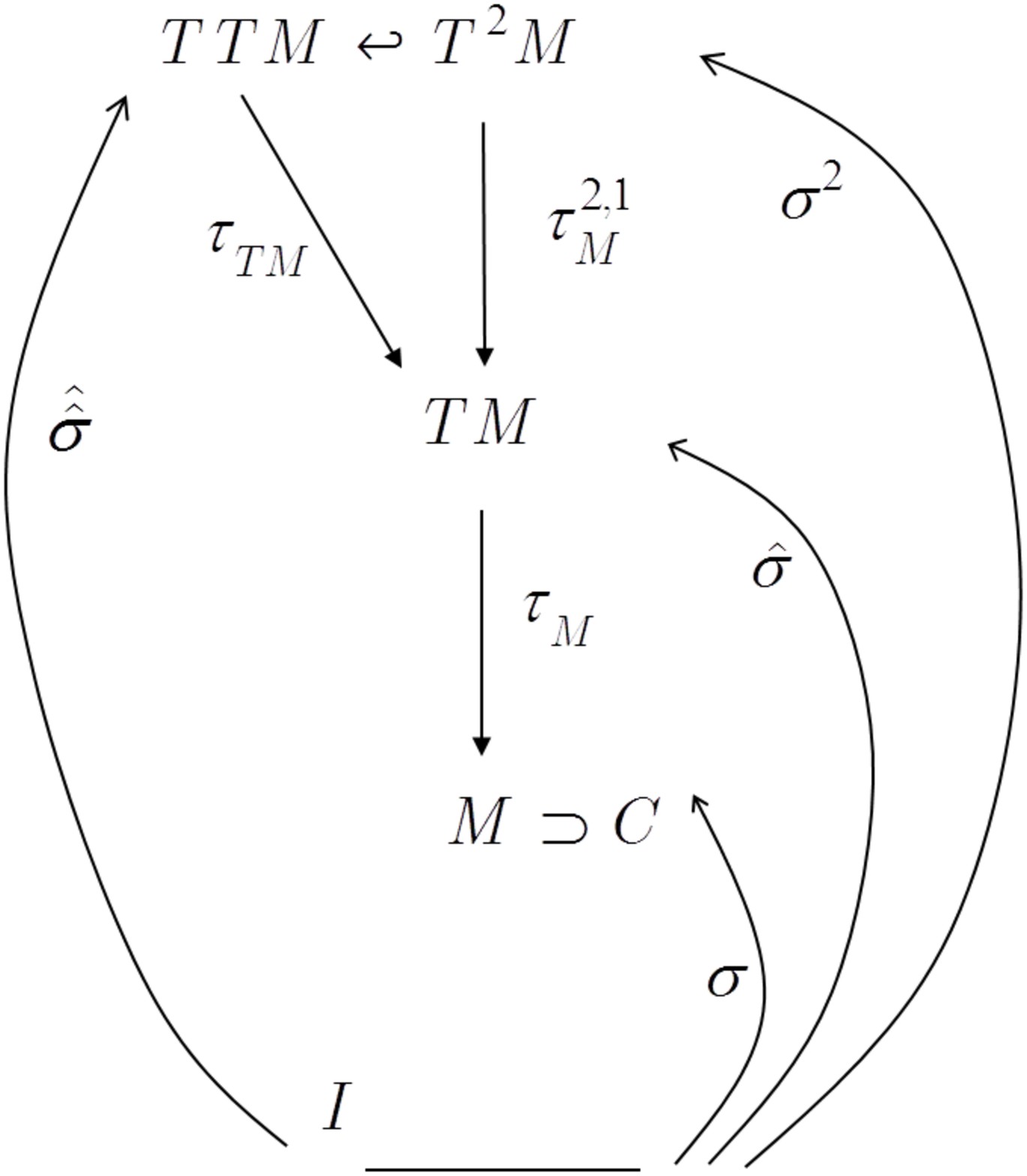}
  \caption{Second order lift of parameterisation}
\end{figure}

\break
\begin{defn} Second order lift of parameterisation \label{def_2nd_param}\\
Consider a second order tangent bundle $({T^2}M,\tau _M^{2,0},M)$ 
defined in Section \ref{subsec_second_TM}, 
and the induced chart $({V^2},{\psi ^2})$, 
${\psi ^2} = ({x^\mu },{y^\mu },{z^\mu })$, $\mu  = 1, \cdots ,n$, on ${T^2}M$. 
Let $\sigma $ be a parameterisation of $C$, namely $C = \sigma (I)$ 
defined in Section \ref{sec_paramFinslerlength}, and 
$I$ be an open interval in $\mathbb{R}$. 
We call the map ${\sigma ^2}:I \to {T^2}M$, such that its local expression is given by 
\begin{align}
{\sigma ^2}(t) = {\left. {{{\left. {\frac{{d({x^\mu } \circ \sigma )}}{{dt}}} \right|}_t}{{\left( {\frac{\partial }{{\partial {x^\mu }}}} \right)}_{\hat \sigma (t)}} + \frac{{{d^2}({x^\mu } \circ \sigma )}}{{d{t^2}}}} \right|_t}
{\left( {\frac{\partial }{{\partial {y^\mu }}}} \right)_{\hat \sigma (t)}}, \label{2nd_orderparameterisation}
\end{align}
the {\it second order lift of parameterisation $\sigma $} to $T^2 M$. 
The image ${{C}^{2}}={{\sigma }^{2}}(I)$ is called the {\it second order lift of $C$}. 
\end{defn}
Clearly, $\tau _{M}^{2,0}({{C}^{2}})=C$. 
The second order lift of parameterisation $\sigma $ is constructed by considering the subset of 
iterated tangent lift. 
Namely, construct the tangent lift $\hat{\hat{\sigma}} :I \to TTM$ of the parameterisation 
$\hat \sigma :I \to TM$, and then take its subset by ${\sigma ^2}: 
= \{ \hat{ \hat{ \sigma}} |{T_{\hat \sigma (t)}}{\tau _M}(\hat{ \hat{ \sigma}} (t)) 
= {\tau _{TM}}(\hat{ \hat{ \sigma}} (t)),t \in I\} $. 
The iterated tangent lift $\hat{ \hat{ \sigma}} (t)$ has the local coordinate expressions 
\begin{align}
\hat{ \hat{ \sigma}} (t) = {T_t}\hat \sigma \left( {\frac{d}{{dt}}} \right) 
= {\left. {\frac{{d({x^\mu } \circ \hat \sigma )}}{{dt}}} \right|_t}
{\left( {\frac{\partial }{{\partial {x^\mu }}}} \right)_{\hat \sigma (t)}} 
+ {\left. {\frac{{d({y^\mu } \circ \hat \sigma )}}{{dt}}} \right|_t}
{\left( {\frac{\partial }{{\partial {y^\mu }}}} \right)_{\hat \sigma (t)}}
\end{align}
and the condition for $\hat{ \hat{ \sigma}} (t)$ to be in ${T^2}M$ 
by the Definition \ref{def_2ndorder_TMoverTM} will give us the coordinates of ${\sigma ^2}(t)$, 
\begin{align}
&  ({x^\mu } \circ {\sigma ^2})(t) = ({x^\mu } \circ \hat \sigma )(t) = ({x^\mu } \circ \sigma )(t), \hfill \nonumber \\
&  ({y^\mu } \circ {\sigma ^2})(t) = {\left. {\frac{{d({x^\mu } \circ \hat \sigma )}}{{dt}}} \right|_t} 
= {\left. {\frac{{d({x^\mu } \circ \sigma )}}{{dt}}} \right|_t} = ({y^\mu } \circ \hat \sigma )(t), \hfill \nonumber \\
&  ({z^\mu } \circ {\sigma ^2})(t) = {\left. {\frac{{d({y^\mu } \circ \hat \sigma )}}{{dt}}} \right|_t} 
= {\left. {\frac{{{d^2}({x^\mu } \circ \sigma )}}{{d{t^2}}}} \right|_t}. \hfill
\end{align}
From above we conclude ${\sigma ^2}(t)$ has the expression (\ref{2nd_orderparameterisation}). 

%The parameterisation $\sigma $ where its second order lift ${\sigma ^2}$ is nowhere $0$ is called a 
%{\it regular parameterisation of order $2$}. 
%Apparently, if ${\sigma ^2}$ is nowhere $0$, 
%also the tangent lift $\hat \sigma $ is nowhere $0$. 
Let the induced chart on $TTM$ be $({\tilde V^2},{\tilde \psi ^2})$, 
${\tilde \psi ^2} = ({x^\mu },{y^\mu },{\dot x^\mu },{\dot y^\mu })$, 
then the parameterisation $\sigma $ is called {\it second order}, 
when ${\dot y^\mu }(\hat{\hat{\sigma}} (t)) \ne 0$ for $\forall t \in I$. 

In the discussions concerning second order Finsler-Kawaguchi geometry, 
we will only consider the regular parameterisation of second order. 

The $r$-th order parameterisation ${\sigma ^r}:I \to {T^r}M$ can be obtained by iterative process. 
Namely, construct the lift $\widehat {{{(\sigma )}^{r - 1}}}:I \to T{T^{r - 1}}M$ 
of the parameterisation ${\sigma ^{r - 1}}:I \to {T^{r - 1}}M$, 
and then regarding the construction on the higher-order tangent bundle (\ref{higherbundle_r}), 
take its subset by 
\begin{align}
{\sigma ^r}: = \{ \widehat {{{(\sigma )}^{r - 1}}}|{\mkern 1mu} \, 
{T_{{\sigma ^{r - 1}}(t)}}\tau _M^{r - 1,r - 2}(\widehat {{{(\sigma )}^{r - 1}}}(t)) 
= {\iota _{r-1}} \circ {\tau _{{T^{r - 1}}M}}(\widehat {{{(\sigma )}^{r - 1}}}(t)),\,t \in I\}, \label{r_th_param}
\end{align}
where  ${\iota _{r-1}}:{T^{r-1}}M \to T{T^{r - 2}}M$ is the inclusion map. 

\begin{defn}$r$-th order parameterisation \\
Let $\sigma $ be a parameterisation of the curve $C$ on $M$. 
The map ${\sigma ^r}:I \to {T^r}M$ given by (\ref{r_th_param}) 
is called the {\it $r$-th order lift of parameterisation $\sigma$}.
\end{defn}

\begin{defn} Finsler-Kawaguchi length (second order) \\
The Finsler-Kawaguchi function defines a geometrical length for an arc segment 
$C=\sigma([{t_i},{t_f}])$ on $M$ 
by the lifted parameterisation ${\sigma ^2}$ of order $2$ as, 
\begin{align}
{l^K}(C) = \int_{{t_i}}^{{t_f}} {K\left( {{\sigma ^2}(t)} \right)dt} .
\end{align}
By chart expression, this is, 
\begin{align}
  {l^K}(C) &= \int_{{t_i}}^{{t_f}} {K\left( {{x^\mu }({\sigma ^2}(t)),{y^\mu }({\sigma ^2}(t)),{z^\mu }
  ({\sigma ^2}(t))} \right)dt}  \hfill \nonumber \\
  &= \int_{{t_i}}^{{t_f}} {K\left( {{x^\mu }(\sigma (t)),\frac{{d({x^\mu }(\sigma (t)))}}{{dt}},
  \frac{{{d^2}({x^\mu }(\sigma (t)))}}{{d{t^2}}}} \right)dt},  \hfill   \label{FinslerKawaguchiCurve}
\end{align}
where we used the definition of ${\sigma ^2}$, and its coordinates expressed by the induced coordinates of ${T^2}M$, 
\begin{align}
&({x^\mu } \circ {\sigma ^2})(t) = ({x^\mu } \circ \sigma )(t),\; \nonumber \\
&({y^\mu } \circ {\sigma ^2})(t) = {\left. {\frac{{d({x^\mu } \circ \sigma )}}{{dt}}} \right|_t},\; \nonumber \\
&({z^\mu } \circ {\sigma ^2})(t) = {\left. {\frac{{{d^2}({x^\mu } \circ \sigma )}}{{d{t^2}}}} \right|_t}. 
\end{align}
We call this ${l^K}(C)$, the (second order) {\it Finsler-Kawaguchi length} of the curve $C$.
\end{defn}

For the case of first order Finsler geometry, there was an important property of parameterisation 
independence of the Finsler length (Section \ref{sec_paramFinslerlength}). 
We will show that the second order Finsler-Kawaguchi length also has the same property. 
\begin{lemma} Reparameterisation invariance of Finsler-Kawaguchi length \label{lem_repinv_FinslerKawaguchi}\\
The second order Finsler-Kawaguchi length does not change by the reparameterisation 
$\rho  = \sigma  \circ \phi $, $\phi :J \to I$, 
where $\phi $ is a diffeomorphism such that preserves the orientation, 
and $I, J$ are open intervals in $\mathbb{R}$. 
\end{lemma}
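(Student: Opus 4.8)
The plan is to follow the same strategy as the first-order case (Lemma \ref{lem_repinv_Finsler}): compute how the second-order lift transforms under the reparameterisation, recognise the resulting expression as an instance of the homogeneity condition (\ref{Zermelo_2nd_mech}), and finish with an elementary change of variables in the integral. As in the proof of Lemma \ref{lem_repinv_Finsler}, by subdividing $[t_i,t_f]$ into smaller closed subintervals if necessary, I may assume without loss of generality that $C$ lies in a single coordinate neighbourhood $(U,\varphi)$, $\varphi=(x^\mu)$, so that the induced chart $(V^2,\psi^2)$, $\psi^2=(x^\mu,y^\mu,z^\mu)$ on $T^2M$ covers $\sigma^2([t_i,t_f])$ and the local coordinate formulas for $\sigma^2$ are available.

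First I would write down the second-order lift of $\rho=\sigma\circ\phi$ in these coordinates. Applying the chain rule to $x^\mu\circ\rho=x^\mu\circ\sigma\circ\phi$ and using the defining coordinate expressions for the second-order lift, I obtain
\begin{align}
&(x^\mu \circ \rho^2)(s) = (x^\mu \circ \sigma^2)(\phi(s)), \nonumber \\
&(y^\mu \circ \rho^2)(s) = \frac{d\phi}{ds}(s)\,(y^\mu \circ \sigma^2)(\phi(s)), \nonumber \\
&(z^\mu \circ \rho^2)(s) = \left(\frac{d\phi}{ds}(s)\right)^2 (z^\mu \circ \sigma^2)(\phi(s)) + \frac{d^2\phi}{ds^2}(s)\,(y^\mu \circ \sigma^2)(\phi(s)).
\end{align}
The crucial observation is that this is precisely the form $(x^\mu,\lambda y^\mu,\lambda^2 z^\mu + \varrho\, y^\mu)$ appearing on the left-hand side of (\ref{Zermelo_2nd_mech}), with $\lambda=d\phi/ds$ and $\varrho=d^2\phi/ds^2$ (I write $\varrho$ to avoid clashing with the reparameterised map $\rho$). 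Since $\phi$ is an orientation-preserving diffeomorphism, $d\phi/ds>0$, so $\lambda\in\mathbb{R}^+$ as required, while $d^2\phi/ds^2\in\mathbb{R}$ is unrestricted, matching $\varrho\in\mathbb{R}$.

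Applying the homogeneity condition (\ref{Zermelo_2nd_mech}) then gives $K(\rho^2(s))=\frac{d\phi}{ds}(s)\,K(\sigma^2(\phi(s)))$, and the proof concludes with the substitution $t=\phi(s)$, $dt=(d\phi/ds)\,ds$:
\begin{align}
l^K(C) = \int_{s_i}^{s_f} K(\rho^2(s))\,ds = \int_{s_i}^{s_f} K(\sigma^2(\phi(s)))\,\frac{d\phi}{ds}(s)\,ds = \int_{t_i}^{t_f} K(\sigma^2(t))\,dt,
\end{align}
where $s_i,s_f$ are the preimages of $t_i,t_f$ under $\phi$. The main obstacle lies entirely in the first step: carrying out the second-order chain rule so that the acceleration coordinate $z^\mu$ acquires exactly the inhomogeneous term $\frac{d^2\phi}{ds^2}\,y^\mu$ alongside the $(d\phi/ds)^2$ scaling. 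This matching is the whole content of the lemma, and it is what forces the homogeneity condition to take the particular shape (\ref{Zermelo_2nd_mech}) rather than a naive $K(x,\lambda y,\lambda^2 z)$; everything downstream is a one-line application of homogeneity together with a standard change of variables.
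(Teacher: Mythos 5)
Your proposal is correct and follows essentially the same route as the paper's proof: compute the coordinate expression of the second-order lift of $\rho=\sigma\circ\phi$ via the chain rule, recognise it as the left-hand side of the homogeneity condition (\ref{Zermelo_2nd_mech}) with $\lambda=d\phi/ds>0$ and the inhomogeneous term $\frac{d^{2}\phi}{ds^{2}}\,y^{\mu}$, then apply homogeneity and change variables. The only (harmless) addition is your explicit chart-subdivision remark, which the paper states in the first-order lemma but omits here.
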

\begin{proof}
The second order lift of $\rho $ becomes, 
\begin{align}
{\rho ^2}(s) &= {\left. {{{\left. {\frac{{d({x^\mu } \circ \sigma  \circ \phi )}}{{ds}}} \right|}_s}
{{\left( {\frac{\partial }{{\partial {x^\mu }}}} \right)}_{\widehat {\sigma  \circ \phi }(s)}} 
+ \frac{{{d^2}({x^\mu } \circ \sigma  \circ \phi )}}{{d{s^2}}}} \right|_s}{\left( {\frac{\partial }
{{\partial {y^\mu }}}} \right)_{\widehat {\sigma  \circ \phi }(s)}} \hfill \nonumber \\
&= {\left. {\frac{{d({x^\mu } \circ \sigma )}}{{dt}}} \right|_{\phi (s)}}{\left. {\frac{{d\phi }}{{ds}}} 
\right|_s}{\left( {\frac{\partial }{{\partial {x^\mu }}}} \right)_{\widehat {\sigma  \circ \phi }(s)}} 
+ \frac{d}{{ds}}\left( {{{\left. {\frac{{d({x^\mu } \circ \sigma )}}{{dt}}} \right|}_{\phi (s)}}{{\left. 
{\frac{{d\phi }}{{ds}}} \right|}_s}} \right){\left( {\frac{\partial }{{\partial {y^\mu }}}} 
\right)_{\widehat {\sigma  \circ \phi }(s)}} \hfill \nonumber \\
&= {\left. {\frac{{d({x^\mu } \circ \sigma )}}{{dt}}} \right|_{\phi (s)}}{\left. {\frac{{d\phi }}{{ds}}} 
\right|_s}{\left( {\frac{\partial }{{\partial {x^\mu }}}} \right)_{\widehat {\sigma  \circ \phi }(s)}} 
+ {\left. {\frac{{{d^2}({x^\mu } \circ \sigma )}}{{d{t^2}}}} \right|_{\phi (s)}}{\left( {{{\left. 
{\frac{{d\phi }}{{ds}}} \right|}_s}} \right)^2}{\left( {\frac{\partial }{{\partial {y^\mu }}}} 
\right)_{\widehat {\sigma  \circ \phi }(s)}} \nonumber \\
&+ {\left. {\frac{{d({x^\mu } \circ \sigma )}}{{dt}}} 
\right|_{\phi (s)}}{\left. {\frac{{{d^2}\phi }}{{d{s^2}}}} \right|_s}{\left( {\frac{\partial }
{{\partial {y^\mu }}}} \right)_{\widehat {\sigma  \circ \phi }(s)}}, \hfill  
\end{align}
for $s \in J$.
Its coordinates in ${T^2}M$ are, 
\begin{align}
({x^\mu } \circ {\rho ^2})(s) &= ({x^\mu } \circ \sigma )(\phi (s)) = ({x^\mu } \circ {\sigma ^2})(\phi (s)), \hfill \nonumber \\
({y^\mu } \circ {\rho ^2})(s) &= \left( {{{\left. {\frac{{d({x^\mu } \circ \sigma )}}{{dt}}} \right|}_{\phi ( \cdot )}}
\frac{{d\phi }}{{ds}}} \right)(s) = {\left. {\frac{{d\phi }}{{ds}}} \right|_s} \cdot ({y^\mu } \circ {\sigma ^2})(\phi (s)), \hfill \nonumber \\
({z^\mu } \circ {\rho ^2})(s) &= \left( {{{\left. {\frac{{{d^2}({x^\mu } 
\circ \sigma )}}{{d{t^2}}}} \right|}_{\phi ( \cdot )}}{{\left( {\frac{{d\phi }}{{ds}}} \right)}^2} 
+ {{\left. {\frac{{d({x^\mu } \circ \sigma )}}{{dt}}} \right|}_{\phi ( \cdot )}}\frac{{{d^2}\phi }}{{d{s^2}}}} 
\right)(s) \hfill \nonumber \\
&= {\left( {{{\left. {\frac{{d\phi }}{{ds}}} \right|}_s}} \right)^2} 
\cdot ({z^\mu } \circ {\sigma ^2})(\phi (s)) + {\left. {\frac{{{d^2}\phi }}{{d{s^2}}}} 
\right|_s} \cdot ({y^\mu } \circ {\sigma ^2})(\phi (s)) \hfill 
\end{align}
and since $\rho $ is a regular parameterisation that preserves orientation, 
$\displaystyle{\frac{{d\phi }}{{ds}} > 0}$. 

Let $[{{s}_{i}},{{s}_{f}}] \subset J$ and $[{{t}_{i}},{{t}_{f}}]\subset I$ 
be closed intervals, where $\phi ({{s}_{i}})={{t}_{i}},\,\, \phi ({{s}_{f}})={{t}_{f}}$. 
Now we can see that the length of $C = \sigma ([{t_i},{t_f}])$ is preserved by
\begin{align}
&{l^K}(C) = \int_{{s_i}}^{{s_f}} {K\left( {{\rho ^2}(s)} \right)ds}  
= \int_{{s_i}}^{{s_f}} {K\left( {{x^\mu }({\rho ^2}(s)),{y^\mu }({\rho ^2}(s)),{z^\mu }({\rho ^2}(s))} \right)ds}  \hfill \nonumber \\
& = \int_{{s_i}}^{{s_f}} {K\left( {{x^\mu }({\sigma ^2}(\phi (s))),
{{\left. {\frac{{d\phi }}{{ds}}} \right|}_s}{y^\mu }({\sigma ^2}(\phi (s))),
{{\left( {{{\left. {\frac{{d\phi }}{{ds}}} \right|}_s}} \right)}^2}{z^\mu }({\sigma ^2}(\phi (s))) 
+ {{\left. {\frac{{{d^2}\phi }}{{d{s^2}}}} \right|}_s}{y^\mu }({\sigma ^2}(\phi (s)))} \right)ds}  \hfill \nonumber \\
& = \int_{{\phi^{-1}(t_i)}}^{{\phi^{-1}(t_f)}} {K({\sigma ^2}(\phi (s)))\,\frac{{d\phi }}{{ds}}(s)ds}  \hfill \nonumber \\
& = \int_{{t_i}}^{{t_f}} {K({\sigma ^2}(t)))\,dt} . \hfill  \label{FinslerKawaguchi_length}
\end{align}
\end{proof} 
In the second line of (\ref{FinslerKawaguchi_length}), 
we have used the pull-back of second order homogeneity condition (\ref{Zermelo_2nd_mech}). 
We can conclude that the homogeneity of $K$ and parameterisation invariance of Finsler-Kawaguchi length 
is an equivalent property. 

%%%%%%%%%%%%%%%%%%%%%%%%%%%%%%%%%%%%%%%%%%%%%%%%%%%%%%
\subsection{Finsler-Kawaguchi form}
%%%%%%%%%%%%%%%%%%%%%%%%%%%%%%%%%%%%%%%%%%%%%%%%%%%%%%
	Given a Finsler-Kawaguchi manifold $(M,K)$, we can obtain an important geometrical structure, 
	which we call a {\it Finsler-Kawaguchi form}. 
	It is a form which is constructed in accord to the second order homogeneity condition, 
	and gives a conventional Lagrangian when pulled back to the parameter space by a certain parameterisation. 
	As such as the Hilbert form reduced to Cartan form by choosing a specific fibration, 
	the Finsler-Kawaguchi form is expected to give a form which would correspond to a 
	``Higher order Cartan form'' in the similar context, by fixing the fibration. 
	\footnote{This is not equivalent to the same higher order Cartan form which are derived in the setting of jet bundles.}

\begin{defn} Second order Finsler-Kawaguchi form \\
Let $(V^2,\psi^2),\psi^2  = ({x^\mu },{y^\mu },{z^\mu })$, $\mu  = 1, \cdots ,n$ be a chart on ${T^2}M$. 
The {\it second order Finsler-Kawaguchi form} ${\mathcal{K}}$ is a $1$-form on ${T^2}M$, 
which in local coordinates are expressed by 
\begin{align}
{\mathcal{K}} = \frac{{\partial K}}{{\partial {y^\mu }}}d{x^\mu } 
+ 2\frac{{\partial K}}{{\partial {z^\mu }}}d{y^\mu }.   \label{KawaguchiForm_2nd_mech}
\end{align}
This corresponds to the first formula in (\ref{2nd_mech_Euler}). 
\end{defn}
The Finsler-Kawaguchi form is invariant with respect to the coordinate transformations by, 
\begin{align}
&{x^\mu } \to {\bar x^\mu } = {\bar x^\mu }({x^\nu }), \nonumber \\
&{y^\mu } \to {\bar y^\mu }({x^\mu },{y^\mu }) 
= \frac{{\partial {{\bar x}^\mu }}}{{\partial {x^\nu }}}{y^\nu }, \nonumber \\
&{z^\mu } \to {\bar z^\mu }({x^\mu },{y^\mu },{z^\mu }) 
= \frac{{\partial {{\bar x}^\mu }}}{{\partial {x^\nu }}}{z^\nu } + 
\frac{{{\partial ^2}{{\bar x}^\mu }}}{{\partial {x^\alpha }\partial {x^\beta }}}{y^\alpha }{y^\beta }
\end{align}

Note that from the second formula of (\ref{Zermelo_2nd_mech}), we can similarly consider a form 
$\displaystyle{\frac{{\partial K}}{{\partial {z^\mu }}}d{x^\mu }}$, which is also coordinate independent. 
Adding such form to (\ref{KawaguchiForm_2nd_mech}) does not contribute to the conventional 
Lagrangian or equation of motion which its only dynamical variable is the time $t$, 
since it becomes $0$ when pulled back to the parameter space $P$. 
Still, it changes the Lagrangian and equation of motion before the pull-back, 
giving some ambiguity in the choice of such forms. 
Nevertheless, for the arbitrary order of derivatives, only one condition of Zermelo relates the derivatives of 
$K$ in the formula to $K$, and we can always use this condition to construct the Finsler-Kawaguchi form. 

\begin{pr} 
Let ${\mathcal{K}}$ be the second order Finsler-Kawaguchi $1$-form on ${T^2}M$, 
$I$ an open interval in $\mathbb{R}$, $[{t_i},{t_f}] \subset I$, and ${\sigma ^2}$ the second order lifted 
parameterisation ${\sigma ^2}:I \to {T^2}M$. 
The integration of $\mathcal{K}$ along ${C^2}:={\sigma ^2}([t_i,t_f])$ is given by the 
Finsler-Kawaguchi length ${l^K}(C)$. 
\end{pr}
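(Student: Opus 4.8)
The plan is to mirror the proof of the Finsler-Hilbert identity (\ref{Hilbertform_length_id}), replacing the Hilbert form by the Kawaguchi form and the first-order lift $\hat\sigma$ by the second order lift $\sigma^2$. First I would invoke the definition of the integral of a $1$-form over a $1$-dimensional immersed submanifold: since $C^2 = \sigma^2([t_i,t_f])$ is the image of the closed interval under the lifted parameterisation, the integral reduces to a pull-back,
\begin{align}
\int_{C^2} {\mathcal{K}} = \int_{{\sigma ^2}([{t_i},{t_f}])} {\mathcal{K}} = \int_{{t_i}}^{{t_f}} {{{({\sigma ^2})}^*}{\mathcal{K}}}.
\end{align}
As in Lemma \ref{lem_repinv_Finsler}, if $C$ does not lie in a single coordinate neighbourhood I would first subdivide $[t_i,t_f]$ into finitely many closed subintervals each mapping into one chart, so that throughout the computation I may work in a fixed induced chart $({V^2},{\psi ^2})$, ${\psi ^2} = ({x^\mu },{y^\mu },{z^\mu })$ on ${T^2}M$.

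Next I would compute the pull-back ${({\sigma ^2})}^*{\mathcal{K}}$ in this chart. Using the local expression (\ref{KawaguchiForm_2nd_mech}) of ${\mathcal{K}}$ together with the coordinate description of ${\sigma ^2}$ established in Definition \ref{def_2nd_param}, namely ${x^\mu } \circ {\sigma ^2} = {x^\mu } \circ \sigma$, ${y^\mu } \circ {\sigma ^2} = d({x^\mu } \circ \sigma )/dt$ and ${z^\mu } \circ {\sigma ^2} = {d^2}({x^\mu } \circ \sigma )/d{t^2}$, the essential observation is that pulling back the coordinate differentials gives ${({\sigma ^2})^*}d{x^\mu } = ({y^\mu } \circ {\sigma ^2})\,dt$ and ${({\sigma ^2})^*}d{y^\mu } = ({z^\mu } \circ {\sigma ^2})\,dt$, since $d({y^\mu } \circ {\sigma ^2})/dt = {d^2}({x^\mu } \circ \sigma )/d{t^2} = {z^\mu } \circ {\sigma ^2}$. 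Substituting, the pulled-back integrand becomes
\begin{align}
{({\sigma ^2})^*}{\mathcal{K}} = \left( {\frac{{\partial K}}{{\partial {y^\mu }}} \circ {\sigma ^2} \cdot ({y^\mu } \circ {\sigma ^2}) + 2\frac{{\partial K}}{{\partial {z^\mu }}} \circ {\sigma ^2} \cdot ({z^\mu } \circ {\sigma ^2})} \right)dt.
\end{align}

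Finally I would apply the first of Zermelo's conditions (\ref{2nd_mech_Euler}), ${y^\mu }\partial K/\partial {y^\mu } + 2{z^\mu }\partial K/\partial {z^\mu } = K$, pulled back along ${\sigma ^2}$, which collapses the parenthesis to $K \circ {\sigma ^2}$, so that ${({\sigma ^2})^*}{\mathcal{K}} = (K \circ {\sigma ^2})\,dt$ and hence $\int_{C^2}{\mathcal{K}} = \int_{t_i}^{t_f} K({\sigma ^2}(t))\,dt = {l^K}(C)$. The computation is short and closely parallels the Finsler case; the only point requiring care — and the nearest thing to an obstacle — is the bookkeeping that identifies ${({\sigma ^2})^*}d{y^\mu }$ with $({z^\mu } \circ {\sigma ^2})\,dt$, which rests squarely on the defining relation between the $y$- and $z$-components of the second order lift, together with the recognition that it is precisely the \emph{first} Zermelo condition (and not the second) that is responsible for reproducing the length.
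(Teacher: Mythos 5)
Your proposal is correct and follows essentially the same route as the paper's own proof: pull $\mathcal{K}$ back along $\sigma^2$, use the defining relations $({\sigma^2})^*dx^\mu = (y^\mu\circ\sigma^2)\,dt$ and $({\sigma^2})^*dy^\mu = (z^\mu\circ\sigma^2)\,dt$, and collapse the integrand via the first Zermelo condition pulled back along $\sigma^2$. The only addition is your explicit chart-subdivision remark, which the paper omits here but which does no harm.
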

\begin{proof}
The simple calculation leads, 
\begin{align}
\int_{{C^2}} {\mathcal{K}} &= \int_{{\sigma ^2}(I)} 
{\frac{{\partial K}}{{\partial {y^\mu }}}d{x^\mu } 
+ 2\frac{{\partial K}}{{\partial {z^\mu }}}d{y^\mu }}  
= \int_{{t_i}}^{{t_f}} {\frac{{\partial K}}{{\partial {y^\mu }}} 
\circ {\sigma ^2}\,d({x^\mu } \circ {\sigma ^2}) 
+ 2\frac{{\partial K}}{{\partial {z^\mu }}} \circ {\sigma ^2}\,d({y^\mu } \circ {\sigma ^2})}  \hfill \nonumber \\
&= \int_{{t_i}}^{{t_f}} {\frac{{\partial K}}{{\partial {y^\mu }}}({\sigma ^2}(t))\,
\frac{{d({x^\mu }({\sigma ^2}(t))}}{{dt}} + 2\frac{{\partial K}}{{\partial {z^\mu }}}({\sigma ^2}(t))\,
\frac{{d({y^\mu }({\sigma ^2}(t))}}{{dt}}dt}  \hfill \nonumber \\
&= \int_{{t_i}}^{{t_f}} {\frac{{\partial F}}{{\partial {y^\mu }}}({\sigma ^2}(t))\,{y^\mu }({\sigma ^2}(t)) 
+ 2\frac{{\partial K}}{{\partial {z^\mu }}}({\sigma ^2}(t))\,{z^\mu }({\sigma ^2}(t))dt}  \hfill \nonumber \\
&= \int_{{t_i}}^{{t_f}} {K({\sigma ^2}(t))\,dt}  = {l^K}(C), \hfill  
\end{align}
where we used the pull-back second order homogeneity condition 
\begin{align}
\frac{{\partial K}}{{\partial {y^\mu }}} \circ {\sigma ^2} \cdot {y^\mu } \circ {\sigma ^2} 
+ 2\frac{{\partial K}}{{\partial {z^\mu }}} \circ {\sigma ^2} \cdot {z^\mu } \circ {\sigma ^2} 
= K \circ {\sigma ^2}.
\end{align}
\end{proof}

\begin{remark}
As we redefined the pair $(M,\user2{\mathcal{F}})$ as the Finsler manifold instead of the pair $(M,F)$, 
we can redefine the pair $(M,\user2{\mathcal{K}})$ as the $n$-dimensional 
Finsler-Kawaguchi manifold instead of the pair $(M,K)$. 
\end{remark}

\begin{remark}
We saw in the previous chapter, the relation between Hilbert form and the 
Cartan form (Remark \ref{rem_HilbertCartan}). 
Similary, we can consider what the Finsler-Kawaguchi form corresponds to, when specifying the fibration.
Let $Y$ be a $n+1$-dimensional manifold, and fix the bundle $(Y, \pi, \mathbb{R})$. 
The second order prolongation of $Y$ is denoted as $J^2 Y$. 
Let $(V, \psi) $, $\psi  = (t,{x^i})$, $i = 1, \cdots , n$ 
be the adapted chart on $Y$, and the induced chart on $\mathbb{R}$ be 
$(\pi(V),t)$.
We denote the induced chart on ${J^2}Y$ by $({({\pi ^{2,0}})^{-1}}(V),{\psi^2}), 
{\psi^2} = (t,{q^i},{\dot q^i}, \ddot{q}^i )$, 
where $\pi^{2,0}: J^2 Y \rightarrow Y$ is the prolongation of $\pi$.  
%we take $Y = \mathbb{R} \times Q$, 
%$\dim Y = n$ an extended configuration space, and fix the bundle $(Y,p{r_1},\mathbb{R})$, 
%with $\mathbb{R}$ being the parameter space. 
Suppose we have a second order Finsler-Kawaguchi form ${\mathcal{K}}$ on ${T^2}Y \subset TTY$.
Take the induced chart on $T^2 Y$ as 
$({({{\tau _Y^{2,0}}})^{-1}}(V), {\tilde \psi^2}),\;{\tilde \psi ^2} 
= ({x^0},{x^i},{y^0},{y^i},\dot{x}^0,\dot{x}^i,z^0,z^i)$, $i = 1, \cdots ,n$. 
Consider an inclusion map 
$\iota :{({\pi ^{2,0}})^{ - 1}}(V) \hookrightarrow {({{\tau _Y^{2,0}}})^{ - 1}}(V)$, 
which in coordinates are given by  
\[\iota :(t,{q^i},{\dot q^i},{\ddot q^i}) 
\hookrightarrow ({x^0} = t,{x^i} = {q^i},{y^0} = 1,{y^i} = {\dot q^i}, \dot{x}^0 = 1, \dot{x}^i = {\dot q^i}, 
z^0 = 0, z^i = \ddot{q}^i ), \]
so that the submanifold equations are now given by ${y^0} = \dot{x}^0 = 1$, $y^i = \dot{x}^i$, $z^0 = 0$. 
Expressing the Finsler-Kawaguchi form in these coordinates gives, 
\begin{align}
{\mathcal{K}} = \frac{{\partial K}}{{\partial {y^0}}}d{x^0} + \frac{{\partial K}}{{\partial {y^i}}}d{x^i} 
+ 2\frac{{\partial K}}{{\partial {z^0}}}d{y^0} + 2\frac{{\partial K}}{{\partial {z^i}}}d{y^i}, 
\end{align}
and using the homogeneity conditions: 
\begin{align}
\frac{{\partial K}}{{\partial {y^0}}} &= \left( {K - \frac{{\partial K}}{{\partial {y^i}}}{y^i} 
- 2\frac{{\partial K}}{{\partial {z^0}}}{z^0} - 2\frac{{\partial K}}{{\partial {z^i}}}{z^i}} \right)
\frac{1}{{{y^0}}}, \nonumber \\
\frac{{\partial K}}{{\partial {z^0}}} &=  - \frac{{\partial K}}{{\partial {z^i}}}{z^i}\frac{1}{{{y^0}}},
\end{align}
becomes 
\begin{align}
{\mathcal{K}} = \left( {K - \frac{{\partial K}}{{\partial {y^i}}}{y^i} - 2\frac{{\partial K}}{{\partial {z^0}}}{z^0} 
- 2\frac{{\partial K}}{{\partial {z^i}}}{z^i}} \right)\frac{1}{{{y^0}}}d{x^0} 
+ \frac{{\partial K}}{{\partial {y^i}}}d{x^i} - \frac{{\partial K}}{{\partial {z^i}}}{z^i}\frac{1}{{{y^0}}}d{y^0} 
+ 2\frac{{\partial K}}{{\partial {z^i}}}d{y^i}. \nonumber \\
\end{align}
The pull-back of this Finsler-Kawaguchi form by $\iota$ is 
\begin{align}
&  {\iota ^*}{\mathcal{K}} = \frac{1}{{{y^0}}} \circ \iota 
  \left( {K \circ \iota  - \frac{{\partial K}}{{\partial {y^i}}} \circ \iota  
  \cdot {y^i} \circ \iota  - 2\frac{{\partial K}}{{\partial {z^0}}} \circ \iota  
  \cdot {z^0} \circ \iota  - 2\frac{{\partial K}}{{\partial {z^i}}} \circ \iota  
  \cdot {z^i} \circ \iota } \right)d({x^0} \circ \iota ) \hfill \nonumber \\
&  \quad  + \frac{{\partial K}}{{\partial {y^i}}} \circ \iota \, d ({x^i} \circ \iota ) 
  - \frac{1}{{{y^0}}} \circ \iota  \cdot \frac{{\partial K}}{{\partial {z^i}}} 
  \circ \iota  \cdot {z^i} \circ \iota \,d({y^0} \circ \iota ) 
  + 2\frac{{\partial K}}{{\partial {z^i}}} \circ \iota \,d({y^i} \circ \iota ). \hfill 
\end{align}
By the submanifold equations, this becomes
\begin{align}
&  {\Theta _K}: = {\iota ^*}\user2{\mathcal{K}} = \left( {\user2{\mathcal{L}} 
- \frac{{\partial {\mathcal{L}}}}{{\partial {{\dot q}^i}}}{{\dot q}^i} 
- 2\frac{{\partial {\mathcal{L}}}}{{\partial {{\ddot q}^i}}}{{\ddot q}^i}} \right)dt 
+ \frac{{\partial {\mathcal{L}}}}{{\partial {{\dot q}^i}}}d{q^i} 
+ 2\frac{{\partial {\mathcal{L}}}}{{\partial {{\ddot q}^i}}}d{{\dot q}^i} \hfill \nonumber \\
&   = \user2{\mathcal{L}}dt + \frac{{\partial \user2{\mathcal{L}}}}{{\partial {{\dot q}^i}}}{\omega ^i} 
+ 2\frac{{\partial {\mathcal{L}}}}{{\partial {{\ddot q}^i}}}{{\dot \omega }^i}, \hfill 
\end{align}
where we set ${\mathcal{L}}: = K \circ \iota$, and
\begin{align}
{\omega ^i}: = d{q^i} - {\dot q^i}dt,\;{\dot \omega ^i}: = d{\dot q^i} - {\ddot q^i}dt.
\end{align}
%Since our bundle is a trivial bundle now, we can choose a global parameterisation 
%$\sigma :\mathbb{R} \to \mathbb{R} \times Q$ and a chart on $Y$ such that 
%$t(\sigma (p)) = p$, for every $p \in \mathbb{R}$. 
%Such parameterisation gives, $\dot t = 1$, $\ddot t = 0$, and now we have 
%\begin{eqnarray}
%{\mathcal{K}} &=& \left( {K - \frac{{\partial K}}{{\partial {y^i}}}{y^i} 
%- 2\frac{{\partial K}}{{\partial {z^i}}}{z^i}} \right)dt + \frac{{\partial K}}{{\partial {y^i}}}d{x^i} 
%+ 2\frac{{\partial K}}{{\partial {z^i}}}d{y^i} \hfill \nonumber \\
%&=& Kdt + \frac{{\partial K}}{{\partial {y^i}}}{\omega ^i} 
%+ 2\frac{{\partial K}}{{\partial {z^i}}}{{\dot \omega }^i}, \hfill  
%\end{eqnarray}
%where we set 
%\begin{eqnarray}
%{\omega ^i}: = d{x^i} - {y^i}dt, {\dot \omega ^i}: = d{y^i} - {z^i}dt.
%\end{eqnarray}
The $1$-forms ${\omega ^i},\;{\dot \omega ^i}$ are called contact forms, 
and they disappear when pulled back to the base manifold of the bundle $(Y, \pi, \mathbb{R})$. 
${\Theta _K}$ is the second order form which corresponds to the Cartan form in our context. 
\end{remark}

%%%%%%%%%%%%%%%%%%%%%%%%%%%%%%%%%%%%%%%%%%%%%%
\section{First order, $k$-dimensional parameter space } \label{sec_1st_k_Kawaguchi}
%%%%%%%%%%%%%%%%%%%%%%%%%%%%%%%%%%%%%%%%%%%%%%
Here in this section we will consider the second direction of generalisation of Finsler geometry, 
to $k$-dimensional parameter space. 
We will begin with the first order case. For the construction, 
we will utilise the structure of multivector bundles, 
introduced in chapter \ref{chap_2} Section \ref{subsec_multivectorbundles}. 

%%%%%%%%%%%%%%%%%%%%%%%%%%%%%%%%%%%%%%%%%%%%%%
\subsection{Basic definitions of Kawaguchi geometry (first order $k$-multivector bundle)}
%%%%%%%%%%%%%%%%%%%%%%%%%%%%%%%%%%%%%%%%%%%%%%

We will first define the geometric structure on the total space of a $k$-multivector bundle 
$({\Lambda ^k}TM,{\Lambda ^k}{\tau _M},M)$. We will call this structure a first order $k$-areal Kawaguchi function. 
\begin{defn} Kawaguchi manifold (First order $k$-dimensional parameter space) \\
Let $(M,K)$ be a pair of $n$-dimensional ${C^\infty }$-differentiable manifold $M$ and a function 
$K \in {C^\infty }({\Lambda ^k}TM)$ with $k \leqslant n$ that satisfies the following homogeneity condition, 
\begin{align}
K(\lambda v) = \lambda K(v),\quad \lambda  > 0, \, \, \mbox{for} \, \, v \in {\Lambda ^k}TM.
\end{align}
We will call the function with such properties, a {\it first order $k$-areal Kawaguchi function}, 
and the pair $(M,K)$ a {\it $n$-dimensional $k$-areal Kawaguchi manifold}, 
or simply {\it Kawaguchi manifold}, if the subject of discussion is clear. 
\end{defn}
Let $(V, \psi ),\psi  = ({x^\mu },{y^{{\mu _1} \cdots {\mu _k}}})$, 
$\mu ,{\mu _1}, \cdots ,{\mu _k} = 1, \cdots ,n$ be a chart on ${\Lambda ^k}TM$, 
then the local expression of the above condition can be written as 
\begin{align}
K({x^\mu },\lambda {y^{{\mu _1} \cdots {\mu _k}}}) = 
\lambda K({x^\mu },{y^{{\mu _1} \cdots {\mu _k}}}), \quad \lambda  > 0.  \label{cond-k-hom}
\end{align}

The condition (\ref{cond-k-hom}) implies the following, 
\begin{align}
\frac{1}{{k!}}\frac{{\partial K}}{{\partial {y^{{\mu _1} \cdots {\mu _k}}}}}{y^{{\mu _1} \cdots {\mu _k}}} = K,
\label{1st_k_Euler}
\end{align}
Which corresponds to the Euler's homogeneous theorem. 

%%%%%%%%%%%%%%%%%%%%%%%%%%%%%%%%%%%%%%%%%%%%%%%
\subsection{Parameterisation invariant $k$-area of Kawaguchi geometry} \label{subsec_paraminv_k_kawaguchi}
%%%%%%%%%%%%%%%%%%%%%%%%%%%%%%%%%%%%%%%%%%%%%%%
In this section we will define the object $k$-curves, $k$-patchs, 
their parameterisations and Kawaguchi area. 
Kawaguchi area is invariant with respect to the reparameterisation we describe in the following, 
and the notion is naturally extended to $k$-dimensional immersed submanifolds. 
 Similarly as in the case of Finsler length and Finsler-Kawaguchi length, 
the reparameterisation invariance of Kawaguchi area is due to the homogeneity of $k$-areal Kawaguchi function. 

\begin{figure}
  \centering
  \includegraphics[width=5cm]{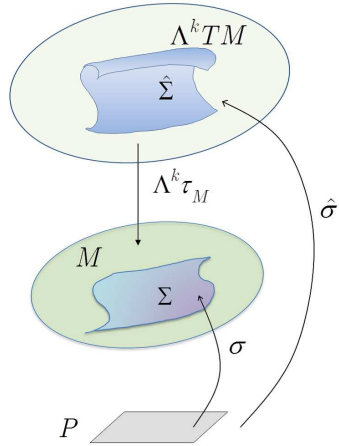}
  \caption{lift of parameterisation for Kawaguchi area}
\end{figure}

\begin{defn} ${C^r}$-$k$-curves \label{def_cr_kcurve}\\
Let $M$ be a smooth manifold, and $\sigma :P \to M$ a ${C^r}$-mapping, 
where $P$ is an open rectangle of ${\mathbb{R}^k}$. 
We denote the image of $P$ by $\Sigma $; $\Sigma : = \sigma (P) \subset M$, 
and call $\Sigma $, the {\it ${C^r}$-$k$-curve} on $M$. 
\end{defn}

\begin{defn} Lift of ${C^r}$-$k$-curves \label{def_lift_curve}\\
Consider a $k$-multivector bundle $({\Lambda ^k}TM,{\Lambda ^k}{\tau _M},M)$ where ${\Lambda ^k}{\tau _M}$ is the natural projection,  
and $(V, \psi )$, $\psi =({{x}^{\mu }},{{y}^{{{\mu }_{1}} \cdots {{\mu }_{k}}}})$, 
$\mu ,{\mu _1}, \cdots ,{\mu _k} = 1, \cdots ,n$ be the induced chart on ${\Lambda ^k}TM$, 
and $({t^1},{t^2}, \cdots ,{t^k})$ the global chart of ${\mathbb{R}^k}$. 
Let $\sigma :P \to M$ be a ${C^r}$-mapping where $P$ 
is an open rectangle of ${\mathbb{R}^k}$. 
Consider a map $\hat \sigma :P \to {\Lambda ^k}TM$, such that its image is denoted by $\hat \Sigma : = \hat \sigma (P) \subset {\Lambda ^k}TM$, 
where ${{\Lambda }^{k}}{{\tau }_{M}}(\hat{\Sigma })=\Sigma $,  
and its coordinate expressions given by 
\begin{align}
\hat{\sigma }(t):={{T}_{t}}\sigma \left( \frac{\partial }{\partial {{t}^{1}}} \right)\wedge 
\cdots \wedge {{T}_{t}}\sigma \left( \frac{\partial }{\partial {{t}^{k}}} \right)
={{\left. \frac{\partial ({{x}^{{{\mu }_{1}}}}\circ \sigma )}{\partial {{t}^{1}}} \right|}_{t}}{{\left. \cdots \frac{\partial ({{x}^{{{\mu }_{k}}}}\circ \sigma )}{\partial {{t}^{k}}} \right|}_{t}}{{\left( \frac{\partial }{\partial {{x}^{{{\mu }_{1}}}}}\wedge \cdots \wedge \frac{\partial }{\partial {{x}^{{{\mu }_{k}}}}} \right)}_{\sigma (t)}}
\label{lift_1st_k_param}
\end{align} 
for $t \in P$. 
The map $\hat \sigma $ and its image $\hat \Sigma $ is called the 
{\it lift} or the {\it multi-tangent lift } of $\sigma $ (resp. $\Sigma $). 
\end{defn}
In coordinate charts, (\ref {lift_1st_k_param}) is expressed as 
\begin{align}
({{x}^{\mu }}\circ \hat{\sigma })(t)=({{x}^{\mu }}\circ \sigma )(t), 
({{y}^{{{\mu }_{1}}\cdots {{\mu }_{k}}}}\circ \hat{\sigma })(t)
={{\left. \frac{\partial ({{x}^{[{{\mu }_{1}}}}\circ \sigma )}{\partial {{t}^{1}}} 
\right|}_{t}}\cdots {{\left. \frac{\partial ({{x}^{{{\mu }_{k}}]}}
\circ \sigma )}{\partial {{t}^{k}}} \right|}_{t}}. \label{lift_1st_k_param_coord}
\end{align}

\begin{defn} Regularity of $\sigma $ \\
The ${C^r}$-map $\sigma $ is called {\it regular}, 
if its lift $\hat \sigma $ is nowhere $0$. 
\end{defn}

\begin{defn} Parameterisation of an immersed curve \\
The ${C^r}$-map $\sigma :P \to M$is called an {\it immersion}, 
if its lift $\hat \sigma $ is injective,  
and the image $\Sigma $ is called an {\it immersed $k$-curve}. 
The map $\sigma $ is called a {\it parameterisation} of immersed $k$-curve $C$, 
and $P$ is called a {\it parameter space}, when $\sigma $ is an immersion, 
and preserves orientation.
\end{defn}

\begin{defn} Parameterisation \label{def_parameterisation} \\
Let $\sigma :P \to M$ be a ${C^r}$-map, and $\Sigma  = \sigma (P)$. 
The map $\sigma $ is called a {\it parameterisation} of $\Sigma $, 
and $P$ is called a {\it parameter space}, when $\sigma $ is injective, 
and preserves orientation. 
\end{defn}

\begin{defn} Lift of a parameterisation \label{def_liftedparameterisation_k} \\
We call $\hat \sigma $ the {\it lift of parameterisation} $\sigma $, 
when $\sigma $ is a parameterisation. 
\end{defn} 

Given a $k$-curve $\Sigma $ on $M$, more than one map and open rectangle may exist, 
namely for cases such as $\Sigma  = \sigma (P) = \rho (Q)$, 
where $\sigma ,\,\,\rho $ are the maps, 
and $P, Q$ are the open rectangles in ${\mathbb{R}^k}$. 
We can classify the $k$-curves by considering the properties of these maps. 

\begin{defn} Regular $k$-curve \\ 
Let $\Sigma $ be a ${C^r}$-$k$-curve on $M$. 
$\Sigma $ is called a {\it regular $k$-curve} on $M$, 
if there exists a regular ${C^r}$ map $\sigma $ 
and an open rectangle $P$ of ${\mathbb{R}^k}$ such that 
$\sigma (P) = \Sigma $. 
\end{defn}

\begin{defn} Parameterisable immersed $k$-curve \\ 
Let $\Sigma $ be a ${C^r}$-$k$-curve on $M$. 
$\Sigma $ is called a {\it parameterisable immersed $k$-curve} on $M$, 
if there exists an immersion $\sigma $ 
and an open rectangle $P$ of ${\mathbb{R}^k}$ such that 
$\sigma (P) = \Sigma $.  
\end{defn}

\begin{defn} Parameterisable curve \\ 
Let $\Sigma $ be a ${C^r}$-$k$-curve on $M$. 
$\Sigma $ is called a {\it parameterisable $k$-curve} on $M$, 
if there exists an injective ${C^r}$-map $\sigma $ 
and an open rectangle $P$ of  ${\mathbb{R}^k}$ such that 
$\sigma (P) = \Sigma $.  
\end{defn}

Occasionally, we implicitly refer to the pair $(\sigma, P)$ by the parameterisation $\sigma $. 
In the following discussion, we will only consider regular, parameterisable $k$-curves. 

Now we will introduce the concept of a $k$-dimensional area of a $k$-curve 
by integration on the parameter space. 
For simplicity, we will restrict ourselves to curves that are parameterisable, 
and consider its closed subset, which we define below. 

\begin{defn} $k$-patch \label{def_kpatch} \\ 
Let $\tilde \Sigma $ be a parameterisable $k$-curve on $M$ with some parameterisation 
$\sigma $.
A subset of $\tilde \Sigma $ given by 
$\Sigma : = \sigma ([t_i^1,t_f^1] \times [t_i^2,t_f^2] \times  
\cdots  \times [t_i^k,t_f^k]) \subset \tilde \Sigma $, 
where $[t_i^1,t_f^1] \times [t_i^2,t_f^2] \times  \cdots  \times [t_i^k,t_f^k] \subset P$ 
is called the {\it $k$-patch} on $M$, 
$\sigma $ is called the {\it parameterisation of the $k$-patch}
and the closed rectangle $[t_i^1,t_f^1] \times [t_i^2,t_f^2] \times  \cdots  \times [t_i^k,t_f^k]$ 
is called the {\it parameter space of the $k$-patch}. 
\end{defn}

The $k$-areal Kawaguchi function defines a geometrical area for a $k$-patch $\Sigma$ on $M$. 

\begin{defn} Kawaguchi $k$-area  \\
Let $(M, K)$ be the $n$-dimensional $k$-areal Kawaguchi manifold, 
and $\Sigma$ the $k$-patch on $M$ such that 
$\Sigma =\sigma ([t_{i}^{1},t_{f}^{1}]\times [t_{i}^{2},t_{f}^{2}]\times \cdots \times [t_{i}^{k},t_{f}^{k}])$. 
We assign to $\Sigma$ the following integral  
\begin{align}
{{l}^{K}}(\Sigma )=\int_{t_{i}^{1}}^{t_{f}^{1}}{d{{t}^{1}}
\int_{t_{i}^{2}}^{t_{f}^{2}}{d{{t}^{2}}\cdots \int_{t_{i}^{k}}^{t_{f}^{k}}{d{{t}^{k}}K 
\left( \hat{\sigma }(t) \right)}}}.  \label{def_KawaguchiArea_1st}
\end{align}
We call this number ${{l}^{K}}(\Sigma)$ the {\it Kawaguchi area} or {\it Kawaguchi $k$-area} of $\Sigma$.
\end{defn}

Let $(V,\psi )$, $\psi =({{x}^{\mu }}, {{y}^{{{\mu }_{1}}\cdots {{\mu }_{k}}}})$, 
$\mu ,{{\mu }_{1}},\cdots ,{{\mu }_{k}}=1,\cdots ,n$ 
be the induced chart on ${{\Lambda }^{k}}TM$.
By chart expression, (\ref{def_KawaguchiArea_1st}) is, 
\begin{align}
& {{l}^{K}}(\Sigma )=\int_{t_{i}^{1}}^{t_{f}^{1}}{d{{t}^{1}}
\int_{t_{i}^{2}}^{t_{f}^{2}}{d{{t}^{2}}\cdots \int_{t_{i}^{k}}^{t_{f}^{k}}{d{{t}^{k}}
K\left( {{x}^{\mu }}
(\hat{\sigma }(t)),{{y}^{{{\mu }_{1}}\cdots {{\mu }_{k}}}}(\hat{\sigma }(t)) \right)}}} \nonumber \\ 
& =\int_{t_{i}^{1}}^{t_{f}^{1}}{d{{t}^{1}}\int_{t_{i}^{2}}^{t_{f}^{2}}{d{{t}^{2}}
\cdots \int_{t_{i}^{k}}^{t_{f}^{k}}{d{{t}^{k}}K\left( {{x}^{\mu }}
(\sigma (t)),\frac{\partial ({{x}^{[{{\mu }_{1}}}}(\sigma (t)))}
{\partial {{t}^{1}}}\cdots 
\frac{\partial ({{x}^{{{\mu }_{k}}]}}(\sigma (t)))}{\partial {{t}^{k}}} \right)}}},  \label{KawaguchiArea}
\end{align} 
where we used the definition of $\hat{\sigma }$, 
and definition of induced coordinates of $\Lambda^k TM$ (\ref{lift_1st_k_param_coord}).  

Let $\rho :Q\to \Sigma $, $Q\subset {{\mathbb{R}}^{k}}$ be another parameterisation of $\Sigma $. 
When there exists a diffeomorphism $\phi :Q\to P$ such that $\rho =\sigma \circ \phi $, 
this gives an equivalence relation $\sigma \sim \rho $. 
As in the case of $1$-dimensional parameter space, in Kawaguchi geometry of 
$k$-dimensional parameter space, the Kawaguchi area defined above has the important property of 
reparameterisation invariance. 

\begin{lemma} Reparameterisation invariance of Kawaguchi $k$-area \label{lem_repinv_kawaguchi_1st}
The Kawaguchi $k$-area does not change by the reparameterisation 
$\rho  = \sigma  \circ \phi $, where $\phi :Q \to P$ is a diffeomorphism, and preserves the orientation. 
\end{lemma}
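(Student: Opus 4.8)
The plan is to mirror the strategy used for the first-order Finsler length (Lemma~\ref{lem_repinv_Finsler}) and for the second-order Finsler-Kawaguchi length (Lemma~\ref{lem_repinv_FinslerKawaguchi}): compute how the lifted parameterisation transforms under $\rho = \sigma \circ \phi$, then feed this into the homogeneity condition and the change-of-variables formula for multiple integrals. The two ingredients I expect to combine are (i) the transformation law for the lift $\hat\rho$ in terms of $\hat\sigma$ and the Jacobian of $\phi$, and (ii) the homogeneity condition $K(\lambda v) = \lambda K(v)$ for $\lambda > 0$ applied with $\lambda = \det\left(\partial\phi/\partial s\right)$.

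First I would, using the coordinate expression \eqref{lift_1st_k_param_coord}, compute the lift of $\rho$. Writing $\phi = (\phi^1,\dots,\phi^k)$ with $t^a = \phi^a(s)$, the chain rule gives
\begin{align}
\frac{\partial(x^{\mu_j}\circ\sigma\circ\phi)}{\partial s^j}
= \frac{\partial(x^{\mu_j}\circ\sigma)}{\partial t^b}\frac{\partial\phi^b}{\partial s^j},
\end{align}
so that the antisymmetrised product defining $(y^{\mu_1\cdots\mu_k}\circ\hat\rho)$ collects a factor $\det\left(\partial\phi^b/\partial s^j\right)$ times $(y^{\mu_1\cdots\mu_k}\circ\hat\sigma)\circ\phi$. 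Concretely I expect the identity
\begin{align}
\hat\rho(s) = \det\!\left(\frac{\partial\phi}{\partial s}\right) \hat\sigma(\phi(s)),
\end{align}
which is the $k$-dimensional analogue of the scalar relation $\hat\rho(s) = (d\phi/ds)\,\hat\sigma(\phi(s))$ used in the Finsler case. The antisymmetrisation over the upper indices is exactly what turns the product of partials into the determinant, so this step is where the multivector structure does its work.

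Next I would substitute this into the definition \eqref{def_KawaguchiArea_1st}. Since $\phi$ preserves orientation, $\det(\partial\phi/\partial s) > 0$, so the homogeneity condition applies with positive scalar $\lambda = \det(\partial\phi/\partial s)$, yielding
\begin{align}
K(\hat\rho(s)) = \det\!\left(\frac{\partial\phi}{\partial s}\right) K(\hat\sigma(\phi(s))).
\end{align}
I would then apply the standard change of variables $t = \phi(s)$ for the $k$-fold integral, whose Jacobian is precisely $\det(\partial\phi/\partial s)$; the factor produced by homogeneity cancels exactly against the inverse Jacobian, and the integration limits transform so that the integral over $\phi^{-1}([t_i^1,t_f^1]\times\cdots)$ becomes the integral over $[t_i^1,t_f^1]\times\cdots\times[t_i^k,t_f^k]$. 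This shows $l^K(\Sigma)$ computed with $\rho$ equals $l^K(\Sigma)$ computed with $\sigma$.

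The main obstacle I anticipate is purely bookkeeping rather than conceptual: correctly tracking the antisymmetrisation brackets and the factorial normalisation in \eqref{lift_1st_k_param_coord} so that the determinant factor emerges cleanly, and ensuring the orientation-preserving hypothesis is genuinely needed to drop the absolute value in the change-of-variables formula (so that $\det(\partial\phi/\partial s) = \left|\det(\partial\phi/\partial s)\right|$ and the homogeneity, which holds only for $\lambda > 0$, can be invoked). As in the earlier lemmas, I would also note at the close that this computation establishes the converse flavour too, namely that homogeneity of $K$ and reparameterisation invariance of the $k$-area are equivalent properties.
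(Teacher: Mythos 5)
Your proposal follows essentially the same route as the paper's proof: the lift transforms as $\hat\rho(s)=\varepsilon_{a_1\cdots a_k}\frac{\partial(t^{a_1}\circ\phi)}{\partial s^1}\cdots\frac{\partial(t^{a_k}\circ\phi)}{\partial s^k}\,\hat\sigma(\phi(s))$, i.e.\ by the Jacobian determinant of $\phi$, which is positive by orientation preservation, and then homogeneity of $K$ cancels against the change-of-variables Jacobian. The only detail the paper adds that you omit is the preliminary subdivision of $\bar P$ into sub-rectangles so the image lies in a single chart, which is a minor technicality.
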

\begin{proof}
Dividing the rectangle $\bar{P} := [t_{i}^{1},t_{f}^{1}]\times [t_{i}^{2},t_{f}^{2}]\times 
\cdots \times [t_{i}^{k},t_{f}^{k}]$ if necessary into smaller closed sub-rectangles, 
we can suppose without loss of generality that the set 
$\Sigma =\sigma (\bar{P})$ lies in the coordinate neighbourhood of a chart 
$(U,\varphi )$, $\varphi =({{x}^{\mu }})$.
Then the lift of $\rho $ becomes, 
\begin{align}
  \hat \rho (s) &= {\left. {\frac{{\partial ({x^{{\mu _1}}} \circ \sigma  \circ \phi )}}{{\partial {s^1}}}} 
  \right|_s}{\left. { \cdots \frac{{\partial ({x^{{\mu _k}}} \circ \sigma  \circ \phi )}}{{\partial {s^k}}}} 
  \right|_s}{\left( {\frac{\partial }{{\partial {x^{{\mu _1}}}}} \wedge  \cdots  \wedge 
  \frac{\partial }{{\partial {x^{{\mu _k}}}}}} \right)_{\sigma  \circ \phi (s)}} \hfill \nonumber \\
   &= {\left. {\frac{{\partial ({x^{{\mu _1}}} \circ \sigma )}}{{\partial {t^{{a_1}}}}}} \right|_{\phi (s)}} 
   \cdots {\left. {\frac{{\partial ({x^{{\mu _k}}} \circ \sigma )}}{{\partial {t^{{a_k}}}}}} 
   \right|_{\phi (s)}}{\left. {\frac{{\partial ({t^{{a_1}}} \circ \phi )}}{{\partial {s^1}}}} \right|_s} 
   \cdots {\left. {\frac{{\partial ({t^{{a_k}}} \circ \phi )}}{{\partial {s^k}}}} \right|_s}
   {\left( {\frac{\partial }{{\partial {x^{{\mu _1}}}}} \wedge  \cdots  \wedge 
   \frac{\partial }{{\partial {x^{{\mu _k}}}}}} \right)_{\sigma  \circ \phi (s)}} \hfill \nonumber \\
   &= {\varepsilon _{{a_1} \cdots {a_k}}}{\left. {\frac{{\partial ({t^{{a_1}}} \circ \phi )}}
   {{\partial {s^1}}}} \right|_s} \cdots {\left. {\frac{{\partial ({t^{{a_k}}} \circ \phi )}}
   {{\partial {s^k}}}} \right|_s}\hat \sigma (\phi (s)), \nonumber \\ 
\end{align}
for $ s \in Q$, ${a_1}, \cdots ,{a_k} = 1,2, \cdots , k$,  
and since $\rho $ is a regular parameterisation that preserves orientation, 
\begin{align}
{\varepsilon _{{a_1} \cdots {a_k}}}{\left. {\frac{{\partial ({t^{{a_1}}} \circ \phi )}}{{\partial {s^1}}}} 
\right|_s} \cdots {\left. {\frac{{\partial ({t^{{a_k}}} \circ \phi )}}{{\partial {s^k}}}} \right|_s} > 0.
\end{align}
The $k$-dimensional area of $\Sigma $ is preserved by
\begin{align}
& {{l}^{K}}(\Sigma )=\int_{s_{i}^{1}}^{s_{f}^{1}}{d{{s}^{1}}
\int_{s_{i}^{2}}^{s_{f}^{2}}{d{{s}^{2}}\cdots \int_{s_{i}^{k}}^{s_{f}^{k}}{d{{s}^{k}}K
\left( \hat{\rho }(s) \right)}}} \nonumber \\ 
& \quad =\int_{s_{i}^{1}}^{s_{f}^{1}} ds^1
\cdots \int_{s_{i}^{k}}^{s_{f}^{k}} d s^k K\left( {{\varepsilon }_{{{a}_{1}}
\cdots {{a}_{k}}}}{{\left. \frac{\partial ({{t}^{{{a}_{1}}}}\circ \phi )}{\partial {{s}^{1}}} \right|}_{s}}
\cdots {{\left. \frac{\partial ({{t}^{{{a}_{k}}}}\circ \phi )}{\partial {{s}^{k}}} 
\right|}_{s}}\hat{\sigma }(\phi (s)) \right) \nonumber \\ 
& \quad =\int_{{{\phi }^{-1}}(s_{i}^{1})}^{{{\phi }^{-1}}(s_{f}^{1})}
\cdots \int_{{{\phi }^{-1}}(s_{i}^{k})}^{{{\phi }^{-1}}(s_{f}^{k})} ds^1
\wedge \cdots \wedge d s^k{{\varepsilon }_{{{a}_{1}}
\cdots {{a}_{k}}}}{{\left. \frac{\partial ({{t}^{{{a}_{1}}}}\circ \phi )}{\partial {{s}^{1}}} \right|}_{s}}
\cdots {{\left. \frac{\partial ({{t}^{{{a}_{k}}}}\circ \phi )}{\partial {{s}^{k}}} \right|}_{s}}K
\left( \hat{\sigma }(\phi (s)) \right) \nonumber \\ 
& \quad =\int_{t_{i}^{1}}^{t_{f}^{1}}{\cdots 
\int_{t_{i}^{k}}^{t_{f}^{k}}{d{{t}^{1}}\wedge d t^2 \wedge 
\cdots \wedge d{{t}^{k}}K\left( \hat{\sigma }(t) \right)}} \nonumber \\
& \quad =\int_{t_{i}^{1}}^{t_{f}^{1}}{dt^1}
\cdots \int_{t_{i}^{k}}^{t_{f}^{k}}{dt^k}K
\left( \hat{\sigma }(t) \right), \label{Kawaguchi_area}
\end{align}
where $s_{i}^{1},s_{f}^{1},s_{i}^{2},s_{f}^{2}, 
\cdots, s_{i}^{k},s_{f}^{k}$ are the pre-image of the boundary points 
$t_{i}^{1},t_{f}^{1},t_{i}^{2},t_{f}^{2},\cdots,\lb[3] t_{i}^{k},t_{f}^{k}$ by $\phi$.  
For the third equality of (\ref{Kawaguchi_area}), 
we have used the pulled back homogeneity condition of $K$, 
and the definition of integration of 
$k$-form in accord to Section \ref{sec_integrationofd-forms}. 
\end{proof} 
We conclude that the homogeneity of $K$ and parameterisation invariance of Kawaguchi $k$-area 
is an equivalent property. 

\begin{remark}
As similarly in the case of a curve, the ``Kawaguchi area'' does not have the properties of a
``standard'' area, considered by Euclid or Riemannian geometry, 
since we require only homogeneity condition of the Kawaguchi function. 
For instance, when one changes the orientation of the $k$-curve, in general, 
it gives different values (not just signatures). 
Nevertheless, in our following discussion of the calculus of variations, 
we can still use this concept to obtain extremals and equations of motion. 
It is especially designed to use for the applications for extensions of field theory, 
which unifies spacetime and fields. 
Such situation appears commonly in modern theoretical physics, 
and Kawaguchi area will be a good geometrical object to consider for constructing 
viable models. 
\end{remark}

%%%%%%%%%%%%%%%%%%%%%%%%%%%%%%%%%%%%%%%%%%
\subsection{Kawaguchi $k$-form} \label{subsec_Kawaguchi_kform}
%%%%%%%%%%%%%%%%%%%%%%%%%%%%%%%%%%%%%%%%%%
	Given a $n$-dimensional $k$-areal Kawaguchi manifold $(M,K)$, 
	we can obtain an important geometrical structure, which we will call a 
	{\it Kawaguchi $k$-form}. Kawaguchi $k$-form is constructed in accord with the homogeneity 
	condition, and gives the Lagrangian of a field theory when pulled back to the parameter space, 
	namely the spacetime, by a certain parameterisation. 
	%Kawaguchi form will give the Carath{\'e}odory form by fixing the fibration. 

\begin{defn} Kawaguchi $k$-form (first order field theory) \\
Let $(V, \psi )$, $\psi  = ({x^\mu },{y^{{\mu _1} \cdots {\mu _k}}})$, 
$\mu ,{\mu _1}, \cdots ,{\mu _k} = 1, \cdots ,n$ be the induced chart on ${\Lambda ^k}TM$. 
The {\it Kawaguchi $k$-form} ${\mathcal{K}}$ is a $k$-form on ${\Lambda ^k}TM$, 
which in local coordinates are expressed by 
\begin{align}
{\mathcal{K}} = \frac{1}{{k!}}\frac{{\partial K}}{{\partial {y^{{\mu _1} 
\cdots {\mu _k}}}}}d{x^{{\mu _1}}} \wedge  \cdots  \wedge d{x^{{\mu _k}}}. \label{KawaguchiForm_1st_k}
\end{align}
This expression corresponds to the homogeneity condition (\ref{1st_k_Euler}). 
\end{defn}

\begin{pr} 
The Kawaguchi form is invariant with respect to the coordinate transformations. 
\end{pr}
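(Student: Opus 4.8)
The plan is to verify invariance the same way the smoothness of the induced charts was checked in the definition of the $k$-multivector bundle: I would take two overlapping induced charts $(V,\psi)$, $\psi=({x^\mu},{y^{{\mu _1}\cdots{\mu _k}}})$ and $(\bar V,\bar\psi)$, $\bar\psi=({\bar x^\mu},{\bar y^{{\mu _1}\cdots{\mu _k}}})$ on ${\Lambda^k}TM$ with $V\cap\bar V\neq\emptyset$, write $\mathcal{K}$ in each, and show that the two local expressions coincide on the overlap. Since a form is determined by its chart expressions together with agreement on overlaps, this establishes that (\ref{KawaguchiForm_1st_k}) defines a single global $k$-form. Throughout I would use the coordinate transformation rules already recorded for ${\Lambda^k}TM$, namely ${\bar x^\mu}={\bar x^\mu}({x^\nu})$ together with $\displaystyle{\bar y^{{\nu _1}\cdots{\nu _k}}}=\frac{{\partial{{\bar x}^{{\nu _1}}}}}{{\partial{x^{{\mu _1}}}}}\cdots\frac{{\partial{{\bar x}^{{\nu _k}}}}}{{\partial{x^{{\mu _k}}}}}{y^{{\mu _1}\cdots{\mu _k}}}$, and I would abbreviate the Jacobian and its inverse so that their contraction is the Kronecker delta.

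The first step is to transform the basis. Using $d{\bar x^{{\nu _j}}}=\frac{{\partial{{\bar x}^{{\nu _j}}}}}{{\partial{x^{{\rho _j}}}}}d{x^{{\rho _j}}}$ and the multilinearity/antisymmetry of the wedge product, I obtain
\begin{align}
d{\bar x^{{\nu _1}}}\wedge\cdots\wedge d{\bar x^{{\nu _k}}}
=\frac{{\partial{{\bar x}^{{\nu _1}}}}}{{\partial{x^{{\rho _1}}}}}
\cdots\frac{{\partial{{\bar x}^{{\nu _k}}}}}{{\partial{x^{{\rho _k}}}}}
d{x^{{\rho _1}}}\wedge\cdots\wedge d{x^{{\rho _k}}}.
\end{align}
The second step is to transform the coefficient. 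Because $K$ is a well-defined function on ${\Lambda^k}TM$ (chart-independent as a function of points), I apply the chain rule through the inverse fibre transformation ${y^{{\mu _1}\cdots{\mu _k}}}=\frac{{\partial{x^{{\mu _1}}}}}{{\partial{{\bar x}^{{\nu _1}}}}}\cdots\frac{{\partial{x^{{\mu _k}}}}}{{\partial{{\bar x}^{{\nu _k}}}}}{\bar y^{{\nu _1}\cdots{\nu _k}}}$, yielding
\begin{align}
\frac{{\partial K}}{{\partial{{\bar y}^{{\nu _1}\cdots{\nu _k}}}}}
=\frac{{\partial K}}{{\partial{y^{{\mu _1}\cdots{\mu _k}}}}}
\frac{{\partial{x^{{\mu _1}}}}}{{\partial{{\bar x}^{{\nu _1}}}}}
\cdots\frac{{\partial{x^{{\mu _k}}}}}{{\partial{{\bar x}^{{\nu _k}}}}}.
\end{align}
Combining the two steps inside $\bar{\mathcal{K}}=\frac{1}{k!}\frac{{\partial K}}{{\partial{{\bar y}^{{\nu _1}\cdots{\nu _k}}}}}d{\bar x^{{\nu _1}}}\wedge\cdots\wedge d{\bar x^{{\nu _k}}}$, the inverse Jacobians from the coefficient contract against the Jacobians from the basis, $\frac{{\partial{x^{{\mu _j}}}}}{{\partial{{\bar x}^{{\nu _j}}}}}\frac{{\partial{{\bar x}^{{\nu _j}}}}}{{\partial{x^{{\rho _j}}}}}=\delta^{{\mu _j}}_{{\rho _j}}$ summed over each ${\nu _j}$, collapsing $\bar{\mathcal{K}}$ back to (\ref{KawaguchiForm_1st_k}).

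The one point demanding care — and the part I expect to be the main obstacle — is the index bookkeeping in the chain rule for $\partial K/\partial{\bar y}^{{\nu _1}\cdots{\nu _k}}$ under the $\tfrac{1}{k!}$ (non-ordered) summation convention. The quantity $\partial K/\partial{y^{{\mu _1}\cdots{\mu _k}}}$ is automatically totally antisymmetric in ${\mu _1},\cdots,{\mu _k}$, because $K$ depends on the fibre coordinates only through their antisymmetric components, exactly as the coordinate functions ${y^{{\mu _1}\cdots{\mu _k}}}$ themselves are alternating. Consequently the implicit antisymmetrizations appearing when differentiating one alternating coordinate by another are absorbed without introducing spurious combinatorial factors, and the Kronecker contractions above go through cleanly; I would make this antisymmetry observation explicit to justify passing from $\frac{{\partial{y^{{\mu _1}\cdots{\mu _k}}}}}{{\partial{{\bar y}^{{\nu _1}\cdots{\nu _k}}}}}$ to the product of inverse Jacobians. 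The displayed identity in (\ref{1st_k_Euler}) is not needed for invariance itself; it only records the homogeneity that motivates the particular coefficient $\tfrac{1}{k!}\partial K/\partial y$.
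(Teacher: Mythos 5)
Your proposal is correct and follows essentially the same route as the paper's proof: transform the wedge basis $d\bar x^{\nu_1}\wedge\cdots\wedge d\bar x^{\nu_k}$ by the Jacobian, apply the chain rule to $\partial K/\partial \bar y^{\nu_1\cdots\nu_k}$ through the fibre transformation, and let the Jacobians contract against their inverses; the paper merely runs the computation in the opposite direction (unbarred to barred) and handles the antisymmetry you flag by writing $\partial\bar y^{\nu_1\cdots\nu_k}/\partial y^{\mu_1\cdots\mu_k}$ as an explicitly antisymmetrized product of Jacobians. Your explicit remark on why no spurious combinatorial factors arise is a welcome clarification of a step the paper leaves implicit.
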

\begin{proof} 
Let $\left( {\bar V, \bar \psi } \right),\bar \psi  
= \left( {{{\bar x}^\mu },{{\bar y}^{{\mu _1} \cdots {\mu _k}}}} \right)$ 
be another chart on ${\Lambda ^k}TM$ with intersection $V \cap \bar V \ne \emptyset $.
Then by the coordinate transformation 
\begin{align}
&{x^\mu } \to {\bar x^\mu } = {\bar x^\mu }({x^\nu }), \nonumber \\
&{y^{{\mu _1} \cdots {\mu _k}}} \to {\bar y^{{\mu _1} \cdots {\mu _k}}}({x^\mu },{y^{{\mu _1} \cdots {\mu _k}}}) 
= \frac{{\partial {{\bar x}^{{\mu _1}}}}}{{\partial {x^{{\nu _1}}}}} \cdots 
\frac{{\partial {{\bar x}^{{\mu _k}}}}}{{\partial {x^{{\nu _k}}}}}{y^{{\nu _1} \cdots {\nu _k}}} ,
\end{align}
we have 
\begin{align}
{\mathcal{K}} &= \frac{1}{{k!}}\frac{{\partial K}}{{\partial {y^{{\mu _1} \cdots {\mu _k}}}}}d{x^{{\mu _1}}} \wedge  
\cdots  \wedge d{x^{{\mu _k}}} = \frac{1}{{k!}}\frac{{\partial K}}{{\partial {{\bar y}^{{\nu _1} \cdots {\nu _k}}}}}
\frac{{\partial {{\bar y}^{{\nu _1} \cdots {\nu _k}}}}}{{\partial {y^{{\mu _1} 
\cdots {\mu _k}}}}}\frac{{\partial {x^{{\mu _1}}}}}{{\partial {{\bar x}^{{\nu _1}}}}} \cdots 
\frac{{\partial {x^{{\mu _k}}}}}{{\partial {{\bar x}^{{\nu _k}}}}}d{{\bar x}^{{\mu _1}}} \wedge  
\cdots  \wedge d{{\bar x}^{{\mu _k}}} \hfill \nonumber \\
&= \frac{1}{{k!}}\frac{{\partial K}}{{\partial {{\bar y}^{{\nu _1} \cdots {\nu _k}}}}} 
   \frac{{\partial {{\bar x}^{{\nu _1}}}}}{{\partial {x^{[{\mu _1}}}}} \cdots \frac{{\partial {{\bar x}^{{\nu _k}}}}}
   {{\partial {x^{{\mu _k}]}}}}\frac{{\partial {x^{{\mu _1}}}}}{{\partial {{\bar x}^{{\nu _1}}}}} \cdots 
   \frac{{\partial {x^{{\mu _k}}}}}{{\partial {{\bar x}^{{\nu _k}}}}}d{{\bar x}^{{\mu _1}}} \wedge  
   \cdots  \wedge d{{\bar x}^{{\mu _k}}} \hfill \nonumber \\
&= \frac{1}{{k!}}\frac{{\partial K}}{{\partial {{\bar y}^{{\mu _1} \cdots {\mu _k}}}}}d{{\bar x}^{{\mu _1}}} \wedge  
\cdots  \wedge d{{\bar x}^{{\mu _k}}}. \hfill 
\end{align}
\end{proof}

\begin{pr} 
Let $\mathcal{K}$ be the Kawaguchi $k$-form on ${{\Lambda }^{k}}TM$, 
$\Sigma=\sigma (\bar P)$ the $k$-patch on $M$, with 
$\bar{P}=[t_{i}^{1},t_{f}^{1}]\times [t_{i}^{2},t_{f}^{2}]\times \cdots \times [t_{i}^{k},t_{f}^{k}]$ 
a closed rectangle in $\mathbb{R}^k$. 
Then,
\begin{align}
\int_{{\hat{\Sigma }}}{\mathcal{K}}={{l}^{K}}(\Sigma ). \label{Kawaguchiform_area_id}
\end{align}
\end{pr}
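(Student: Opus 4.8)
The plan is to mirror the computation used for the Finsler--Hilbert identity (\ref{Hilbertform_length_id}) and its second-order analogue: pull the form back along the lift and collapse the result using the homogeneity of $K$. First I would invoke the definition of the integral of a $k$-form over the immersed submanifold $\hat\Sigma = \hat\sigma(\bar P)$ from Section \ref{sec_integrationofd-forms}, which reduces the left-hand side to an integral over the parameter rectangle,
\begin{align}
\int_{\hat\Sigma} \mathcal{K} = \int_{\bar P} \hat\sigma^* \mathcal{K}. \nonumber
\end{align}
Since $\hat\sigma$ is an orientation-preserving parameterisation of $\hat\Sigma$, this is exactly the integration-by-pullback prescription, so the whole argument becomes a computation of the pulled-back integrand.

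Next I would compute $\hat\sigma^*\mathcal{K}$ in the induced chart. Because the coefficient $\partial K/\partial y^{\mu_1 \cdots \mu_k}$ of $\mathcal{K}$ is a function on $\Lambda^k TM$, it pulls back simply to $(\partial K/\partial y^{\mu_1 \cdots \mu_k})\circ\hat\sigma$, while the base differentials pull back through $x^\mu\circ\hat\sigma = x^\mu\circ\sigma$ to $d(x^{\mu_i}\circ\sigma) = (\partial(x^{\mu_i}\circ\sigma)/\partial t^{a})\,dt^{a}$. Wedging these $k$ factors yields $dt^1\wedge\cdots\wedge dt^k$ with a Jacobian-determinant (equivalently an $\varepsilon$-contraction) coefficient, which by the coordinate form of the lift (\ref{lift_1st_k_param_coord}) is precisely the component $y^{\mu_1 \cdots \mu_k}\circ\hat\sigma$. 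Thus the pullback takes the form $(\partial K/\partial y^{\mu_1 \cdots \mu_k})\,y^{\mu_1 \cdots \mu_k}$ evaluated along $\hat\sigma$, multiplied by the volume element $dt^1\wedge\cdots\wedge dt^k$.

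The third step is to apply the first-order homogeneity condition in its differential (Euler) form (\ref{1st_k_Euler}), pulled back along $\hat\sigma$, namely that $\tfrac{1}{k!}(\partial K/\partial y^{\mu_1 \cdots \mu_k})\,y^{\mu_1 \cdots \mu_k}$ equals $K$. This collapses the integrand to $(K\circ\hat\sigma)\,dt^1\wedge\cdots\wedge dt^k$, and by the chart expression (\ref{KawaguchiArea}) of the Kawaguchi $k$-area the resulting multiple integral over $\bar P$ is exactly $l^K(\Sigma)$, which establishes (\ref{Kawaguchiform_area_id}).

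The main obstacle I anticipate is the bookkeeping of the factorial and antisymmetrisation factors: the form $\mathcal{K}$ carries a $1/k!$, the Euler relation (\ref{1st_k_Euler}) carries another $1/k!$, and passing from the wedge of pulled-back differentials to the lift coordinate $y^{\mu_1 \cdots \mu_k}$ produces a determinant whose normalisation must be matched against the ordered-versus-unordered summation conventions fixed earlier in the text. Ensuring that these combine consistently, so that no spurious power of $k!$ survives and the integrand reduces cleanly to $K\circ\hat\sigma$, is the delicate point; once the conventions are pinned down, everything else is the same mechanical pullback-and-Euler argument already carried out for the Hilbert form and for the second-order Finsler--Kawaguchi form.
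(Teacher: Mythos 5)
Your proposal is correct and follows essentially the same route as the paper's own proof: pull $\mathcal{K}$ back along $\hat\sigma$, identify the Jacobian coefficient of $dt^1\wedge\cdots\wedge dt^k$ with $y^{\mu_1\cdots\mu_k}\circ\hat\sigma$ via (\ref{lift_1st_k_param_coord}), and collapse the integrand to $K\circ\hat\sigma$ using the pulled-back Euler relation (\ref{1st_k_Euler}). The factorial bookkeeping you flag works out exactly as you describe, since the single $1/k!$ in $\mathcal{K}$ is precisely what the contraction $\tfrac{1}{k!}\tfrac{\partial K}{\partial y^{\mu_1\cdots\mu_k}}y^{\mu_1\cdots\mu_k}=K$ requires.
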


\begin{proof}
The simple calculation leads, 
\begin{align}
\int_{\hat \Sigma } {\mathcal{K}}  
&= \int_{\hat \sigma (\bar P)} {\frac{1}{{k!}}\frac{{\partial K}}{{\partial {y^{{\mu _1} 
\cdots {\mu _k}}}}}d{x^{{\mu _1} \cdots {\mu _k}}} = } \int_{t_i^1}^{t_f^1} 
{ \cdots \int_{t_i^k}^{t_f^k} {\frac{1}{{k!}}\frac{{\partial K}}{{\partial {y^{{\mu _1} \cdots {\mu _k}}}}} 
\circ \hat \sigma \,d({x^{{\mu _1}}} \circ \hat \sigma )}  \wedge  
\cdots  \wedge d({x^{{\mu _k}}} \circ \hat \sigma )}  \hfill \nonumber \\
&= \int_{t_i^1}^{t_f^1} { \cdots \int_{t_i^k}^{t_f^k} {\frac{1}{{k!}} 
\frac{{\partial K}}{{\partial {y^{{\mu _1} \cdots {\mu _k}}}}}\left( {{x^\mu }(\sigma (t)),
{{\left. {\frac{{\partial ({x^{{\mu _1}}} \circ \sigma )}}{{\partial {t^1}}}} \right|}_t}
{{\left. { \cdots \frac{{\partial ({x^{{\mu _k}}} \circ \sigma )}}
{{\partial {t^k}}}} \right|}_t}} \right)} } {\kern 1pt}  \hfill \nonumber \\
&   \quad \quad \quad \quad \quad \quad \quad \quad \quad \quad \quad \quad \quad  \times 
\frac{{\partial ({x^{{\mu _1}}} \circ \sigma )}}{{\partial {t^1}}} \cdots \frac{{\partial ({x^{{\mu _k}}} 
\circ \sigma )}}{{\partial {t^k}}}d{t^1} \wedge  \cdots  \wedge d{t^k} \hfill \nonumber \\
&= \int_{t_i^1}^{t_f^1} { \cdots \int_{t_i^k}^{t_f^k} {\frac{1}{{k!}}
\frac{{\partial K}}{{\partial {y^{{\mu _1} \cdots {\mu _k}}}}} \left( {{x^\mu }(\sigma (t)),{y^{{\mu _1} 
\cdots {\mu _k}}}(\hat \sigma (t))} \right){y^{{\mu _1} \cdots {\mu _k}}}(\hat \sigma (t))} } {\kern 1pt} d{t^1} \wedge  \cdots  \wedge d{t^k} \hfill \nonumber \\
&= \int_{t_i^1}^{t_f^1} {d{t^1} \cdots \int_{t_i^k}^{t_f^k} {d{t^k}K 
\left( {{x^\mu }(\sigma (t)),\frac{{\partial ({x^{[{\mu _1}}}(\sigma (t))}}{{\partial {t^1}}} 
\cdots \frac{{\partial ({x^{{\mu _k}]}}(\sigma (t))}}{{\partial {t^k}}}} \right)} } {\kern 1pt} 
= {l^K}(\Sigma ) \hfill  
\end{align}
where we used the pull-back homogeneity condition 
\begin{align}
\frac{1}{{k!}}\frac{{\partial K}}{{\partial {y^{{\mu _1} \cdots {\mu _k}}}}} \circ \hat \sigma  
\cdot {y^{{\mu _1} \cdots {\mu _k}}} \circ \hat \sigma  = K \circ \hat \sigma .
\end{align}
\end{proof}

\begin{remark}
This lemma extends the notion of Kawaguchi area given by (\ref{def_KawaguchiArea_1st}). 
Namely, since the Kawaguchi form can be integrated over {\it any} 
$k$-dimensional submanifold of $M$, the identity (\ref{Kawaguchiform_area_id}) 
suggests to extend the integration
over $k$-patches to {\it arbitrary} $k$-dimensional submanifold of $M$.
\end{remark}

\begin{remark} 
Now that we showed that Kawaguchi $k$-form gives the Kawaguchi $k$-area 
(and in a more general situation of a submanifold), we redefine the pair 
$(M,\mathcal{K})$ as the $n$-dimensional $k$-areal Kawaguchi manifold instead of the pair $(M,K)$. 
This is a more geometrical definition of a Kawaguchi manifold, similar as in the case of Finsler geometry. 
\end{remark}

%%%%%%%%%%%%%%%%%%%%%%%%%%%%%%%%%%%%%%%%%%
\section{Second order, $k$-dimensional parameter space } \label{sec_2nd_k_dim}
%%%%%%%%%%%%%%%%%%%%%%%%%%%%%%%%%%%%%%%%%%
Here in this section, combining the previous two directions of generalisation, 
we will consider the case of second order, $k$-dimensional parameter space. 
However, unlike the previous discussions, 
we will consider only the case of $M=\mathbb{R}^n$, and leave the global construction for the future work. 
In this section, $M=\mathbb{R}^n$ is assumed. 
%%%%%%%%%%%%%%%%%%%%%%%%%%%%%%%%%%%%%%%%%%
\subsection{Basic definitions of Kawaguchi space (second order $k$-multivector bundle) }
%%%%%%%%%%%%%%%%%%%%%%%%%%%%%%%%%%%%%%%%%%
We will first define the geometric structure on the total space of a second order $k$-multivector bundle 
$({({\Lambda ^k}T)^2}M,{\Lambda ^k}\tau _M^{2,0},M)$ with 
${\Lambda ^k}\tau _M^{2.0} = {\Lambda ^k}{\tau _M} \circ {\Lambda ^k}\tau _M^{2,1}$, 
${\Lambda ^k}\tau _M^{2,1}: = {\left. {{\Lambda ^k}{\tau _{{\Lambda ^k}TM}}} \right|_{{{({\Lambda ^k}T)}^2}M}}$. 
We will call this structure a second order $k$-areal Kawaguchi function. 
For visibility, the multi-index notation (see Section \ref{subsec_2nd_k_bundle}) are used.

\begin{defn}  Kawaguchi space (Second order $k$-dimensional parameter space)
Let $(M,K)$ be a pair of $n$-dimensional Cartesian space $M=\mathbb{R}^n$ and a function 
$K \in {C^\infty }({({\Lambda ^k}T)^2}M)$, $k \leqslant n$, 
which for the induced global chart 
${{\varphi }^{2}}=({{x}^{\mu }}, {{y}^{{{\mu }_{1}} \cdots {{\mu }_{k}}}}, 
{{z}^{{{I}_{1}}{{\nu }_{2}}\cdots {{\nu }_{k}}}}, \lb[3]
{{z}^{{{I}_{1}}{{I}_{2}} {{\nu }_{3}}\cdots {{\nu }_{k}}}}, \cdots , 
{{z}^{{{I}_{1}}{{I}_{2}}\cdots {{I}_{k}}}})$, $\mu ,{{\mu }_{1}}, 
\cdots , 
{{\mu }_{k}},{{\nu }_{2}},\cdots ,
{{\nu }_{k}}=1,\cdots ,n$, 
${{I}_{j}} :=\mu _{j}^{{{i}_{1}}} \cdots \mu _{j}^{{{i}_{k}}}$, 
on ${({\Lambda ^k}T)^2}M$ satisfies the following {\it second order homogeneity condition}, 
\begin{align}
&K \left( x^\mu , \lambda {y^{\mu _1 \cdots \mu _k}}, 
 (\lambda)^2 z^{{I_1}{\nu_2 \cdots \nu_k}} 
+ \lambda^{\nu_2 \cdots \nu_k} y^{I_1} , {(\lambda)^3}{z^{{I_1}{I_2}{\nu _3} \cdots {\nu _k}}} 
+ \lambda^{I_2 \nu_3 \cdots \nu_k} y^{I_1} , \right. \nonumber \\
& \hspace{7cm} \dots , \left.
 (\lambda)^{k+1} z^{{I_1}{I_2} \cdots {I_k}} 
+ \lambda^{I_2 \cdots I_k} y^{I_1} \right) \hfill \nonumber \\
& \quad \mathsmaller{= \lambda K({x^\mu },{y^{{\mu _1} \cdots {\mu _k}}},{z^{{I_1}{\nu _2} \cdots {\nu _k}}},
{z^{{I_1}{I_2}{\nu _3} \cdots {\nu _k}}}, \dots ,{z^{{I_1}{I_2} \cdots {I_k}}})}, \hfill   
\label{cond-2nd-k-hom}
\end{align}
for $\lambda  > 0$, and ${\lambda ^{{\nu _2} \cdots {\nu _k}}}, 
{\lambda ^{I_2 {\nu _3} \cdots {\nu _k}}}, \dots ,{\lambda ^{I_1 \cdots I_k}}$ being arbitrary constants. 
We will call the function with such properties, a {\it second order $k$-areal Kawaguchi function} on $M=\mathbb{R}^n$, 
and the pair $(M,K)$ a {\it second order $n$-dimensional $k$-areal Kawaguchi space}, 
or simply {\it second order Kawaguchi space}, if the subject of discussion is clear.
\end{defn}
As in the case of second order Finsler-Kawaguchi geometry, ${({\Lambda ^k}T)^2}M$ is not a vector space. 
Therefore, we have only second order homogeneity conditions in chart expressions.
This condition (\ref{cond-2nd-k-hom}) implies the following conditions, 
\begin{align}
\left\{ \begin{array}{l}
  \displaystyle{\frac{{\partial K}}{{\partial {y^{{I_1}}}}}{y^{{I_1}}} 
  + \frac{2}{{(k - 1)!}}\frac{{\partial K}}{{\partial {z^{{I_1}{\nu _2} \cdots {\nu _k}}}}}{z^{{I_1}{\nu _2} 
  \cdots {\nu _k}}} 
  + \frac{3}{{2!(k - 2)!}}\frac{{\partial K}}{{\partial {z^{{I_1}{I_2}{\nu _3} 
  \cdots {\nu _k}}}}}{z^{{I_1}{I_2}{\nu _3} \cdots {\nu _k}}} }  \hfill \\
  \displaystyle{\quad  + \cdots  + \frac{k+1}{{k!}}\frac{{\partial K}}{{\partial {z^{{I_1}{I_2} \cdots {I_k}}}}}{z^{{I_1}{I_2} \cdots {I_k}}} = K,} \hfill \\
  \displaystyle{\frac{{\partial K}}{{\partial {z^{{I_1}{\nu _2} \cdots {\nu _k}}}}}{y^{{I_1}}} = 
  \frac{{\partial K}}{{\partial {z^{{I_1}{I_2}{\nu _3} \cdots {\nu _k}}}}}{y^{{I_1}}} 
  =  \cdots  = \frac{{\partial K}}{{\partial {z^{{I_1}{I_2} \cdots {I_k}}}}}{y^{{I_1}}} = 0.} \hfill  
\end{array}  \right. \label{2nd_k_Euler}
\end{align}
%The last row is simply an identity, by the symmetric and anti-symmetric property of indices. 
However, this expression is not coordinate invariant, and therefore, 
we cannot deduce the corresponding Kawaguchi $k$-form of the second order by just these expressions. 
In this text we will only consider the case of $M=\mathbb{R}^n$. 

%%%%%%%%%%%%%%%%%%%%%%%%%%%%%%%%%%%%%%%%%%
\subsection{Parameterisation invariant $k$-area of second order \\Kawaguchi geometry} \label{subsec_paraminv_k_2_kawaguchi}
%%%%%%%%%%%%%%%%%%%%%%%%%%%%%%%%%%%%%%%%%%
In this section we will define the $k$-dimensional area, 
which we will call the Kawaguchi area of second order. 
This area is invariant with respect to reparameterisation. 
Similarly as in the previous cases, this is due to the homogeneity of second order $k$-areal Kawaguchi function. 
We will begin by introducing the second order lift of parameterisation. 
In this section, $M=\mathbb{R}^n$ is assumed.
\begin{figure}
  \centering
  \includegraphics[width=5cm]{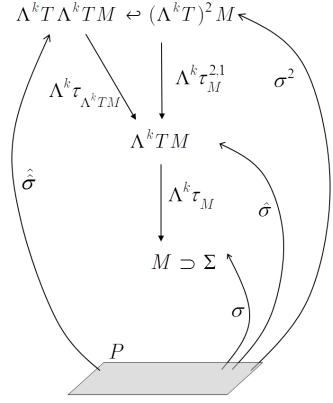}
  \caption{lift of parameterisation for Kawaguchi area}
\end{figure}
\begin{defn} Second order lift of parameterisation \label{def_liftedparameterisation_k_2nd}\\
Consider a second order $k$-multivector bundle $({({\Lambda ^k}T)^2}M,{\Lambda ^k}\tau _M^{2,0},M)$ 
defined in Section \ref{subsec_2nd_k_bundle}, where $M=\mathbb{R}^n$ 
and the induced global chart 
${{\varphi }^{2}}=({{x}^{\mu }}, {{y}^{{{\mu }_{1}} \cdots {{\mu }_{k}}}}, 
{{z}^{{{I}_{1}}{{\nu }_{2}}\cdots {{\nu }_{k}}}}, \lb[3]
{{z}^{{{I}_{1}}{{I}_{2}} {{\nu }_{3}}\cdots {{\nu }_{k}}}}, \cdots , 
{{z}^{{{I}_{1}}{{I}_{2}}\cdots {{I}_{k}}}})$, $\mu ,{{\mu }_{1}}, 
\cdots , 
{{\mu }_{k}},{{\nu }_{2}},\cdots ,
{{\nu }_{k}}=1,\cdots ,n$, 
${{I}_{j}} :=\mu _{j}^{{{i}_{1}}} \cdots \mu _{j}^{{{i}_{k}}}$, 
on ${({\Lambda ^k}T)^2}M$. 
Let $\sigma $ be a parameterisation of $\Sigma$, namely $\Sigma =\sigma(P)$ 
defined in Section \ref{subsec_paraminv_k_kawaguchi}, where $P$ is an open $k$-rectangle.  
We call the map ${\sigma ^2}:P \to {\Sigma ^2} \subset {({\Lambda ^k}T)^2}M$, 
such that its coordinate expression is given by 
\begin{align}
&{\sigma ^2}(t) \!=\! {\left. {\frac{{\partial ({x^{{\mu _1}}} \circ \sigma )}}{{\partial {t^1}}}} 
\right|_t}{\left. { \! \cdots \frac{{\partial ({x^{{\mu _k}}} \circ \sigma )}}{{\partial {t^k}}}} \right|_t}
{\left( {\frac{\partial }{{\partial {x^{{\mu _1}}}}} \wedge  \cdots  \wedge \frac{\partial }
{{\partial {x^{{\mu _k}}}}}} \right)_{\hat \sigma (t)}} \hfill \nonumber \\
&\quad + {\varepsilon ^{{a_1} \cdots {a_k}}}{\left. {\frac{{\partial ({y^{{I_1}}} \circ \hat \sigma )}}
   {{\partial {t^{{a_1}}}}}} \right|_t}{\left. {{{\left. {\frac{{\partial ({x^{{\mu _2}}} \circ \sigma )}}
   {{\partial {t^{{a_2}}}}}} \right|}_t} \cdots \frac{{\partial ({x^{{\mu _k}}} \circ \sigma )}}
   {{\partial {t^{{a_k}}}}}} \right|_t}{\left( {\frac{\partial }{{\partial {y^{{I_1}}}}} \wedge 
   \frac{\partial }{{\partial {x^{{\mu _2}}}}} \wedge  \cdots  \wedge \frac{\partial }
   {{\partial {x^{{\mu _k}}}}}} \right)_{\hat \sigma (t)}} \hfill \nonumber \\
&\quad + {\varepsilon ^{{a_1} \cdots {a_k}}}{\left. {\frac{{\partial ({y^{{I_1}}} \circ \hat \sigma )}}
   {{\partial {t^{{a_1}}}}}} \right|_t}{\left. {\frac{{\partial ({y^{{I_2}}} \circ \hat \sigma )}}
   {{\partial {t^{{a_2}}}}}} \right|_t}{\left. {{{\left. {\frac{{\partial ({x^{{\mu _3}}} \circ \sigma )}}
   {{\partial {t^{{a_3}}}}}} \right|}_t} \cdots \frac{{\partial ({x^{{\mu _k}}} \circ \sigma )}}
   {{\partial {t^{{a_k}}}}}} \right|_t}{\left( {\frac{\partial }{{\partial {y^{{I_1}}}}} \wedge 
   \frac{\partial }{{\partial {y^{{I_2}}}}} \wedge \frac{\partial }{{\partial {x^{{\mu _3}}}}} \wedge  
   \cdots  \wedge \frac{\partial }{{\partial {x^{{\mu _k}}}}}} \right)_{\hat \sigma (t)}} \hfill \nonumber \\
&\quad  + \, \cdots  + {\varepsilon ^{{a_1} \cdots {a_k}}}{\left. {\frac{{\partial ({y^{{I_1}}} \circ \hat \sigma )}}
   {{\partial {t^{{a_1}}}}}} \right|_t}{\left. {\frac{{\partial ({y^{{I_2}}} \circ \hat \sigma )}}
   {{\partial {t^{{a_2}}}}}} \right|_t}{\left. { \cdots \frac{{\partial ({y^{{I_k}}} \circ \hat \sigma )}}
   {{\partial {t^{{a_k}}}}}} \right|_t}{\left( {\frac{\partial }{{\partial {y^{{I_1}}}}} \wedge \frac{\partial }
   {{\partial {y^{{I_2}}}}} \wedge  \cdots  \wedge \frac{\partial }{{\partial {y^{{I_k}}}}}} 
   \right)_{\hat \sigma (t)}}, \hfill \nonumber \\
&\frac{{\partial ({y^{{I_j}}} \circ \hat \sigma )}}{{\partial {t^a}}} 
 = \frac{\partial }{{\partial {t^a}}}\left( {\frac{{\partial ({x^{[\mu _1^j}} \circ \sigma )}}{{\partial {t^1}}} 
 \cdots \frac{{\partial ({x^{\mu _k^j]}} \circ \sigma )}}{{\partial {t^k}}}} \right),\;\;\;a,{a_1}, 
 \cdots ,{a_k},{b_1}, \cdots ,{b_k} = 1, \cdots ,k, \hfill 
\label{2nd_order_k_parameterisation}
\end{align}
the {\it second order lift of parameterisation $\sigma $} to ${({\Lambda ^k}T)^2}M$. 
The image ${{\Sigma}^{2}}={{\sigma }^{2}}(P)$ is called the {\it second order lift of $\Sigma$}. 
\end{defn}
The above second order lift of parameterisation $\sigma $ is constructed similarly as in the case of 
$1$-dimensional parameter space~(Section \ref{subsec_paraminv_2nd_mech}), 
by considering the subset of iterated tangent lift. 
Namely, we first construct the tangent lift $\hat{ \hat{ \sigma}} : P \to {\Lambda ^k}T{\Lambda ^k}TM$ 
of the parameterisation $\hat \sigma :P \to {\Lambda ^k}TM$, and then take its subset by 
${\sigma ^2} := \{ \hat{\hat{\sigma}} |{\Lambda ^k}{T_{\hat \sigma (t)}}{\Lambda ^k}{\tau _M}(\hat{ \hat{ \sigma}} (t)) = {\Lambda ^k}{\tau _{TM}}(\hat{ \hat{ \sigma}} (t)),t \in P\} $.  \label{cond_2_k_bundle}
The iterated tangent lift $\hat{ \hat{ \sigma}} (t)$ has the coordinate expressions, 
\begin{align}
&\hat{\hat{\sigma}} (t) = {{T}_{t}}\hat{\sigma }\left( \frac{\partial }{\partial {{t}^{1}}} \right)\wedge \cdots \wedge {{T}_{t}}\hat{\sigma }\left( \frac{\partial }{\partial {{t}^{k}}} \right) 
= {\left. {\frac{{\partial ({x^{{\mu _1}}} \circ \hat \sigma )}}
{{\partial {t^1}}}} \right|_t}{\left. { \cdots \frac{{\partial ({x^{{\mu _k}}} \circ \hat \sigma )}}
{{\partial {t^k}}}} \right|_t}{\left( {\frac{\partial }{{\partial {x^{{\mu _1}}}}} \wedge  \cdots  \wedge 
\frac{\partial }{{\partial {x^{{\mu _k}}}}}} \right)_{\hat \sigma (t)}} \hfill \nonumber \\
&\quad    + {\varepsilon ^{{a_1} \cdots {a_k}}}{\left. {\frac{{\partial ({y^{{I_1}}} 
\circ \hat \sigma )}}{{\partial {t^{{a_1}}}}}} \right|_t}{\left. {{{\left. {\frac{{\partial ({x^{{\mu _2}}} 
\circ \hat \sigma )}}{{\partial {t^{{a_2}}}}}} \right|}_t} \cdots \frac{{\partial ({x^{{\mu _k}}} 
\circ \hat \sigma )}}{{\partial {t^{{a_k}}}}}} \right|_t}{\left( {\frac{\partial }{{\partial {y^{{I_1}}}}} \wedge 
\frac{\partial }{{\partial {x^{{\mu _2}}}}} \wedge  \cdots  \wedge \frac{\partial }{{\partial {x^{{\mu _k}}}}}} 
\right)_{\hat \sigma (t)}} \hfill \nonumber \\
&\quad    + {\varepsilon ^{{a_1} \cdots {a_k}}}{\left. 
{\frac{{\partial ({y^{{I_1}}} \circ \hat \sigma )}}{{\partial {t^{{a_1}}}}}} \right|_t}{\left. 
{\frac{{\partial ({y^{{I_2}}} \circ \hat \sigma )}}{{\partial {t^{{a_2}}}}}} \right|_t}{\left. 
{{{\left. {\frac{{\partial ({x^{{\mu _3}}} \circ \hat \sigma )}}{{\partial {t^{{a_3}}}}}} \right|}_t} 
\cdots \frac{{\partial ({x^{{\mu _k}}} \circ \hat \sigma )}}{{\partial {t^{{a_k}}}}}} \right|_t}
{\left( {\frac{\partial }{{\partial {y^{{I_1}}}}} \wedge \frac{\partial }{{\partial {y^{{I_2}}}}} \wedge
 \frac{\partial }{{\partial {x^{{\mu _3}}}}} \wedge  \cdots  \wedge \frac{\partial }{{\partial {x^{{\mu _k}}}}}} 
 \right)_{\hat \sigma (t)}} \hfill \nonumber \\
&\quad    +  \cdots  + {\varepsilon ^{{a_1} \cdots {a_k}}}{\left. {\frac{{\partial ({y^{{I_1}}} 
\circ \hat \sigma )}}{{\partial {t^{{a_1}}}}}} \right|_t}{\left. {\frac{{\partial ({y^{{I_2}}} 
\circ \hat \sigma )}}{{\partial {t^{{a_2}}}}}} \right|_t}{\left. { \cdots \frac{{\partial ({y^{{I_k}}} 
\circ \hat \sigma )}}{{\partial {t^{{a_k}}}}}} \right|_t}{\left( {\frac{\partial }{{\partial {y^{{I_1}}}}} \wedge 
\frac{\partial }{{\partial {y^{{I_2}}}}} \wedge  \cdots  \wedge \frac{\partial }{{\partial {y^{{I_k}}}}}} 
\right)_{\hat \sigma (t)}}, \hfill 
\end{align}
and the condition for $\hat{ \hat{ \sigma}} (t)$ to be in ${({\Lambda ^k}T)^2}M$ 
will give us the coordinates of ${\sigma ^2}(t)$, 
\begin{align}
&({x^\mu } \circ {\sigma ^2})(t) = ({x^\mu } \circ \hat \sigma )(t) = ({x^\mu } \circ \sigma )(t), \hfill \nonumber \\
& ({y^{{\mu _1} \cdots {\mu _k}}} \circ {\sigma ^2})(t) = 
{\left. {\frac{{\partial ({x^{[{\mu _1}}} \circ \hat \sigma )}}
{{\partial {t^1}}}} \right|_t} \cdots {\left. {\frac{{\partial ({x^{{\mu _k}]}} \circ \hat \sigma )}}
{{\partial {t^k}}}} \right|_t} = {\left. {\frac{{\partial ({x^{[{\mu _1}}} \circ \sigma )}}
{{\partial {t^1}}}} \right|_t}{\left. { \cdots \frac{{\partial ({x^{{\mu _k}]}} \circ \sigma )}}
{{\partial {t^k}}}} \right|_t} \nonumber \\
&\quad = ({y^{{\mu _1} \cdots {\mu _k}}} \circ \hat \sigma )(t), \hfill \nonumber \\
& ({z^{{I_1}{\nu _2} \cdots {\nu _k}}} \circ {\sigma ^2})(t) = {\varepsilon ^{{a_1} \cdots {a_k}}}
{\left. {\frac{{\partial ({y^{{I_1}}} \circ \hat \sigma )}}{{\partial {t^{{a_1}}}}}} \right|_t}
{\left. {\frac{{\partial ({x^{[{\nu _2}}} \circ \sigma )}}{{\partial {t^{{a_2}}}}}} \right|_t} 
\cdots {\left. {\frac{{\partial ({x^{{\nu _k}]}} \circ \sigma )}}{{\partial {t^{{a_k}}}}}} \right|_t}, \hfill \nonumber \\
&  \quad  = {\varepsilon ^{{a_1} \cdots {a_k}}}\frac{\partial }{{\partial {t^{{a_1}}}}}
{\left( \! {\frac{{\partial ({x^{[\mu _i^1}} \circ \sigma )}}{{\partial {t^1}}} 
\cdots \frac{{\partial ({x^{\mu _k^1]}} \circ \sigma )}}{{\partial {t^k}}}} \! \right)_{\! t}}
{\left. {\frac{{\partial ({x^{[{\nu _2}}} \circ \sigma )}}{{\partial {t^{{a_2}}}}}} \right|_t} 
\cdots {\left. {\frac{{\partial ({x^{{\nu _k}]}} \circ \sigma )}}{{\partial {t^{{a_k}}}}}} \right|_t}, \hfill \nonumber \\
&  ({z^{{I_1}{I_2}{\nu _3} \cdots {\nu _k}}} \circ {\sigma ^2})(t) \hfill \nonumber \\
& \quad = {\varepsilon ^{{a_1} \cdots {a_k}}}\frac{\partial }{{\partial {t^{{a_1}}}}}
{\left( \!{\frac{{\partial ({x^{[\mu _i^1}} 
\circ \sigma )}}{{\partial {t^1}}} \cdots \frac{{\partial ({x^{\mu _k^1]}} \circ \sigma )}}
{{\partial {t^k}}}} \! \right)_{\! t}} \nonumber  \\
&\hspace{5cm} \times \frac{\partial }{{\partial {t^{{a_2}}}}}{\left( \! {\frac{{\partial ({x^{[\mu _i^2}} 
\circ \sigma )}}{{\partial {t^1}}} \cdots \frac{{\partial ({x^{\mu _k^2]}} \circ \sigma )}}
{{\partial {t^k}}}} \! \right)_{\! t}}{\left. {\frac{{\partial ({x^{[{\nu _3}}} \circ \sigma )}}
{{\partial {t^{{a_3}}}}}} \right|_t} \cdots {\left. {\frac{{\partial ({x^{{\nu _k}]}} \circ \sigma )}}
{{\partial {t^{{a_k}}}}}} \right|_t}, \hfill \nonumber \\
& \vdots  \hfill \nonumber \\
& ({z^{{I_1}{I_2} \cdots {I_k}}} \circ {\sigma ^2})(t) 
= {\varepsilon ^{{a_1} \cdots {a_k}}}\frac{\partial }{{\partial {t^{{a_1}}}}}
{\left(\!  {\frac{{\partial ({x^{[\mu _i^1}} \circ \sigma )}}{{\partial {t^1}}} \cdots 
\frac{{\partial ({x^{\mu _k^1]}} \circ \sigma )}}{{\partial {t^k}}}} \! \right)_{\! t}} \cdots 
\frac{\partial }{{\partial {t^{{a_k}}}}}{\left(\!  {\frac{{\partial ({x^{[\mu _i^k}} \circ \sigma )}}
{{\partial {t^1}}} \cdots \frac{{\partial ({x^{\mu _k^k]}} \circ \sigma )}}{{\partial {t^k}}}} \! \right)_{\! t}}, \hfill 
\label{2nd-k-coord}
\end{align}
giving the formula (\ref{2nd_order_k_parameterisation}). 

The parameterisation $\sigma $ where its second order lift ${\sigma ^2}$ is nowhere $0$ is called a 
{\it regular parameterisation of order $2$}. Apparently, if ${\sigma ^2}$ is nowhere $0$, also the tangent lift 
$\hat \sigma $ is nowhere $0$. 
In the discussions concerning second order $k$-dimensional parameter space Kawaguchi geometry, 
we will only consider the regular parameterisation.

The $r$-th order parameterisation ${{\sigma }^{r}}:P\to {{({{\Lambda }^{k}}T)}^{r}}M$ 
can be obtained by iterative process. 
Namely, construct the lift 
$\widehat{{{(\sigma )}^{r-1}}}: P \to {{\Lambda }^{k}}T({{({{\Lambda }^{k}}T)}^{r-1}}M)$ 
of the parameterisation ${{\sigma }^{r-1}}:P\to {{({{\Lambda }^{k}}T)}^{r-1}}M$, 
and then regarding the construction on the higher-order multi tangent bundle (\ref{higherbundle_k_r}), 
take its subset by 
\begin{align}
{{\sigma }^{r}}:=\{\widehat{{{(\sigma )}^{r-1}}}|\,{{\Lambda }^{k}}{{T}_{{{\sigma }^{r-1}}(t)}}
\tau _{{{\Lambda }^{k}}TM}^{r-1,r-2}(\widehat{{{(\sigma )}^{r-1}}}(t))={{\iota }_{r-1}}
\circ {{\tau }_{{{({{\Lambda }^{k}}T)}^{r-1}}M}}(\widehat{{{(\sigma )}^{r-1}}}(t)),\,t\in P\}. \label{r_th_param}
\end{align}
\begin{defn}$r$-th order parameterisation \\
Let $\sigma $ be a parameterisation of the $k$-curve $\Sigma$ on $M$. 
The map ${{\sigma }^{r}}:P\to {{({{\Lambda }^{k}}T)}^{r}}M$ given by (\ref{r_th_param}) is called the {\it $r$-th order lift of parameterisation $\sigma $} .
\end{defn} 

The second order $k$-areal Kawaguchi function defines a geometrical area for a 
$k$-patch $\Sigma$ on $M$. 

\begin{defn} Kawaguchi $k$-area (second order)\\
Let $(M, K)$, $M=\mathbb{R}^n$ be the second order $n$-dimensional $k$-areal Kawaguchi space,
and $\Sigma$ the $k$-patch on $M$ such that 
$\Sigma =\sigma (\bar{P})$, 
$\bar{P}=[t_{i}^{1},t_{f}^{1}]\times [t_{i}^{2},t_{f}^{2}]\times \cdots \times [t_{i}^{k},t_{f}^{k}]$.  
We assign to $\Sigma$ the following integral 
\begin{align}
{l^K}(\Sigma ) = \int_{t_i^1}^{t_f^1} {d{t^1}\int_{t_i^2}^{t_f^2} {d{t^2} \cdots \int_{t_i^k}^{t_f^k} {d{t^k} 
K\left( {{\sigma ^2}(t)} \right)} } }, \quad \quad t \in R   \label{second_KawaguchiArea} 
\end{align}
We call this number ${{l}^{K}}(\Sigma)$ the (second order) {\it Kawaguchi area} 
or {\it Kawaguchi $k$-area} of $\Sigma$.
\end{defn}

Let ${{\varphi }^{2}}=({{x}^{\mu }}, {{y}^{{{\mu }_{1}} \cdots {{\mu }_{k}}}}, 
{{z}^{{{I}_{1}}{{\nu }_{2}}\cdots {{\nu }_{k}}}}, \lb[3]
{{z}^{{{I}_{1}}{{I}_{2}} {{\nu }_{3}}\cdots {{\nu }_{k}}}}, \cdots , 
{{z}^{{{I}_{1}}{{I}_{2}}\cdots {{I}_{k}}}})$, $\mu ,{{\mu }_{1}}, 
\cdots , 
{{\mu }_{k}},{{\nu }_{2}},\cdots ,
{{\nu }_{k}}=1,\cdots ,n$, 
${{I}_{j}} :=\mu _{j}^{{{i}_{1}}} \cdots \mu _{j}^{{{i}_{k}}}$, 
be the induced global chart on ${({\Lambda ^k}T)^2}M$.
Then by chart expression, this is, 
\begin{align}
{l^K}(\Sigma) \! = \!  \int_{t_i^1}^{t_f^1} \! {d{t^1} \cdots \! \int_{t_i^k}^{t_f^k} \! {d{t^k} K 
\! \left( {{x^\mu }({\sigma ^2}(t)),{y^{{\mu _1} \cdots {\mu _k}}}({\sigma ^2}(t)),{z^{{I_1}{\nu _2} 
\cdots {\nu _k}}}({\sigma ^2}(t)), \cdots ,{z^{{I_1} \cdots {I_k}}}({\sigma ^2}(t))} \right)} } 
\end{align}
%for $t \in P$, 
with the components given by (\ref{2nd-k-coord}). 
%We call this ${l^K}(\Sigma )$, (second order) {\it Kawaguchi $k$-area} of the submanifold $\Sigma $. 

Also for the second order, the Kawaguchi $k$-area defined above is reparameterisation invariant. 
\begin{lemma} Reparameterisation invariance of Kawaguchi $k$-area \label{lem_repinv_kawaguchi_2nd} \\
The second order Kawaguchi $k$-area does not change by the reparameterisation 
$\rho  = \sigma  \circ \phi $, 
where $\phi :Q \to P$ is a diffeomorphism, and preserves the orientation. 
\end{lemma}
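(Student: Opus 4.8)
The plan is to follow the template established by the proofs of Lemma \ref{lem_repinv_FinslerKawaguchi} and Lemma \ref{lem_repinv_kawaguchi_1st}: first compute the second order lift $\rho^2$ of the reparameterised map $\rho = \sigma\circ\phi$ in terms of $\sigma^2\circ\phi$ and the derivatives of $\phi$, then recognise that the resulting coordinate transformation is exactly of the form appearing in the second order homogeneity condition (\ref{cond-2nd-k-hom}), and finally combine this homogeneity with the change-of-variables formula for the integration of $k$-forms from Section \ref{sec_integrationofd-forms}.

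First I would divide the rectangle into smaller sub-rectangles if necessary so that $\Sigma = \sigma(\bar P)$ lies in a single coordinate neighbourhood, and set $J := \varepsilon_{a_1\cdots a_k}\frac{\partial(t^{a_1}\circ\phi)}{\partial s^1}\cdots\frac{\partial(t^{a_k}\circ\phi)}{\partial s^k}$, which is the Jacobian determinant of $\phi$ and is positive since $\phi$ is an orientation preserving diffeomorphism (as already used in Lemma \ref{lem_repinv_kawaguchi_1st}). The next step is to substitute $\rho = \sigma\circ\phi$ into the coordinate formulae (\ref{2nd-k-coord}) for the second order lift and expand by the chain rule and the Leibniz rule. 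I expect to obtain $(y^{\mu_1\cdots\mu_k}\circ\rho^2)(s) = J \cdot (y^{\mu_1\cdots\mu_k}\circ\sigma^2)(\phi(s))$, and that each higher coordinate $z^{I_1\cdots I_l\nu_{l+1}\cdots\nu_k}\circ\rho^2$ equals $J^{l+1}$ times $(z^{I_1\cdots I_l\nu_{l+1}\cdots\nu_k}\circ\sigma^2)(\phi(s))$ plus correction terms proportional to the lower coordinate $y^{I_1}\circ\sigma^2(\phi(s))$, with coefficients built from the first and second partial derivatives of $\phi$. Comparing this with (\ref{cond-2nd-k-hom}), these correction coefficients play the role of the arbitrary constants $\lambda^{\nu_2\cdots\nu_k}, \lambda^{I_2\nu_3\cdots\nu_k}, \dots$ while $J$ plays the role of $\lambda$.

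Having matched the transformation to (\ref{cond-2nd-k-hom}), I would apply the homogeneity condition pointwise in $s$ (at each fixed $s$ the derivative-of-$\phi$ coefficients are genuine numbers, so the algebraic identity (\ref{cond-2nd-k-hom}) applies) to conclude $K(\rho^2(s)) = J\cdot K(\sigma^2(\phi(s)))$. Substituting this into (\ref{second_KawaguchiArea}) and performing the change of variables $t = \phi(s)$, the factor $J\,ds^1\cdots ds^k$ becomes $dt^1\cdots dt^k$ exactly as in the third equality of (\ref{Kawaguchi_area}), yielding $l^K(\Sigma)=\int K(\sigma^2(t))\,dt^1\cdots dt^k$, which is the asserted invariance.

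The main obstacle will be the bookkeeping in the second step: showing that the Leibniz expansion of the second order lift coordinates (\ref{2nd-k-coord}) under $\rho = \sigma\circ\phi$ organises itself precisely into the $J^{l+1}z + (\textrm{shift})\,y$ pattern of (\ref{cond-2nd-k-hom}). This requires carefully tracking the antisymmetrisation hidden in each multi-index $I_j$ and verifying that the mixed terms generated by differentiating $J$ together with the Jacobian factors collapse onto the lower $y^{I_1}$ coordinate rather than producing genuinely new structures. The restriction $M = \mathbb{R}^n$ is what keeps this manageable, since the homogeneity condition (\ref{cond-2nd-k-hom}) is only guaranteed in the global chart and, as noted after (\ref{2nd_k_Euler}), is not coordinate invariant in general.
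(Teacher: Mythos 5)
Your proposal follows essentially the same route as the paper's own proof: compute the coordinates of $\rho^2$ via the chain and Leibniz rules, obtain the pattern $y\circ\rho^2 = \mathcal{T}\,(y\circ\sigma^2\circ\phi)$ and $z^{I_1\cdots I_l\nu_{l+1}\cdots\nu_k}\circ\rho^2 = \mathcal{T}^{l+1}(z\circ\sigma^2\circ\phi) + \beta\,(y^{I_1}\circ\sigma^2\circ\phi)$ with $\mathcal{T}>0$ the Jacobian of $\phi$, identify this pointwise with the second order homogeneity condition (\ref{cond-2nd-k-hom}), and finish by change of variables. This matches the paper's argument in structure and in all key steps, including the exponent pattern and the role of the $M=\mathbb{R}^n$ restriction.
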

\begin{proof}
By (\ref{2nd_order_k_parameterisation}), the second order lift of $\rho  = \sigma  \circ \phi $ is, 
\begin{align}
& {\rho ^2}(s) = {\left. {\frac{{\partial ({x^{{\mu _1}}} \circ \sigma  \circ \phi )}}{{\partial {s^1}}}} 
  \right|_s}{\left. { \cdots \frac{{\partial ({x^{{\mu _k}}} \circ \sigma  \circ \phi )}}{{\partial {s^k}}}} 
  \right|_s}{\left( {\frac{\partial }{{\partial {x^{{\mu _1}}}}} \wedge  \cdots  \wedge
   \frac{\partial }{{\partial {x^{{\mu _k}}}}}} \right)_{\widehat {\sigma  \circ \phi }(s)}} \hfill \nonumber \\
& \quad   + {\varepsilon ^{{a_1} \cdots {a_k}}}{\left. {\frac{{\partial ({y^{{I_1}}} \circ 
\widehat {\sigma  \circ \phi })}}{{\partial {s^{{a_1}}}}}} \right|_s}
{\left. {{{\left. {\frac{{\partial ({x^{{\mu _2}}} \circ \sigma  \circ \phi )}}
{{\partial {s^{{a_2}}}}}} \right|}_s} \cdots \frac{{\partial ({x^{{\mu _k}}} \circ \sigma  \circ \phi )}}
{{\partial {s^{{a_k}}}}}} \right|_s}{\left( {\frac{\partial }{{\partial {y^{{I_1}}}}} \wedge 
\frac{\partial }{{\partial {x^{{\mu _2}}}}} \wedge  \cdots  \wedge \frac{\partial }
{{\partial {x^{{\mu _k}}}}}} \right)_{\widehat {\sigma  \circ \phi }(s)}} \hfill \nonumber \\
& \quad + {\varepsilon ^{{a_1} \cdots {a_k}}}{\left. {\frac{{\partial ({y^{{I_1}}} \circ 
\widehat {\sigma  \circ \phi })}}{{\partial {s^{{a_1}}}}}} \right|_s}{\left. {\frac{{\partial ({y^{{I_2}}} 
\circ \widehat {\sigma  \circ \phi })}}{{\partial {s^{{a_2}}}}}} \right|_s}{\left. 
{{{\left. {\frac{{\partial ({x^{{\mu _3}}} \circ \sigma  \circ \phi )}}{{\partial {s^{{a_3}}}}}} 
\right|}_s} \cdots \frac{{\partial ({x^{{\mu _k}}} \circ \sigma  \circ \phi )}}{{\partial {s^{{a_k}}}}}} \right|_s} \nonumber \\
&\hspace{7cm} \times {\left( {\frac{\partial }{{\partial {y^{{I_1}}}}} \wedge \frac{\partial }{{\partial {y^{{I_2}}}}} 
\wedge \frac{\partial }{{\partial {x^{{\mu _3}}}}} \wedge  \cdots  \wedge \frac{\partial }
{{\partial {x^{{\mu _k}}}}}} \right)_{\widehat {\sigma  \circ \phi }(s)}} \hfill \nonumber \\
& \quad  +  \cdots  + {\varepsilon ^{{a_1} \cdots {a_k}}}{\left. 
{\frac{{\partial ({y^{{I_1}}} \circ \widehat {\sigma  \circ \phi })}}{{\partial {s^{{a_1}}}}}} \right|_s}
%{\left. {\frac{{\partial ({y^{{I_2}}} \circ \widehat {\sigma  \circ \phi })}}{{\partial {s^{{a_2}}}}}} \right|_s}
{\left. { \cdots \frac{{\partial ({y^{{I_k}}} \circ \widehat {\sigma  \circ \phi })}}{{\partial {s^{{a_k}}}}}} 
\right|_s}{\left( {\frac{\partial }{{\partial {y^{{I_1}}}}} \wedge \frac{\partial }{{\partial {y^{{I_2}}}}} \wedge  
\cdots  \wedge \frac{\partial }{{\partial {y^{{I_k}}}}}} \right)_{\widehat {\sigma  \circ \phi }(s)}},
\end{align}
with ${a_1}, \cdots ,{a_k},{b_1}, \cdots ,{b_k} = 1, \cdots ,k$. 
By the chain rule, we have relations such as 
\begin{align}
&\frac{{\partial ({x^{{\mu _1}}} \circ \sigma  \circ \phi )}}{{\partial {s^{{a_1}}}}} 
= \frac{{\partial ({t^c} \circ \phi )}}{{\partial {s^{{a_1}}}}}{\left. 
{\frac{{\partial ({x^{{\mu _1}}} \circ \sigma )}}{{\partial {t^c}}}} 
\right|_{\phi ( \cdot )}}, \nonumber \\
&\frac{{\partial ({y^{{I_j}}} \circ \widehat {\sigma  \circ \phi })}}{{\partial {s^{{a_1}}}}} 
= \mathcal{T} \frac{{\partial ({t^c} \circ \phi )}}{{\partial {s^{{a_1}}}}}
\frac{\partial }{{\partial {t^c}}}({y^{{I_j}}} \circ \hat \sigma ) 
+ \left( {\frac{\partial }{{\partial {s^{{a_1}}}}}\mathcal{T} } \right)({y^{{I_j}}} \circ \hat \sigma ), \label{paramrel_2_k_2}
\end{align}
where we put 
\begin{align}
\mathcal{T} : = {\varepsilon _{{b_1} \cdots {b_k}}}\frac{{\partial ({t^{{b_1}}} \circ \phi )}}{{\partial {s^1}}} 
\cdots \frac{{\partial ({t^{{b_k}}} \circ \phi )}}{{\partial {s^k}}} = {\varepsilon ^{{a_1} 
\cdots {a_k}}}\frac{{\partial ({t^1} \circ \phi )}}{{\partial {s^{{a_1}}}}} \cdots 
\frac{{\partial ({t^k} \circ \phi )}}{{\partial {s^{{a_k}}}}}.
\end{align}
The second relation of (\ref{paramrel_2_k_2}) follows from: 
\begin{align}
\frac{{\partial ({y^{{I_j}}} \circ \widehat {\sigma  \circ \phi })}}
  {{\partial {s^{{a_1}}}}} &= \frac{\partial }{{\partial {s^{{a_1}}}}}
  \left( {\frac{{\partial ({x^{[\mu _1^j}} \circ \sigma  \circ \phi )}}{{\partial {s^1}}} \cdots 
  \frac{{\partial ({x^{\mu _k^j]}} \circ \sigma  \circ \phi )}}{{\partial {s^k}}}} \right) \hfill \nonumber \\
&= \frac{\partial }{{\partial {s^{{a_1}}}}}\left( {\frac{{\partial ({t^{{b_1}}} \circ \phi )}}
   {{\partial {s^1}}} \cdots \frac{{\partial ({t^{{b_k}}} \circ \phi )}}{{\partial {s^k}}}
   {{\left. {\frac{{\partial ({x^{[\mu _1^j}} \circ \sigma )}}{{\partial {t^{{b_1}}}}}} \right|}_{\phi ( \cdot )}} 
   \cdots {{\left. {\frac{{\partial ({x^{\mu _k^j]}} \circ \sigma )}}
   {{\partial {t^{{b_k}}}}}} \right|}_{\phi ( \cdot )}}} \right) \hfill \nonumber \\
&= \frac{{\partial ({t^{{b_1}}} \circ \phi )}}{{\partial {s^1}}} \cdots 
   \frac{{\partial ({t^{{b_k}}} \circ \phi )}}{{\partial {s^k}}}\frac{{\partial ({t^c} \circ \phi )}}
   {{\partial {s^{{a_1}}}}}\frac{\partial }{{\partial {t^c}}}\left( {\frac{{\partial ({x^{[\mu _1^j}} 
   \circ \sigma )}}{{\partial {t^{{b_1}}}}} \cdots \frac{{\partial ({x^{\mu _k^j]}} \circ \sigma )}}
   {{\partial {t^{{b_k}}}}}} \right) \hfill \nonumber \\
&+ \frac{\partial }{{\partial {s^{a_1}}}}\left( {\frac{{\partial ({t^{{b_1}}} \circ \phi )}}
  {{\partial {s^1}}} \cdots \frac{{\partial ({t^{{b_k}}} \circ \phi )}}{{\partial {s^k}}}} \right)
  {\left. {\frac{{\partial ({x^{[\mu _1^j}} \circ \sigma )}}{{\partial {t^{{b_1}}}}}} 
  \right|_{\phi ( \cdot )}} \cdots {\left. {\frac{{\partial ({x^{\mu _k^j]}} \circ \sigma )}}
  {{\partial {t^{{b_k}}}}}} \right|_{\phi ( \cdot )}} \hfill \nonumber \\
&= \mathcal{T} \frac{{\partial ({t^c} \circ \phi )}}{{\partial {s^{{a_1}}}}}\frac{\partial }
   {{\partial {t^c}}}({y^{{I_j}}} \circ \hat \sigma ) + \left( {\frac{\partial }{{\partial {s^{{a_1}}}}}\mathcal{T} } 
   \right)({y^{{I_j}}} \circ \hat \sigma ). \hfill  
\end{align}
Considering each components of ${\rho ^2}$, and using the above relation, 
we can find the relations between the two parameterisations $\rho ,\sigma$, in its coordinate representation:  
\begin{align}
\left\{ \begin{gathered}
  ({x^\mu } \circ {\rho ^2})(s) = ({x^\mu } \circ {\sigma ^2})(\phi (s)) = ({x^\mu } \circ \sigma )(\phi (s)), 
  \hfill \\
  ({y^{{\mu _1} \cdots {\mu _k}}} \circ {\rho ^2})(s) = \mathcal{T} ({y^{{\mu _1} \cdots {\mu _k}}} 
  \circ {\sigma ^2})(\phi (s)), \hfill \\
  ({z^{{I_1}{\mu _2} \cdots {\mu _k}}} \circ {\rho ^2})(s) = {(\mathcal{T} )^2}({z^{{I_1}{\mu _2} 
  \cdots {\mu _k}}} \circ {\sigma ^2})(\phi (s)) + {\beta ^{{\mu _2} \cdots {\mu _k}}}({y^{{I_1}}} 
  \circ {\sigma ^2})(\phi (s)), \hfill \\
  \vdots \hfill  \\ 
 ({{z}^{{{I}_{1}}\cdots {{I}_{l}}{{\mu }_{l+1}}\cdots {{\mu }_{k}}}}
 \circ {{\rho }^{2}})(s)={{(\mathcal{T})}^{l+1}}({{z}^{{{I}_{1}}{{\mu }_{2}}
 \cdots {{\mu }_{k}}}}\circ {{\sigma }^{2}})(\phi (s))+{{\beta }^{{{I}_{2}}\cdots {{I}_{l}}{{\mu }_{l+1}}
 \cdots {{\mu }_{k}}}}({{y}^{{{I}_{1}}}}\circ {{\sigma }^{2}})(\phi (s)), \hfill \\ 
   \vdots  \hfill \\
 ({{z}^{{{I}_{1}}{{I}_{2}}\cdots {{I}_{k}}}}\circ {{\rho }^{2}})(s)
 ={{(\mathcal{T})}^{k+1}}({{z}^{{{I}_{1}}{{I}_{2}}\cdots {{I}_{k}}}} \circ {{\sigma }^{2}})(\phi (s))
 +{{\beta }^{{{I}_{2}}\cdots {{I}_{k}}}}({{y}^{{{I}_{1}}}}\circ {{\sigma }^{2}})(\phi (s)), \hfill \\ 
\end{gathered}  \right.
\label{paramrel_2_k_3}
\end{align}
where we set 
\begin{align}
\left\{ \begin{gathered}
  {\beta ^{{\mu _2} \cdots {\mu _k}}}: = \frac{1}{2}{\varepsilon ^{{a_1} \cdots {a_k}}}
  \frac{\partial }{{\partial {t^{{a_1}}}}}{\left( {{{(\user2{\mathcal{T}})}^2}
  \frac{{\partial ({x^{{\mu _2}}} \circ \sigma )}}{{\partial {t^{{a_2}}}}} \cdots 
  \frac{{\partial ({x^{{\mu _k}}} \circ \sigma )}}{{\partial {t^{{a_k}}}}}} \right)_{\phi ( \cdot )}}, \hfill \\
   \vdots  \hfill \\
  {\beta ^{{I_2} \cdots {I_l}{\mu _{l + 1}} \cdots {\mu _k}}}: = 
  \frac{l}{{l + 1}}{\varepsilon ^{{a_1} \cdots {a_k}}}\frac{\partial }{{\partial {t^{{a_1}}}}}
  {\left( {{{({\mathcal{T}})}^{l + 1}}\frac{{\partial ({y^{{I_2}}} \circ \hat \sigma )}}
  {{\partial {t^{{a_2}}}}} \cdots \frac{{\partial ({y^{{I_l}}} \circ \hat \sigma )}}
  {{\partial {t^{{a_l}}}}}\frac{{\partial ({x^{{\mu _{l + 1}}}} \circ \sigma )}}
  {{\partial {t^{{a_{l + 1}}}}}} \cdots \frac{{\partial ({x^{{\mu _k}}} \circ \sigma )}}
  {{\partial {t^{{a_k}}}}}} \right)_{\phi ( \cdot )}}, \hfill \\
   \vdots  \hfill \\
  {\beta ^{{I_2} \cdots {I_k}}}: = \frac{k}{{k + 1}}{\varepsilon ^{{a_1} 
  \cdots {a_k}}}\frac{\partial }{{\partial {t^{{a_1}}}}}{\left( {{{({\mathcal{T}})}^{k + 1}}
  \frac{{\partial ({y^{{I_2}}} \circ \hat \sigma )}}{{\partial {t^{{a_2}}}}} \cdots 
  \frac{{\partial ({y^{{I_k}}} \circ \hat \sigma )}}{{\partial {t^{{a_k}}}}}} \right)_{\phi ( \cdot )}}. \hfill \\ 
\end{gathered}  \right.
\end{align}
Below we will show some intermediate calculations. For example, 
the second formula of (\ref{paramrel_2_k_3}) is obtained by
\begin{align}
& \left( \! {\frac{{\partial ({x^{[{\mu _1}}} \circ \sigma  \circ \phi )}}{{\partial {s^1}}} 
  \cdots \frac{{\partial ({x^{{\mu _k}]}} \circ \sigma  \circ \phi )}}{{\partial {s^k}}}} \! \right)\!(s) 
  = \frac{{\partial ({t^{{c_1}}} \circ \phi )}}{{\partial {s^1}}} \cdots \frac{{\partial 
  ({t^{{c_k}}} \circ \phi )}}{{\partial {s^k}}}{\left. {\frac{{\partial ({x^{[{\mu _1}}} \circ \sigma )}}
  {{\partial {t^{{c_1}}}}}} \right|_{\phi (s)}} \cdots {\left. {\frac{{\partial ({x^{{\mu _k}]}} \circ \sigma )}}
  {{\partial {t^{{c_k}}}}}} \right|_{\phi (s)}} \hfill \nonumber \\
&  = \mathcal{T} ({y^{{\mu _1} \cdots {\mu _k}}} \circ \sigma )(\phi (s)). \hfill 
\end{align}
The third formula uses (\ref{paramrel_2_k_2}), namely  
\begin{align}
{\varepsilon ^{{a_1} \cdots {a_k}}}\frac{{\partial ({y^{{I_j}}} \circ \widehat {\sigma  \circ \phi })}}
{{\partial {s^{{a_1}}}}}\frac{{\partial ({x^{{\mu _2}}} \circ \sigma  \circ \phi )}}
{{\partial {s^{{a_2}}}}} \cdots \frac{{\partial ({x^{{\mu _k}}} \circ \sigma  \circ \phi )}}
{{\partial {s^{{a_k}}}}} = {(\user2{\mathcal{T}})^2}({z^{{I_j}{\mu _2} \cdots {\mu _k}}} \circ {\sigma ^2}) 
+ {\beta ^{{\mu _2} \cdots {\mu _k}}}({y^{{I_j}}} \circ {\sigma ^2}), \nonumber \\
\end{align}
The above follows from 
\begin{align}
&{\varepsilon ^{{a_1} \cdots {a_k}}}\frac{{\partial ({y^{{I_j}}} \circ \widehat {\sigma  \circ \phi })}}
{{\partial {s^{{a_1}}}}}\frac{{\partial ({x^{{\mu _2}}} \circ \sigma  \circ \phi )}}
{{\partial {s^{{a_2}}}}} \cdots \frac{{\partial ({x^{{\mu _k}}} \circ \sigma  \circ \phi )}}
{{\partial {s^{{a_k}}}}} \hfill \nonumber \\
&  = {\varepsilon ^{{a_1} \cdots {a_k}}}\left( \! {\mathcal{T} \frac{{\partial ({t^{{c_1}}} \circ \phi )}}
{{\partial {s^{{a_1}}}}}\frac{\partial }{{\partial {t^{{c_1}}}}}({y^{{I_j}}} \circ \hat \sigma ) 
+ \left( \! {\frac{\partial }{{\partial {s^{{a_1}}}}}\mathcal{T} }\!  \right)({y^{{I_j}}} \circ \hat \sigma )} \! \right) \nonumber \\
&\hspace{5cm}\times \frac{{\partial ({t^{{c_2}}} \circ \phi )}}{{\partial {s^{{a_2}}}}}{\left. {\frac{{\partial ({x^{{\mu _2}}} 
\circ \sigma )}}{{\partial {t^{{c_2}}}}}} \right|_{\phi ( \cdot )}} \cdots \frac{{\partial ({t^{{c_k}}} 
\circ \phi )}}{{\partial {s^{{a_k}}}}}{\left. {\frac{{\partial ({x^{{\mu _k}}} \circ \sigma )}}
{{\partial {t^{{c_k}}}}}} \right|_{\phi ( \cdot )}} \hfill \nonumber \\
& = \mathcal{T} {\varepsilon ^{{a_1} \cdots {a_k}}}\frac{{\partial ({y^{{I_j}}} \circ \hat \sigma )}}
{{\partial {t^{{c_1}}}}}\frac{{\partial ({t^{{c_1}}} \circ \phi )}}{{\partial {s^{{a_1}}}}} 
\frac{{\partial ({t^{{c_2}}} \circ \phi )}}{{\partial {s^{{a_2}}}}} \cdots \frac{{\partial ({t^{{c_k}}} 
\circ \phi )}}{{\partial {s^{{a_k}}}}}{\left. {\frac{{\partial ({x^{{\mu _2}}} \circ \sigma )}}
{{\partial {t^{{c_2}}}}}} \right|_{\phi ( \cdot )}} \cdots {\left. {\frac{{\partial ({x^{{\mu _k}}} 
\circ \sigma )}}{{\partial {t^{{c_k}}}}}} \right|_{\phi ( \cdot )}} \hfill \nonumber \\
& \quad   + {\varepsilon ^{{a_1} \cdots {a_k}}}\left( \! {\frac{\partial }{{\partial {s^{{a_1}}}}}\mathcal{T} } \! \right) 
({y^{{I_j}}} \circ \hat \sigma )\frac{{\partial ({t^{{c_2}}} \circ \phi )}}{{\partial {s^{{a_2}}}}} 
\cdots \frac{{\partial ({t^{{c_k}}} \circ \phi )}}{{\partial {s^{{a_k}}}}}{\left. {\frac{{\partial ({x^{{\mu _2}}} 
\circ \sigma )}}{{\partial {t^{{c_2}}}}}} \right|_{\phi ( \cdot )}} \cdots {\left. {\frac{{\partial ({x^{{\mu _k}}} 
\circ \sigma )}}{{\partial {t^{{c_k}}}}}} \right|_{\phi ( \cdot )}} \hfill \nonumber \\
&  = (\mathcal{T})^2  \cdot {\varepsilon^{{c_1} \cdots {c_k}}}\frac{{\partial ({y^{{I_j}}} 
\circ \hat \sigma )}}{{\partial {t^{{c_1}}}}}{\left. {\frac{{\partial ({x^{{\mu _2}}} \circ \sigma )}}
{{\partial {t^{{c_2}}}}}} \right|_{\phi ( \cdot )}} \cdots {\left. {\frac{{\partial ({x^{{\mu _k}}} 
\circ \sigma )}}{{\partial {t^{{c_k}}}}}} \right|_{\phi ( \cdot )}} \hfill \nonumber \\
&  \quad  + {\varepsilon ^{{a_1} \cdots {a_k}}}\left( \! {\frac{\partial }{{\partial {s^{{a_1}}}}}\mathcal{T} } \! \right) 
\frac{{\partial ({t^{{c_2}}} \circ \phi )}}{{\partial {s^{{a_2}}}}} \cdots \frac{{\partial ({t^{{c_k}}} \circ \phi )}}
{{\partial {s^{{a_k}}}}}{\left. {\frac{{\partial ({x^{{\mu _2}}} \circ \sigma )}}{{\partial {t^{{c_2}}}}}} 
\right|_{\phi ( \cdot )}} \cdots {\left. {\frac{{\partial ({x^{{\mu _k}}} \circ \sigma )}}{{\partial {t^{{c_k}}}}}} 
\right|_{\phi ( \cdot )}}({y^{{I_j}}} \circ \hat \sigma ) \hfill \nonumber \\
&  = {(\mathcal{T} )^2}({z^{{I_j}{\mu _2} \cdots {\mu _k}}} \circ {\sigma ^2}) \nonumber \nolinebreak[3] 	\\
& \nolinebreak[3] \quad + {\varepsilon ^{{a_1} \cdots {a_k}}}\left( \! {\frac{\partial }{{\partial {s^{{a_1}}}}}\mathcal{T} } \! \right) 
\frac{{\partial ({t^{{c_2}}} \circ \phi )}}{{\partial {s^{{a_2}}}}} \cdots \frac{{\partial 
({t^{{c_k}}} \circ \phi )}}{{\partial {s^{{a_k}}}}}{\left. {\frac{{\partial ({x^{{\mu _2}}} \circ \sigma )}}
{{\partial {t^{{c_2}}}}}} \right|_{\phi ( \cdot )}} \cdots {\left. {\frac{{\partial ({x^{{\mu _k}}} \circ \sigma )}}
{{\partial {t^{{c_k}}}}}} \right|_{\phi ( \cdot )}}({y^{{I_j}}} \circ \hat \sigma ).  
\end{align}
Since we assumed $\rho $ is a regular parameterisation that preserves orientation, 
$\mathcal{T}  > 0$, and we see  
that (\ref{paramrel_2_k_3}) are in the form of the homogeneity conditions (\ref{cond-2nd-k-hom}),  
the second order $k$-dimensional area of $\Sigma $ is preserved by
\begin{align}
\begin{gathered}
  {l^K}(\Sigma ) = \int_{s_i^1}^{s_f^1} {d{s^1}\int_{s_i^2}^{s_f^2} {d{s^2} \cdots \int_{s_i^k}^{s_f^k} {d{s^k}K\left( {{\rho ^2}(s)} \right)} } }  \hfill \\
   = \int_{s_i^1}^{s_f^1} {d{s^1} \cdots \int_{s_i^k}^{s_f^k} {d{s^k}K\left( {{x^\mu }({\sigma ^2}(\phi (s))),{\mathcal{T}}{y^{{\mu _1} \cdots {\mu _k}}}({\sigma ^2}(\phi (s))),} \right.} }  \hfill \\
  \quad {({\mathcal{T}})^2}({z^{{I_1}{\mu _2} \cdots {\mu _k}}} \circ {\sigma ^2})(\phi (s)) + {\beta ^{{\mu _2} \cdots {\mu _k}}}({y^{{I_1}}} \circ {\sigma ^2})(\phi (s)), \hfill \\
  \left. {\quad  \cdots ,{{({\mathcal{T}})}^{k + 1}}({z^{{I_1}{I_2} \cdots {I_k}}} \circ {\sigma ^2})(\phi (s)) + {\beta ^{{I_2} \cdots {I_k}}}({y^{{I_1}}} \circ {\sigma ^2})(\phi (s))} \right) \hfill \\
   = \int_{{\phi ^{ - 1}}(t_i^1)}^{{\phi ^{ - 1}}(t_f^1)} { \cdots \int_{{\phi ^{ - 1}}(t_i^k)}^{{\phi ^{ - 1}}(t_f^k)} {d{s^1} \wedge d{s^2} \wedge  \cdots  \wedge d{s^k}{\mathcal{T}}K\left( {{\sigma ^2}(\phi (s))} \right)} }  \hfill \\
   = \int_{t_i^1}^{t_f^1} { \cdots \int_{t_i^k}^{t_f^k} {d{t^1} \wedge d{t^2} \wedge  \cdots  \wedge d{t^k}K\left( {{\sigma ^2}(t)} \right)} }  \hfill \\
   = \int_{t_i^1}^{t_f^1} {d{t^1}\int_{t_i^2}^{t_f^2} {d{t^2} \cdots \int_{t_i^k}^{t_f^k} {d{t^k}K\left( {{\sigma ^2}(t)} \right)} } } , \hfill \\ 
\end{gathered}
\label{Kawaguchi_area_2nd}
\end{align}
where $s_{i}^{1},s_{f}^{1},s_{i}^{2},s_{f}^{2}, 
\cdots, s_{i}^{k},s_{f}^{k}$ are the pre-image of the boundary points 
$t_{i}^{1},t_{f}^{1},t_{i}^{2},t_{f}^{2},\cdots, \lb[3] t_{i}^{k},t_{f}^{k}$ by $\phi$.  
We have used the homogeneity condition of $K$, and the definition of integration of 
$k$-form in accord to Section \ref{sec_integrationofd-forms}. 
\end{proof} 
We can conclude that in the second order case, 
homogeneity of $K$ and parameterisation invariance of Kawaguchi $k$-area is an equivalent property, 
provided that we are considering the case of $M={{\mathbb{R}}^{n}}$.

%%%%%%%%%%%%%%%%%%%%%%%%%%%%%%%%%%%%%%%%%%
\subsection{Second order Kawaguchi $k$-form}
%%%%%%%%%%%%%%%%%%%%%%%%%%%%%%%%%%%%%%%%%%
	Now we will turn to defining a second order Kawaguchi $k$-form, 
which we construct to have the same property as the previous cases, 
namely, it should be constructed by referring to the conditions given by (\ref{2nd_k_Euler}); 
and should be equivalent to giving a second order $k$-dimensional area, 
when its pull back is integrated over the parameter space. 
However, since the conditions (\ref{2nd_k_Euler}) depend on coordinates, 
from these alone we cannot construct a global form for general manifolds. 
For this reason we also restrict our model for $M=\mathbb{R}^n$ case, 
and leave the general case for future research. 
Nevertheless, the obtained form could be used for the consideration of second order field theories 
with the restriction of $M=\mathbb{R}^n$.

\begin{defn} Second order Kawaguchi $k$-form \\
Let ${{\varphi }^{2}}=({{x}^{\mu }},{{y}^{{{\mu }_{1}}\cdots {{\mu }_{k}}}},{{z}^{{{I}_{1}};{{\nu }_{2}}\cdots {{\nu }_{k}}}},{{z}^{{{I}_{1}}{{I}_{2}};{{\nu }_{3}}\cdots {{\nu }_{k}}}},\cdots ,{{z}^{{{I}_{1}}{{I}_{2}}\cdots {{I}_{k}}}})$, $\mu ,{{\mu }_{1}},\cdots ,{{\mu }_{k}},{{\nu }_{2}},\cdots ,{{\nu }_{k}}=1,\cdots ,n$, ${{I}_{j}},:=\mu _{j}^{{{i}_{1}}}\cdots \mu _{j}^{{{i}_{k}}}$, 
be the induced global chart on ${{({{\Lambda }^{k}}T)}^{2}}M$, where $M=\mathbb{R}^n$. The {\it second order Kawaguchi $k$-form} $\mathcal{K}$ is a $k$-form on ${{({{\Lambda }^{k}}T)}^{2}}M$, which in coordinates are expressed by 
\begin{align}
&\mathcal{K} = \frac{1}{k!} \frac{\partial K}{{\partial {y^{{\mu _1} \cdots {\mu _k}}}}}d{x^{{\mu _1} 
\cdots {\mu _k}}} + \frac{2}{{(k - 1)!}}\frac{{\partial K}}{{\partial {z^{{I_1}{\nu _2} 
\cdots {\nu _k}}}}}d{y^{{I_1}}} \wedge d{x^{{\nu _2} \cdots {\nu _k}}} \hfill \nonumber \\
&  \quad  + \frac{3}{{2!(k - 2)!}} \frac{{\partial K}}{{\partial {z^{{I_1}{I_2}{\nu _3} \cdots {\nu _k}}}}}d{y^{{I_1}}} \wedge d{y^{{I_2}}} \wedge d{x^{{\nu _3} \cdots {\nu _k}}} 
 +  \cdots  + \frac{k+1}{{k!}}\frac{{\partial K}}{{\partial {z^{I_1 I_2 \cdots I_k }}}}dy^{I_1} \wedge  
 \cdots  \wedge dy^{I_k}. \hfill \label{KawaguchiForm_2nd_k}
\end{align}
\end{defn}
We used the abbreviation such as 
\[d{x^{{\mu _1} \cdots {\mu _k}}}: = d{x^{{\mu _1}}} \wedge  \cdots  \wedge d{x^{{\mu _k}}}, \quad
d{y^{{I_1}}} \wedge d{x^{{\nu _2} \cdots {\nu _k}}}: = d{y^{{I_1}}} \wedge d{x^{{\nu _2}}} \wedge 
\cdots  \wedge d{x^{{\nu _k}}}. \]
This expression (\ref{KawaguchiForm_2nd_k}) corresponds to the first homogeneity condition in (\ref{2nd_k_Euler}). 

As we already mentioned, in general the above form is not invariant 
with respect to the coordinate transformations given by (\ref{2nd_k_coordtrans}).
  
\begin{pr} 
Let $\mathcal{K}$ be the second order Kawaguchi $k$-form on ${{({{\Lambda }^{k}}T)}^{2}}M$, 
$\Sigma=\sigma (\bar P)$ the $k$-patch on $M$, with 
$\bar{P}=[t_{i}^{1},t_{f}^{1}]\times [t_{i}^{2},t_{f}^{2}]\times \cdots \times [t_{i}^{k},t_{f}^{k}]$ 
a closed rectangle in $\mathbb{R}^k$. Then,
\begin{align}
\int_{{{\Sigma }^2}}{\mathcal{K}}={{l}^{K}}(\Sigma ). \label{Kawaguchiform_area_id2}
\end{align}
\end{pr}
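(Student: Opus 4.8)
The plan is to mirror exactly the proof structure used for the two earlier analogous propositions in this excerpt, namely the first-order $k$-dimensional identity $\int_{\hat\Sigma}\mathcal{K}=l^K(\Sigma)$ and the second-order $1$-dimensional identity. Since $\Sigma^2=\sigma^2(\bar P)$ is a $k$-dimensional immersed submanifold of $(\Lambda^k T)^2 M$ presented via the parameterisation $\sigma^2$, the integration of the $k$-form $\mathcal{K}$ over $\Sigma^2$ reduces, by the definition of integration of a $k$-form over an immersed submanifold given in Section~\ref{sec_integrationofd-forms}, to the pull-back integral $\int_{\bar P}(\sigma^2)^*\mathcal{K}$ over the closed rectangle $\bar P$. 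The whole computation is therefore a direct pull-back calculation combined with one application of the Zermelo-type homogeneity condition.

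First I would write $\mathcal{K}$ in the local coordinate form (\ref{KawaguchiForm_2nd_k}) and pull each term back along $\sigma^2$, using the explicit coordinate components $(x^\mu\circ\sigma^2)$, $(y^{\mu_1\cdots\mu_k}\circ\sigma^2)$, $(z^{I_1\nu_2\cdots\nu_k}\circ\sigma^2)$, \dots, $(z^{I_1\cdots I_k}\circ\sigma^2)$ recorded in (\ref{2nd-k-coord}). The key point is that pulling back $dx^{\mu_1}\wedge\cdots\wedge dx^{\mu_k}$ produces, by the chain rule and antisymmetrisation, a factor $\tfrac{\partial(x^{[\mu_1}\circ\sigma)}{\partial t^1}\cdots\tfrac{\partial(x^{\mu_k]}\circ\sigma)}{\partial t^k}\,dt^1\wedge\cdots\wedge dt^k$, which is precisely $(y^{\mu_1\cdots\mu_k}\circ\hat\sigma)\,dt^1\wedge\cdots\wedge dt^k$; similarly pulling back each block $dy^{I_1}\wedge\cdots\wedge dy^{I_l}\wedge dx^{\nu_{l+1}\cdots\nu_k}$ yields exactly the matching higher-order coordinate $(z^{I_1\cdots I_l\nu_{l+1}\cdots\nu_k}\circ\sigma^2)$ against the volume form on the parameter rectangle. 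After collecting all terms, the integrand becomes
\begin{align}
&\frac{1}{k!}\frac{\partial K}{\partial y^{\mu_1\cdots\mu_k}}\,y^{\mu_1\cdots\mu_k}
+\frac{2}{(k-1)!}\frac{\partial K}{\partial z^{I_1\nu_2\cdots\nu_k}}\,z^{I_1\nu_2\cdots\nu_k}
+\cdots \nonumber\\
&\hspace{4cm}+\frac{k+1}{k!}\frac{\partial K}{\partial z^{I_1\cdots I_k}}\,z^{I_1\cdots I_k}
\end{align}
all evaluated along $\sigma^2(t)$.

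Then I would invoke the first homogeneity condition of (\ref{2nd_k_Euler}), pulled back along $\sigma^2$, which states that this entire combination equals $K\circ\sigma^2$. This collapses the integrand to $K(\sigma^2(t))$, and the resulting iterated integral over $\bar P$ is by definition $l^K(\Sigma)$ from (\ref{second_KawaguchiArea}), completing the proof. The logical skeleton is thus identical to the first-order proposition proved just above, only with more index blocks to bookkeep.

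The main obstacle is purely combinatorial rather than conceptual: verifying that the numerical coefficients $\tfrac{l+1}{l!(k-l)!}$ appearing in $\mathcal{K}$, after multiplication by the factorials generated by antisymmetrising the wedge pull-backs, reproduce exactly the coefficients in the homogeneity identity (\ref{2nd_k_Euler}), so that the two expressions match term by term. I would handle this by tracking the summation conventions for ordered versus non-ordered multi-indices fixed earlier in Section~\ref{subsec_2nd_k_bundle}, and checking that each $dy^{I_j}$ contributes the $k!$ factor implicit in the multi-index notation; because $M=\mathbb{R}^n$ is assumed throughout this section, no issue of global patching or partition of unity arises, and the single global chart $\varphi^2$ suffices for the entire argument.
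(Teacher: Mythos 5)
Your proposal matches the paper's own proof essentially line for line: the paper likewise reduces $\int_{\Sigma^2}\mathcal{K}$ to the pull-back integral over $\bar P$, computes each wedge block $d y^{I_1}\wedge\cdots\wedge d y^{I_l}\wedge dx^{\nu_{l+1}\cdots\nu_k}$ against the coordinates of $\sigma^2$ from (\ref{2nd-k-coord}), and then collapses the resulting sum to $K\circ\sigma^2$ via the pulled-back first homogeneity condition of (\ref{2nd_k_Euler}). The combinatorial bookkeeping of the coefficients $\tfrac{l+1}{l!(k-l)!}$ that you flag as the main obstacle is exactly what the paper's displayed chain of equalities carries out, so nothing further is needed.
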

\begin{proof}
The simple calculation leads, 
\begin{align}
&  \int_{{\Sigma ^2}} {\mathcal{K}}  = \int_{{\sigma ^2}(P)} {\frac{1}{{k!}}\frac{{\partial K}}
{{\partial {y^{{\mu _1} \cdots {\mu _k}}}}}d{x^{{\mu _1} \cdots {\mu _k}}}}  
+ \int_{{\sigma ^2}(P)} {\frac{2}{{(k - 1)!}}\frac{{\partial K}}{{\partial {z^{{I_1}{\nu _2} 
\cdots {\nu _k}}}}}d{y^{{I_1}}} \wedge d{x^{{\nu _2} \cdots {\nu _k}}}}  \hfill \nonumber  \\
&  \quad  + \int_{{\sigma ^2}(P)} {\frac{3}{{2!(k - 2)!}}\frac{{\partial K}}{{\partial {z^{{I_1}{I_2}{\nu _3} 
\cdots {\nu _k}}}}}d{y^{{I_1}}} \wedge d{y^{{I_2}}} \wedge d{x^{{\nu _3} \cdots {\nu _k}}}}  \hfill \nonumber  \\
&  \quad  +  \cdots  + \int_{{\sigma ^2}(P)} {\frac{k+1}{{k!}}\frac{{\partial K}}{{\partial {z^{{I_1}{I_2} 
\cdots {I_k}}}}}d{y^{{I_1}}} \wedge  \cdots  \wedge d{y^{{I_k}}}}  \hfill \nonumber  \\
&   = \int_{t_i^1}^{t_f^1} { \cdots \int_{t_i^k}^{t_f^k} {\frac{1}{{k!}}\frac{{\partial K}}{{\partial {y^{{\mu _1} 
\cdots {\mu _k}}}}} \circ {\sigma ^2} \cdot d({x^{{\mu _1}}} \circ {\sigma ^2})}  \wedge  
\cdots  \wedge d({x^{{\mu _k}}} \circ {\sigma ^2})}  \hfill \nonumber \\
&  \quad  + \int_{{\sigma ^2}(P)} {\frac{2}{{(k - 1)!}}\frac{{\partial K}}{{\partial {z^{{I_1}{\nu _2} 
\cdots {\nu _k}}}}} \circ {\sigma ^2} \cdot d({y^{{I_1}}} \circ {\sigma ^2}) \wedge d(x^{\nu _2} \circ \sigma^2 ) 
\wedge \cdots \wedge d(x^{\nu _k} \circ \sigma^2) }   \hfill \nonumber \\
&  \quad  +  \cdots  + \int_{{\sigma ^2}(P)} {\frac{k+1}{{k!}}\frac{{\partial K}}{{\partial {z^{{I_1}{I_2} 
\cdots {I_k}}}}} \circ {\sigma ^2}\,d({y^{{I_1}}} \circ {\sigma ^2}) \wedge  \cdots  \wedge d({y^{{I_k}}} 
\circ {\sigma ^2})}  \hfill \nonumber \\
&   = \int_{t_i^1}^{t_f^1} { \cdots \int_{t_i^k}^{t_f^k} \frac{1}{k!}{\frac{{\partial K}}{{\partial {y^{{\mu _1} 
\cdots {\mu _k}}}}}\left( {{\sigma ^2}(t)} \right){y^{{\mu _1} \cdots {\mu _k}}}({\sigma ^2}(t)){\kern 1pt} d{t^1} \wedge  
\cdots  \wedge d{t^k}} }  \hfill \nonumber \\
&  \quad  + \int_{t_i^1}^{t_f^1} { \cdots \int_{t_i^k}^{t_f^k} {\frac{2}{(k-1)!}\frac{{\partial K}}
{{\partial {z^{{I_1}{\nu _2} \cdots {\nu _k}}}}}\left( {{\sigma ^2}(t)} \right){z^{{I_1}{\nu _2} 
\cdots {\nu _k}}}(\hat \sigma (t))d{t^1} \wedge  \cdots  \wedge d{t^k}} } {\kern 1pt}  \hfill \nonumber \\
&  \quad  +  \cdots  + \int_{{\sigma ^2}(P)} {\frac{k+1}{k!}\frac{{\partial K}}{{\partial {z^{{I_1}{I_2} \cdots {I_k}}}}}
\left( {{\sigma ^2}(t)} \right){z^{{I_1}{I_2} \cdots {I_k}}}({\sigma ^2}(t)){\kern 1pt} d{t^1} \wedge  
\cdots  \wedge d{t^k}}  \hfill \nonumber \\
&   = \int_{t_i^1}^{t_f^1} {d{t^1} \cdots \int_{t_i^k}^{t_f^k} {d{t^k}K\left( {{\sigma ^2}(t)} \right)} } 
{\kern 1pt}  \hfill \nonumber \\
&   = {l^K}(\Sigma ) \hfill
\end{align}
where we used the pull-back homogeneity condition 
\begin{align}
& \frac{{\partial K}}{{\partial {y^{{I_1}}}}} \circ {\sigma ^2} \cdot {y^{{I_1}}} 
\circ {\sigma ^2} + \frac{2}{{(k - 1)!}}\frac{{\partial K}}{{\partial {z^{{I_1}{\nu _2} \cdots {\nu _k}}}}} 
\circ {\sigma ^2} \cdot {z^{{I_1}{\nu _2} \cdots {\nu _k}}} \circ {\sigma ^2} \hfill \nonumber \\
&  \quad  + \frac{3}{{2!(k - 2)!}}\frac{{\partial K}}{{\partial {z^{{I_1}{I_2}{\nu _3} 
\cdots {\nu _k}}}}} \circ {\sigma ^2} \cdot {z^{{I_1}{I_2}{\nu _3} \cdots {\nu _k}}} \circ {\sigma ^2} 
+  \cdots  + \frac{k+1}{{k!}}\frac{{\partial K}}{{\partial {z^{{I_1}{I_2} \cdots {I_k}}}}} \circ {\sigma ^2} 
\cdot {z^{{I_1}{I_2} \cdots {I_k}}} \circ {\sigma ^2} \hfill \nonumber \\
&   = K \circ {\sigma ^2}. \hfill 
\end{align}
\end{proof}

\begin{remark}
Similarly as in the case of Finsler manifold and second order Finsler-Kawaguchi manifold, 
we can redefine the pair $(M,\user2{\mathcal{K}})$ as the second order $n$-dimensional 
$k$-areal Kawaguchi manifold, instead of the pair $(M,K)$, where $M=\mathbb{R}^n$. 
\end{remark}

%%%%%%%%%%%%%%%%%%%%%%%%
\chapter{Lagrangian formulation of Finsler and Kawaguchi geometry} \label{chap_5}
%%%%%%%%%%%%%%%%%%%%%%%%

In the preceding chapters, we have prepared the foundations for considering the 
parameterisation invariant theory of calculus of variation. 
Here in this chapter, we will interpret the Finsler length as the Lagrangian, 
and derive the Euler-Lagrange equations and conservation laws 
by considering the calculus of variation. 
We interpret the Finsler manifold as a dynamical system, and in this context 
we will also call the Euler-Lagrange equations the {\it equations of motion}. 
We will begin with the first order mechanics, 
which will be based on Finsler geometry (Chapter 3), 
and then extend it to second order mechanics, based on Finsler-Kawaguchi 
geometry (Chapter 4, Section \ref{sec_2nd_Kawaguchi}), 
and finally to the field theory (first and second order), 
based on Kawaguchi geometry (Chapter 4, Section 
\ref{sec_1st_k_Kawaguchi}, \ref{sec_2nd_k_dim}). 
In all cases, the dynamics of the object (particle, field) is described as a motion of 
a $k$-patch (arc segment) $\Sigma$ of $n$-dimensional manifold $M$. 
Such parameterisation invariant property gives us a freedom of choosing convenient parameters, 
which may reduce the effort to solve the equations of motion, 
and a non-trivial example for Brachistochrone is shown for the case of Finsler.

%The action of the system is defined by the $k$-dimensional area of this submanifold, 
%which is invariant with respect to reparameterisation. 

%%%%%%%%%%%%%%%%%%%%%%%%%%%%%%%%%%%%%%%%%%%%%%%%%%%%%%%%%%
\section{First order mechanics} \label{sec_1st_mech}
%%%%%%%%%%%%%%%%%%%%%%%%%%%%%%%%%%%%%%%%%%%%%%%%%%%%%%%%%%
	Here we introduce the theory of first order mechanics, in terms of Finsler geometry. 
By the term {\it first order}, we mean that the total space we are considering is the tangent bundle, 
 and by {\it mechanics}, we mean that we are considering the arc segment on $M$.
 
	The basic structure we consider in this section is introduced in Chapter 2 and 3, 
namely the $n$-dimensional Finsler manifold $(M, {\mathcal{F}})$, 
the tangent bundle $(TM,{\tau_M},M)$, and a $1$-dimensional curve (arc segment) $C$ on $M$, 
parameterised by $\sigma$.  
The curve (arc segment) describes the trajectory of the object on $M$. 
 
In our setting, the Hilbert $1$-form is the Lagrangian, and 
the action will be defined by considering the integration over the 
lift of the parameterisable curve (arc segment) $C$. 
The Euler-Lagrange equations are derived by taking the variation of the action with respect 
to the flow on $M$ that deforms the arc segment $C$, and fixed on the boundary. 
We can show that the action and consequently the Euler-Lagrange equations are 
independent with respect to the parameterisation belonging to the same equivalent class. 

%%%%%%%%%%%%%%%%%%%%%%%%%%%%%%%
\subsection{Action} 
%%%%%%%%%%%%%%%%%%%%%%%%%%%%%%%
Suppose we have a dynamical system (differential equations expressing motions) 
where the trajectory of the point particle (or any object which dynamics 
could be considered as a point) is expressed by an arc segment $C$ of a parameterisable curve, 
such that $C = \sigma (I) \subset M$, where $I$ is a closed interval $I = [{t_i},{t_f}] \subset \mathbb{R}$. 

When we can express this system by Finsler geometry, namely the pair $(M, \mathcal{F})$ 
where $\mathcal{F}$ is a Finsler-Hilbert $1$-form, 
we refer to this dynamical system as {\it first order mechanics}, 
and conversely call the pair $(M,\mathcal{F})$ a {\it dynamical system}, 
and $\mathcal{F}$ a {\it Lagrangian}. 
Though this terminology may sound strange or gives an impression of restricted cases of mechanics at the beginning, 
we will show in this section (Remark \ref{convL-Fins-rel}), 
that given a conventional Lagrangian\footnote{We simply use the term 
``conventional'' to distinguish the Lagrangian function on a local chart of $J^1 Y$, 
from our Lagrangian which is the Hilbert $1$-form over $TY$.}, 
one can always construct a Finsler function (and therefore obtain the Finsler-Hilbert $1$-form) 
on the corresponding local chart of $TM$. 

The action of first order mechanics is defined as follows.

% We use $t$ for canonical coordinate on $\mathbb{R}$, 
% and usually identify it as a point on $\mathbb{R}$. 
%
\begin{defn} Action of first order mechanics \\
Let  $(M, {\mathcal{F}})$ be a $n$-dimensional Finsler manifold, 
$(U,\varphi ),\;\varphi  = ({x^\mu })$ be a chart on $M$, 
and $(V,\psi )$, $V={{\tau }_{M}}^{-1}(U)$, $\psi =({{x}^{\mu }},{{y}^{\mu }})$ 
the induced chart on $TM$. 
The local coordinate expression of the Finsler-Hilbert form 
${\mathcal{F}} \in {\Omega ^1}(TM)$ is given by
$\displaystyle{{\mathcal{F}} = \frac{{\partial F}}{{\partial {y^\mu }}}d{x^\mu }}$, 
where $F$ is the Finsler function. 
Let $C$ be an arc segment on $M$, $\sigma$ its parameterisation, 
$\sigma(I) = C \subset M$ with $I = [{t_i},{t_f}] \subset \mathbb{R}$, 
and $\hat{\sigma}$ the tangent lift of $\sigma$, 
defined in Chapter 3 (Definition \ref{def_liftedparameterisation}).
We call the functional ${S}^{\mathcal{F}}( C )$ defined by 
\begin{eqnarray}
{S}^{\mathcal{F}}( C ) := {l}^{F}(C) = \int_{\hat C} {\mathcal{F}} 
= \int_{\hat \sigma (I)} {\frac{{\partial F}}{{\partial {y^\mu }}}d{x^\mu }},  \label{FinslerAction}
\end{eqnarray}
the {\it action of first order mechanics associated with $\mathcal{F}$.} 
\end{defn}

As we have seen in Section \ref{sec_paramFinslerlength}, Lemma \ref{lem_repinv_Finsler},

Finsler length is invariant with respect to the reparameterisation, 
therefore the action is also invariant.  
%$\rho  = \sigma  \circ \phi$, 
%for a diffeomorphism $\phi : J \to I$, so we have for this alternative parameterisation 
%the same action, namely ${S_I}^{\mathcal{F}}( \sigma ) = {S_I}^{\mathcal{F}}( \rho )$, 
%and it is better to denote by just $S^{\mathcal{F}}{C}$. 

%%%%%%%%%%%%%%%%%%%%%%%%%%%%%%%%%%%
\subsection{Extremal and equations of motion} \label{subsec_extremal}
%%%%%%%%%%%%%%%%%%%%%%%%%%%%%%%%%%%
%In the standard case, the equations of the motion are obtained from the action 
%by taking the variations of the parameterisation, $\sigma$. 
%Since in our case, the action does not depend on this parameterisation, 
%we cannot consider such variations of $\sigma$, and take the extremal by this variation 
%to give us the equations of motion. 
%However, we will show here that similar procedure can be applied, and with such procedure, 
%we can obtain the reparameterisation invariant equations of motion. 
%	Having defined the action, we are now able to derive the Euler-Lagrange equations by 
%taking the variation with respect to the curve $C$, which is parameterised by $\sigma: I \to M$, $\sigma(I)=C$. 

	Having defined the action, we are able to derive the equations of motion by 
considering the extremal of the action. 
To make the discussion simple, we only consider global flows in this thesis. 
Nevertheless, with some details added, the formulation can be set up similarly with local flows. 

Consider a ${C^\infty }$-flow, $\alpha :\mathbb{R} \times M \to M$, and its associated 
$1$-parameter group of transformations ${\{ {\alpha _s}\} _{s \in \mathbb{R}}}$ . 
The $1$-parameter group ${\alpha _s}:M \to M$ induces a tangent 
$1$-parameter group $T{\alpha _s}:TM \to TM$ on $TM$. 
This will also deform the curve (arc segment) $C$ to $C'=\alpha_s (C)$, 
and since this is a smooth deformation, it again becomes a parameterisable curve. 
By the reparameterisation independence, 
we can always choose the parameterisation of this deformed $C'$ by a new $\sigma': I \to M$, $\sigma'(I)=C'$, 
so that it has the same parameter space as $C$.  
The variation of the action will be expressed by the small deformations made to the action by 
${\alpha _s}$. 
\begin{figure}
  \centering
  \includegraphics[width=5cm]{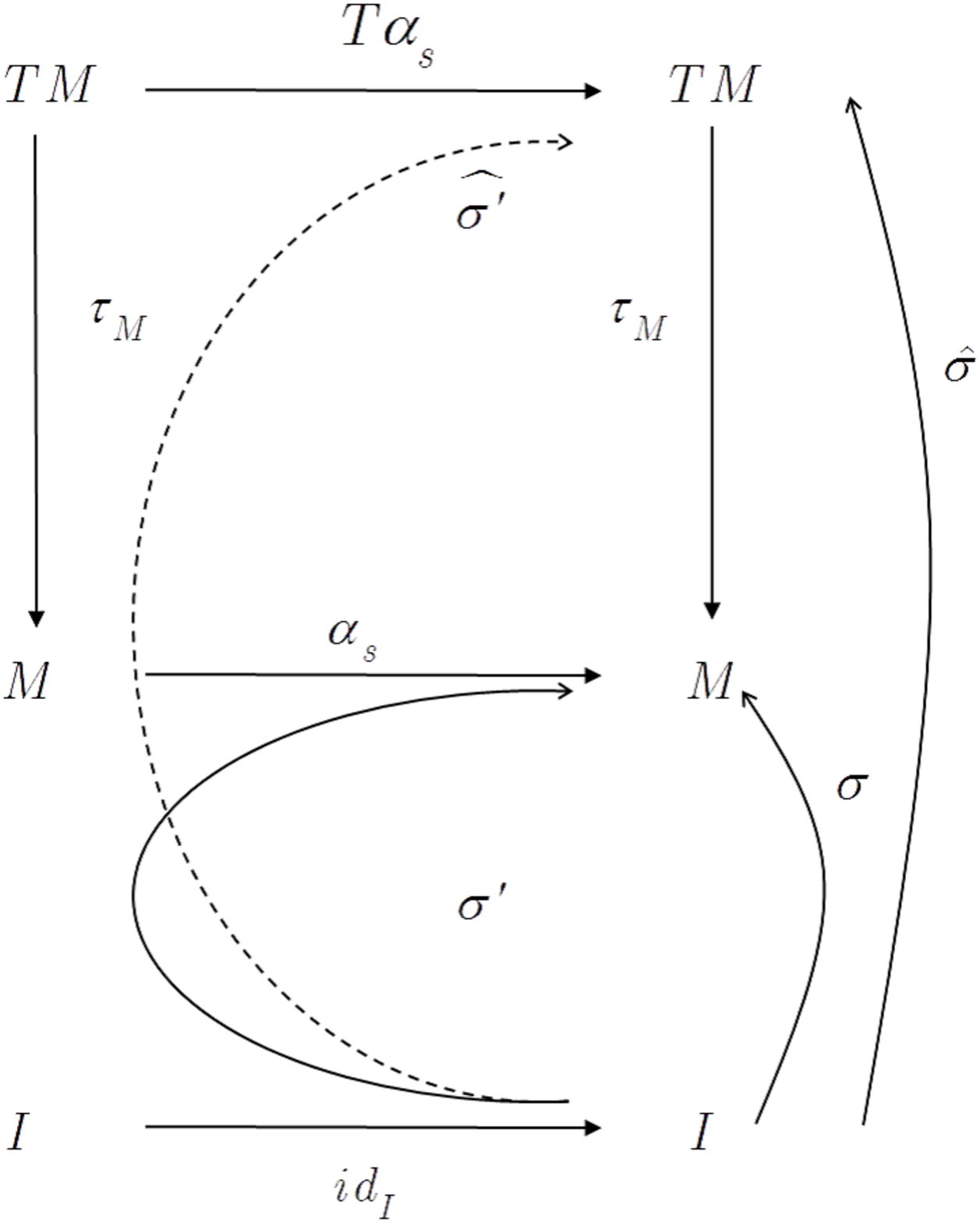}
  \caption{First order mechanics} \label{fig_1st_mech}
\end{figure}

\begin{defn} Variation of the action \\
Let $\xi$ be a vector field on $M$ which generates the $1$-parameter group $\alpha_s$, 
i.e., $\displaystyle{\xi  = {\left. {\frac{{d{\alpha _s}}}{{ds}}} \right|_{s = 0}}}$. 
We call the functional 
\begin{align}
  {\delta _\xi }{S^{\mathcal{F}}}(C)& : = \mathop {\lim }\limits_{s \to 0} 
  \frac{1}{s}\left\{ {{S^\user2{\mathcal{F}}}({\alpha _s}(C)) - {S^{\mathcal{F}}}(C)} \right\} \hfill \nonumber \\
 & = \mathop {\lim }\limits_{s \to 0} \frac{1}{s}\left\{ {\int_{\widehat {{\alpha _s}(\sigma)}(I)} {\mathcal{F}}  
   - \int_{\hat{\sigma}(I)} {\mathcal{F}} } \right\} \hfill 
\end{align}
the {\it variation of the action ${S^{\mathcal{F}}}(C)$ with respect to the flow $\alpha$, 
associated to $\mathcal{F}$ }.
\end{defn}

It is easy to see that the lift of this modified parameterisation 
$\sigma'$ is given by, 
$\hat \sigma ' = \widehat {{\alpha _s} \circ \sigma } 
= T{\alpha _s} \circ \hat \sigma  = T{\alpha _s} \circ \hat \sigma  \circ i{d_I}^{ - 1}$, 
which we show in Figure \ref{fig_1st_mech}.

We will get, 
\begin{eqnarray}
  \delta_\xi {{S}^{\mathcal{F}}}(C) &:=& \mathop {\lim }\limits_{s \to 0} \frac{1}{s}
  \left\{ {\int_{T{\alpha _s} \circ \hat \sigma (I)} {\mathcal{F}}  - \int_{\hat \sigma (I)} 
  {\mathcal{F}} } \right\} = \mathop {\lim }\limits_{s \to 0} \frac{1}{s}\left\{ {\int_{\hat \sigma (I)} 
  {{{(T{\alpha _s})}^*}{\mathcal{F}}}  - \int_{\hat \sigma (I)} {\mathcal{F}} } \right\} \hfill \nonumber \\
   &=& \int_{\hat \sigma (I)} {{L_X}{\mathcal{F}}}  
   = \int_{\hat C} {{L_X}{\mathcal{F}}} . \hfill 
\end{eqnarray}
where $X$ is a vector field on $TM$ that generates the tangent 
$1$-parameter group $T{\alpha _s}$, i.e., 
$\displaystyle{X = {\left. {\frac{{d(T{\alpha _s})}}{{ds}}} \right|_{s = 0}}}$, 
and $L_X$ is a Lie derivative with respect to $X$.

Let us calculate the vector field $X$ in local coordinates. 
As usual, let $(U,\varphi )$, $\varphi  = ({x^\mu })$ be a chart on $M$, 
and the induced chart of $TM$; $(V,\psi )$, $V = {\tau _M}^{ - 1}(U)$, 
$\psi  = ({x^\mu },{y^\mu })$. 
Let $\xi$ be be a generator of the $1$-parameter group $\alpha_s$
and its local expression $\displaystyle{\xi  = {\xi ^\mu }\frac{\partial }{{\partial {x^\mu }}}}$, 
where ${\xi ^\mu } \in {C^\infty }(M)$. 
The tangent map $T{\alpha _s}$ at $p \in M$ sends the vector 
$v \in {T_p}M$ to ${T_{{\alpha _s}(p)}}M$ by 
\begin{align}
{T_p}{\alpha _s}(v) 
= {\left. \frac{{\partial (x^\mu \circ \alpha _s \circ \varphi^{- 1} ) }}{{\partial {x^\nu }}} \right|_{\varphi (p)}}
{v^\nu }{\left( {\frac{\partial }{{\partial {x^\mu }}}} \right)_{{\alpha _s}(p)}},
\end{align}
and since $(T{\alpha _s},{\alpha _s})$ is a bundle morphism and from the definition of induced 
coordinates of a tangent bundle, we have for its coordinate expressions, 
\begin{align}
&{x^\mu } \circ {T_p}{\alpha _s}(v) = {x^\mu } \circ {\alpha _s} \circ {\tau _M}(v), \nonumber  \\
&{y^\mu } \circ {T_p}{\alpha _s}(v) = {\left. 
\frac{{\partial (x^\mu \circ \alpha _s \circ \varphi^{- 1} ) }}{{\partial {x^\nu }}} \right|_{\varphi ({\tau _M}(v))}}{y^\nu }(v). 
\end{align}%
By these observations, the vector field $X$ at a point $q \in TM$ has a local expression, 
\begin{align}
{X_q} &= {\left. {\frac{{d({x^\mu } \circ T{\alpha _s})}}{{ds}}} \right|_{s = 0}}
{\left( {\frac{\partial }{{\partial {x^\mu }}}} \right)_q} 
+ {\left. {\frac{{d({y^\mu } \circ T{\alpha _s})}}{{ds}}} \right|_{s = 0}}
{\left( {\frac{\partial }{{\partial {y^\mu }}}} \right)_q} \hfill \nonumber \\
&= {\left. {\frac{d}{{ds}}({x^\mu } \circ {\alpha _s} \circ {\tau _M})} \right|_{s = 0}}
{\left( {\frac{\partial }{{\partial {x^\mu }}}} \right)_q} + {y^\nu }(q)\frac{d}{{ds}}
{\left( {{{\left. \frac{{\partial (x^\mu \circ \alpha _s \circ \varphi^{- 1} ) }}{{\partial {x^\nu }}} \right|}_{\varphi ({\tau _M}(q))}}} \right)_{s = 0}}{\left( {\frac{\partial }
{{\partial {y^\mu }}}} \right)_q} \hfill \nonumber \\
&= \left( {{\xi ^\mu } \circ {\tau _M}} \right)(q){\left( {\frac{\partial }{{\partial {x^\mu }}}} \right)_q} 
+ \left( {\frac{{\partial {\xi ^\mu }}}{{\partial {x^\nu }}} \circ {\tau _M} \cdot {y^\nu }} \right)(q){
\left( {\frac{\partial }{{\partial {y^\mu }}}} \right)_q}, \hfill 
\end{align}
therefore, 
\begin{align}
X = {\xi ^\mu } \circ {\tau _M}\left( {\frac{\partial }{{\partial {x^\mu }}}} \right) 
+ \frac{{\partial {\xi ^\mu }}}{{\partial {x^\nu }}} \circ {\tau _M} 
\cdot {y^\nu }\left( {\frac{\partial }{{\partial {y^\mu }}}} \right).   \label{pro.v.f.onTM}
\end{align}
We will call $X$, the {\it induced vector field} by $\xi$, on $TM$.    

The Lie derivative ${L_X}{\mathcal{F}}$ in coordinate expression is 
\begin{align}
&{L_X}{\mathcal{F}} = {L_X}\left( {\frac{{\partial F}}{{\partial {y^\rho }}}d{x^\rho }} \right) 
= X\left( {\frac{{\partial F}}{{\partial {y^\rho }}}} \right)d{x^\rho } 
+ \frac{{\partial F}}{{\partial {y^\rho }}}d{L_X}{x^\rho } \hfill \nonumber \\
& \quad  = \left\{ {{\xi ^\mu } \circ {\tau _M}\left( {\frac{{{\partial ^2}F}}
{{\partial {x^\mu }\partial {y^\rho }}}} \right) + \frac{{\partial {\xi ^\mu }}}
{{\partial {x^\nu }}} \circ {\tau _M} \cdot {y^\nu }\left( {\frac{{{\partial ^2}F}}
{{\partial {y^\mu }\partial {y^\rho }}}} \right)} \right\}d{x^\rho } + \frac{{\partial F}}
{{\partial {y^\rho }}}d\left( {{\xi ^\rho } \circ {\tau _M}} \right) \hfill \nonumber  \\
& \quad  = {\xi ^\mu } \circ {\tau _M}\left\{ {\frac{{{\partial ^2}F}}
{{\partial {x^\mu }\partial {y^\rho }}}d{x^\rho } - d\left( {\frac{{\partial F}}
{{\partial {y^\mu }}}} \right)} \right\} + \frac{{\partial {\xi ^\mu }}}{{\partial {x^\nu }}} 
\circ {\tau _M} \cdot {y^\nu }\left( {\frac{{{\partial ^2}F}}{{\partial {y^\mu }\partial {y^\rho }}}} \right)
d{x^\rho } + d\left( {\frac{{\partial F}}{{\partial {y^\rho }}} \cdot {\xi ^\rho } \circ {\tau _M}} \right). \label{Lie_1st_mech}
\end{align}
The result of (\ref{Lie_1st_mech}) is called the {\it infinitesimal first variation formula} for the Hilbert form $\mathcal{F}$.

The variation of action becomes 
\begin{align}
&  {\delta _\xi }{S^\user2{\mathcal{F}}}(C) = \int_{\hat \sigma (I)} {{L_X}\user2{\mathcal{F}}}  
= \int_I {{{\hat \sigma }^*}{L_X}\user2{\mathcal{F}}}  \hfill \nonumber \\
&  \quad  = \int_I {{{\hat \sigma }^*}\left( {{\xi ^\mu } \circ {\tau _M}
\left\{ {\frac{{{\partial ^2}F}}{{\partial {x^\mu }\partial {y^\rho }}}d{x^\rho } 
- d\left( {\frac{{\partial F}}{{\partial {y^\mu }}}} \right)} \right\} 
+ d\left( {\frac{{\partial F}}{{\partial {y^\rho }}} \cdot {\xi ^\rho } \circ {\tau _M}} \right)} \right)}  \hfill \nonumber \\
&  \quad  = \int_{\hat C} { {{\xi ^\mu } \circ {\tau _M}
\left\{ {\frac{{{\partial ^2}F}}{{\partial {x^\mu }\partial {y^\rho }}}d{x^\rho } 
- d\left( {\frac{{\partial F}}{{\partial {y^\mu }}}} \right)} \right\} 
+ d\left( {\frac{{\partial F}}{{\partial {y^\rho }}} \cdot {\xi ^\rho } \circ {\tau _M}} \right)},}  
\hfill \label{variation_1st_mech} 
\end{align}
which is called the {\it integral first variation formula}. 
We have used the homogeneity condition: 
\begin{align}
\left( { {\frac{{{\partial ^2}F}}
{{\partial {y^\mu }\partial {y^\rho }}}} \cdot {y^\rho }} \right) 
\circ \hat \sigma  = 0. \label{cond_hom3}
\end{align}
%
%\end{document}
(\ref{cond_hom3}) is obtained by taking the derivative of 
(\ref{cond_hom2}) with respect to $y^\mu$,  and then taking the pull back. 

Now we can proceed to find the equations of motion to this system. 
We will first give the definition of an extremal. 

\break
\begin{defn} Extremal of an action \label{def_extremal}  
\begin{enumerate}
\item We say that an arc segment $C$ is {\it stable} with respect to a flow $\alpha$, when it satisfies 
\begin{align}
{\delta _\xi }{S^ {\mathcal{F}}}(C) = 0, \label{c_stable}
\end{align}
where $\xi$ is the generator of $\alpha$. 
\item We say that an arc segment $C$ is an {\it extremal} of the action $S^{\mathcal{F}}$, 
when it satisfies (\ref{c_stable}) for any $\alpha$ 
such that its associated $1$-parameter group $\alpha_s$ satisfies 
${\alpha _s}(\partial C) = \partial C$, $\forall s \in \mathbb{R}$, 
where $\partial C$ is the boundary of $C$. 
\end{enumerate}
\end{defn}
With this concept of an extremal, we can obtain the following theorem. 
\begin{theorem} Extremals  \label{thm_1st_var} \\
Let $C$ be an arc segment. 
The following statements are equivalent. 
\begin{enumerate}
\item $C$ is an extremal. 
\item The equation 
\begin{eqnarray}
&&{{\mathcal{EL}}^F}_\mu  \circ \hat \sigma  = 0, \hfill \nonumber  \\
&&{{\mathcal{EL}}^F}_\mu  := \frac{{{\partial ^2}F}}{{\partial {x^\mu }
\partial {y^\rho }}}d{x^\rho } - d\left( {\frac{{\partial F}}{{\partial {y^\mu }}}} \right), \hfill \label{1st_EL_pullback}
\end{eqnarray}
holds for arbitrary parameterisation $\sigma$.
\end{enumerate} 
\end{theorem}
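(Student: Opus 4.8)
The plan is to prove the equivalence of the two statements by unravelling the definition of an extremal through the integral first variation formula~(\ref{variation_1st_mech}), and then applying a fundamental-lemma-of-the-calculus-of-variations argument. First I would start from the definition of an extremal (Definition~\ref{def_extremal}): $C$ is an extremal precisely when ${\delta _\xi }{S^{\mathcal{F}}}(C) = 0$ for every flow $\alpha$ whose one-parameter group fixes the boundary $\partial C$ setwise. Substituting the integral first variation formula, this reads
\begin{align}
\int_{\hat C} \left( {\xi ^\mu } \circ {\tau _M} \cdot {{\mathcal{EL}}^F}_\mu + d\left( {\frac{{\partial F}}{{\partial {y^\rho }}} \cdot {\xi ^\rho } \circ {\tau _M}} \right) \right) = 0, \nonumber
\end{align}
where ${{\mathcal{EL}}^F}_\mu$ is the one-form defined in~(\ref{1st_EL_pullback}). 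The key observation is that the second term is exact, so by Stokes' theorem its integral over $\hat C$ equals a boundary contribution evaluated at $\partial \hat C$, namely $\left[ \frac{\partial F}{\partial y^\rho} \cdot \xi^\rho \circ \tau_M \right]$ at the endpoints $\hat\sigma(t_i)$, $\hat\sigma(t_f)$.

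Next I would impose the boundary condition. The requirement ${\alpha _s}(\partial C) = \partial C$ means the generator $\xi$ vanishes at the boundary points (or is tangent there); for the arc-segment case with two endpoints this forces $\xi^\rho$ to vanish at $\partial C$, so the exact/boundary term drops out. This reduces the extremality condition to
\begin{align}
\int_{\hat C} {\xi ^\mu } \circ {\tau _M} \cdot {{\mathcal{EL}}^F}_\mu = \int_I {\hat\sigma}^* \left( {\xi ^\mu } \circ {\tau _M} \cdot {{\mathcal{EL}}^F}_\mu \right) = 0 \nonumber
\end{align}
for all admissible $\xi$ vanishing at the boundary. Pulling back by $\hat\sigma$ turns ${{\mathcal{EL}}^F}_\mu$ into a function on $I$ times $dt$, and ${\xi^\mu}\circ\tau_M\circ\hat\sigma = \xi^\mu\circ\sigma$ are essentially arbitrary smooth functions on $[t_i,t_f]$ vanishing at the endpoints. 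The implication (1)$\Rightarrow$(2) then follows from the fundamental lemma of the calculus of variations: since the integral vanishes for all such $\xi^\mu$, the pulled-back coefficient ${{\mathcal{EL}}^F}_\mu \circ \hat\sigma$ must vanish identically along $C$. The converse (2)$\Rightarrow$(1) is immediate, since if ${{\mathcal{EL}}^F}_\mu \circ \hat\sigma = 0$ then the bulk integrand vanishes and the boundary term is killed by the boundary condition, giving ${\delta_\xi}{S^{\mathcal{F}}}(C)=0$.

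The final point to address is the claim that the equation holds \emph{for arbitrary parameterisation} $\sigma$. Here I would invoke the reparameterisation invariance established in Lemma~\ref{lem_repinv_Finsler}: since both the action $S^{\mathcal{F}}$ and the variation ${\delta_\xi}S^{\mathcal{F}}$ are independent of the choice of parameterisation within the equivalence class, the property of being an extremal is parameterisation-independent, and hence the vanishing of ${{\mathcal{EL}}^F}_\mu \circ \hat\sigma$ can be stated for any parameterisation. I expect the main obstacle to be the careful handling of the boundary term and the precise justification that the test functions $\xi^\mu \circ \sigma$ range over a large enough class (all smooth functions vanishing at the endpoints) to invoke the fundamental lemma; one must check that an arbitrary such variation of the curve is genuinely realisable by a global flow $\alpha$ whose generator restricts to the prescribed $\xi^\mu$ along $C$, which is where the global-flow simplification adopted in the text does the real work.
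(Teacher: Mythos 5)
Your proposal is correct and follows essentially the same route as the paper's proof: substitute the integral first variation formula, discard the exact term as a boundary contribution killed by the condition $\alpha_s(\partial C)=\partial C$, and then conclude $\mathcal{EL}^F{}_\mu\circ\hat\sigma=0$ from the arbitrariness of $\xi$ (the paper states this last step as ``since this relation must be true for all $\xi$'' where you invoke the fundamental lemma explicitly), with the converse obtained by reversing the steps. The extra care you take with the boundary term, the test-function class, and the reparameterisation-invariance of extremality only fills in details the paper leaves implicit.
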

\begin{proof}
Suppose $C$ is an extremal. Then, by definition, for all ${\alpha _s}:M \to M$, 
such that does not change the boundary of $C$, we have ${\delta _\xi }S(C) = 0$. 
On the other hand, the last term in (\ref{variation_1st_mech}) becomes $0$, 
since it is the boundary term. Therefore, 
we have, 
\begin{align}
\int_{\hat C} {\left( {{\xi ^\mu } \circ {\tau _M}\left\{ {\frac{{{\partial ^2}F}}
{{\partial {x^\mu }\partial {y^\rho }}}d{x^\rho } 
- d\left( {\frac{{\partial F}}{{\partial {y^\mu }}}} \right)} \right\}} \right) = 0.} 
\end{align}
Since this relation must be true for all $\xi $, which is the generator of $\alpha_s$, 
we have 
\begin{align}
\left( {\frac{{{\partial ^2}F}}{{\partial {x^\mu }\partial {y^\rho }}}d{x^\rho } 
- d\left( {\frac{{\partial F}}{{\partial {y^\mu }}}} \right)} \right) \circ \hat \sigma  = 0,
\end{align}
for any parameterisation $\sigma$. 
To prove the converse, it is sufficient to take the similar steps backwards.
\end{proof}

The equations (\ref{1st_EL_pullback}) are called the {\it Euler-Lagrange equations} or 
{\it equations of motion} of the Lagrangian $\mathcal{F}$.  

\begin{defn} Symmetry of the dynamical system \\ 
Let $u$ be a vector field over $M$, and $Y$ an induced vector field by $u$ over $TM$.  
We say that {\it $\mathcal{F}$ is invariant with respect to $u$}, if 
\begin{align}
{L_Y}{\mathcal{F}} = 0, 
\end{align}
and $u$ is called a {\it symmetry} of the dynamical system $(M,\mathcal{F})$. 
We also say that $u$ generates the invariant transformations 
on the Finsler manifold $(M, \mathcal{F})$.
\end{defn}
Now we will have the following important relation between the symmetry and a conserved quantity. 
\begin{theorem} Noether \\
Suppose we are given a symmetry of $(M, \mathcal{F})$. 
Then there exists a function $f$ on $TM$, 
which along the extremal $\gamma$ of $S^\mathcal{F}$ satisfies, 
\begin{align}
\int_{\hat \gamma } {df}  = 0, \label{eq_conserved}
\end{align}
for any parameterisation $\sigma$ which parameterise $\gamma$.
\end{theorem}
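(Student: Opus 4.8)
The plan is to mimic the structure of the standard Noether argument adapted to this parameterisation-invariant setting, using the infinitesimal first variation formula (\ref{Lie_1st_mech}) already established for the Hilbert form. First I would take the symmetry condition ${L_Y}{\mathcal{F}} = 0$, where $Y$ is the vector field on $TM$ induced by $u$ via the formula (\ref{pro.v.f.onTM}), and compare it term by term with the decomposition of the Lie derivative in (\ref{Lie_1st_mech}). That formula splits ${L_Y}{\mathcal{F}}$ into three pieces: an Euler--Lagrange piece ${u^\mu } \circ {\tau _M} \cdot {{\mathcal{EL}}^F}_\mu$, a term proportional to $\frac{{{\partial ^2}F}}{{\partial {y^\mu }\partial {y^\rho }}}{y^\nu }$, and an exact term $d\left( {\frac{{\partial F}}{{\partial {y^\rho }}} \cdot {u^\rho } \circ {\tau _M}} \right)$. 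The middle term vanishes after pulling back along the lift $\hat\sigma$ by the homogeneity consequence (\ref{cond_hom3}), exactly as in the derivation of the integral first variation formula.

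Next I would restrict attention to an extremal $\gamma$ with parameterisation $\sigma$. Pulling back the identity $0 = {\hat\sigma}^* {L_Y}{\mathcal{F}}$ and invoking Theorem \ref{thm_1st_var}, the Euler--Lagrange piece ${{\mathcal{EL}}^F}_\mu \circ \hat\sigma = 0$ vanishes along $\gamma$. What survives is precisely the pulled-back exact term, so setting
\begin{align}
f := \frac{{\partial F}}{{\partial {y^\rho }}} \cdot {u^\rho } \circ {\tau _M}
\end{align}
I obtain ${\hat\sigma}^*(df) = 0$ along $\hat\gamma$, which integrated over the lifted extremal gives the desired $\int_{\hat\gamma} df = 0$. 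Since the Hilbert form and its construction are globally defined and the argument uses only the intrinsic objects ${\mathcal{F}}$, $Y$, and the homogeneity of $F$, the function $f$ is a well-defined function on $TM$ (away from the zero section) independent of chart.

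The key conceptual step, and the one I expect to be the main obstacle, is handling the reparameterisation freedom correctly: I must verify that the conclusion $\int_{\hat\gamma} df = 0$ holds for \emph{any} parameterisation $\sigma$ of $\gamma$, not merely for one. This follows because both the symmetry condition and the extremality condition are parameterisation-independent statements — the former is a condition on the form ${\mathcal{F}}$ on $TM$, and the latter was shown in Theorem \ref{thm_1st_var} to hold for arbitrary parameterisation — but I would want to state this carefully so that the vanishing of $df$ along $\hat\gamma$ is a pointwise statement on the one-dimensional submanifold $\hat C$ itself, hence manifestly independent of how it is parameterised. A minor subsidiary point to check is that the term dropped via (\ref{cond_hom3}) indeed only requires the pull-back along a lift, which is automatic since $\hat\sigma$ is a lifted (tangent) parameterisation; this is the same mechanism used earlier and needs no new input.
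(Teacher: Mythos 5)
Your proposal is correct and follows essentially the same route as the paper: starting from ${L_Y}{\mathcal{F}} = 0$, expanding via the first variation formula, killing the middle term by the pulled-back homogeneity identity and the Euler--Lagrange term by extremality, and reading off $f = \frac{\partial F}{\partial y^\rho}\cdot u^\rho\circ\tau_M$ from the surviving exact term. Your additional remark on why the conclusion is parameterisation-independent is a sensible clarification but does not alter the argument.
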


\begin{proof}
Let the symmetry be $u$, with its local coordinate expression 
$\displaystyle{u = {u^\mu }\frac{\partial }{{\partial {x^\mu }}}}$, 
and the induced vector field $Y$. 
Then from (\ref{variation_1st_mech}), we have 
\begin{align}
  0 &= \int_{\hat \gamma } {{L_Y}\user2{\mathcal{F}}}  \hfill \nonumber \\
&    = \int_{\hat \gamma } {\left( {{u^\mu } \circ {\tau _M}\left\{ {\frac{{{\partial ^2}F}}
  {{\partial {x^\mu }\partial {y^\rho }}}d{x^\rho } - d\left( {\frac{{\partial F}}{{\partial {y^\mu }}}} \right)} \right\} 
  + d\left( {\frac{{\partial F}}{{\partial {y^\rho }}} \cdot {u^\rho } \circ {\tau _M}} \right)} \right)}  \hfill \nonumber \\
&    = \int_{\hat \gamma } {d\left( {\frac{{\partial F}}{{\partial {y^\rho }}} \cdot {u^\rho } 
\circ {\tau _M}} \right)} , \hfill 
\end{align}
The third equality comes from the fact we consider along the extremal $\gamma$. 
Therefore we have a function on $TM$, 
\begin{align}
f = \frac{{\partial F}}{{\partial {y^\rho }}} \cdot {u^\rho } \circ {\tau _M},
\end{align}
such that satisfies the condition. 
\end{proof}

We call the relation (\ref{eq_conserved}), the {\it conservation law}. 

We can express the conservation law (\ref{eq_conserved}) by 
taking arbitrary parameterisation for this $\gamma$. 
For instance, by $\sigma:[t_i,t_f] \to M$
\begin{align}
0 = \int_{\partial \hat \gamma } {\frac{{\partial F}}{{\partial {y^\rho }}} \cdot {u^\rho } \circ {\tau _M}}  
= \frac{{\partial F}}{{\partial {y^\rho }}} \cdot {u^\rho } \circ {\tau _M}(\hat \sigma ({t_i})) 
- \frac{{\partial F}}{{\partial {y^\rho }}} \cdot {u^\rho } \circ {\tau _M}(\hat \sigma ({t_f})).
\end{align}
\begin{defn} Noether current \\
The quantity $f$ is called the {\it Noether current associated with $u$}.  
\end{defn}
Besides the symmetry defined above, we can also consider the symmetry of ${\mathcal{F}}$
directly by the vector field on $TM$. 
Such symmetries will also induce a conservation law along the extremal. 
\begin{defn} Symmetry of the Finsler-Hilbert form on $TM$\\ 
Let $X$ be a vector field over $TM$.  
We say that $\mathcal{F}$ is invariant with respect to $X$, if 
\begin{align}
{L_X} {\mathcal{F}} = 0,
\end{align} 
and call $X$ a {\it symmetry} of Finsler-Hilbert form $\mathcal{F}$ on $TM$. 
\end{defn}
The generalised conservation law will be the following. 
\begin{theorem} Conservation law \\
Suppose we are given a symmetry of $ {\mathcal{F}}$ on $TM$. 
Then there exists a function $f$ on $TM$, 
which along the extremal $\gamma$ of $S^\mathcal{F}$ satisfies, 
\begin{align}
\int_{{\hat{\gamma }}}{df}=0, 
\end{align}
for any parameterisation $\sigma$ which parameterise $\gamma$.
\end{theorem}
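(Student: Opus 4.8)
The plan is to exploit Cartan's magic formula together with the homogeneity of $F$, so as to reduce the whole statement to the Euler--Lagrange equations already established in Theorem \ref{thm_1st_var}. First I would write, for the given symmetry $X$ on $TM$,
\begin{align}
L_X {\mathcal{F}} = i_X\, d{\mathcal{F}} + d(i_X {\mathcal{F}}),
\end{align}
where $i_X$ denotes the interior product. Since $X$ is a symmetry, $L_X {\mathcal{F}} = 0$, so setting $f := i_X {\mathcal{F}}$ (a function on $TM$, because ${\mathcal{F}}$ is a $1$-form) yields $df = - i_X\, d{\mathcal{F}}$. The candidate Noether current is thus $f = i_X {\mathcal{F}} = \frac{\partial F}{\partial y^\mu} X^\mu$, which specialises to the current of the preceding theorem when $X$ is the vector field induced by some $u \in {\mathfrak{X}}(M)$.

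The key step would be to show that, when pulled back along the lift $\hat\sigma$ of any parameterisation, the contraction $i_Z\, d{\mathcal{F}}$ depends only on the horizontal part of $Z$ and in fact reproduces the Euler--Lagrange expression. Writing $Z = Z^\mu \frac{\partial}{\partial x^\mu} + \bar Z^\mu \frac{\partial}{\partial y^\mu}$ and using
\begin{align}
d{\mathcal{F}} = \frac{\partial^2 F}{\partial x^\nu \partial y^\mu}\, d{x^\nu} \wedge d{x^\mu} + \frac{\partial^2 F}{\partial y^\nu \partial y^\mu}\, d{y^\nu} \wedge d{x^\mu},
\end{align}
I would contract with $Z$ and apply $\hat\sigma^*$, recalling $\hat\sigma^*(d{x^\mu}) = (y^\mu \circ \hat\sigma)\, dt$ and $\hat\sigma^*(d{y^\mu}) = \frac{d(y^\mu \circ \hat\sigma)}{dt}\, dt$. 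The term carrying the vertical components $\bar Z^\nu$ comes multiplied by $\frac{\partial^2 F}{\partial y^\nu \partial y^\mu}(y^\mu \circ \hat\sigma)$, which vanishes by the pulled-back homogeneity condition (\ref{cond_hom3}). After relabelling indices the surviving terms assemble exactly into $Z^\mu \circ \hat\sigma$ times $\hat\sigma^* {{\mathcal{EL}}^F}_\mu$, giving the identity
\begin{align}
\hat\sigma^*(i_Z\, d{\mathcal{F}}) = (Z^\mu \circ \hat\sigma)\, \hat\sigma^* {{\mathcal{EL}}^F}_\mu,
\end{align}
valid for an arbitrary vector field $Z$ on $TM$.

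With this identity in hand the conclusion is immediate. Along the extremal $\gamma$, Theorem \ref{thm_1st_var} gives $\hat\sigma^* {{\mathcal{EL}}^F}_\mu = 0$ for every parameterisation $\sigma$ of $\gamma$; taking $Z = X$ therefore yields $\hat\sigma^*(i_X\, d{\mathcal{F}}) = 0$, and hence
\begin{align}
\int_{\hat\gamma} df = - \int_{\hat\gamma} i_X\, d{\mathcal{F}} = - \int_I \hat\sigma^*(i_X\, d{\mathcal{F}}) = 0,
\end{align}
where $I$ is the parameter space of $\gamma$. I expect the main obstacle to be the index bookkeeping in the key step: one must verify that the single contribution of the vertical part $\bar Z$ is precisely the one annihilated by the homogeneity condition, so that the entire contraction collapses onto the horizontal Euler--Lagrange term regardless of how the symmetry behaves along the fibres. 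Everything else is a routine application of Cartan's formula and of the change-of-variables pullback already used in the earlier variational computations.
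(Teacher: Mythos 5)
Your argument is correct, and the Noether current you produce, $f = i_X\mathcal{F} = \frac{\partial F}{\partial y^\mu}X^\mu$, is exactly the one in the paper. The underlying computation is the same — in both cases one splits $L_X\mathcal{F}$ into an exact piece $d\bigl(\frac{\partial F}{\partial y^\rho}X^\rho\bigr)$ plus a remainder that dies along extremals — but your packaging is genuinely different and, I think, cleaner. The paper expands $L_X\mathcal{F}$ term by term via the Leibniz rule for a general $X = X^\mu\partial_{x^\mu} + \tilde X^\mu\partial_{y^\mu}$, obtaining the Euler--Lagrange piece, the vertical piece $\tilde X^\mu\frac{\partial^2 F}{\partial y^\mu\partial y^\rho}dx^\rho$, and the exact piece, and then asserts (somewhat tersely) that the first two integrate to zero ``along the extremal'' — in fact the vertical piece vanishes by the pulled-back homogeneity condition (\ref{cond_hom3}), not by the extremal condition, a point your write-up makes explicit. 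You instead invoke Cartan's formula to identify $f = i_X\mathcal{F}$ at the outset and reduce everything to the single identity $\hat\sigma^*(i_Z\,d\mathcal{F}) = (Z^\mu\circ\hat\sigma)\,\hat\sigma^*\mathcal{EL}^F{}_\mu$ for an \emph{arbitrary} vector field $Z$ on $TM$; this is essentially the pullback of the paper's relation (\ref{F-EL_rel}) combined with homogeneity, which the paper states but never actually uses in its proof. What your route buys is transparency: it shows in one stroke why the vertical component of the symmetry is irrelevant and why the conservation law is equivalent to the Euler--Lagrange equations, and the identity for general $Z$ is reusable. What it costs is nothing beyond the index check you already flagged, which I have verified goes through (the sole $\bar Z$-contribution is $\frac{\partial^2 F}{\partial y^\nu\partial y^\mu}\bar Z^\nu(y^\mu\circ\hat\sigma)$, killed by the symmetric Hessian identity $\frac{\partial^2 F}{\partial y^\mu\partial y^\rho}y^\rho = 0$). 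One cosmetic point: strictly, $f$ and $\mathcal{F}$ live on $T^0M$, and $\hat\gamma\subset T^0M$ because the parameterisation is regular; the paper glosses over this as well.
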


\begin{proof}
Let the symmetry be $X$, with its local coordinate expression 
$\displaystyle{X = {X^\mu }\frac{\partial }{{\partial {x^\mu }}} 
+ {\tilde X^\mu }\frac{\partial }{{\partial {y^\mu }}}}$, 
Then we have 
\begin{align}
& 0 = \int_{\hat \gamma } {{L_X} {\mathcal{F}}}  \hfill \nonumber \\
& \quad  = \int_{\hat \gamma } {\left( {{X^\mu }\left\{ {\frac{{{\partial ^2}F}}{{\partial {x^\mu }\partial {y^\rho }}}d{x^\rho } - d\left( {\frac{{\partial F}}{{\partial {y^\mu }}}} \right)} \right\} + {{\tilde X}^\mu }\frac{{{\partial ^2}F}}{{\partial {y^\mu }\partial {y^\rho }}}d{x^\rho } + d\left( {\frac{{\partial F}}{{\partial {y^\rho }}} \cdot {X^\rho }} \right)} \right)}  \hfill \nonumber \\
& \quad  = \int_{\hat \gamma } {d\left( {\frac{{\partial F}}{{\partial {y^\rho }}} \cdot {X^\rho }} \right)}  \hfill  \label{eq_conserved_2}
\end{align}
The third equality comes from the fact we consider along the extremal $\gamma$. 
Therefore we have a function on $TM$, 
\begin{align}
f = \frac{{\partial F}}{{\partial {y^\rho }}} \cdot {X^\rho }
\end{align} 
such that satisfies the condition. 
\end{proof}
It can be said that in the special case when there exists a symmetry on $TM$, 
such that is an induced vector field on $M$ denoted by $u$,  
then there exists a symmetry $u$ on $(M,\mathcal{F})$. 

By the coordinate transformation 
\begin{align}
&{x^\mu } \to {\tilde x^\mu } = {\tilde x^\mu }({x^\nu }), \nonumber \\
&{y^\mu } \to {\tilde y^\mu } 
= \frac{{\partial {{\tilde x}^\mu }}}{{\partial {x^\nu }}}{y^\nu }, 
\end{align}
the differential $1$-form ${\mathcal{EL}^F}_{\mu}$ in (\ref{1st_EL_pullback}) transforms as  
\begin{align}
\frac{{{\partial ^2}F}}{{\partial {{\tilde x}^\mu }\partial {{\tilde y}^\rho }}}d{\tilde x^\rho } 
- d\left( {\frac{{\partial F}}{{\partial {{\tilde y}^\mu }}}} \right) 
= \left( {\frac{{\partial {x^\nu }}}{{\partial {{\tilde x}^\mu }}}} \right)
\left( {\frac{{{\partial ^2}F}}{{\partial {x^\nu }\partial {y^\rho }}}d{x^\rho } 
- d\left( {\frac{{\partial F}}{{\partial {y^\nu }}}} \right)} \right).
\end{align}
This observation leads us to define a new coordinate invariant form.
\begin{lemma} Euler-Lagrange form \\
There exist a global two form on $TM$, which in local coordinates are expressed by
\begin{align}
{\mathcal{E}}{ {\mathcal{L}}^F}: = d{x^\mu } \wedge  {\mathcal{E}}{ {\mathcal{L}}^F}_\mu  
= \left( {\frac{{{\partial ^2}F}}{{\partial {x^\mu }\partial {y^\rho }}}d{x^\mu } 
+ d\left( {\frac{{\partial F}}{{\partial {y^\rho }}}} \right)} \right) \wedge d{x^\rho } . \label{EL_form}
\end{align}
\end{lemma}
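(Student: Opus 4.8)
The plan is to show that the locally defined two-form $\mathcal{EL}^F := dx^\mu \wedge \mathcal{EL}^F_\mu$, with $\mathcal{EL}^F_\mu$ the one-form from (\ref{1st_EL_pullback}), patches together into a single globally defined two-form on $TM$. Since the expression is already given in a chart, the entire content of the lemma is the verification that it transforms correctly, i.e. that on an overlap $V \cap \bar V$ the two local expressions agree. First I would fix two induced charts $(V,\psi)$, $\psi = (x^\mu, y^\mu)$ and $(\bar V, \bar\psi)$, $\bar\psi = (\bar x^\mu, \bar y^\mu)$ on $TM$, related by the tangent-bundle transformation law $\bar x^\mu = \bar x^\mu(x^\nu)$, $\bar y^\mu = \frac{\partial \bar x^\mu}{\partial x^\nu} y^\nu$ already recorded above.

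The key observation, which the excerpt has in fact just established in the display immediately preceding the lemma, is the transformation rule for the components $\mathcal{EL}^F_\mu$:
\begin{align}
\frac{{{\partial ^2}F}}{{\partial {{\tilde x}^\mu }\partial {{\tilde y}^\rho }}}d{\tilde x^\rho }
- d\left( {\frac{{\partial F}}{{\partial {{\tilde y}^\mu }}}} \right)
= \frac{{\partial {x^\nu }}}{{\partial {{\tilde x}^\mu }}}
\left( {\frac{{{\partial ^2}F}}{{\partial {x^\nu }\partial {y^\rho }}}d{x^\rho }
- d\left( {\frac{{\partial F}}{{\partial {y^\nu }}}} \right)} \right),
\end{align}
that is, $\bar{\mathcal{EL}}^F_\mu = \frac{\partial x^\nu}{\partial \bar x^\mu}\,\mathcal{EL}^F_\nu$, so that $\mathcal{EL}^F_\mu$ transforms as the components of a one-form in the index $\mu$ while carrying an extra covariant weight. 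I would then simply contract this against the wedge factor $d\bar x^\mu$. Since $d\bar x^\mu = \frac{\partial \bar x^\mu}{\partial x^\sigma} dx^\sigma$, the computation is
\begin{align}
d{\bar x}^\mu \wedge \bar{\mathcal{EL}}^F_\mu
= \frac{{\partial {{\bar x}^\mu }}}{{\partial {x^\sigma }}}\,
\frac{{\partial {x^\nu }}}{{\partial {{\bar x}^\mu }}}\, dx^\sigma \wedge \mathcal{EL}^F_\nu
= \delta^\nu_\sigma\, dx^\sigma \wedge \mathcal{EL}^F_\nu
= dx^\nu \wedge \mathcal{EL}^F_\nu,
\end{align}
where the middle step uses the chain rule $\frac{\partial \bar x^\mu}{\partial x^\sigma}\frac{\partial x^\nu}{\partial \bar x^\mu} = \delta^\nu_\sigma$. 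This shows the local expressions coincide on overlaps, so $\mathcal{EL}^F$ is well defined globally.

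There is essentially no hard obstacle here; the lemma is a bookkeeping consequence of the transformation rule that has already been derived in the lines above the statement. The only point requiring a word of care is the second equivalent form of $\mathcal{EL}^F$ written in (\ref{EL_form}), namely $\left(\frac{\partial^2 F}{\partial x^\mu \partial y^\rho}dx^\mu + d\!\left(\frac{\partial F}{\partial y^\rho}\right)\right)\wedge dx^\rho$. I would verify that this agrees with $dx^\mu \wedge \mathcal{EL}^F_\mu$ by relabelling indices and using antisymmetry of the wedge product: writing $dx^\mu \wedge \mathcal{EL}^F_\mu = dx^\mu \wedge \frac{\partial^2 F}{\partial x^\mu \partial y^\rho}dx^\rho - dx^\mu \wedge d\!\left(\frac{\partial F}{\partial y^\mu}\right)$ and noting that moving the wedge factor $dx^\rho$ to the left in the second presentation flips the sign, so the two presentations match termwise. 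This completes the identification and hence the proof; the whole argument is a short transformation-law verification rather than a substantive computation.
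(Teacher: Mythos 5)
Your proof is correct and follows essentially the same route as the paper, which simply asserts that global well-definedness is ``obvious'' from the transformation rule $\bar{\mathcal{EL}}^F_\mu = \frac{\partial x^\nu}{\partial \bar x^\mu}\,\mathcal{EL}^F_\nu$ derived immediately before the lemma; you have merely written out the contraction with $d\bar x^\mu$ and the sign/relabelling check for the second displayed form, both of which are accurate.
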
 
From the previous coordinate transformations, this form is obviously coordinate independent. 

There is a direct relation between the exterior derivative of $\mathcal{F}$ and $\mathcal{EL}$, 
\begin{align}
d{\mathcal{F}} = {\mathcal{E}}{{\mathcal{L}}^F} 
- \frac{{{\partial ^2}F}}{{\partial {x^\mu }\partial {y^\nu }}}d{x^\mu } \wedge d{x^\nu }. \label{F-EL_rel}
\end{align}
It can be also checked easily that this is also a coordinate invariant relation.

\begin{remark} \label{convL-Fins-rel}
In Chapter 3, Remark \ref {rem_HilbertCartan}, we showed that when given a Hilbert form, 
we can obtain the Cartan form by taking an inclusion map from $J^1 Y$ to $TY$. 
Here we will show that given a ``conventional'' Lagrange function on $J^1 Y$, 
we can also construct its homogeneous counterpart. However, unlike in the case of the former, 
in general, this cannot be done globally. ( It is possible only when $Y=\mathbb{R}\times Q$. )
As in the Remark \ref {rem_HilbertCartan}, 
let $(U,\psi )$, $\psi  = (t,{q^i})$, $i = 1, \cdots ,n$ 
be the adapted chart on a $(n+1)$-dimensional manifold $Y$, 
and the induced chart on $\mathbb{R}$ be 
$(\pi(U),t)$. We denote the induced chart on ${J^1}Y$ by $({({\pi ^{1,0}})^{ - 1}}(U),{\psi ^1}),{\psi ^1} 
= (t,{q^i},{\dot q^i})$
$({({\pi ^{1,0}})^{ - 1}}(U),{\psi ^1}),{\psi ^1} = (t,{q^i},{\dot q^i})$. 
Take the induced chart on $TY$ as $(V,{\tilde \psi ^1})$, $V = {({\tau _Y})^{ - 1}}(U)$, 
${\tilde \psi ^1} = ({x^0},{x^i},{y^0},{y^i})$, $i = 1, \cdots ,n$, 
such that ${y^0} \ne 0$. It is always possible to choose such coordinates for a single chart. 
(In order to avoid confusion we use different symbols, 
but clearly ${x^0} = t \circ {\tau _Y},{x^i} = {q^i} \circ {\tau _Y}$.) 
Now consider a map $\rho :V \hookrightarrow {J^1}Y$, $\rho (V) = {({\pi ^{1,0}})^{ - 1}}(U)$, 
which in coordinates are defined by 
\begin{align}
t \circ \rho  = {x^0},\;{q^i} \circ \rho  = {x^i},\;{\dot q^i} \circ \rho  = \frac{y^i}{y^0}.
\end{align}
Let $F$ be a function on $V$, defined by, 
\begin{align}
F/{y^0} = {\rho ^*}{\mathcal{L}}.
\end{align}
In coordinates, 
\begin{align}
F({x^\mu },{y^\mu }) = {\mathcal{L}}(t \circ \rho ,{q^i} \circ \rho ,{\dot q^i} \circ \rho ){y^0} 
= {\mathcal{L}}({x^0},{x^i},\frac{{{y^i}}}{{{y^0}}}){y^0},  \label{convL-FinsL}
\end{align}
for $\mu  = 0, \cdots ,n$, $i = 1, \cdots ,n$. 
Then on $V$, $F$ satisfies the homogeneity function. 
We now have $\displaystyle{{\mathcal{F}} = \frac{{\partial F}}{{\partial {y^\mu }}}d{x^\mu }}$ on $V$. 
In this way, for a local coordinate chart, (or for the case of $Y=\mathbb{R} \times Q$, 
also globally) we can construct a local Hilbert $1$-form from a Lagrangian, 
which also can be used as an reparameterisation invariant action provided that the arc segment 
where the integration is carried out is covered by this single chart. 
\end{remark}

\begin{remark} 
Locally (on a single chart), we can also construct a different Finsler function and 
local Hilbert form from $\mathcal{L}$ by choosing an appropriate map for $\rho$. 
Though these constructions are coordinate dependent, if the arc segment where the integration is carried out is 
covered by this single chart, we can take it as a reparameterisation invariant action. 
\end{remark}

\begin{remark}
We can also check that the equations of motion given by (\ref{1st_EL_pullback}) reduce to the 
conventional Euler-Lagrange equations by considering the same inclusion map 
$\iota :{{J}^{1}}Y\to TY$, $\dim Y=n+1$, in Chapter 3, Remark \ref {rem_HilbertCartan}, 
which in coordinate expression were given by 
\begin{align}
{x^0} \circ \iota  = t,\;{x^i} \circ \iota  = {q^i},\;{y^0} \circ \iota  = 1,\;{y^i} \circ \iota  = {\dot q^i}, 
\end{align}
$i = 1, \cdots ,n$.
We will use Greek indices such as $\mu ,\rho  = 0,1,2, \cdots ,n$, 
and Latin indices such as $i,j,k = 1,2, \cdots ,n$. 
Rewrite the $1$-form ${\mathcal{E}}{{\mathcal{L}}^F}_\mu $, 
\begin{align}
 {\mathcal{E}}{{\mathcal{L}}^F}_\mu  & = \frac{{{\partial ^2}F}}{{\partial {x^\mu }\partial {y^\rho }}}d{x^\rho } 
- d\left( {\frac{{\partial F}}{{\partial {y^\mu }}}} \right) 
= \frac{{{\partial ^2}F}}{{\partial {x^\mu }\partial {y^0}}}d{x^0} 
+ \frac{{{\partial ^2}F}}{{\partial {x^\mu }\partial {y^i}}}d{x^i} 
- d\left( {\frac{{\partial F}}{{\partial {y^\mu }}}} \right) \hfill \nonumber \\
&   = \frac{1}{{{y^0}}}\left( {\frac{{\partial F}}{{\partial {x^\mu }}} 
- \frac{{{\partial ^2}F}}{{\partial {x^\mu }\partial {y^j}}}{y^j}} \right)d{x^0} 
+ \frac{{{\partial ^2}F}}{{\partial {x^\mu }\partial {y^i}}}d{x^i} 
- d\left( {\frac{{\partial F}}{{\partial {y^\mu }}}} \right), \hfill 
\end{align}
where the components of $\mu=0$, and $\mu=1, 2, \cdots , n$ are
\begin{align}
 {\mathcal{E}}{{\mathcal{L}}^F}_0 & = \frac{{{\partial ^2}F}}{{\partial {x^0}\partial {y^\rho }}}d{x^\rho } 
- d\left( {\frac{{\partial F}}{{\partial {y^0}}}} \right) \hfill \nonumber \\
&    = \frac{1}{{{y^0}}}\left( {\frac{{\partial F}}{{\partial {x^0}}} 
- \frac{{{\partial ^2}F}}{{\partial {x^0}\partial {y^j}}}{y^j}} \right)d{x^0} 
+ \frac{{{\partial ^2}F}}{{\partial {x^0}\partial {y^i}}}d{x^i} 
- \frac{1}{{{y^0}}}d\left( {F - \frac{{\partial F}}{{\partial {y^j}}}{y^j}} \right) \hfill \nonumber \\
&   \quad  + \frac{1}{{{{\left( {{y^0}} \right)}^2}}}\left( {F 
- \frac{{\partial F}}{{\partial {y^j}}}{y^j}} \right)d{y^0}, \hfill \nonumber \\
  {\mathcal{E}}{{\mathcal{L}}^F}_i &= \frac{{{\partial ^2}F}}{{\partial {x^i}\partial {y^\rho }}}d{x^\rho } 
- d\left( {\frac{{\partial F}}{{\partial {y^i}}}} \right) \hfill \nonumber \\
&    = \frac{1}{{{y^0}}}\left( {\frac{{\partial F}}{{\partial {x^i}}} 
- \frac{{{\partial ^2}F}}{{\partial {x^i}\partial {y^j}}}{y^j}} \right)d{x^0} 
+ \frac{{{\partial ^2}F}}{{\partial {x^i}\partial {y^i}}}d{x^i} 
- d\left( {\frac{{\partial F}}{{\partial {y^i}}}} \right). \hfill 
\end{align}
The pull back to ${J^1}Y$ becomes
\begin{align}
  {\iota ^*}{{\mathcal{EL}}^F}_0 &= \left( {\frac{1}{{{y^0}}}\left( {\frac{{\partial F}}{{\partial {x^0}}} 
- \frac{{{\partial ^2}F}}{{\partial {x^0}\partial {y^j}}}{y^j}} \right)d{x^0} 
+ \frac{{{\partial ^2}F}}{{\partial {x^0}\partial {y^j}}}d{x^j} - \frac{1}{{{y^0}}}d\left( 
{F - \frac{{\partial F}}{{\partial {y^j}}}{y^j}} \right)} \right) \circ \iota  \hfill \nonumber \\
&   \quad + \left( {\frac{1}{{{{\left( {{y^0}} \right)}^2}}}\left( 
{F - \frac{{\partial F}}{{\partial {y^j}}}{y^j}} \right)d{y^0}} \right) \circ \iota  \hfill \nonumber \\
&   = \left( {\frac{{\partial F}}{{\partial {x^0}}} 
- \frac{{{\partial ^2}F}}{{\partial {x^0}\partial {y^j}}}{y^j}} \right) \circ \iota \,dt 
+ \frac{{{\partial ^2}F}}{{\partial {x^0}\partial {y^j}}} \circ \iota \,d{q^j} 
- d\left( {\left( {F - \frac{{\partial F}}{{\partial {y^j}}}{y^j}} \right) \circ \iota } \right) \hfill \nonumber \\
&   = \left( {\frac{{\partial {\mathcal{L}}}}{{\partial t}} 
- \frac{{{\partial ^2}{\mathcal{L}}}}{{\partial t\partial {{\dot q}^j}}}{{\dot q}^j}} \right)dt 
+ \frac{{{\partial ^2}{\mathcal{L}}}}{{\partial t\partial {{\dot q}^j}}}d{q^j} 
- d\left( {{\mathcal{L}} - \frac{{\partial {\mathcal{L}}}}{{\partial {{\dot q}^j}}}{{\dot q}^j}} \right), \hfill \nonumber \\
 {\iota ^*}{{\mathcal{EL}}^F}_i & = \left( {\frac{1}{{{y^0}}}
\left( {\frac{{\partial F}}{{\partial {x^i}}} - \frac{{{\partial ^2}F}}{{\partial {x^i}\partial {y^j}}}{y^j}} 
\right)d{x^0} + \frac{{{\partial ^2}F}}{{\partial {x^i}\partial {y^j}}}d{x^j} 
- d\left( {\frac{{\partial F}}{{\partial {y^i}}}} \right)} \right) \circ \iota  \hfill \nonumber \\
&   = \left( {\left( {\frac{{\partial F}}{{\partial {x^i}}} 
- \frac{{{\partial ^2}F}}{{\partial {x^i}\partial {y^j}}}{y^j}} \right) \circ \iota \,dt 
+ \frac{{{\partial ^2}F}}{{\partial {x^i}\partial {y^j}}} \circ \iota \,d{q^j} 
- d\left( {\frac{{\partial F}}{{\partial {y^i}}} \circ \iota } \right)} \right) \hfill \nonumber \\
&   = \left( {\left( {\frac{{\partial {\mathcal{L}}}}{{\partial {q^i}}} 
- \frac{{{\partial ^2}{\mathcal{L}}}}{{\partial {q^i}\partial {{\dot q}^j}}}{{\dot q}^j}} \right)\,dt 
+ \frac{{{\partial ^2}{\mathcal{L}}}}{{\partial {q^i}\partial {{\dot q}^j}}}\,d{q^j} 
- d\left( {\frac{{\partial {\mathcal{L}}}}{{\partial {{\dot q}^i}}}} \right)} \right). \hfill  
\end{align}
Now suppose we have a map ${\gamma ^1}:\mathbb{R} \to {J^1}Y$ such that satisfies 
\begin{align}
\iota  \circ {\gamma ^1} = \hat \sigma .
\end{align}
Then, 
\begin{align}
\left( {{\mathcal{EL}}^F}_\mu  \circ \iota \right) \circ {\gamma ^1} 
= {{\mathcal{EL}}^F}_\mu  \circ \left( {\iota  \circ {\gamma ^1}} \right) 
= {{\mathcal{EL}}^F}_\mu  \circ \hat \sigma ,
\end{align}
therefore, 
the equation of motion ${{\mathcal{EL}}^F}_\mu  \circ \hat \sigma  = 0$ 
where ${{\mathcal{EL}}^F}_\mu $ is a form on $TM$, 
can be interpreted as a equation of motion 
$\left( {{\mathcal{EL}}^F}_\mu  \circ \iota \right) \circ {\gamma ^1} = 0$, 
where ${{\mathcal{EL}}^F}_\mu  \circ \iota $ is a form on $J^1 Y$. 

In the special case where $Y = \mathbb{R} \times Q$, where $Q$ is the $n$-dimensional configuration space, 
and ${J^1}Y$ is the prolongation of the bundle $(Y,p{r_1},\mathbb{R})$, 
we can consider a section $\gamma $ of $(Y,p{r_1},\mathbb{R})$, and take its prolongation 
${J^1}\gamma $ as ${\gamma ^1}$. 
In such case, the pull back equation 
$\left( {{\mathcal{E}}{{\mathcal{L}}^F}_\mu  \circ \iota } \right) \circ {\gamma ^1} = 0$ becomes, 
\begin{align}
& {{\mathcal{EL}}^F}_0 \circ \iota  \circ {J^1}\gamma  
= \left( {\left( {\frac{{\partial {\mathcal{L}}}}{{\partial t}} 
- \frac{{{\partial ^2}{\mathcal{L}}}}{{\partial t\partial {{\dot q}^j}}}{{\dot q}^j}} \right)dt 
+ \frac{{{\partial ^2}{\mathcal{L}}}}{{\partial t\partial {{\dot q}^j}}}d{q^j} 
- d\left( {{\mathcal{L}} - \frac{{\partial {\mathcal{L}}}}{{\partial {{\dot q}^j}}}{{\dot q}^j}} \right)} \right) 
\circ {J^1}\gamma  \hfill \nonumber \\
&  = \left( {\left( {\frac{{\partial {\mathcal{L}}}}{{\partial t}} 
- \frac{{{\partial ^2}{\mathcal{L}}}}{{\partial t\partial {{\dot q}^j}}}{{\dot q}^j}} \right) 
\circ {J^1}\gamma \, + \left( {\frac{{{\partial ^2}{\mathcal{L}}}}{{\partial t\partial {{\dot q}^j}}}{{\dot q}^j}} \right) 
\circ {J^1}\gamma \, - \frac{d}{{dt}}\left( {\left( {{\mathcal{L}} 
- \frac{{\partial {\mathcal{L}}}}{{\partial {{\dot q}^j}}}{{\dot q}^j}} \right) \circ {J^1}\gamma } \right)} 
\right)dt \hfill \nonumber \\
&   = \left( {\frac{{\partial {\mathcal{L}}}}{{\partial t}}} \right) \circ {J^1}\gamma \,\,dt \hfill \nonumber \\
&   \quad - \left( {\frac{{\partial {\mathcal{L}}}}{{\partial t}} 
+ \frac{{\partial {\mathcal{L}}}}{{\partial {q^j}}}{{\dot q}^j} 
+ \frac{{\partial {\mathcal{L}}}}{{\partial {{\dot q}^j}}}{{\ddot q}^j} 
- \frac{{{\partial ^2}{\mathcal{L}}}}{{\partial t\partial {{\dot q}^j}}}{{\dot q}^j} 
- \frac{{{\partial ^2}{\mathcal{L}}}}{{\partial {q^k}\partial {{\dot q}^j}}}{{\dot q}^k}{{\dot q}^j} 
- \frac{{{\partial ^2}{\mathcal{L}}}}{{\partial {{\dot q}^k}\partial {{\dot q}^j}}}{{\ddot q}^k}{{\dot q}^j} 
- \frac{{\partial {\mathcal{L}}}}{{\partial {{\dot q}^j}}}{{\ddot q}^j}} \right) \circ {J^1}\gamma \,dt \hfill \nonumber \\
&   =  - \,\left( {\frac{{\partial {\mathcal{L}}}}{{\partial {q^j}}}{{\dot q}^j} 
- \frac{{{\partial ^2}{\mathcal{L}}}}{{\partial t\partial {{\dot q}^j}}}{{\dot q}^j} 
- \frac{{{\partial ^2}{\mathcal{L}}}}{{\partial {q^k}\partial {{\dot q}^j}}}{{\dot q}^k}{{\dot q}^j} 
- \frac{{{\partial ^2}{\mathcal{L}}}}{{\partial {{\dot q}^k}\partial {{\dot q}^j}}}{{\ddot q}^k}{{\dot q}^j}} 
\right) \circ {J^1}\gamma \,dt \hfill \nonumber \\
&  =  - \left( {\frac{{\partial {\mathcal{L}}}}{{\partial {q^j}}} \circ {J^1}\gamma  
- \frac{d}{{dt}}\left( {\frac{{\partial {\mathcal{L}}}}{{\partial {{\dot q}^j}}} \circ {J^1}\gamma } \right)} 
\right)\left( {{{\dot q}^j} \circ {J^1}\gamma } \right)\,dt = 0, \hfill \nonumber \\ 
& {{\mathcal{EL}}^F}_i \circ \iota  \circ {J^1}\gamma  
= \left( {\left( {\frac{{\partial {\mathcal{L}}}}{{\partial {q^i}}} 
- \frac{{{\partial ^2}{\mathcal{L}}}}{{\partial {q^i}\partial {{\dot q}^j}}}{{\dot q}^j}} \right)\,dt 
+ \frac{{{\partial ^2}{\mathcal{L}}}}{{\partial {q^i}\partial {{\dot q}^j}}}\,d{q^j} 
- d\left( {\frac{{\partial {\mathcal{L}}}}{{\partial {{\dot q}^i}}}} \right)} \right) \circ {J^1}\gamma  \hfill \nonumber \\
&   = \left( {\left( {\frac{{\partial {\mathcal{L}}}}{{\partial {q^i}}} 
- \frac{{{\partial ^2}{\mathcal{L}}}}{{\partial {q^i}\partial {{\dot q}^j}}}{{\dot q}^j}} \right)\, 
\circ {J^1}\gamma \, + \left( {\frac{{{\partial ^2}{\mathcal{L}}}}{{\partial {q^i}\partial {{\dot q}^j}}}\,{{\dot q}^j}} \right) 
\circ {J^1}\gamma \,\, - \frac{d}{{dt}}\left( {\frac{{\partial {\mathcal{L}}}}{{\partial {{\dot q}^i}}} \circ {J^1}\gamma } \right)} \right)dt \hfill \nonumber \\
&  = \left( {\left( {\frac{{\partial {\mathcal{L}}}}{{\partial {q^i}}}} \right)\, \circ {J^1}\gamma \, 
- \frac{d}{{dt}}\left( {\frac{{\partial {\mathcal{L}}}}{{\partial {{\dot q}^i}}} \circ {J^1}\gamma } \right)} 
\right)dt = 0, \hfill \label{Fins-conveq-rel}
\end{align}
therefore, giving us the well-known form of Euler-Lagrange equations. 
By fixing the bundle $(Y,p{r_1},\mathbb{R})$, 
which means a specific choice of parameterisation has been made (in this case by $\iota \circ J^1 \gamma = \hat{\sigma}$), 
the parameterisation independent equation of motion 
reduces to the standard notion, and the number of equations degenerates to $n$. 
\end{remark}  

\begin{ex} \label{ex_Newton}
Every Newtonian mechanics in the form ${\mathcal{L}} = K - V$ with $K$ the 
kinematical energy and $V$ a potential term could be expressed in the setting of 
Finsler manifold. 

Let $(M,{\mathcal{F}})$ be a Finsler manifold with $\dim M = n + 1$, and the induced chart 
$(U,\psi ),\psi  = ({x^\mu },{y^\mu })$ on $TM$. The conventional 
Lagrangian function of the particle moving in 
$n$ dimension space with mass $m$ is given on the space $J^1 M$, 
with $M=\mathbb{R}\times \mathbb{R}^n$. Let the induced chart on $J^1 M$ be $(\tilde U,\tilde \psi )$, 
$\tilde \psi  = (t,{q^i},{\dot q^i})$, then the local expression of the Lagrangian function is, 
\begin{eqnarray}
{\mathcal{L}} = \frac{m}{2}\left( {{{({{\dot q}^1})}^2} + {{({{\dot q}^2})}^2} +  \dots  
+ {{({{\dot q}^n})}^2}} \right) - V({q^1}, \dots ,{q^n}).
\end{eqnarray} 
Then the local expression of the Finsler function obtained by (\ref{convL-FinsL}) is , 
\begin{eqnarray}
F = \frac{m}{2}\frac{{{{({y^1})}^2} + {{({y^2})}^2} +  \dots  + {{({y^n})}^2}}}{{{y^0}}} 
- V({x^1}, \dots ,{x^n}){y^0}.
\end{eqnarray} 
It is apparent that $F$ satisfies the homogeneity condition. 
In the case of $M = \mathbb{R} \times \mathbb{R}^n$, this construction could be done globally, 
and we can create a Hilbert form by 
$\displaystyle{{\mathcal{F}} = \frac{{\partial F}}{{\partial {y^i}}}d{x^i}}$, which we take as our Lagrangian. 
The parameterisation independent equation of motion is obtained by the equation 
$\left( {{{\mathcal{EL}}^F}_\mu } \right) \circ \hat \sigma  = 0$, 
for any parameterisation $\sigma $, and the explicit coordinate expression can be calculated as, 
\begin{align}
&  {{\mathcal{EL}}^F}_0 \circ \hat \sigma  
= \left( {\frac{{{\partial ^2}F}}{{\partial {x^0}\partial {y^\rho }}}d{x^\rho } 
- d\left( {\frac{{\partial F}}{{\partial {y^0}}}} \right)} \right) \circ \hat \sigma  \hfill \nonumber \\
&  \quad  = \left( {\frac{{\partial V}}{{\partial {x^k}}}d{x^k} 
- m\frac{{{{({y^1})}^2} +  \dots  
+ {{({y^n})}^2}}}{{{(y^0)^3}}}d{y^0} + m\frac{{{y^k}}}{{{(y^0)^2}}}d{y^k}} \right) \circ \hat \sigma  = 0, \hfill \nonumber \\
&  {{\mathcal{EL}}^F}_i \circ \hat \sigma  = \left( {\frac{{{\partial ^2}F}}{{\partial {x^i}\partial {y^\rho }}}d{x^\rho } -
 d\left( {\frac{{\partial F}}{{\partial {y^i}}}} \right)} \right) \circ \hat \sigma  \hfill \nonumber \\
&  \quad  = \left( { - \frac{{\partial V}}{{\partial {x^i}}}d{x^0} + m\frac{{{y^i}}}{{{{({y^0})}^2}}}d{y^0} 
- m\frac{1}{{{y^0}}}d{y^i}} \right) \circ \hat \sigma  = 0, \hfill  
\end{align}
for $i = 1, \cdots ,n$. 
We have used the relation such as, 
\begin{align}
&  \frac{{\partial F}}{{\partial {y^0}}} =  - \frac{1}{2}m\frac{{{{({y^1})}^2} +  \dots  
+ {{({y^n})}^2}}}{{{{({y^0})}^2}}},\quad \frac{{\partial F}}{{\partial {y^i}}} 
= m\frac{{{y^i}}}{{{y^0}}},\quad \frac{{{\partial ^2}F}}{{\partial {x^i}\partial {y^0}}} 
=  - \frac{{\partial V}}{{\partial {x^i}}}, \hfill \nonumber \\
& \frac{{{\partial ^2}F}}{{\partial {y^0}\partial {y^0}}} 
= m\frac{{{{({y^1})}^2} +  \dots  + {{({y^n})}^2}}}{{{{({y^0})}^3}}},
\quad \frac{{{\partial ^2}F}}{{\partial {y^0}\partial {y^i}}} 
=  - m\frac{{{y^i}}}{{({y^0})^2}},\quad \frac{{{\partial ^2}F}}{{\partial {y^i}\partial {y^j}}} 
= m\frac{1}{{{y^0}}}\delta^{ij}, \hfill
\end{align}
during the calculation. 
To see that these equations give the conventional Euler-Lagrange equations, 
let us choose some convenient parameterisation 
$\sigma :I \to M$, such that $\sigma :t \to (t = {x^0},{x^1}, \dots ,{x^n})$. 
${x^0}$ corresponds to the time variable, and now the ${y^0}$ coordinate of the tangent lift 
$\hat \sigma $ becomes $1$. Consider the pull back of these equations, and we get 
\begin{eqnarray}
&&{{\mathcal{EL}}^F}_0 \circ \hat \sigma  
= \left( {\frac{{\partial V}}{{\partial {x^i}}} 
\circ \hat \sigma \, + {m^i}\frac{{d({y^i} \circ \hat \sigma )}}{{dt}}} \right)({y^i} \circ \hat \sigma )dt = 0, \nonumber \\
&&{{\mathcal{EL}}^F}_i \circ \hat \sigma  
=  - \left( {\frac{{\partial V}}{{\partial {x^i}}} \circ \hat \sigma \, + m\frac{{d({y^i} 
\circ \hat \sigma )}}{{dt}}} \right)dt = 0,
\end{eqnarray}
which is the usual Newton's equation of motion. 
Apparently, the first equation can be derived from the second by multiplying by 
${y^i}$ and summing up, therefore dependent, and we only have $n$ equations. 
\end{ex}

The choice of $\sigma$ is arbitrary, and it is not at all necessary to choose the one in 
Example \ref{ex_Newton}. This freedom of choosing the parameterisation could be useful 
when one tries to find a good variable for solving equations.

\begin{ex} Brachistochrone \\
The reparameterisation invariance give us the freedom to choose a 
``good'' parameterisation that let us solve the equation of motion easily. 
One such example given here is the classical problem of brachistochrone. 
Consider two points fixed in $\mathbb{R}^2$, 
and suppose a constant gravitational field. 
The problem is to find the curve connecting these two points, 
which the particle with mass $m$ will roll at the least amount of time. 
We will take the global coordinates on $\mathbb{R}^2$ as $(x,y)$, 
where the $y$ coordinate is parallel and is in the same direction to the 
constant gravitational acceleration $g (g>0)$, and the global coordinates on 
$T{\mathbb{R}^2}$ as $(x,y,\dot x,\dot y)$. 
The Finsler Lagrangian function that measures the elapsed time is given by 
\begin{align}
F = \sqrt {\frac{{{{(\dot x)}^2} + {{(\dot y)}^2}}}{v}} ,
\end{align} 
where $v$ is the velocity of the particle. This satisfies the homogeneity condition.
Using the condition of energy conservation, we have$v = \sqrt {2gy} $, 
assuming the initial value of energy to be zero. 
Then, 
\begin{align}
F = \sqrt {\frac{{{{(\dot x)}^2} + {{(\dot y)}^2}}}{{2gy}}} ,
\end{align} 
and by (\ref{1st_EL_pullback}) the equation of motion on $TM$ becomes, 
\begin{align}
\left\{ \begin{gathered}
  d\left( {\frac{{\partial F}}{{\partial \dot x}}} \right) = 0, \hfill \\
   {\frac{{{\partial ^2}F}}{{\partial y\partial \dot x}}dx 
   + \frac{{{\partial ^2}F}}{{\partial y\partial \dot y}}dy 
   - d\left( {\frac{{\partial F}}{{\partial \dot y}}} \right)}  = 0. \hfill \\ 
\end{gathered}  \right. \label{ex_brachistochrone1}
\end{align}
From the first equation of (\ref{ex_brachistochrone1}), we have 
\begin{align}
\frac{{\partial F}}{{\partial \dot x}} = \tilde C,\;
\end{align}
where $\tilde{C}$ is a constant. This gives us the equation,
\begin{align}
{(\dot x)^2} = 2gyC\left( {{{(\dot x)}^2} + {{(\dot y)}^2}} \right),  \label{ex_brachistochrone2}
\end{align}
where $C$ is also a constant. 
The second equation gives, 
\begin{align}
-\dot{x}({{\dot{x}}^{2}}+{{\dot{y}}^{2}})dx+2\dot{x}\dot{y}yd\dot{x}-2{{\dot{x}}^{2}}yd\dot{y}=0. \label{ex_brachistochrone3}
\end{align}
Now, since these expression of Euler-Lagrange equations does not depend on the choice of parameterisation, 
we have the freedom of choosing a convenient parameterisation $\sigma$. 
This is equivalent to saying that $F$ is a Finsler function, which implies the Hessian of $F$ to be singular, 
and in coordinates, it means the coordinate functions regarding the first derivatives are dependent. 
In other words, we have the freedom of fixing one of these coordinate functions, 
provided such choice comply with the condition for regular parameterisation. 
Let us choose one such parameterisation, $\dot x = 2gyC$. 
Then (\ref{ex_brachistochrone2}), (\ref{ex_brachistochrone3}) becomes,
\begin{align}
& \dot x = \left( {{{(\dot x)}^2} + {{(\dot y)}^2}} \right), \nonumber \\
& - gCdx + \dot yd\dot x - \dot xd\dot y = 0.
\end{align}
The first equation represents a circle, so we will introduce a parameter $\theta $ and put,
\begin{align}
\left\{ \begin{gathered}
  \dot x = \frac{1}{2} \pm \frac{1}{2}\cos \theta,  \hfill \\
  \dot y =  \pm \frac{1}{2}\sin \theta.  \hfill \\ 
\end{gathered}  \right. 
\end{align}
Suppose $y=0$ when $\theta=0$, 
then from $\dot x = 2gyC$, $\dot x = 0$ when $\theta  = 0$, 
so we should take 
\begin{align}
\dot x = \frac{1}{2} - \frac{1}{2}\cos \theta.
\end{align} 
Then, from the second equation, we have 
\begin{align}
dx =  \pm \frac{1}{{4gC}}(1 - \cos \theta )d\theta, 
\end{align}
and get,
\begin{align}
\left\{ \begin{gathered}
  x =  \pm \frac{1}{{4gC}}(\theta  - \sin \theta ) + {C_2}, \hfill \\
  y = \frac{1}{{4gC}}(1 - \cos \theta ). \hfill \\ 
\end{gathered}  \right.
\end{align}
Suppose $x = 0$ when $\theta  = 0$, then ${C_2} = 0$. 
Also, suppose the final condition to be $({x_\pi },{y_\pi })$ at $\theta  = \pi $. 
Then $\displaystyle{{y_\pi } = \frac{1}{{2gC}}}$, $\displaystyle{{x_\pi } =  \pm \frac{{{y_\pi }}}{2}\pi}$, 
and since $x \geqslant 0, \, y \geqslant 0$, we get, 
\begin{align}
\left\{ \begin{gathered}
  x = \frac{{{y_\pi }}}{2}(\theta  - \sin \theta ), \hfill \\
  y = \frac{{{y_\pi }}}{2}(1 - \cos \theta ). \hfill \\ 
\end{gathered}  \right. 
\end{align}
which is the cycloid. 
\end{ex}

Notice the choice of parameterisation we made led us to discover the curve in a most elementary way.
This is one major advantage of using the parameter invariant equations of motion. 
Since we have the freedom of choosing parameters, we can choose one that matches our need.

%%%%%%%%%%%%%%%%%%%%%%%%%%%%%%%%%%%%
\section{Second order mechanics} \label{sec_2nd_mech}
%%%%%%%%%%%%%%%%%%%%%%%%%%%%%%%%%%%%
	Here we will present the Lagrange formulations for the higher order case of mechanics, 
in terms of Finsler-Kawaguchi geometry introduced in Section \ref{sec_2nd_Kawaguchi}. 
We discuss especially for the second order. 
Higher order should follow in the similar extension.  
By the term {\it second order}, we 
mean that the total space we are considering is 
the second order tangent bundle, and by {\it mechanics}, 
we mean that we are considering the arc segment on $M$. 

The basic structure we consider in this section is introduced in 
Chapter 2 and 4 (Section \ref{sec_2nd_Kawaguchi}), 
namely the $n$-dimensional second order Finsler-Kawaguchi manifold $(M,{\mathcal{K}})$, 
the second order tangent bundle $({T^2}M,\tau _M^{2,0},M)$, 
and a $1$-dimensional curve (arc segment) $C$ on $M$, which is parameterised by $\sigma$. 
The curve (arc segment) describes the trajectory of the object on $M$.

We take the second-order Finsler-Kawaguchi form ${\mathcal{K}}$ as the Lagrangian, 
and the action will be defined by considering the integration 
over the second order lift of the parameterisable curve (arc segment) $C$.
The Euler-Lagrange equations are derived by taking the variation of the action with respect 
to the flow on $M$ that deforms the arc segment $C$, and fixed on the boundary.
Similarly as in the previous first order case, we can show that the action 
and consequently the Euler-Lagrange equations are 
independent with respect to the parameterisation belonging to the same equivalent class. 

%%%%%%%%%%%%%%%%%%%%%%%%%%%%%%%%%%%%
\subsection{Action}   \label{subsec_2nd_Mech_Action}
%%%%%%%%%%%%%%%%%%%%%%%%%%%%%%%%%%%%
Suppose we have a dynamical system (differential equations expressing motions) 
where the trajectory of the point particle (or any object which dynamics 
could be considered as a point) is expressed by an arc segment $C$ of a parameterisable curve, 
such that $C = \sigma (I) \subset M$, where $I$ is a closed interval $I = [{t_i},{t_f}] \subset \mathbb{R}$. 

When we can express this system by second order Finsler-Kawaguchi geometry, 
namely the pair $(M, \mathcal{K})$ where $\mathcal{K}$ is a second order Finsler-Kawaguchi $1$-form, 
we refer to this dynamical system as {\it second order mechanics}, 
and conversely call the pair $(M,\mathcal{K})$ a {\it dynamical system}, 
and $\mathcal{K}$ the {\it Lagrangian} of second order. 
As we did in the first order case, we will also show in this section (Remark \ref{convL-2ndFins-rel}),
that given a conventional Lagrangian of second order\footnote{We simply use the term 
``conventional'' to distinguish the Lagrangian function on a local chart of $J^2 Y$, 
from our Lagrangian which is the Finsler-Kawaguchi $1$-form over $T^2 Y$.}, 
one can always construct a Finsler-Kawaguchi function (and therefore obtain the Finsler-Kawaguchi $1$-form) 
on the corresponding local chart of $T^2M$.

The action of second order mechanics is defined as follows. 

%We use $t$ for canonical coordinate on $\mathbb{R}$, 
%and usually identify them as a point on $\mathbb{R}$.
%
\begin{defn} Action of second order mechanics \\
Let  $(M, {\mathcal{K}})$ be a $n$-dimensional second order Finsler-Kawaguchi manifold, 
$(U,\varphi ),\;\varphi  = ({x^\mu })$ be a chart on $M$, 
and $(V^2, \psi^2 )$, ${{V}^{2}}={{(\tau _{M}^{2,0})}^{-1}}(U)$, $\psi = ({x^\mu}, {y^\mu}, z^\mu)$
the induced chart on $T^2 M$. 
The local coordinate expression of the Finsler-Hilbert form 
 ${\mathcal{K}} \in {\Omega ^1}(T^2M)$ is given by
\begin{eqnarray}
{\mathcal{K}} = \frac{{\partial K}}{{\partial {y^\mu }}}d{x^\mu } 
+ 2\frac{{\partial K}}{{\partial {z^\mu }}}d{y^\mu },
\end{eqnarray}
where $K$ is the Finsler-Kawaguchi function.  
Let $C$ be an arc segment on $M$, and $\sigma$ its parameterisation, 
$\sigma(I) = C \subset M$ with $I = [{t_i},{t_f}] \subset \mathbb{R}$.
We call the functional ${S^{\mathcal{K}}}(C)$ defined by 
\begin{eqnarray}
{S^{\mathcal{K}}}(C): = {l^K}(C) = \int_{{C^2}} {\mathcal{K}}  
= \int_{{\sigma ^2}(I)} {\frac{{\partial K}}{{\partial {y^\mu }}}d{x^\mu } 
+ 2\frac{{\partial K}}{{\partial {z^\mu }}}d{y^\mu }} , \label{Finsler-KawaguchiAction}
\end{eqnarray}
the {\it action of second order mechanics associated with $\mathcal{K}$.} 
\end{defn}
As we have seen in Section \ref{subsec_paraminv_2nd_mech}, 
Lemma \ref{lem_repinv_FinslerKawaguchi}, 
Finsler-Kawaguchi length is invariant with respect to the reparameterisation,
therefore the action is also invariant.  

%%%%%%%%%%%%%%%%%%%%%%%%%%%%%%%%%%%%%%%%%%%%%%%%%%%%%%%%%%%%%%%%%%%%%%%%%%%
\subsection{Total derivative}
%%%%%%%%%%%%%%%%%%%%%%%%%%%%%%%%%%%%%%%%%%%%%%%%%%%%%%%%%%%%%%%%%%%%%%%%%%%
Here we will introduce an operator called the {\it total derivative} 
that is an identity map on the total space of the bundle we consider. 
It becomes the derivative with respect to the parameter on the parameter space, 
namely $\displaystyle{\frac{d}{{dt}}}$ in this section, but later we will generalise this concept to the case of 
$k$-dimensional parameter space. 

Consider the bundle morphism $(Tf,f)$ from $(TE,{\tau _E},E)$ 
to $(TM,{\tau _M},M)$, introduced in 
(Example \ref{ex_tangentbundlemorphism}). 
For the case $E = TM$, 
there is an identity map called the {\it total derivative}. 
\begin{figure}
  \centering
  \includegraphics[width=7cm]{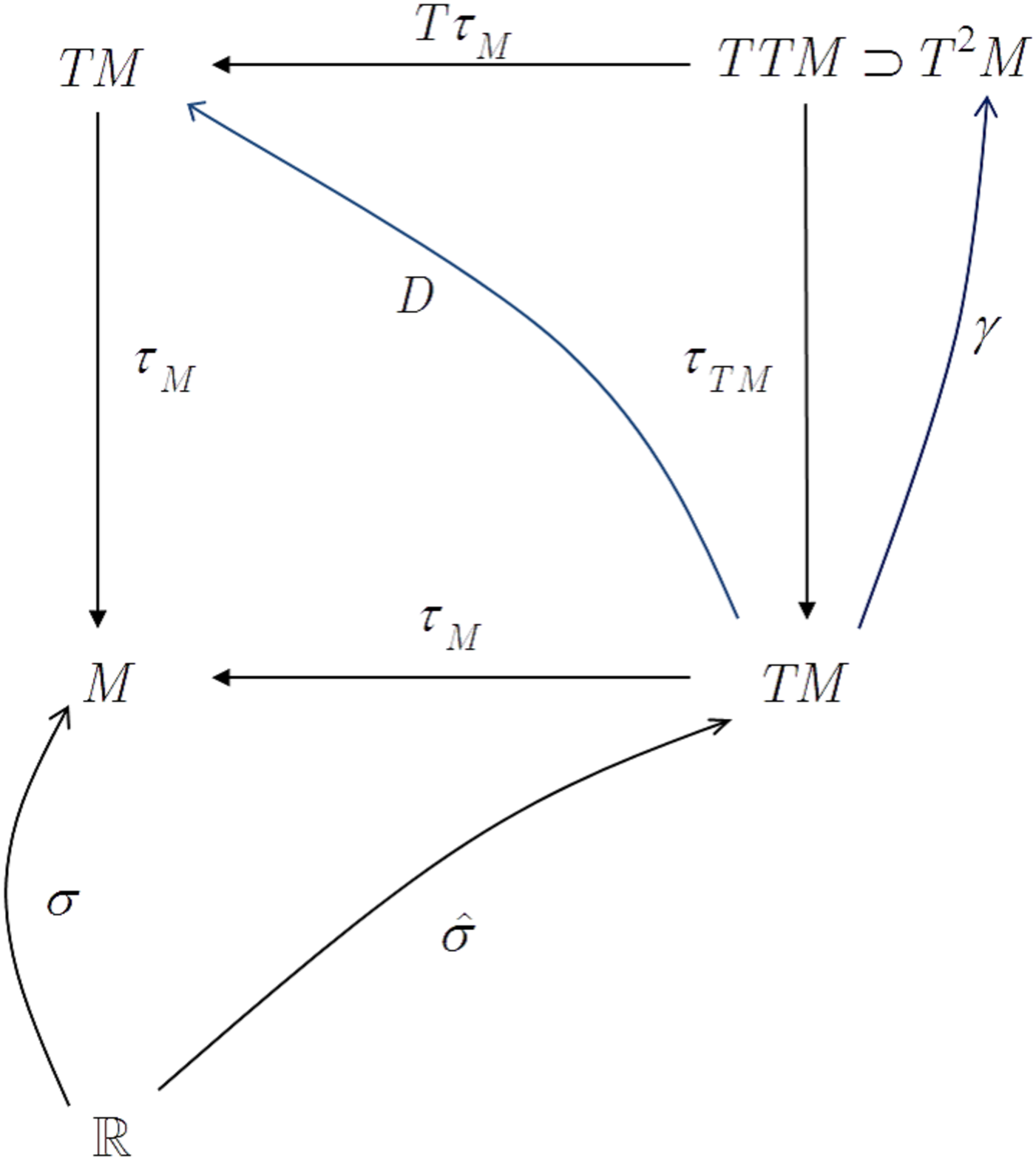}
  \caption{Total derivative}
\end{figure}

\begin{pr} \label{pr_totalderiv}
Consider the bundle morphism $(T{\tau _M},{\tau _M})$ from $(TTM,{\tau _{TM}},TM)$ to $(TM,{\tau _M},M)$. 
Let $\gamma $ be a section of the sub-bundle 
${{\left. {{\tau }_{TM}} \right|}_{{{T}^{2}}M}}$ of ${{\tau }_{TM}}$, 
and define a map $D:TM\to TM$ by $D=T{{\tau }_{M}}\circ \gamma $. 
Then $D$ is an identity map which its local coordinate expression is given by 
\begin{align}
D={{y}^{\mu }}\cdot {{\left( \frac{\partial }{\partial {{x}^{\mu }}} \right)}_{{{\tau }_{M}}(\cdot )}}.
\end{align} 
\end{pr}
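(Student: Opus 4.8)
The claim has two parts: (i) the map $D = T\tau_M \circ \gamma$ is well-defined as a map $TM \to TM$ and coincides with the identity, and (ii) in an induced chart it has the stated coordinate expression. The plan is to verify both by working in a single induced chart and tracking coordinates through the composition, which is legitimate because $D$ is built from the coordinate-natural maps $\gamma$ and $T\tau_M$.

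First I would fix a chart $(U,\varphi)$, $\varphi = (x^\mu)$ on $M$, the induced chart $(V,\psi)$, $V = \tau_M^{-1}(U)$, $\psi = (x^\mu, y^\mu)$ on $TM$, and the induced chart $(\tilde V^2, \tilde\psi^2)$, $\tilde\psi^2 = (x^\mu, y^\mu, \dot x^\mu, \dot y^\mu)$ on $TTM$. A section $\gamma$ of $\tau_{TM}|_{T^2M}$ sends a point $q \in TM$ to an element of $T^2M \subset TTM$ lying over $q$. By the analysis of $T^2M$ in Section \ref{subsec_second_TM} (in particular the form of elements of $T^2U$ derived in Definition \ref{def_2ndorder_TMoverTM}), such an element has the coordinate form
\begin{align}
\gamma(q) = \dot x^\mu(\gamma(q))\left(\frac{\partial}{\partial x^\mu}\right)_q + \dot y^\mu(\gamma(q))\left(\frac{\partial}{\partial y^\mu}\right)_q,
\end{align}
where the submanifold condition $\tau_{TM} = T\tau_M$ defining $T^2M$ forces $\dot x^\mu(\gamma(q)) = y^\mu(q)$. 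This is the essential input: the constraint cutting out $T^2M$ inside $TTM$ identifies the ``$\dot x$'' fibre coordinates with the base ``$y$'' coordinates.

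Next I would apply $T\tau_M$ to $\gamma(q)$. Since $\tau_M$ has coordinate expression $x^\mu \circ \tau_M = x^\mu$ (it simply forgets the $y^\mu$), its tangent map sends a tangent vector at $q$ with components $(\dot x^\mu, \dot y^\mu)$ in the $(\partial/\partial x^\mu, \partial/\partial y^\mu)$ basis to the tangent vector at $\tau_M(q)$ with components $\dot x^\mu$ in the $\partial/\partial x^\mu$ basis, as computed analogously in the displayed calculation of $T\tau_M(\xi_q)$ in Section \ref{subsec_second_TM}. Hence
\begin{align}
D(q) = T\tau_M(\gamma(q)) = \dot x^\mu(\gamma(q))\left(\frac{\partial}{\partial x^\mu}\right)_{\tau_M(q)} = y^\mu(q)\left(\frac{\partial}{\partial x^\mu}\right)_{\tau_M(q)}.
\end{align}
This is exactly the stated expression $D = y^\mu \cdot (\partial/\partial x^\mu)_{\tau_M(\cdot)}$. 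To see that $D$ is the identity, I would observe that an arbitrary point $q \in TM$ is itself a tangent vector at $\tau_M(q)$, whose coordinate expression $q = y^\mu(q)(\partial/\partial x^\mu)_{\tau_M(q)}$ coincides verbatim with $D(q)$; thus $D(q) = q$ for all $q$, and $D = \mathrm{id}_{TM}$.

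The only real subtlety — the step I would flag as the main obstacle — is justifying the identification $\dot x^\mu \circ \gamma = y^\mu$, i.e.\ confirming that the section $\gamma$ genuinely lands in the constrained submanifold $T^2M$ rather than general $TTM$, and that the constraint has precisely this coordinate consequence. This is not a computation but a matter of invoking the correct definition: it rests entirely on the defining equation $\tau_{TM}(\xi) = T\tau_M(\xi)$ of $T^2M$ and its coordinate translation already worked out in Definition \ref{def_2ndorder_TMoverTM}. Once that identification is in hand, the rest is a routine, chart-level unwinding of the two maps, and the coordinate independence follows because $D$ is assembled from intrinsically defined objects.
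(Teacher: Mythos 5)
Your proof is correct and rests on the same two ingredients as the paper's own argument: the defining equation $\tau_{TM}(\xi)=T\tau_M(\xi)$ of $T^2M$ (whose coordinate form is exactly the identification $\dot x^\mu\circ\gamma=y^\mu$ you flag as the key step) and the section property $\tau_{TM}\circ\gamma=\mathrm{id}_{TM}$. The only difference is presentational: the paper concludes $D=T\tau_M\circ\gamma=\tau_{TM}\circ\gamma=\mathrm{id}_{TM}$ directly at the level of maps and then reads off the coordinate expression of a point $p=y^\mu(p)\left(\frac{\partial}{\partial x^\mu}\right)_{\tau_M(p)}$, whereas you compute the coordinate expression first and then observe it coincides with the identity.
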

\begin{proof} 
By the definition of ${{T}^{2}}M$, we have 
$T{{\tau }_{M}}(\gamma (p))={{\tau }_{TM}}(\gamma (p))$ for $\forall p\in TM$. 
Then since ${{\left. {{\tau }_{TM}} \right|}_{{{T}^{2}}M}}\circ \gamma 
= {{\tau }_{TM}}\circ \gamma =i{{d}_{TM}}$, 
we have $T{\tau _M} \circ \gamma  = i{d_{TM}}$.  
This is an identity map of $TM$,  and the coordinate expression becomes, 
$\displaystyle{D(p)=p={{y}^{\mu }}(p){{\left( \frac{\partial }{\partial {{x}^{\mu }}} \right)}_{{{\tau }_{M}}(p)}}}$, 
therefore, $\displaystyle{D={{y}^{\mu }}\cdot {{\left( \frac{\partial }{\partial {{x}^{\mu }}} \right)}_{{{\tau }_{M}}(\cdot )}}}$. 
\end{proof}
\begin{defn} Total derivative \label{def_totalderiv1} \\
We call the map $D$, the {\it total derivative} of first order.
\end{defn}

\begin{remark} 
The pull-back bundle of $(TM, {\tau _M}, M)$ by ${\bar \tau _M}:TM \to M$ is 
$({({\bar \tau _M})^*}TM, \lb[3] {({\bar \tau _M})^*}{\tau _M}, \lb[2] TM)$, and 
the total derivative $D$ defines a unique section 
$\delta $ of ${({\bar \tau _M})^*}{\tau _M}$,  
which is called a {\it vector field along} 
${\bar \tau _M}$, by $D = {({\tau _M})^*}{\bar \tau _M} \circ \delta $. 
By definition of the pull-back bundle, ${({\bar \tau _M})^*}{\tau _M}(p,q) = q$, 
${({\tau _M})^*}{\bar \tau _M}(p,q) = p$ where $p$ is a point of the total space of the bundle 
${\tau _M}$, and $q$ is the point of the base space of the bundle ${({\bar \tau _M})^*}{\tau _M}$. 
Suppose we have $\delta (q) = (p,q) \in {({\bar \tau _M})^*}TM = TM{ \times _M}TM$, then 
\begin{align}
D(q) = {({\tau _M})^*}{\bar \tau _M} \circ \delta (q) = {({\tau _M})^*}{\bar \tau _M}(p,q) = p, 
\end{align}
but since $D$ is an identity, we must have $p = q$, for all $q$, which means $\delta $ must be unique.  
\end{remark}

Indeed, consider a map $\sigma :\mathbb{R} \to M$, and then its tangent map $T\sigma $ 
will send the total derivative vector field $\displaystyle{\frac{d}{{dt}}}$ at 
$s \in \mathbb{R}$ to $TM$ by 
\begin{align}
{T_s}\sigma (\frac{d}{{dt}}) = {\left. {\frac{{d({x^\mu } \circ \sigma )}}{{dt}}} 
\right|_s}{\left( {\frac{\partial }{{\partial {x^\mu }}}} \right)_{\sigma (s)}},
\end{align} 
where in induced coordinates the components are the ${y^\mu }$ coordinates. 
To see this in the converse way, consider a smooth function $f \in {C^\infty }(M)$. 
$D(p)$ has the coordinate expression 
\begin{align}
D(p) = p = {y^\mu }(p){\left( {\frac{\partial }{{\partial {x^\mu }}}} \right)_{{\tau _M}(p)}},
\end{align} 
so we can define a new smooth function $Df$ on $TM$ by 
\begin{align}
Df(p): = D(p)f = {y^\mu }(p){\left( {\frac{{\partial f}}{{\partial {x^\mu }}}} \right)_{{\tau _M}(p)}},
\end{align}
for $\forall p \in TM$, and therefore, 
\begin{align}
Df = {y^\mu } \cdot \left( {\left( {\frac{{\partial f}}{{\partial {x^\mu }}}} \right) 
\circ {\tau _M}} \right).
\end{align} 

Consider a parameterisable curve $C$ on $M$, and let the parameterisation be 
$\sigma :\mathbb{R} \to M$, and its lift $\hat \sigma :\mathbb{R} \to TM$. 
Then, along this curve, 
\begin{align}
{\hat \sigma ^*}Df = Df \circ \hat \sigma  
= ({y^\mu } \circ \hat \sigma ) \cdot 
\left( {\frac{{\partial f}}{{\partial {x^\mu }}} \circ {\tau _M}} \right) \circ \hat \sigma  
= \frac{{d({x^\mu } \circ \sigma )}}{{dt}} \cdot \left( {\frac{{\partial f}}{{\partial {x^\mu }}} 
\circ {\tau _M}} \right) \circ \hat \sigma  = \frac{d}{{dt}}(f \circ \sigma ). \label{eq_deriv1stpb}
\end{align}

\begin{figure}
  \centering
  \includegraphics[width=7cm]{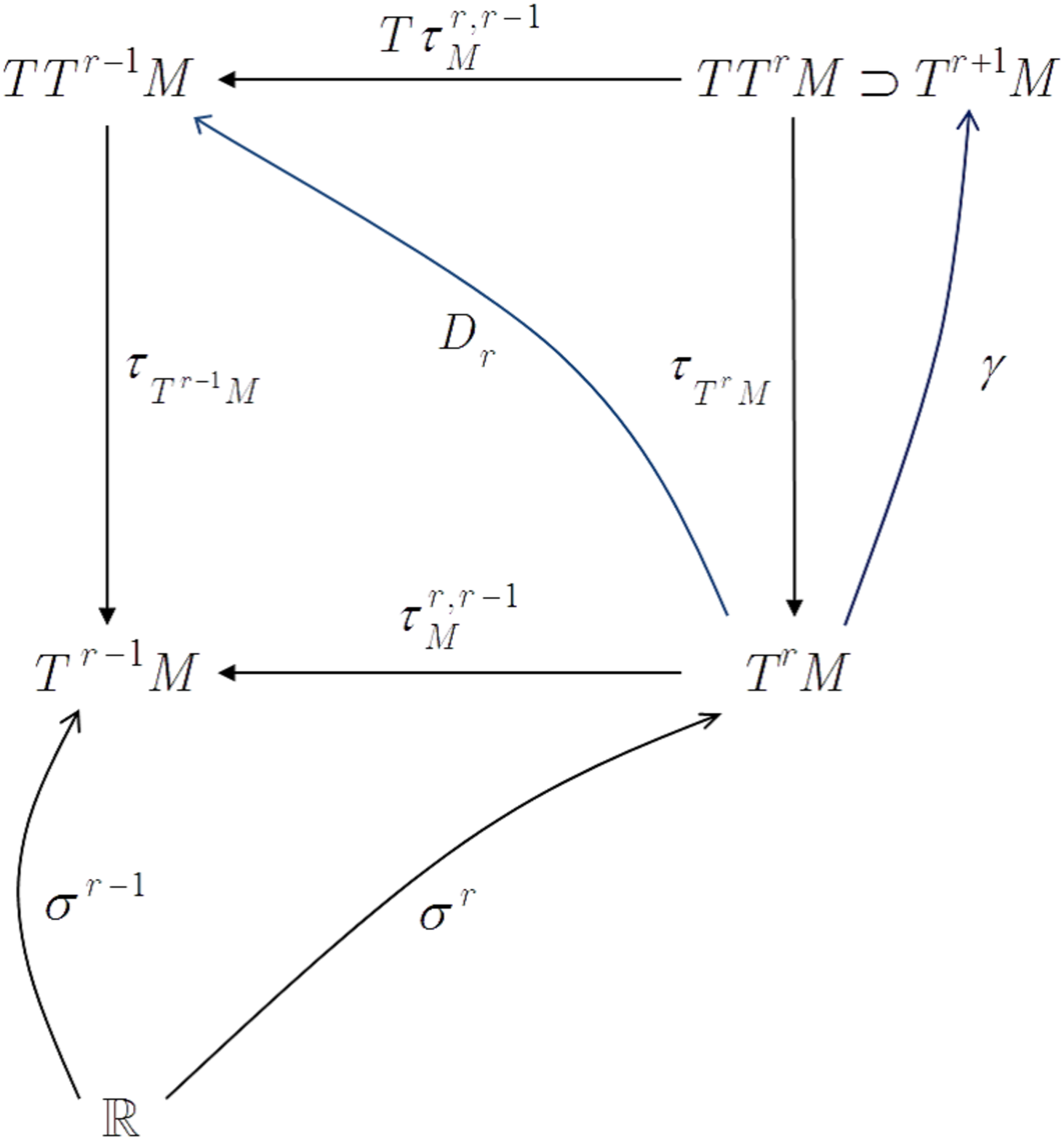}
  \caption{$r$-th order total derivative}\label{fig_r_totalderiv}
\end{figure}

The total derivative can be also introduced for the higher order cases (Figure \ref{fig_r_totalderiv}). 
Let $({V^r},{\psi ^r})$, ${\psi ^r} = (x_1^\mu ,x_2^\mu , \cdots ,x_{r + 1}^\mu )$ the induced chart on ${T^r}M$.
Similar to the Proposition \ref{pr_totalderiv}, we have the following. 
\begin{pr} \label{pr_totalderivr}
Consider the bundle morphism $(T\tau _M^{r,r - 1},\tau _M^{r,r - 1})$ from 
$(T{T^r}M, \lb[3] {\tau _{{T^r}M}}, \lb[3] {T^r}M)$ to $(T{T^{r - 1}}M,{\tau _{{T^{r - 1}}M}},{T^{r - 1}}M)$. 
Let $\gamma $ be the section of the sub-bundle 
%${{\left. {{\tau }_{{{T}^{r}}M}} \right|}_{{{T}^{r+1}}M}}$, 
$\tau _{M}^{r+1,r}:={{\left. {{\tau }_{{{T}^{r}}M}} \right|}_{{{T}^{r+1}}M}}$ 
of ${{\tau }_{{{T}^{r}}M}}$, 
and define a map ${{D}_{r}}:{{T}^{r}}M \to T{{T}^{r-1}}M$ by ${{D}_{r}}=T\tau _{M}^{r,r-1}\circ \gamma$. 
Then ${{D}_{r}}$ is an inclusion map  ${D_r} = {\iota _r}$, ${\iota _r}:{T^r}M \to T{T^{r-1}}M$, such that 
its local coordinate expression is given by 
\begin{align}
{D_r} = x_2^\mu  \cdot {\left( {\frac{\partial }{{\partial x_1^\mu }}} \right)_{\tau _M^{r,r - 1}( \cdot )}} 
+ x_3^\mu  \cdot {\left( {\frac{\partial }{{\partial x_2^\mu }}} \right)_{\tau _M^{r,r - 1}( \cdot )}} +  \cdots  
+ x_{r + 1}^\mu  \cdot {\left( {\frac{\partial }{{\partial x_r^\mu }}} \right)_{\tau _M^{r,r - 1}( \cdot )}}.
\end{align} 
\end{pr}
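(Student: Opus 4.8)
The plan is to mimic precisely the structure of the proof of Proposition \ref{pr_totalderiv} (the first-order case), adapting it to the higher-order tower. First I would unwind the definitions. By the construction of $T^{r+1}M$ in (\ref{higherbundle_r}), an element $u \in T^{r+1}M \subset TT^r M$ satisfies the submanifold equation $T\tau_M^{r,r-1}(u) = \iota_r \circ \tau_{T^r M}(u)$, where $\iota_r : T^r M \to TT^{r-1}M$ is the inclusion map. The section $\gamma$ of $\tau_M^{r+1,r} = \left.\tau_{T^r M}\right|_{T^{r+1}M}$ has the defining property that $\tau_{T^r M} \circ \gamma = \mathrm{id}_{T^r M}$, so applying the submanifold equation to $\gamma(p)$ for an arbitrary $p \in T^r M$ gives
\begin{align}
T\tau_M^{r,r-1}(\gamma(p)) = \iota_r \circ \tau_{T^r M}(\gamma(p)) = \iota_r(p).
\end{align}
Hence $D_r = T\tau_M^{r,r-1} \circ \gamma = \iota_r$, which establishes that $D_r$ is the inclusion map at the level of abstract maps, exactly paralleling the identity $T\tau_M \circ \gamma = \mathrm{id}_{TM}$ in the first-order proof.

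Next I would verify the coordinate expression. The key is that the inclusion $\iota_r : T^r M \to TT^{r-1}M$ already has a known coordinate form, generalising (\ref{inclusionmap_2}): in the induced chart $(V^r,\psi^r)$, $\psi^r = (x_1^\mu, x_2^\mu, \dots, x_{r+1}^\mu)$ on $T^r M$, and the corresponding chart on $TT^{r-1}M$, the inclusion reads $x_j^\mu \circ \iota_r = x_j^\mu$ for the base coordinates and shifts the fibre coordinates so that the velocity components are identified with the next-order coordinates. Concretely, writing a tangent vector to $T^{r-1}M$ at a point and reading off how $\iota_r$ embeds the jet data, the image $\iota_r(p)$ expands as $x_2^\mu(p)(\partial/\partial x_1^\mu) + x_3^\mu(p)(\partial/\partial x_2^\mu) + \cdots + x_{r+1}^\mu(p)(\partial/\partial x_r^\mu)$ based at $\tau_M^{r,r-1}(p)$. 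This is precisely the stated formula, and it is the natural tower-iterate of the first-order expression $D = y^\mu (\partial/\partial x^\mu)_{\tau_M(\cdot)}$.

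The main obstacle, and the step I would spend the most care on, is pinning down the coordinate expression of $\iota_r$ itself and confirming that the chart conventions in (\ref{higherbundle_r}) match the relabelling $\psi^r = (x_1^\mu, \dots, x_{r+1}^\mu)$ used in this proposition. In the excerpt the charts on $T^2M$ and $T^3M$ were written with distinct symbols $(x^\mu, y^\mu, z^\mu, w^\mu)$ rather than the uniform $x_j^\mu$, so I would first fix a clean inductive dictionary $x_1^\mu = x^\mu$, $x_2^\mu = y^\mu$, $x_3^\mu = z^\mu$, and so on, and check that under this dictionary the submanifold equations defining $T^{r+1}M$ force exactly the staggered identification of fibre coordinates that yields the displayed sum. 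Once the inclusion map is expressed in these coordinates, the remainder is the same short argument as the first-order case: substitute $\gamma(p)$ into the submanifold condition, use that $\gamma$ is a section, and read off $D_r(p) = \iota_r(p)$. I would then close by remarking, as the paper does after Proposition \ref{pr_totalderiv}, that $D_r$ acting on a function pulls back along an $r$-th order lift of a parameterisation to the ordinary derivative with respect to the parameter, generalising (\ref{eq_deriv1stpb}).
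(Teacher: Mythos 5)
Your proposal is correct and follows essentially the same route as the paper's own proof: apply the defining submanifold equation of $T^{r+1}M$ to $\gamma(p)$, use $\tau_{T^r M}\circ\gamma = \mathrm{id}_{T^r M}$ to conclude $D_r = T\tau_M^{r,r-1}\circ\gamma = \iota_r$, and then read off the coordinate expression of the inclusion. The extra care you propose for fixing the coordinate dictionary $x_1^\mu = x^\mu,\ x_2^\mu = y^\mu,\ \dots$ is a reasonable elaboration of what the paper leaves implicit, but it does not change the argument.
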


\begin{proof}
By the definition of ${T^{r + 1}}M$, we have
$T\tau _M^{r,r - 1}(\gamma (p)) = {\iota _r} \circ {\tau _{{T^r}M}}(\gamma (p))$ for $\forall p \in T^r M$. 
Then since ${{\left. {{\tau }_{{{T}^{r}}M}} \right|}_{{{T}^{r+1}}M}}\circ \gamma 
= {{\tau }_{{{T}^{r}}M}}\circ \gamma =i{{d}_{{{T}^{r}}M}}$, 
we have $T \tau _M^{r,r - 1} \circ \gamma  = \iota_r  \circ i{d_{{T^r}M}}$, 
which is an identity of ${{T}^{r}}M$ expressed as the element of $T{{T}^{r}}M$. 
Therefore, we can deduce its coordinate expression, 
\begin{align*}
{{D}_{r}}(p)=x_{2}^{\mu }(p){{\left( \frac{\partial }{\partial x_{1}^{\mu }} 
\right)}_{\tau _{M}^{r,r-1}(p)}}+x_{3}^{\mu }(p){{\left( \frac{\partial }{\partial x_{2}^{\mu }} 
\right)}_{\tau _{M}^{r,r-1}(p)}}+\cdots +x_{r+1}^{\mu }(p)
{{\left( \frac{\partial }{\partial x_{r}^{\mu }} \right)}_{\tau _{M}^{r,r-1}(p)}}, 
\end{align*}
for $\forall p \in {T^r}M$, and obtain,
\begin{align*}
{{D}_{r}}=x_{2}^{\mu }\cdot {{\left( \frac{\partial }{\partial x_{1}^{\mu }} 
\right)}_{\tau _{M}^{r,r-1}(\cdot )}}
+x_{3}^{\mu }\cdot {{\left( \frac{\partial }{\partial x_{2}^{\mu }} \right)}_{\tau _{M}^{r,r-1}(\cdot )}}
+\cdots +x_{r+1}^{\mu }\cdot {{\left( \frac{\partial }{\partial x_{r}^{\mu }} 
\right)}_{\tau _{M}^{r,r-1}(\cdot )}}. \label{totalderivr}
\end{align*}
\end{proof}

\begin{defn} $r$-th order total derivative  \label{def_totalderivr} \\ 
The identity map $D_r$ is called the {\it $r$-th order total derivative}. 
\end{defn}

The map $D_r$ for $r \ge 2$ cannot be expressed by a section of a vector bundle, 
therefore it is not possible to understand it as a vector field as in the case of $D$. 
As in the previous case, we consider the operation of the inclusion map ${D_r}:{T^r}M \to T{T^{r - 1}}M$ 
to a smooth function $g$ on $T^{r-1} M$, and define a new smooth function ${D_r}g$ on $T^r M$ by,  
\begin{align}
{D_r}g(w) &:= {D_r}(w)g = x_2^\mu (w){\left( {\frac{{\partial g}}{{\partial x_1^\mu }}} 
\right)_{\tau _M^{r,r - 1}(w)}} 
+ x_3^\mu (w){\left( {\frac{{\partial g}}{{\partial x_2^\mu }}} \right)_{\tau _M^{r,r - 1}(w)}} \nonumber  \\
&+  \cdots  + x_{r + 1}^\mu (w){\left( {\frac{{\partial g}}{{\partial x_r^\mu }}} \right)_{\tau _M^{r,r - 1}(w)}}, 
\end{align}
for $\forall w \in {T^r}M$, 
where $\tau _M^{r,r - 1}$ is a projection: $\tau _M^{r,r - 1}:{T^r}M \to {T^{r - 1}}M$, 
and defined iteratively by $\tau _M^{r,r - 1}: = {\left. {{\tau _{{T^{r - 1}}M}}} \right|_{{T^r}M}}$. 
Now we get, 
\begin{align}
{D_r}g = x_2^\mu \left( {\frac{{\partial g}}{{\partial x_1^\mu }} \circ \tau _M^{r,r - 1}} \right) 
+ x_3^\mu \left( {\frac{{\partial g}}{{\partial x_2^\mu }} \circ \tau _M^{r,r - 1}} \right) 
+  \cdots  + x_{r + 1}^\mu \left( {\frac{{\partial g}}{{\partial x_r^\mu }} \circ \tau _M^{r,r - 1}} \right).
\end{align}

Let us see more details, for the case of $r=2$ for simplicity.  
We will take the induced chart 
$({V^2},{\psi ^2})$, ${\psi ^2} = ({x^\mu },{y^\mu },{z^\mu })$ on $T^2 M$ for some readability, and denote, 
\begin{align}
{D_2} = {y^\mu }\frac{\partial }{{\partial {x^\mu }}} + {z^\mu }\frac{\partial }{{\partial {y^\mu }}}. \label{def_totalderiv2}
\end{align}
\begin{align}
{D_2}g = {y^\mu } \cdot \left( {\frac{{\partial g}}{{\partial {x^\mu }}} \circ \tau _M^{2,1}} \right) + {z^\mu } \cdot \left( {\frac{{\partial g}}{{\partial {y^\mu }}} \circ \tau _M^{2,1}} \right).
\end{align}
For the case where $g = Df$, we will have 
\begin{align}
&  {D_2}(Df) = {y^\mu } \cdot \left( {\frac{\partial }{{\partial {x^\mu }}}Df \circ \tau _M^{2,1}} \right) 
+ {z^\mu } \cdot \left( {\frac{\partial }{{\partial {y^\mu }}}Df \circ \tau _M^{2,1}} \right) \hfill \nonumber \\
&   = {y^\mu } \cdot \left( {\frac{\partial }{{\partial {x^\mu }}}
\left( {{y^\rho }\frac{{\partial f}}{{\partial {x^\rho }}} \circ {\tau _M}} \right) \circ \tau _M^{2,1}} \right) 
+ {z^\mu } \cdot \left( {\frac{\partial }{{\partial {y^\mu }}}\left( {{y^\rho }\frac{{\partial f}}
{{\partial {x^\rho }}} \circ {\tau _M}} \right) \circ \tau _M^{2,1}} \right) \hfill \nonumber \\
&   = {y^\mu }{y^\rho } \cdot \left( {\frac{{{\partial ^2}f}}{{\partial {x^\mu }\partial {x^\rho }}} 
\circ \tau _M^{2,0}} \right) + {z^\mu } \cdot \left( {\frac{{\partial f}}{{\partial {x^\mu }}} 
\circ \tau _M^{2,0}} \right).
\end{align}

To see this becomes the total derivative with respect to the parameterisation space, 
consider a parameterisable curve $C$ on $M$, and let the parameterisation 
be $\sigma :\mathbb{R} \to M$, its lift $\hat \sigma :\mathbb{R} \to TM$, 
and its second order lift ${\sigma ^2}:\mathbb{R} \to {T^2}M$. Then, along this curve, 
\begin{align}
&  {({\sigma ^2})^*}{D_2}g = {D_2}g \circ {\sigma ^2} = ({y^\mu } \circ {\sigma ^2}) \cdot 
\left( {\frac{{\partial g}}{{\partial {x^\mu }}} \circ \tau _M^{2,1}} \right) \circ {\sigma ^2} 
+ ({z^\mu } \circ {\sigma ^2}) \cdot \left( {\frac{{\partial g}}{{\partial {y^\mu }}} 
\circ \tau _M^{2,1}} \right) \circ {\sigma ^2} \hfill \nonumber \\
&   = ({y^\mu } \circ \hat \sigma ) \cdot \left( {\frac{{\partial g}}{{\partial {x^\mu }}}} \right)
 \circ \hat \sigma  + ({z^\mu } \circ {\sigma ^2}) \cdot \left( {\frac{{\partial g}}{{\partial {y^\mu }}}} \right) 
 \circ \hat \sigma  \hfill \nonumber \\
&   = \frac{d}{{dt}}(g \circ \hat \sigma ). \hfill 
\end{align}
In the case of $g = Df$, we can further use the relation (\ref{eq_deriv1stpb}), and get 
\begin{align}
{({\sigma ^2})^*}{D_2}(Df) = \frac{d}{{dt}}(Df \circ \hat \sigma ) 
= \frac{{{d^2}}}{{d{t^2}}}(f \circ \sigma ).
\end{align}
In order to see the connection to the derivatives with respect to the parameter space, we used the pull-back. 
Nevertheless, the total derivative on the total space itself can be defined by the bundle morphism only, 
and no consideration of curves on $M$ or its parameterisation is required. 

For general $r$, similar discussions could be made. For instance, 
we will have 
\begin{align}
&  {({\sigma ^r})^*}{D_r}({D_{r - 1}}( \cdots Df) \cdots ) = \frac{d}{{dt}}(({D_{r - 1}}( \cdots Df) \cdots ) 
\circ {\sigma ^{r - 1}}) \nonumber  \\
& \quad= \frac{{{d^2}}}{{d{t^2}}}(({D_{r - 2}}( \cdots Df) \cdots ) \circ {\sigma ^{r - 2}}) \hfill 
 = \cdots  = \frac{{{d^{r - 1}}}}{{d{t^{r - 1}}}}(Df \circ \hat \sigma ) = \frac{{{d^r}}}{{d{t^r}}}(f \circ \sigma ), \hfill 
 \label{pullback_Dr}
\end{align}
by iteration.

%%%%%%%%%%%%%%%%%%%%%%%%%%%%%%%%%%%
\subsection{Extremal and equations of motion} \label{subsec_extremal_2nd_mech}
%%%%%%%%%%%%%%%%%%%%%%%%%%%%%%%%%%%

	Having defined the action, we are able to derive the equations of motion by 
considering the extremal of the action. 
As in the previous section, we only consider global flows. 
Nevertheless with some details added; the formulation can be made similarly with local flows. 
Consider a ${C^\infty }$-flow, $\alpha :\mathbb{R} \times M \to M$, and its associated 
$1$-parameter group of transformations ${\{ {\alpha _s}\} _{s \in \mathbb{R}}}$ . 
The $1$-parameter group ${\alpha_s}:M \to M$ induces a $1$-parameter group 
$TT{\alpha _s}:{T^2}M \to {T^2}M$ generated by the induced tangent mapping. 
This will also modify the curve (arc segment) $C$ to $C' = {\alpha _s}(C)$, 
which will be now parameterised by $\sigma'$. 
As in the first order case, the variation will be expressed by the small deformations made to the action by 
${\alpha _s}$. 

Before proceeding, we will first check that the induced $1$-parameter group 
$TT{\alpha _s}:TTM \to TTM$ is also a $1$-parameter group on $T^2 M$. 

Let $(U,\varphi )$, $\varphi  = ({x^\mu })$ be a chart on $M$, 
$(V,\psi )$, $V = {\tau _M}^{ - 1}(U)$, $\psi  = ({x^\mu },{y^\mu })$ the induced chart on $TM$, 
$(\tilde V,\tilde \psi )$, 
$\tilde V = {\tau _{TM}}^{ - 1}(V)$, $\tilde \psi  = ({x^\mu },{y^\mu },{\dot x^\mu },{\dot y^\mu })$ 
the induced chart on $TTM$, 
and $({V^2},{\psi ^2})$ ${V^2} = {\left. {\tilde V} \right|_{{T^2}M}}$, 
${\psi ^2} = ({x^\mu },{y^\mu },{z^\mu })$ the induced chart on $T^2 M$. 
The projection maps are denoted by ${\tau _M}:TM \to M$, ${\tau _{TM}}:TTM \to TM$, 
$\tau _M^2: = {\tau _M} \circ {\tau _{TM}}$, $\tau _M^{2,1}:{T^2}M \to TM$, 
$\tau _M^{2,0}: = {\tau _M} \circ \tau _M^{2,1}$. 
The local expression of ${w_q} \in {T_q}TM$, $q \in TM$ is 
\begin{align}
{w_q} = {w^\mu }{\left( {\frac{\partial }{{\partial {x^\mu }}}} \right)_q} 
+ {\tilde w^\mu }{\left( {\frac{\partial }{{\partial {y^\mu }}}} \right)_q}.
\end{align} 
Then, $TT{\alpha _s}$ maps ${w_q}$by 
\begin{align}
&  TT{\alpha _s}({w_q}) = {\left. {\frac{{\partial ({x^\nu } \circ T{\alpha _s} 
  \circ {\psi ^{ - 1}})}}{{\partial {x^\mu }}}} \right|_{\psi (q)}}{w^\mu }
  {\left( {\frac{\partial }{{\partial {x^\nu }}}} \right)_{T{\alpha _s}(q)}} \hfill \nonumber \\
&   + {\left. {\frac{{\partial ({y^\nu } \circ T{\alpha _s} \circ {\psi ^{ - 1}})}}
{{\partial {x^\mu }}}} \right|_{\psi (q)}}{w^\mu }{\left( {\frac{\partial }{{\partial {y^\nu }}}} 
\right)_{T{\alpha _s}(q)}} + {\left. {\frac{{\partial ({y^\nu } \circ T{\alpha _s} 
\circ {\psi ^{ - 1}})}}{{\partial {y^\mu }}}} \right|_{\psi (q)}}{{\tilde w}^\mu }
{\left( {\frac{\partial }{{\partial {y^\nu }}}} \right)_{T{\alpha _s}(q)}}, \hfill  
\end{align}
Where $T{\alpha _s}$ is the induced $1$-parameter group on $TM$ by ${\alpha _s}$. 

In components of coordinates of $TTM$, this is 
\begin{align}
&  {x^\mu } \circ TT{\alpha _s}({w_q}) = {x^\mu } \circ T{\alpha _s}(q) = {x^\mu } \circ {\alpha _s} 
\circ {\tau _M}(q) = {x^\mu } \circ {\alpha _s} \circ {\tau _M} \circ {\tau _{TM}}({w_q}) = {x^\mu } 
\circ {\alpha _s} \circ \tau _M^2({w_q}), \hfill \nonumber \\
&  {y^\mu } \circ TT{\alpha _s}({w_q}) = {y^\mu } \circ T{\alpha _s}(q) 
= \left( {{{\left. {\frac{{\partial ({x^\mu } \circ {\alpha _s} \circ {\varphi ^{ - 1}})}}{{\partial {x^\nu }}}} 
\right|}_{\varphi (\tau _M^2( \cdot ))}}{y^\nu }} \right)({w_q}), \hfill \nonumber \\
&  {{\dot x}^\mu } \circ TT{\alpha _s}({w_q}) = {\left. {\frac{{\partial ({x^\mu } 
\circ T{\alpha _s} \circ {\psi ^{ - 1}})}}{{\partial {x^\nu }}}} \right|_{\psi ({\tau _{TM}}({w_q}))}}{w^\nu } 
= \left( {{{\left. {\frac{{\partial ({x^\mu } \circ {\alpha _s} \circ {\varphi ^{ - 1}})}}{{\partial {x^\nu }}}} 
\right|}_{\varphi (\tau _M^2( \cdot ))}}{{\dot x}^\nu }} \right)({w_q}), \hfill \nonumber \\
&  {{\dot y}^\mu } \circ TT{\alpha _s}({w_q}) = {\left. {\frac{{\partial ({y^\mu } 
\circ T{\alpha _s} \circ {\psi ^{ - 1}})}}{{\partial {x^\nu }}}} \right|_{\psi ({\tau _{TM}}({w_q}))}}{w^\nu } 
+ {\left. {\frac{{\partial ({y^\mu } \circ T{\alpha _s} \circ {\psi ^{ - 1}})}}{{\partial {y^\nu }}}} 
\right|_{\psi ({\tau _{TM}}({w_q}))}}{{\tilde w}^\nu } \hfill \nonumber \\
&  \quad  = {\left. {\frac{\partial }{{\partial {x^\nu }}}\left( {\left( {{{\left. {\frac{{\partial ({x^\mu } 
\circ {\alpha _s} \circ {\varphi ^{ - 1}})}}{{\partial {x^\rho }}}} \right|}_{\varphi ({\tau _M}( \cdot ))}}
{y^\rho }} \right) \circ {\psi ^{ - 1}}} \right)} \right|_{\psi ({\tau _{TM}}({w_q}))}}{{\dot x}^\nu }({w_q}) \hfill \nonumber \\
&  \quad \quad  + {\left. {\frac{\partial }{{\partial {y^\nu }}}\left( {\left( {{{\left. 
{\frac{{\partial ({x^\mu } \circ {\alpha _s} \circ {\varphi ^{ - 1}})}}{{\partial {x^\rho }}}} 
\right|}_{\varphi ({\tau _M}( \cdot ))}}{y^\rho }} \right) \circ {\psi ^{ - 1}}} \right)} 
\right|_{\psi ({\tau _{TM}}({w_q}))}}{{\dot y}^\nu }({w_q}) \hfill \nonumber \\
&  \quad  = \left( {{{\left. {\frac{{{\partial ^2}({x^\mu } \circ {\alpha _s} 
\circ {\varphi ^{ - 1}})}}{{\partial {x^\nu }\partial {x^\rho }}}} \right|}_{\varphi 
(\tau _M^2( \cdot ))}}{y^\rho }{{\dot x}^\nu } + {{\left. {\frac{{\partial ({x^\mu } \circ {\alpha _s} 
\circ {\varphi ^{ - 1}})}}{{\partial {x^\nu }}}} \right|}_{\varphi (\tau _M^2( \cdot ))}}{{\dot y}^\nu }} 
\right)({w_q}). \hfill 
\end{align}
In the case when ${w_q} \in {T^2}M$, we have 
${w^\mu } = {y^\mu }({w_q}) = {\dot x^\mu }({w_q})$, and the expressions become,
\begin{align}
&  {x^\mu } \circ TT{\alpha _s}({w_q}) = {x^\mu } \circ {\alpha _s} \circ \tau _M^{2,0}({w_q}), \hfill \nonumber \\
&  {y^\mu } \circ TT{\alpha _s}({w_q}) = \left( {{{\left. {\frac{{\partial ({x^\mu } \circ {\alpha _s} 
\circ {\varphi ^{ - 1}})}}{{\partial {x^\nu }}}} \right|}_{\varphi (\tau _M^{2,0}( \cdot ))}}{y^\nu }} 
\right)({w_q}), \hfill \nonumber \\
&  {{\dot x}^\mu } \circ TT{\alpha _s}({w_q}) = \left( {{{\left. {\frac{{\partial ({x^\mu } \circ {\alpha _s} 
\circ {\varphi ^{ - 1}})}}{{\partial {x^\nu }}}} \right|}_{\varphi (\tau _M^{2,0}( \cdot ))}}{y^\nu }} 
\right)({w_q}), \hfill \nonumber \\
&  {{\dot y}^\mu } \circ TT{\alpha _s}({w_q}) = \left( {{{\left. {\frac{{{\partial ^2}({x^\mu } 
\circ {\alpha _s} \circ {\varphi ^{ - 1}})}}{{\partial {x^\nu }\partial {x^\rho }}}} 
\right|}_{\varphi (\tau _M^{2,0}( \cdot ))}}{y^\rho }{y^\nu } 
+ {{\left. {\frac{{\partial ({x^\mu } \circ {\alpha _s} \circ {\varphi ^{ - 1}})}}{{\partial {x^\nu }}}} 
\right|}_{\varphi (\tau _M^{2,0}( \cdot ))}}{{\dot y}^\nu }} \right)({w_q}). \hfill  \label{comp_TTa}
\end{align}
Therefore, ${y^\mu } \circ TT{\alpha _s}({w_p}) = {\dot x^\mu } \circ TT{\alpha _s}({w_p})$, 
and the $1$-parameter group $TT{\alpha _s}$will take the elements of ${T^2}M$ to ${T^2}M$. 

\begin{figure}
  \centering
  \includegraphics[width=6cm]{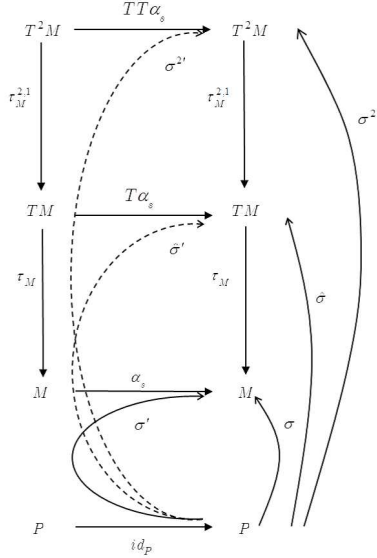}
  \caption{Second order mechanics} \label{fig_2nd_mech}
\end{figure}

With the above considerations, we define the following. 
\begin{defn} Variation of the action \\
Let $\xi$ be a vector field on $M$ which generates the $1$-parameter group $\alpha_s$, i.e., 
$\displaystyle{\xi  = {\left. {\frac{{d{\alpha _s}}}{{ds}}} \right|_{s = 0}}}$. 
We call the functional 
\begin{align}
&  {\delta _\xi }{S^\user2{\mathcal{K}}}(C): = \mathop {\lim }\limits_{s \to 0} \frac{1}{s}
  \left\{ {{S^\user2{\mathcal{K}}}({\alpha _s}(C)) - {S^{\mathcal{K}}}(C)} \right\} \hfill \nonumber \\
&   = \mathop {\lim }\limits_{s \to 0} \frac{1}{s}\left\{ {\int_{{{({\alpha _s}(\sigma ))}^2}(I)} 
{\mathcal{K}}  - \int_{{\sigma ^2}(I)} {\mathcal{K}} } \right\} \hfill  
\end{align}
the {\it variation of the action ${S^\user2{\mathcal{K}}}(C)$ with respect to the flow $\alpha$, 
associated to $\mathcal{K}$ }. 
\end{defn}

The second order lift of this modified parameterisation $\sigma '$ is given by, 
\begin{align}
{(\sigma ')^2} = {({\alpha _s} \circ \sigma  \circ i{d_I}^{ - 1})^2} = TT{\alpha _s} \circ {\sigma ^2} \circ i{d_I}^{ - 1}, 
\end{align}
which we show in figure \ref{fig_2nd_mech}. 
We can also check this easily. 
We will get, 
\begin{align}
&  {\delta _\xi }{S^{\mathcal{K}}}(C) = \mathop {\lim }\limits_{s \to 0} \frac{1}{s}\left\{ {\int_{TT{\alpha _s} 
\circ {\sigma ^2} \circ i{d_I}^{ - 1}(I)} {\mathcal{K}}  - \int_{{\sigma ^2}(I)} {\mathcal{K}} } \right\} \hfill \nonumber \\
&   = \mathop {\lim }\limits_{s \to 0} \frac{1}{s}\left\{ {\int_{{\sigma ^2}(I)} {TT{\alpha _s}^*{\mathcal{K}}} 
- \int_{{\sigma ^2}(I)} {\mathcal{K}} } \right\} = \int_{{\sigma ^2}(I)} {{L_X}{\mathcal{K}}}  \hfill \nonumber \\
&   = \int_{{C^2}} {{L_X}{\mathcal{K}}}  \hfill 
\end{align}
where $X$ is a vector field on ${{T}^{2}}M$ that generates the tangent 
$1$-parameter group $TT{{\alpha }_{s}}$, i.e., 
$\displaystyle{X={{\left. \frac{d(TT{{\alpha }_{s}})}{ds} \right|}_{s=0}}}$, and 
${L_X}$ is a Lie derivative with respect to $X$. 

We use the same definition of extremal given by Definition \ref{def_extremal}.
We will calculate the vector field $X$ and the equation of motion in local coordinates. 
Let $\xi $ be a vector field related to the $1$-parameter group ${\alpha _s}$, $\displaystyle{\xi  = \frac{{d{\alpha _s}}}{{ds}}}$, 
and its local expression $\displaystyle{\xi  = {\xi ^\mu }\frac{\partial }{{\partial {x^\mu }}}}$, 
where ${\xi ^\mu } \in {C^\infty }(M)$. 
We have already calculated the components of the map $TT{\alpha _s}$ in (\ref{comp_TTa}), 
which in the chart of $T^2 M$, becomes
\begin{align}
&  {x^\mu } \circ TT{\alpha _s}({w_q}) = {x^\mu } \circ {\alpha _s} \circ \tau _M^{2,0}({w_q}), \hfill \nonumber \\
&  {y^\mu } \circ TT{\alpha _s}({w_q}) = \left( {{{\left. {\frac{{\partial ({x^\mu } \circ {\alpha _s} 
\circ {\varphi ^{ - 1}})}}{{\partial {x^\nu }}}} \right|}_{\varphi (\tau _M^{2,0}( \cdot ))}}{y^\nu }} \right)
({w_q}), \hfill \nonumber \\
&  {z^\mu } \circ TT{\alpha _s}({w_q}) = \left( {{{\left. {\frac{{{\partial ^2}({x^\mu } 
\circ {\alpha _s} \circ {\varphi ^{ - 1}})}}{{\partial {x^\nu }\partial {x^\rho }}}} 
\right|}_{\varphi (\tau _M^{2,0}( \cdot ))}}{y^\nu }{y^\rho } 
+ {{\left. {\frac{{\partial ({x^\mu } \circ {\alpha _s} \circ {\varphi ^{ - 1}})}}{{\partial {x^\nu }}}} 
\right|}_{\varphi (\tau _M^{2,0}( \cdot ))}}{z^\nu }} \right)({w_q}). \hfill \label{comp_TTa2}
\end{align}
By these observations, the vector field $X$ on ${T^2}M$ at a point $w \in {T^2}M$ 
has a local expression, 
\begin{align}
&  {X_w} = {\left. {\frac{{d({x^\mu } \circ TT{\alpha _s})}}{{ds}}} \right|_{s = 0}}
{\left( {\frac{\partial }{{\partial {x^\mu }}}} \right)_w} 
+ {\left. {\frac{{d({y^\mu } \circ TT{\alpha _s})}}{{ds}}} \right|_{s = 0}}
{\left( {\frac{\partial }{{\partial {y^\mu }}}} \right)_w} 
+ {\left. {\frac{{d({z^\mu } \circ TT{\alpha _s})}}{{ds}}} 
\right|_{s = 0}}{\left( {\frac{\partial }{{\partial {z^\mu }}}} \right)_w} \hfill \nonumber \\
&  \quad  = {\left. {\frac{d}{{ds}}({x^\mu } \circ {\alpha _s} \circ \tau _M^{2,0})} 
\right|_{s = 0}}{\left( {\frac{\partial }{{\partial {x^\mu }}}} \right)_w} 
+ {y^\nu }(w)\frac{d}{{ds}}{\left( {{{\left. {\frac{{\partial ({x^\mu } \circ {\alpha _s} 
\circ {\varphi ^{ - 1}})}}{{\partial {x^\nu }}}} \right|}_{\varphi (\tau _M^{2,0}(w))}}} \right)_{s = 0}}
{\left( {\frac{\partial }{{\partial {y^\mu }}}} \right)_w} \hfill \nonumber \\
&  \quad \quad  + \frac{d}{{ds}}\left( {{{\left. {\frac{{{\partial ^2}({x^\mu } \circ {\alpha _s} 
\circ {\varphi ^{ - 1}})}}{{\partial {x^\nu }\partial {x^\rho }}}} \right|}_{\varphi (\tau _M^{2,0}(w))}}
{y^\nu }{y^\rho } + {{\left. {\frac{{\partial ({x^\mu } \circ {\alpha _s} \circ {\varphi ^{ - 1}})}}
{{\partial {x^\nu }}}} \right|}_{\varphi (\tau _M^{2,0}(w))}}{z^\nu }} \right)
{\left( {\frac{\partial }{{\partial {z^\mu }}}} \right)_w} \hfill \nonumber \\
&  \quad  = \left( {{\xi ^\mu } \circ \tau _M^{2,0}} \right)(w){\left( {\frac{\partial }
{{\partial {x^\mu }}}} \right)_w} + \left( {\frac{{\partial {\xi ^\mu }}}{{\partial {x^\nu }}} 
\circ \tau _M^{2,0} \cdot {y^\nu }} \right)(w){\left( {\frac{\partial }{{\partial {y^\mu }}}} \right)_w} \hfill \nonumber \\
&  \quad \quad  + \left( {\frac{{{\partial ^2}{\xi ^\mu }}}
{{\partial {x^\nu }\partial {x^\rho }}} \circ \tau _M^{2,0} \cdot {y^\nu }{y^\rho } 
+ \frac{{\partial {\xi ^\mu }}}{{\partial {x^\nu }}} \circ \tau _M^{2,0} 
\cdot {z^\nu }} \right)(w){\left( {\frac{\partial }{{\partial {z^\mu }}}} \right)_w}. \hfill 
\end{align}
therefore, 
\begin{align}
X = ({\xi ^\mu } \circ \tau _M^{2,0})\frac{\partial }{{\partial {x^\mu }}} 
+ \left( {\frac{{\partial {\xi ^\mu }}}{{\partial {x^\nu }}} \circ \tau _M^{2,0} \cdot {y^\nu }} \right)
\frac{\partial }{{\partial {y^\mu }}} + \left( {\frac{{{\partial ^2}{\xi ^\mu }}}
{{\partial {x^\nu }\partial {x^\rho }}} \circ \tau _M^{2,0} \cdot {y^\nu }{y^\rho } 
+ \frac{{\partial {\xi ^\mu }}}{{\partial {x^\nu }}} \circ \tau _M^{2,0} \cdot {z^\nu }} \right)
\frac{\partial }{{\partial {z^\mu }}}. 
\end{align}
We can make this expression shorter by using the total derivatives we defined in Definition \ref{def_totalderiv1} and 
\ref{def_totalderivr} (or (\ref{def_totalderiv2}) ), 
\begin{align}
X = {\xi ^\mu } \circ \tau _M^{2,0}\frac{\partial }{{\partial {x^\mu }}} 
+ D{\xi ^\mu } \circ \tau _M^{2,1}\frac{\partial }{{\partial {y^\mu }}} 
+ {D_2}(D{\xi ^\mu })\frac{\partial }{{\partial {z^\mu }}}. \label{pro.v.f.onT2M}
\end{align}

The Lie derivative ${L_X}{\mathcal{K}}$ in coordinate expression becomes, 
\begin{align}
&  {L_X}{\mathcal{K}} = {L_X}\left( {\frac{{\partial K}}{{\partial {y^\rho }}}d{x^\rho } + 2\frac{{\partial K}}{{\partial {z^\rho }}}d{y^\rho }} \right) \hfill \nonumber\\
&  \quad  = X\left( {\frac{{\partial K}}{{\partial {y^\rho }}}} \right)d{x^\rho } + \frac{{\partial K}}{{\partial {y^\rho }}}d{L_X}{x^\rho } + 2X\left( {\frac{{\partial K}}{{\partial {z^\rho }}}} \right)d{y^\rho } + 2\frac{{\partial K}}{{\partial {z^\rho }}}d{L_X}{y^\rho } \hfill \nonumber\\
&  \quad  = \left\{ {{\xi ^\mu } \circ \tau _M^{2,0} \cdot \left( {\frac{{{\partial ^2}K}}{{\partial {x^\mu }\partial {y^\rho }}}} \right) + D{\xi ^\mu } \circ \tau _M^{2,1} \cdot \left( {\frac{{{\partial ^2}K}}{{\partial {y^\mu }\partial {y^\rho }}}} \right) + {D_2}(D{\xi ^\mu })\frac{{{\partial ^2}K}}{{\partial {z^\mu }\partial {y^\rho }}}} \right\}d{x^\rho } + \frac{{\partial K}}{{\partial {y^\rho }}}d{\xi ^\rho } \hfill \nonumber\\
&  \quad \quad  + 2\left\{ {{\xi ^\mu } \circ \tau _M^{2,0} \cdot 
\left( {\frac{{{\partial ^2}K}}{{\partial {x^\mu }\partial {z^\rho }}}} \right) 
+ D{\xi ^\mu } \circ \tau _M^{2,1} \cdot 
\left( {\frac{{{\partial ^2}K}}{{\partial {y^\mu }\partial {z^\rho }}}} \right) 
+ {D_2}(D{\xi ^\mu })\frac{{{\partial ^2}K}}{{\partial {z^\mu }\partial {z^\rho }}}} \right\}d{y^\rho } \nonumber \\
&  \quad \quad + 2\frac{{\partial K}}{{\partial {z^\rho }}}d\left( {D{\xi ^\mu }} \right) \hfill \nonumber\\
&  \quad  = {\xi ^\mu } \circ \tau _M^{2,0} \cdot \left\{ {\frac{{{\partial ^2}K}}{{\partial {x^\mu }\partial {y^\rho }}}d{x^\rho } - d\left( {\frac{{\partial K}}{{\partial {y^\mu }}}} \right) + 2\frac{{{\partial ^2}K}}{{\partial {x^\mu }\partial {z^\rho }}}d{y^\rho }} \right\} \hfill \nonumber\\
&  \quad \quad  + D{\xi ^\mu } \circ \tau _M^{2,1} \cdot \left( {\frac{{{\partial ^2}K}}{{\partial {y^\mu }\partial {y^\rho }}}d{x^\rho } + 2\frac{{{\partial ^2}K}}{{\partial {y^\mu }\partial {z^\rho }}}d{y^\rho } - 2d\left( {\frac{{\partial K}}{{\partial {z^\mu }}}} \right)} \right) \hfill \nonumber\\
&  \quad \quad  + {D_2}(D{\xi ^\mu })\left( {\frac{{{\partial ^2}K}}{{\partial {z^\mu }\partial {y^\rho }}}d{x^\rho } + 2\frac{{{\partial ^2}K}}{{\partial {z^\mu }\partial {z^\rho }}}d{y^\rho }} \right) \hfill \nonumber\\
&  \quad \quad  + d\left( {{\xi ^\mu } \circ \tau _M^{2,0} \cdot \frac{{\partial K}}{{\partial {y^\mu }}} 
+ 2D{\xi ^\mu } \circ \tau _M^{2,1} \cdot \frac{{\partial K}}{{\partial {z^\mu }}}} \right). \hfill 
 \label{Lie_2nd_mech}
\end{align}
The result of (\ref{Lie_2nd_mech}) is called the {\it infinitesimal first variation formula} 
for the Finsler-Kawaguchi form $\mathcal{K}$.

% where we omitted the projection maps, $\tau _M^{2,0}$, $\tau _M^{2,1}$. 
The variation of action becomes, 
\begin{align}
&  {\delta _\xi }{S^{\mathcal{K}}}(C) = \int_{{\sigma ^2}(I)} {{L_X}{\mathcal{K}}}  = \int_I {{\sigma ^2}^*{L_X}{\mathcal{K}}}  \hfill \nonumber\\
&  \quad  = \int_{{C^3}} {{\xi ^\mu } \circ \tau _M^{3,0} \cdot \left[ {\left\{ {\frac{{{\partial ^2}K}}{{\partial {x^\mu }\partial {y^\rho }}}d{x^\rho } + 2\frac{{{\partial ^2}K}}{{\partial {x^\mu }\partial {z^\rho }}}d{y^\rho } - d\left( {\frac{{\partial K}}{{\partial {y^\mu }}}} \right)} \right\} \circ \tau _M^{3,2} + d\left( {{D_3}\frac{{\partial K}}{{\partial {z^\mu }}}} \right)} \right]}  \hfill \nonumber\\
&  \quad \quad  + \int_{{C^3}} {d\left( {{\xi ^\mu } \circ \tau _M^{3,0} \cdot \frac{{\partial K}}{{\partial {y^\mu }}} \circ \tau _M^{3,2} + 2D{\xi ^\mu } \circ \tau _M^{3,1} \cdot \frac{{\partial K}}{{\partial {z^\mu }}} \circ \tau _M^{3,2} - {D_3}\left( {{\xi ^\mu } \circ \tau _M^{2,0} \cdot \frac{{\partial K}}{{\partial {z^\mu }}}} \right)} \right)}, \label{variation_2nd_mech} 
\end{align}
which is called the {\it integral first variation formula}. 
$D_3$ is the total derivative defined by Definition \ref{def_totalderivr}. 
We used the homogeneity condition: 	
\begin{align}
&  K = \frac{{\partial K}}{{\partial {y^\mu }}}{y^\mu } 
+ 2\frac{{\partial K}}{{\partial {z^\mu }}}{z^\mu },
\quad \,\frac{{\partial K}}{{\partial {z^\mu }}}{y^\mu } = 0 \hfill \nonumber \\
&  \frac{{{\partial ^2}K}}{{\partial {x^\rho }\partial {y^\mu }}}{y^\mu } 
+ 2\frac{{{\partial ^2}K}}{{\partial {x^\rho }\partial {z^\mu }}}{z^\mu } 
= \frac{{\partial K}}{{\partial {x^\rho }}}, \hfill \nonumber \\
&  \frac{{{\partial ^2}K}}{{\partial {y^\rho }\partial {y^\mu }}}{y^\mu } 
+ 2\frac{{{\partial ^2}K}}{{\partial {y^\rho }\partial {z^\mu }}}{z^\mu } = 0, \hfill \nonumber \\
&  \frac{{{\partial ^2}K}}{{\partial {y^\rho }\partial {z^\mu }}}{y^\mu } 
+ \frac{{\partial K}}{{\partial {z^\rho }}} = 0,\quad \frac{{{\partial ^2}K}}
{{\partial {y^\rho }{z^\mu }}}{y^\mu } + \frac{{\partial K}}{{\partial {y^\rho }}} = 0, \hfill \label{cond_2nd_hom}
\end{align}
and its pull-back relations on the parameter space.

The detailed calculations are shown in the Appendix. 

Now we can obtain the following theorem. 
\begin{theorem} Extremals \label{thm_1st_var_2nd} \\
Let $C$ be an arc segment. 
The following statements are equivalent. 
\begin{enumerate}
\item $C$ is an extremal. 
\item The equation 
\begin{align}
&  {\mathcal{E}}{{\mathcal{L}}^K}_\mu  \circ {\sigma ^3} = 0, \hfill \nonumber \\
&  {\mathcal{E}}{{\mathcal{L}}^K}_\mu : = 
\left\{ {\frac{{{\partial ^2}K}}{{\partial {x^\mu }\partial {y^\rho }}}d{x^\rho } 
+ 2\frac{{{\partial ^2}K}}{{\partial {x^\mu }\partial {z^\rho }}}d{y^\rho } 
- d\left( {\frac{{\partial K}}{{\partial {y^\mu }}}} \right)} \right\} \circ \tau _M^{3,2} 
+ d\left( {{D_3}\frac{{\partial K}}{{\partial {z^\mu }}}} \right), \hfill \label{2nd_EL_pullback}
\end{align}
holds for arbitrary parameterisation $\sigma$. 
\end{enumerate}
\end{theorem}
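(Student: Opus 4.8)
The plan is to mirror exactly the structure of the first-order proof (Theorem~\ref{thm_1st_var}), now using the integral first variation formula~(\ref{variation_2nd_mech}) in place of~(\ref{variation_1st_mech}). First I would assume that $C$ is an extremal. By the definition of extremal (Definition~\ref{def_extremal}), for every flow $\alpha$ whose associated $1$-parameter group $\alpha_s$ fixes the boundary $\partial C$, we have ${\delta _\xi }{S^{\mathcal{K}}}(C) = 0$, where $\xi$ is the generator of $\alpha$. Examining~(\ref{variation_2nd_mech}), the entire second integral is a total-derivative (exact) term $\int_{C^3} d(\cdots)$, hence by Stokes' theorem it reduces to a boundary contribution on $\partial C^3$; because $\alpha_s$ fixes $\partial C$, the generator $\xi$ and its relevant derivatives vanish there, so this boundary term is $0$.

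With the exact term eliminated, the variation collapses to
\begin{align}
{\delta _\xi }{S^{\mathcal{K}}}(C) = \int_{C^3} {{\xi ^\mu } \circ \tau _M^{3,0} \cdot {\mathcal{E}}{{\mathcal{L}}^K}_\mu  \circ {\sigma ^3}} = 0.
\end{align}
The next step is the standard localisation argument: since this must hold for \emph{all} vector fields $\xi$ on $M$ (equivalently all choices of components ${\xi^\mu}$ subordinate to boundary-fixing flows), and since $\xi^\mu \circ \tau_M^{3,0}$ can be taken to be an arbitrary compactly-supported bump in the interior, the fundamental lemma of the calculus of variations forces the integrand factor to vanish pointwise along the lift, giving ${\mathcal{E}}{{\mathcal{L}}^K}_\mu  \circ {\sigma ^3} = 0$ for every parameterisation $\sigma$. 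The converse direction, that~(\ref{2nd_EL_pullback}) implies $C$ is an extremal, follows by reading these same steps backwards: substituting the vanishing Euler--Lagrange expression into~(\ref{variation_2nd_mech}) leaves only the exact term, whose integral vanishes on any boundary-fixing flow.

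Two points deserve care, and the localisation step is where the genuine obstacle lies. The first subtlety is bookkeeping of the tower of projections $\tau_M^{3,0}, \tau_M^{3,1}, \tau_M^{3,2}$ and the appearance of the third-order lift ${\sigma^3}$ and the total derivative $D_3$: one must verify, using the pull-back relations $({\sigma^3})^* = $ composition along the lifts together with~(\ref{pullback_Dr}), that the form ${\mathcal{E}}{{\mathcal{L}}^K}_\mu$ naturally lives on $T^3 M$ and that its evaluation along ${\sigma^3}$ is well-defined and parameterisation-covariant, so that the displayed equation is meaningful for arbitrary $\sigma$. The second, and harder, issue is justifying that the coefficient $\xi^\mu$ really is free after the boundary term is discarded: because $X$ in~(\ref{pro.v.f.onT2M}) involves not only $\xi^\mu$ but also $D\xi^\mu$ and ${D_2}(D{\xi^\mu})$, one must confirm that the homogeneity conditions~(\ref{cond_2nd_hom}) have already been used (as indicated beneath~(\ref{variation_2nd_mech})) to collect \emph{all} derivative-of-$\xi$ contributions into the single exact term, leaving $\xi^\mu$ itself as the only undifferentiated multiplier in the bulk integrand. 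Once that collection is verified, the arbitrariness of $\xi^\mu$ in the interior is genuine and the fundamental lemma applies; this reorganisation, rather than any single computation, is the crux of the argument.
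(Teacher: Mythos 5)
Your proposal follows essentially the same route as the paper's proof: discard the exact (boundary) term in the integral first variation formula~(\ref{variation_2nd_mech}) using the boundary-fixing condition, invoke the arbitrariness of $\xi^\mu$ to force the bracketed Euler--Lagrange expression to vanish along $\sigma^3$, and obtain the converse by reversing the steps. The additional caveats you raise (the projection bookkeeping and the collection of all $D\xi^\mu$, $D_2(D\xi^\mu)$ contributions into the exact term via the homogeneity conditions) are precisely what the paper's derivation of~(\ref{variation_2nd_mech}) and the Appendix computation already carry out, so your argument is correct and matches the paper's.
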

\begin{proof}
Suppose $C$ is an extremal. Then, by definition, for all ${\alpha _s}:M \to M$, 
such that does not change the boundary of $C$, we have ${\delta _\xi }S(C) = 0$. 
On the other hand, the last term in (\ref{variation_2nd_mech}) becomes $0$, 
since it is the boundary term. Therefore, 
we have, 
\begin{align}
\int_{{C^3}} {{\xi ^\mu } \circ \tau _M^{3,0}\left[ {\left\{ {\frac{{{\partial ^2}K}}{{\partial {x^\mu }\partial {y^\rho }}}d{x^\rho } + 2\frac{{{\partial ^2}K}}{{\partial {x^\mu }\partial {z^\rho }}}d{y^\rho } - d\left( {\frac{{\partial K}}{{\partial {y^\mu }}}} \right)} \right\} \circ \tau _M^{3,2} + d\left( {{D_3}\frac{{\partial K}}{{\partial {z^\mu }}}} \right)} \right]}  = 0.
\end{align}
Since this relation must be true for all $\xi $, which is the generator of $\alpha$, we have 
\begin{align}
\left[ {\left\{ {\frac{{{\partial ^2}K}}{{\partial {x^\mu }\partial {y^\rho }}}d{x^\rho } 
+ 2\frac{{{\partial ^2}K}}{{\partial {x^\mu }\partial {z^\rho }}}d{y^\rho } 
- d\left( {\frac{{\partial K}}{{\partial {y^\mu }}}} \right)} \right\} \circ \tau _M^{3,2} 
+ d\left( {{D_3}\frac{{\partial K}}{{\partial {z^\mu }}}} \right)} \right] \circ {\sigma ^3} = 0,
\end{align}
for any parameterisation $\sigma$. 
To prove the converse, take the similar steps backwards. 
\end{proof}

The equations (\ref{2nd_EL_pullback}) are called the {\it Euler-Lagrange equations} or 
{\it equations of motion} of the second order Lagrangian $\mathcal{K}$.  

\begin{defn} Symmetry of the dynamical system \\ 
Let $u$ be a vector field over $M$, and $Y$ an induced vector field by $u$ over $T^2 M$.  
We say that {\it $\mathcal{K}$ is invariant with respect to $u$}, if
\begin{align}
{L_Y}{\mathcal{K}} = 0, 
\end{align}
and $u$ is called a {\it symmetry} of the dynamical system $(M,\mathcal{K})$. 
We also say that $u$ generates the invariant transformations on the 
Finsler-Kawaguchi manifold $(M, \mathcal{K})$.
\end{defn}
Now we will have the following conservation law. 
\begin{theorem}  Noether (second order)\\
Suppose we are given a symmetry of second order Finsler-Kawaguchi manifold $(M, \mathcal{K})$. 
Then there exists a function $f$ on $T^3 M$, 
which along the extremal $\gamma$ of $S^\mathcal{K}$ satisfies, 
\begin{align}
\int_{{\gamma ^3}} {df}  = 0,  
\label{eq_conserved_2nd}
\end{align}
for any parameterisation $\sigma$ which parameterise $\gamma$.
\end{theorem}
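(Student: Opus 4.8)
The plan is to mirror the proof of the first order Noether theorem (Theorem following the integral first variation formula in Section~\ref{sec_1st_mech}), now applied to the second order integral first variation formula~(\ref{variation_2nd_mech}). First I would take the given symmetry $u$ of the dynamical system $(M,\mathcal{K})$, with local expression $\displaystyle{u = u^\mu \frac{\partial}{\partial x^\mu}}$, and let $Y$ be the induced vector field on $T^2 M$ obtained from $u$ exactly as in~(\ref{pro.v.f.onT2M}). By the definition of symmetry we have $L_Y \mathcal{K} = 0$, so integrating this identity over any lifted $k$-patch gives $\int_{\gamma^3} \sigma^{2*} L_Y \mathcal{K} = 0$ automatically.

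Next I would substitute the expansion~(\ref{variation_2nd_mech}) for the integrated Lie derivative, replacing the generic generator $\xi$ by the symmetry generator $u$. This splits the integrand into two pieces: the bulk term carrying the factor $u^\mu \circ \tau_M^{3,0}$ multiplying the Euler--Lagrange expression $\mathcal{EL}^K_\mu$, and an exact (boundary) term $d(\cdots)$. The key step is then to restrict to the extremal $\gamma$: by Theorem~\ref{thm_1st_var_2nd}, the Euler--Lagrange equations $\mathcal{EL}^K_\mu \circ \sigma^3 = 0$ hold along $\gamma$, so the bulk term vanishes identically. What survives is precisely
\begin{align}
0 = \int_{\gamma^3} d\left( u^\mu \circ \tau _M^{3,0} \cdot \frac{{\partial K}}{{\partial {y^\mu }}} \circ \tau _M^{3,2} + 2D u^\mu \circ \tau _M^{3,1} \cdot \frac{{\partial K}}{{\partial {z^\mu }}} \circ \tau _M^{3,2} - {D_3}\left( {u^\mu  \circ \tau _M^{2,0} \cdot \frac{{\partial K}}{{\partial {z^\mu }}}} \right) \right),
\end{align}
and reading off the argument of $d$ yields the desired conserved function $f$ on $T^3 M$.

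Therefore I would define
\begin{align}
f = u^\mu \circ \tau _M^{3,0} \cdot \frac{{\partial K}}{{\partial {y^\mu }}} \circ \tau _M^{3,2} + 2D u^\mu \circ \tau _M^{3,1} \cdot \frac{{\partial K}}{{\partial {z^\mu }}} \circ \tau _M^{3,2} - {D_3}\left( {u^\mu  \circ \tau _M^{2,0} \cdot \frac{{\partial K}}{{\partial {z^\mu }}}} \right),
\end{align}
a smooth function on $T^3 M$, and conclude that $\int_{\gamma^3} df = 0$ along any extremal $\gamma$, for any parameterisation $\sigma$ parameterising $\gamma$, which is~(\ref{eq_conserved_2nd}). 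The argument that this is independent of the chosen parameterisation follows from the reparameterisation invariance established in Lemma~\ref{lem_repinv_FinslerKawaguchi}, since both the action and its variation are parameterisation independent within the same equivalence class.

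The main obstacle I anticipate is bookkeeping rather than conceptual: one must verify that the three total-derivative and projection prefactors ($\tau_M^{3,0}$, $\tau_M^{3,1}$, $\tau_M^{3,2}$, and the operators $D$, $D_3$) are consistently composed so that the boundary term in~(\ref{variation_2nd_mech}) is genuinely globally exact on $T^3 M$ and not merely locally so. This requires using the homogeneity conditions~(\ref{cond_2nd_hom}) and their pull-back relations (as already invoked to derive~(\ref{variation_2nd_mech})) to ensure that the coordinate-dependent expression for $f$ actually assembles into a well-defined function; the detailed verification parallels the computations deferred to the Appendix, so I would cite those rather than reproduce them.
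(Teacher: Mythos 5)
Your proposal is correct and follows essentially the same route as the paper: apply the integral first variation formula~(\ref{variation_2nd_mech}) with the symmetry generator $u$ in place of $\xi$, observe that $L_Y\mathcal{K}=0$ kills the left-hand side while the Euler--Lagrange equations from Theorem~\ref{thm_1st_var_2nd} kill the bulk term along the extremal, and read off the Noether current $f$ from the surviving exact term --- your expression for $f$ matches the paper's exactly. The only slip is terminological (you speak of a lifted ``$k$-patch'' where the second order mechanics setting concerns an arc segment), which does not affect the argument.
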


\begin{proof}
Let the symmetry be $u$, with its local coordinate expression 
$\displaystyle{u = {u^\mu }\frac{\partial }{{\partial {x^\mu }}}}$, 
and the induced vector field $Y$. 
Then from (\ref{variation_2nd_mech}), we have 
\begin{align}
&0 = \int_{{\gamma ^3}} {{L_Y}{\mathcal{K}}}  \hfill \nonumber \\
&   = \int_{{\gamma ^3}} {{u^\mu } \circ \tau _M^{3,0} \cdot 
\left[ {\left\{ {\frac{{{\partial ^2}K}}{{\partial {x^\mu }\partial {y^\rho }}}d{x^\rho } 
+ 2\frac{{{\partial ^2}K}}{{\partial {x^\mu }\partial {z^\rho }}}d{y^\rho } 
- d\left( {\frac{{\partial K}}{{\partial {y^\mu }}}} \right)} \right\} \circ \tau _M^{3,2} 
+ d\left( {{D_3}\frac{{\partial K}}{{\partial {z^\mu }}}} \right)} \right]}  \hfill \nonumber \\
&  \quad \quad  + \int_{{\gamma ^3}} {d\left( {{u^\mu } \circ \tau _M^{3,0} \cdot 
\frac{{\partial K}}{{\partial {y^\mu }}} \circ \tau _M^{3,2} + 2D{u^\mu } \circ \tau _M^{3,1} 
\cdot \frac{{\partial K}}{{\partial {z^\mu }}} \circ \tau _M^{3,2} 
- {D_3}\left( {{u^\mu } \circ \tau _M^{2,0} \cdot \frac{{\partial K}}{{\partial {z^\mu }}}} \right)} \right)}  \hfill \nonumber \\
&   = \int_{{\gamma ^3}} {d\left( {{u^\mu } \circ \tau _M^{3,0} 
\cdot \frac{{\partial K}}{{\partial {y^\mu }}} \circ \tau _M^{3,2} 
+ 2D{u^\mu } \circ \tau _M^{3,1} \cdot \frac{{\partial K}}{{\partial {z^\mu }}} \circ \tau _M^{3,2} 
- {D_3}\left( {{u^\mu } \circ \tau _M^{2,0} \cdot \frac{{\partial K}}{{\partial {z^\mu }}}} \right)} \right)} . \hfill \nonumber \\ 
\end{align}
The third equality comes from the fact we consider along the extremal $\gamma$. 
Therefore we have a function on $T^3 M$,
\begin{align}
f = {u^\mu } \circ \tau _M^{3,0} \cdot \frac{{\partial K}}{{\partial {y^\mu }}} \circ \tau _M^{3,2} + 2D{u^\mu } \circ \tau _M^{3,1} \cdot \frac{{\partial K}}{{\partial {z^\mu }}} \circ \tau _M^{3,2} - {D_3}\left( {{u^\mu } \circ \tau _M^{2,0} \cdot \frac{{\partial K}}{{\partial {z^\mu }}}} \right)
\end{align}
such that satisfies the condition. 
\end{proof}

We call the relation (\ref{eq_conserved_2nd}), the {\it conservation law}. 
We can express the conservation law (\ref{eq_conserved_2nd}) by 
taking arbitrary parameterisation for this $\gamma$. 

\begin{defn} Noether current \\
The quantity $f$ is called the {\it Noether current associated with $u$}.  
\end{defn}

\begin{remark} \label{convL-2ndFins-rel}
As in the case of first order mechanics (Remark \ref{convL-Fins-rel}), here we will show that 
given a second order ``conventional'' Lagrange function on a certain coordinate chart on $J^2 Y$, 
we can construct its homogeneous counterpart. In general, this cannot be done globally. ( It is possible only when $Y=\mathbb{R}\times Q$. )
Let $(U,\psi )$, $\psi =(t,{{q}^{i}})$, $i=1,\cdots ,n$ 
be the adapted chart on a $(n+1)$-dimensional manifold $Y$, 
and the induced chart on $\mathbb{R}$ be $(\pi(U), t)$. 
We denote the induced chart on ${{J}^{2}}Y$ 
by $({{({{\pi }^{2,0}})}^{-1}}(U),{{\psi }^{2}}),{{\psi }^{2}}
=(t,{{q}^{i}},{{\dot{q}}^{i}},{{\ddot{q}}^{i}})$.
Take the induced chart on ${{T}^{2}}Y$ as $(V,{{\tilde{\psi }}^{2}})$, 
$V={{(\tau_{Y}^{2,0})}^{-1}}(U)$, 
${{\tilde{\psi }}^{2}}=({{x}^{0}},{{x}^{i}},{{y}^{0}},{{y}^{i}},{{z}^{0}},{{z}^{i}})$, 
$i=1, \cdots , n$, such that ${{y}^{0}} \ne 0$. 
It is always possible to choose such coordinates for a single chart. 
(In order to avoid confusion we use different symbols, 
but clearly ${x^0} = t \circ {\tau _Y},{x^i} = {q^i} \circ {\tau _Y}$.) 
Now consider a map $\rho :V\hookrightarrow {{J}^{2}}Y,\ $, $\rho (V)={{({{\pi }^{2,0}})}^{-1}}(U)$, 
which in coordinates are defined by 
\begin{align}
t\circ \rho ={{x}^{0}},\ {{q}^{i}}\circ \rho ={{x}^{i}},
\ {{\dot{q}}^{i}}\circ \rho =\frac{{{y}^{i}}}{{{y}^{0}}},
\ {{\ddot{q}}^{i}}\circ \rho =\frac{{{z}^{i}}{{y}^{0}}-{{z}^{0}}{{y}^{i}}}{{{({{y}^{0}})}^{3}}}.
\end{align}
Let $K$ be a function on $V$, defined by, 
\begin{align}
K/{y^0} = {\rho ^*}{\mathcal{L}}.
\end{align}
In coordinates, 
 \begin{align}
 K({x^\mu },{y^\mu },{z^\mu }) = {\mathcal{L}}(t \circ \rho ,{q^i} \circ \rho ,
 {\dot q^i} \circ \rho ,{\ddot q^i} \circ \rho ){y^0} 
 = {\mathcal{L}}({x^0},{x^i},\frac{{{y^i}}}{{{y^0}}},
 \frac{{{z^i}{y^0} - {z^0}{y^i}}}{{{{({y^0})}^3}}})\,{y^0} \label{convL-2ndFinsL}
\end{align}
for $\mu =0,\cdots ,n$, $i=1,\cdots ,n$. 
Then on $V$, $K$ satisfies the homogeneity function, and we can obtain the local expression of 
Finsler-Kawaguchi form 
$\displaystyle{{\mathcal{K}} = \frac{{\partial K}}{{\partial {y^\mu }}}d{x^\mu } 
+ 2\frac{{\partial K}}{{\partial {z^\mu }}}d{y^\mu }}$ on $V$. 
This expression can be used for the reparameterisation invariant action provided that the arc segment 
where the integration is carried out is covered by this single chart. 
\end{remark}

\begin{remark} 
Similarly as in the first order case, locally (on a single chart) we can also construct 
a different Finsler-Kawaguchi function and local Finsler-Kawaguchi form from ${\mathcal{L}}$ 
by choosing an appropriate map $\rho$, which on this single chart can be used 
for reparameterisation invariant action. 
\end{remark} 

\newpage
%%%%%%%%%%%%%%%%%%%%%%%%%%%%%%%%%%%%%%%%%%%%%%%%%%%%%%%%%%
\section{First order field theory} \label{sec_1st_field}
%%%%%%%%%%%%%%%%%%%%%%%%%%%%%%%%%%%%%%%%%%%%%%%%%%%%%%%%%%
Here we present the Lagrange formulations for the first order field theory, 
in terms of Kawaguchi geometry. 

By the term {\it first order field theory}, we 
mean that the total space we are considering is 
the first order $k$-multivector bundle ${{\Lambda }^{k}}TM$ with $k$-patch in its base space. 
The total space is sometimes also called the {\it ambient space},
and its coordinate functions represents the physical fields as well as the 
spacetime coordinates. 
In this sense, the coordinate functions of $M$ are regarded as unified variables, 
and the $k$-dimensional submanifold of $M$ represents the actual spacetime. 
 
	The basic structure we consider in this section is introduced in  
Chapter 2 and 4 (Section \ref {sec_1st_k_Kawaguchi}), 
namely the $n$-dimensional $k$-areal Kawaguchi manifold $(M,\mathcal{K})$, 
the $k$-multivector bundle $({{\Lambda }^{k}}TM,{{\Lambda }^{k}}{{\tau }_{M}},M)$, 
and a $k$-curve ($k$-patch) $\Sigma $ on $M$, parameterised by $\sigma$. 
 
We take the Kawaguchi form $\mathcal{K}$ as the Lagrangian, 
and the action will be defined by considering the integration 
over the lift of the parameterisable $k$-curve ($k$-patch). 
The Euler-Lagrange equations are derived by taking the variation of the action with respect 
to the flow on $M$ that deforms the $k$-patch $\Sigma$, and fixed on the boundary. 
We can show that the action 
and consequently the Euler-Lagrange equations are 
independent with respect to the parameterisation belonging to the same equivalent class.

%%%%%%%%%%%%%%%%%%%%%%%%%%%%%%%%%%%%
\subsection{Action} \label{subsec_1st_field_Action}
%%%%%%%%%%%%%%%%%%%%%%%%%%%%%%%%%%%%
Suppose we have a dynamical system (differential equations expressing motions) 
where the configurations of the $k$-dimensional spacetime (or any extended object of dimension $k$) 
is expressed as a smooth $k$-patch $\Sigma $ of a parameterisable $k$-area, 
such that $\Sigma =\sigma (P)\subset M$, where $P$ is a closed rectangle 
$P=[t_{i}^{1},t_{f}^{1}] \times [t_{i}^{2},t_{f}^{2}] \times \cdots \times [t_{i}^{k},t_{f}^{k}] 
\subset {{\mathbb{R}}^{k}}$.

When we can express this system by first order $k$-areal Kawaguchi geometry, 
namely the pair $(M, \mathcal{K})$ where $\mathcal{K}$ is a first order Kawaguchi $k$-form, 
we refer to this dynamical system as {\it first order fields},  
and conversely call the pair $(M,\mathcal{K})$ a {\it dynamical system}, 
and $\mathcal{K}$ the {\it Lagrangian} of first order fields. 

The action of first order fields is defined as follows.

\begin{defn} Action of first order fields \\
Let  $(M, {\mathcal{K}})$ be a $n$-dimensional $k$-areal Kawaguchi manifold.
	Consider a $k$-multivector bundle $({{\Lambda }^{k}}TM,{{\Lambda }^{k}}{{\tau }_{M}},M)$ and let 
$(U, \varphi)$, $\varphi =({{x}^{\mu }})$ be a chart on $M$, 
and $(V,\psi)$, 
$\psi=({{x}^{\mu }}, {{y}^{{{\mu }_{1}} \cdots {{\mu }_{k}}}})$ 
the induced chart on ${{\Lambda }^{k}}TM$. 

The local expression of the Kawaguchi form $\mathcal{K} \in {{\Omega }^{k}}({{\Lambda }^{k}}TM)$ 
is given by 
\begin{align}
{\mathcal{K}=\frac{1}{k!} 
\frac{\partial K}{\partial {{y}^{{{\mu }_{1}} 
\cdots {{\mu }_{k}}}}}d{{x}^{{{\mu }_{1}}}} \wedge 
\cdots \wedge d{{x}^{{{\mu }_{k}}}}},
\end{align}
where $K$ is the first order $k$-areal Kawaguchi function.  

Let $\Sigma $ be a $k$-patch on $M$, and $\sigma $ its parameterisation,
$\sigma (P)=\Sigma \subset M$ with $P=[t_{i}^{1},t_{f}^{1}]\times [t_{i}^{2},t_{f}^{2}]\times \cdots \times [t_{i}^{k},t_{f}^{k}]\subset {{\mathbb{R}}^{k}}$, 
and $\hat{\sigma}$ the lift of $\sigma$, 
defined in chapter 4 (Definition \ref{def_liftedparameterisation_k}).
We call the functional ${S}^{\mathcal{K}}(\Sigma)$ defined by 
\begin{align}
{S}^{\mathcal{K}}(\Sigma )=\int_{\hat{\Sigma }}{\mathcal{K}}
=\int_{\hat{\sigma }(P)}{\frac{1}{k!}\frac{\partial K}{\partial {{y}^{{{\mu }_{1}}
\cdots {{\mu }_{k}}}}}d{{x}^{{{\mu }_{1}}\cdots {{\mu }_{k}}}}}  \label{KawaguchiAction}
\end{align}
the {\it action of first order field theory associated with $\mathcal{K}$.} 
\end{defn}

As we have seen in Section \ref{subsec_paraminv_k_kawaguchi}, 
Lemma \ref{lem_repinv_kawaguchi_1st}, 
Kawaguchi area is invariant with respect to the reparameterisation, 
therefore the action is also invariant.  

%%%%%%%%%%%%%%%%%%%%%%%%%%%%%%%%%%%%
\subsection{Extremal and equations of motion}  \label{subsec_extremal_1st_field}
%%%%%%%%%%%%%%%%%%%%%%%%%%%%%%%%%%%%%
	Now we will derive the Euler-Lagrange equations by considering the extremal of the action. 
Again, we only consider global flows in this text. 
Nevertheless, with some details added, the formulation can be set up similarly with local flows. 
Consider a ${C^\infty }$-flow, $\alpha :\mathbb{R} \times M \to M$, and its associated 
$1$-parameter group of transformations ${\{ {\alpha _s}\} _{s \in \mathbb{R}}}$ . 
The $1$-parameter group ${\alpha _s}:M \to M$ induces a multi-tangent 
$1$-parameter group 
${{\Lambda }^{k}}T{{\alpha }_{s}}:{{\Lambda }^{k}}TM \to {{\Lambda }^{k}}TM$ 
generated by the tangent mapping of multivectors. 
This will also deform the $k$-area ($k$-patch) $\Sigma $ to ${\Sigma }'={{\alpha }_{s}}(\Sigma )$, 
and since this is a smooth deformation, it again becomes a parameterisable area. 
By the reparameterisation independence, 
we can always choose the parameterisation of this deformed ${\Sigma }'$ by a new 
${\sigma }':P\to M$, ${\sigma }'(P)={\Sigma }'$, 
so that it has the same parameter space as $\Sigma $.  
The variation of the action will be expressed by the small deformations made to the action by 
${\alpha _s}$. 

\begin{figure}
  \centering
  \includegraphics[width=7cm]{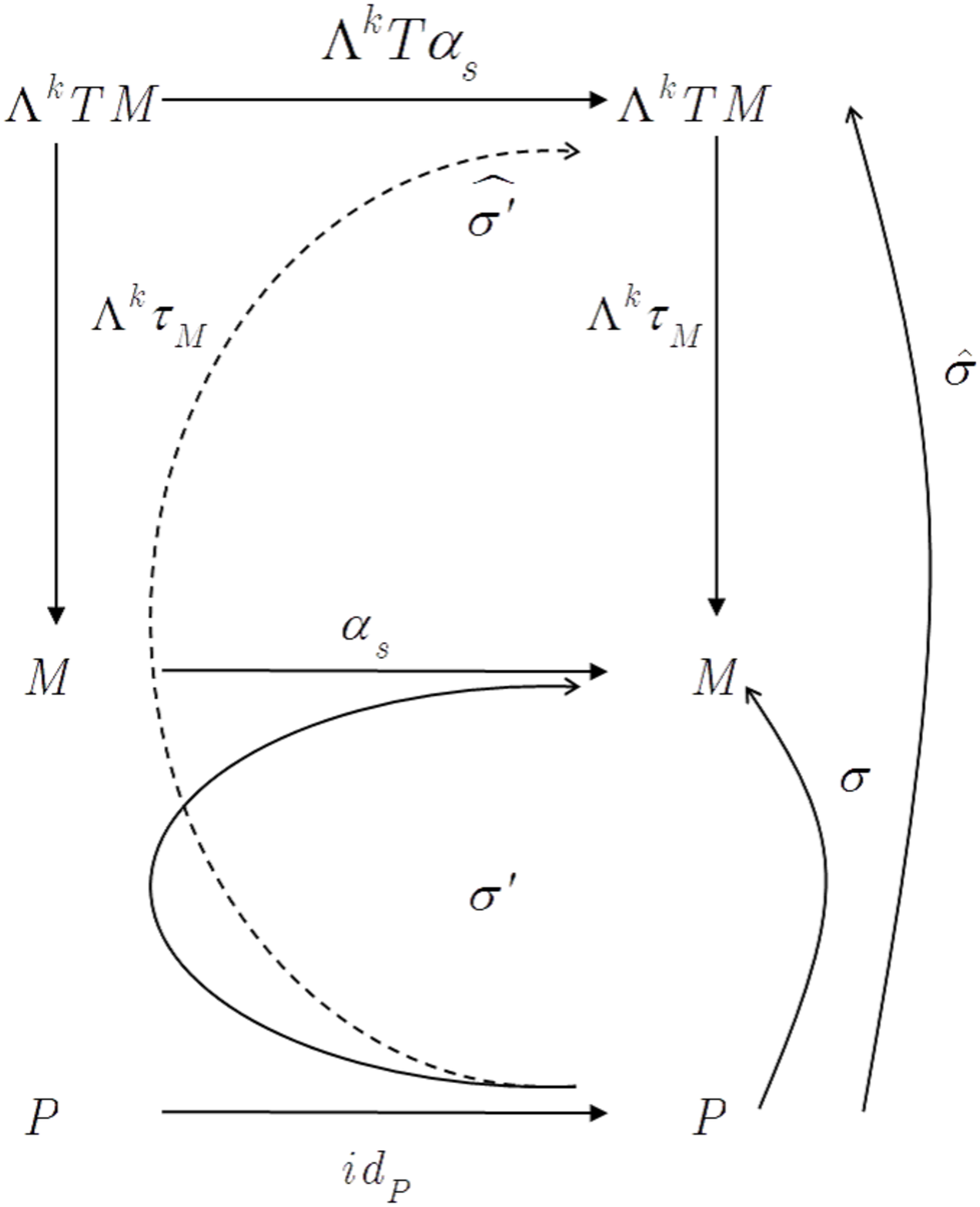}
  \caption{First order fields} \label{fig_1st_field}
\end{figure}

\begin{defn} Variation of the action
Let $\xi$ be a vector field on $M$ which generates the $1$-parameter group $\alpha_s$, 
i.e., $\displaystyle{\xi ={{\left. \frac{d{{\alpha }_{s}}}{ds} \right|}_{s=0}}}$. 
We call the functional 
\begin{align}
 {{\delta }_{\xi }}{{S}^{\mathcal{K}}}(\Sigma ) &:=\underset{s\to 0}{\mathop{\lim }}\, 
\frac{1}{s}\left\{ {{S}^{\mathcal{K}}}({{\alpha }_{s}}(\Sigma ))-{{S}^{\mathcal{K}}}(\Sigma ) \right\} \nonumber \\ 
& =\underset{s\to 0}{\mathop{\lim }}\,\frac{1}{s}\left\{ \int_{\widehat{{{\alpha }_{s}}(\sigma )}(P)}
{\mathcal{K}}-\int_{\hat{\sigma }(P)}{\mathcal{K}} \right\} 
\end{align}
the {\it variation of the action ${{S}^{\mathcal{K}}}(\Sigma )$ 
with respect to the flow $\alpha$, associated to $\mathcal{K}$ }.
\end{defn}

Since the lift of this modified parameterisation is given by, 
$\widehat{{{\sigma }'}}=\widehat{{{\alpha }_{s}}\circ \sigma }={{\Lambda }^{k}}T{{\alpha }_{s}}
\circ \hat{\sigma }={{\Lambda }^{k}}T{{\alpha }_{s}}\circ \hat{\sigma }\circ i{{d}_{P}}^{-1}$, 
We get, 
\begin{align}
 {{\delta }_{\xi }}{{S}^{\mathcal{K}}}(\Sigma ) 
  &=\underset{s\to 0}{\mathop{\lim }}\,\frac{1}{s}\left[ \int_{{{\Lambda }^{k}}T{{\alpha }_{s}} 
 \circ \hat{\sigma }(P)}{\mathcal{K}}-\int_{\hat{\sigma }(P)}{\mathcal{K}} \right]
 =\underset{s\to 0}{\mathop{\lim }}\,\frac{1}{s}
 \left[ \int_{\hat{\sigma }(P)}{{{(T{{\alpha }_{s}})}^{*}}\mathcal{K}}
 - \int_{\hat{\sigma }(P)}{\mathcal{K}} \right] \nonumber \\ 
 & =\int_{\hat{\sigma }(P)}{{{L}_{X}}\mathcal{K}} \nonumber \\ 
 & =\int_{{\hat{\Sigma }}}{{{L}_{X}}\mathcal{K}}  
\end{align}
where $X$ is a vector field on ${{\Lambda }^{k}}TM$ that generates the multi-tangent 
$1$-parameter group ${{\Lambda }^{k}}T{{\alpha }_{s}}$, i.e., 
$\displaystyle{X={{\left. \frac{d({{\Lambda }^{k}}T{{\alpha }_{s}})}{ds} \right|}_{s=0}}}$, 
and $L_X$ is a Lie derivative with respect to $X$. 

Now we will calculate the vector field $X$ and the equation of motion in local coordinates. 
As usual, let $(U,\varphi )$, 
$\varphi =({{x}^{\mu }})$ be a chart on $M$, and the induced chart of ${{\Lambda }^{k}}TM$ 
$(V,\psi )$, $V={{\tau }_{M}}^{-1}(U)$, $\psi =({{x}^{\mu }},{{y}^{{{\mu }_{1}}\cdots {{\mu }_{k}}}})$. 
Let $\xi $ be generator of the $1$-parameter group 
${{\alpha }_{s}}$, $\displaystyle{\xi =\frac{d{{\alpha }_{s}}}{ds}}$, 
and its local coordinate expression 
$\displaystyle{\xi ={{\xi }^{\mu }}\frac{\partial }{\partial {{x}^{\mu }}}}$, 
where ${{\xi }^{\mu }}\in {{C}^{\infty }}(M)$. 
The multi-tangent map ${{\Lambda }^{k}}{{T}_{p}}{{\alpha }_{s}}$ at $p\in M$ 
sends the vector 
$v:={{v}_{p}}\in {{\Lambda }^{k}}{{T}_{p}}M$ to ${{\Lambda }^{k}}{{T}_{{{\alpha }_{t}}(p)}}M$ by 
\begin{align}
{{\Lambda }^{k}}{{T}_{p}}{{\alpha }_{s}}(v)=\frac{1}{k!}{{\left. 
\frac{\partial {{x}^{{{\mu }_{1}}}}{{\alpha }_{s}}{{\varphi }^{-1}}}{\partial {{x}^{{{\nu }_{1}}}}} 
\right|}_{\varphi (p)}}{{\left. \cdots \frac{\partial {{x}^{{{\mu }_{k}}}}{{\alpha }_{s}}
{{\varphi }^{-1}}}{\partial {{x}^{{{\nu }_{k}}}}} \right|}_{\varphi (p)}}{{v}^{{{\nu }_{1}} 
\cdots {{\nu }_{k}}}}{{\left( \frac{\partial }{\partial {{x}^{{{\mu }_{1}}}}}\wedge 
\cdots \wedge \frac{\partial }{\partial {{x}^{{{\mu }_{k}}}}} \right)}_{{{\alpha }_{s}}(p)}},
\end{align} 
and since $({{\Lambda }^{k}}T{{\alpha }_{s}},{{\alpha }_{s}})$ is a bundle morphism and from the 
definition of canonical coordinates of a tangent vector, we have 
\begin{align}
&{{x}^{\mu }}\circ {{\Lambda }^{k}}{{T}_{p}}{{\alpha }_{s}}(v)={{x}^{\mu }}
\circ {{\alpha }_{s}}\circ {{\Lambda }^{k}}{{\tau }_{M}}(v), \nonumber  \\ 
&{{y}^{{{\mu }_{1}}\cdots {{\mu }_{k}}}}\circ {{\Lambda }^{k}}{{T}_{p}}{{\alpha }_{s}}(v)
={{\left. \frac{\partial {{x}^{{{\mu }_{1}}}}{{\alpha }_{s}}{{\varphi }^{-1}}}
{\partial {{x}^{{{\nu }_{1}}}}} \right|}_{\varphi ({{\Lambda }^{k}}{{\tau }_{M}}(v))}} 
\cdots {{\left. \frac{\partial {{x}^{{{\mu }_{k}}}}{{\alpha }_{s}}{{\varphi }^{-1}}} 
{\partial {{x}^{{{\nu }_{k}}}}} \right|}_{\varphi ({{\Lambda }^{k}}{{\tau }_{M}}(v))}}{{y}^{{{\nu }_{1}} 
\cdots {{\nu }_{k}}}}(v).
\end{align} 
The induced vector field $X$ by the $1$-parameter group ${{\Lambda }^{k}}T{{\alpha }_{s}}$ 
at a point $q\in {{\Lambda }^{k}}TM$ has a local expression, 
\begin{align}
 & {{X}_{q}}={{\left. \frac{d({{x}^{\mu }}\circ {{\Lambda }^{k}}T{{\alpha }_{s}})}{ds} \right|}_{s=0}}
 {{\left( \frac{\partial }{\partial {{x}^{\mu }}} \right)}_{q}}
 +\frac{1}{k!}{{\left. \frac{d({{y}^{{{\mu }_{1}}\cdots {{\mu }_{k}}}} 
 \circ {{\Lambda }^{k}}T{{\alpha }_{s}})}{ds} \right|}_{s=0}}{{\left( \frac{\partial }
 {\partial {{y}^{{{\mu }_{1}}\cdots {{\mu }_{k}}}}} \right)}_{q}} \nonumber \\ 
 & \quad ={{\left. \frac{d}{ds}({{x}^{\mu }}\circ {{\alpha }_{s}} 
 \circ {{\Lambda }^{k}}{{\tau }_{M}}) \right|}_{s=0}} 
 {{\left( \frac{\partial }{\partial {{x}^{\mu }}} \right)}_{q}} \nonumber  \\
 & \quad \quad +\frac{1}{k!}{{y}^{{{\nu }_{1}}\cdots {{\nu }_{k}}}}(q)\frac{d}{ds}{{\left( {{\left. 
 \frac{\partial {{x}^{{{\mu }_{1}}}}{{\alpha }_{s}}{{\varphi }^{-1}}}{\partial {{x}^{{{\nu }_{1}}}}} 
 \right|}_{\varphi ({{\Lambda }^{k}}{{\tau }_{M}}(q))}} 
 \cdots {{\left. \frac{\partial {{x}^{{{\mu }_{k}}}}{{\alpha }_{s}}
 {{\varphi }^{-1}}}{\partial {{x}^{{{\nu }_{k}}}}} \right|}_{\varphi ({{\Lambda }^{k}}{{\tau }_{M}}(q))}} 
 \right)}_{s=0}}{{\left( \frac{\partial }{\partial {{y}^{{{\mu }_{1}}\cdots {{\mu }_{k}}}}} \right)}_{q}} \nonumber \\ 
 & \quad =\left( {{\xi }^{\mu }}\circ {{\Lambda }^{k}}{{\tau }_{M}} \right)(q) 
 {{\left( \frac{\partial }{\partial {{x}^{\mu }}} \right)}_{q}}+\frac{1}{(k-1)!} 
 \left( \frac{\partial {{\xi }^{{{\mu }_{1}}}}}{\partial {{x}^{\nu }}}\circ {{\Lambda }^{k}}{{\tau }_{M}} 
 \cdot {{y}^{\nu {{\mu }_{2}}\cdots {{\mu }_{k}}}} \right)(q){{\left( \frac{\partial }
 {\partial {{y}^{{{\mu }_{1}}\cdots {{\mu }_{k}}}}} \right)}_{q}}, 
\end{align}
Therefore, 
\begin{align}
X={{\xi }^{\mu }}\circ {{\Lambda }^{k}}{{\tau }_{M}}
\left( \frac{\partial }{\partial {{x}^{\mu }}} \right)
+\frac{1}{(k-1)!}\frac{\partial {{\xi }^{{{\mu }_{1}}}}}{\partial {{x}^{\nu }}} 
\circ {{\Lambda }^{k}}{{\tau }_{M}}\cdot {{y}^{\nu {{\mu }_{2}}\cdots {{\mu }_{k}}}}
\left( \frac{\partial }{\partial {{y}^{{{\mu }_{1}}\cdots {{\mu }_{k}}}}} \right).   \label{pro.v.f.onTM}
\end{align}
The Lie derivative ${{L}_{X}}\mathcal{K}$ in coordinate expression becomes, 
\begin{align}
 {{L}_{X}}\mathcal{K} &={{L}_{X}}\left( \frac{1}{k!}\frac{\partial K}{\partial {{y}^{{{\rho }_{1}}
 \cdots {{\rho }_{k}}}}}d{{x}^{{{\rho }_{1}}\cdots {{\rho }_{k}}}} \right) \nonumber  \\
 & =X\left( \frac{1}{k!}\frac{\partial K}{\partial {{y}^{{{\rho }_{1}}\cdots {{\rho }_{k}}}}} 
 \right)d{{x}^{{{\rho }_{1}}\cdots {{\rho }_{k}}}}+\frac{1}{(k-1)!}
 \frac{\partial K}{\partial {{y}^{{{\rho }_{1}}\cdots {{\rho }_{k}}}}}d{{L}_{X}}{{x}^{{{\rho }_{1}}}}
 \wedge d{{x}^{{{\rho }_{2}}\cdots {{\rho }_{k}}}} \nonumber \\ 
 &=\frac{1}{k!}\left\{ {{\xi }^{\mu }}\circ {{\Lambda }^{k}}{{\tau }_{M}}
 \left( \frac{{{\partial }^{2}}K}{\partial {{x}^{\mu }}\partial {{y}^{{{\rho }_{1}}
 \cdots {{\rho }_{k}}}}} \right)+\frac{\partial {{\xi }^{{{\mu }_{1}}}}}{\partial {{x}^{\nu }}}
 \circ {{\Lambda }^{k}}{{\tau }_{M}}\cdot {{y}^{\nu {{\mu }_{2}}\cdots {{\mu }_{k}}}}
 \left( \frac{{{\partial }^{2}}K}{\partial {{y}^{{{\mu }_{1}}\cdots {{\mu }_{k}}}}
 \partial {{y}^{{{\rho }_{1}}\cdots {{\rho }_{k}}}}} \right) \right\}d{{x}^{{{\rho }_{1}}
 \cdots {{\rho }_{k}}}} \nonumber \\ 
 & \quad +\frac{1}{(k-1)!}\frac{\partial K}{\partial {{y}^{{{\rho }_{1}}\cdots {{\rho }_{k}}}}}
 d\left( {{\xi }^{{{\rho }_{1}}}}\circ {{\Lambda }^{k}}{{\tau }_{M}} \right)\wedge d{{x}^{{{\rho }_{2}}
 \cdots {{\rho }_{k}}}} \nonumber \\ 
 & =\frac{1}{k!}{{\xi }^{\mu }}\circ {{\Lambda }^{k}}{{\tau }_{M}}
 \left\{ \frac{{{\partial }^{2}}K}{\partial {{x}^{\mu }}\partial {{y}^{{{\rho }_{1}}
 \cdots {{\rho }_{k}}}}}d{{x}^{{{\rho }_{1}}}}-kd\left( \frac{\partial K}{\partial {{y}^{\mu {{\rho }_{2}}
 \cdots {{\rho }_{k}}}}} \right) \right\}\wedge d{{x}^{{{\rho }_{2}}\cdots {{\rho }_{k}}}} \nonumber  \\
 &\quad +\frac{\partial {{\xi }^{{{\mu }_{1}}}}}{\partial {{x}^{\nu }}}\circ {{\Lambda }^{k}}{{\tau }_{M}}
 \cdot {{y}^{\nu {{\mu }_{2}}\cdots {{\mu }_{k}}}}\left( \frac{{{\partial }^{2}}K}{\partial {{y}^{{{\mu }_{1}}
 \cdots {{\mu }_{k}}}}\partial {{y}^{{{\rho }_{1}}\cdots {{\rho }_{k}}}}} \right)d{{x}^{{{\rho }_{1}}
 \cdots {{\rho }_{k}}}} \nonumber \\
 &\quad  +\frac{1}{(k-1)!}d\left( \frac{\partial K}{\partial {{y}^{{{\rho }_{1}}\cdots {{\rho }_{k}}}}}
 \cdot {{\xi }^{{{\rho }_{1}}}}\circ {{\Lambda }^{k}}{{\tau }_{M}} \right)\wedge d{{x}^{{{\rho }_{2}}
 \cdots {{\rho }_{k}}}}. \label{Lie_1st_field}
\end{align}
The result of (\ref{Lie_1st_field}) is called the {\it infinitesimal first variation formula} for the 
Kawaguchi $k$-form $\mathcal{K}$.

The variation of action becomes, 
\begin{align}
 & {{\delta }_{\xi }}{{S}^{\mathcal{K}}}(\Sigma )=\int_{\hat{\sigma }(P)}{{{L}_{X}}\mathcal{K}}
 =\int_{P}{{{{\hat{\sigma }}}^{*}}{{L}_{X}}\mathcal{K}} \nonumber \\ 
 &=\int_{P}{{{{\hat{\sigma }}}^{*}}\left[ \frac{1}{k!}{{\xi }^{\mu }} 
 \circ {{\Lambda }^{k}}{{\tau }_{M}}\left( \frac{{{\partial }^{2}}K}{\partial {{x}^{\mu }} 
 \partial {{y}^{{{\rho }_{1}}\cdots {{\rho }_{k}}}}}d{{x}^{{{\rho }_{1}}}}
 -kd \left( \frac{\partial K}{\partial {{y}^{\mu {{\rho }_{2}}\cdots {{\rho }_{k}}}}} \right) \right) 
 \wedge d{{x}^{{{\rho }_{2}}\cdots {{\rho }_{k}}}} \right.} \nonumber \\ 
 & \quad +\frac{\partial {{\xi }^{{{\mu }_{1}}}}}{\partial {{x}^{\nu }}}
 \circ {{\Lambda }^{k}}{{\tau }_{M}}\cdot {{y}^{\nu {{\mu }_{2}}\cdots {{\mu }_{k}}}} 
 \left( \frac{{{\partial }^{2}}K}{\partial {{y}^{{{\mu }_{1}}\cdots {{\mu }_{k}}}}\partial {{y}^{{{\rho }_{1}}
 \cdots {{\rho }_{k}}}}} \right)d{{x}^{{{\rho }_{1}}\cdots {{\rho }_{k}}}} \nonumber \\
 &\left. \quad +\frac{1}{(k-1)!}
 d\left( \frac{\partial K}{\partial {{y}^{{{\rho }_{1}}\cdots {{\rho }_{k}}}}}
 \cdot {{\xi }^{{{\rho }_{1}}}}\circ {{\Lambda }^{k}}{{\tau }_{M}} \right)\wedge d{{x}^{{{\rho }_{2}}
 \cdots {{\rho }_{k}}}} \right] \nonumber \\ 
 &  =\int_{{\hat{\Sigma }}}{\left[ \frac{1}{k!}{{\xi }^{\mu }}\circ {{\Lambda }^{k}}{{\tau }_{M}}
 \left( \frac{{{\partial }^{2}}K}{\partial {{x}^{\mu }}\partial {{y}^{{{\rho }_{1}}
 \cdots {{\rho }_{k}}}}}d{{x}^{{{\rho }_{1}}}}-kd\left( \frac{\partial K}{\partial {{y}^{\mu {{\rho }_{2}}
 \cdots {{\rho }_{k}}}}} \right) \right) \right.} \nonumber \\ 
 &  \quad \left. +\frac{1}{(k-1)!}d\left( \frac{\partial K}{\partial {{y}^{{{\rho }_{1}} 
 \cdots {{\rho }_{k}}}}}\cdot {{\xi }^{{{\rho }_{1}}}}\circ {{\Lambda }^{k}}{{\tau }_{M}} \right) \right]
 \wedge d{{x}^{{{\rho }_{2}}\cdots {{\rho }_{k}}}}, \label{variation_1st_field} 
\end{align}
which is called the {\it integral first variation formula}. 
We used the homogeneity condition: 
\begin{align}
\left( \left( \frac{{{\partial }^{2}}K}{\partial {{y}^{{{\mu }_{1}}
\cdots {{\mu }_{k}}}}\partial {{y}^{{{\rho }_{1}}\cdots {{\rho }_{k}}}}} \right)\cdot {{y}^{{{\rho }_{1}}
\cdots {{\rho }_{k}}}} \right)\circ \hat{\sigma }=0. \label{cond_hom3}
\end{align}
(\ref{cond_hom3}) is obtained by taking the derivative of (\ref{1st_k_Euler}) 
with respect to ${{y}^{{{\mu }_{1}}\cdots {{\mu }_{k}}}}$, and then taking the pull back. 

Now we can proceed to find the equations of motion to this system. 
We will begin with the definition of an extremal.

\break \begin{defn} Extremal of an action  \label{def_extremal_k} 
\begin{enumerate}
\item We say that a $k$-area $\Sigma$ is {\it stable} with respect to the flow $\alpha$, 
when it satisfies 
\begin{align}
{\delta _\xi }{S^ {\mathcal{K}}} = 0, \label{c_stable_k}
\end{align}
where $\xi$ is the generator of $\alpha$. 
\item We say that a $k$-area $\Sigma$ is an {\it extremal} of the action $S^{\mathcal{K}}$, when it satisfies 
(\ref{c_stable_k}) for any $\alpha$ such that its associated $1$-parameter group 
$\alpha_s$ satisfies ${{\alpha }_{s}}(\partial \Sigma )=\partial \Sigma $, 
$\forall s\in \mathbb{R}$, where $\partial \Sigma $ is the boundary of $\Sigma$.
\end{enumerate}
\end{defn}

Now we can obtain the following theorem. 
\begin{theorem} Extremals  \label{thm_1st_var} \\
Let $\Sigma$ be an $k$-patch. 
The following statements are equivalent. 
\begin{enumerate}
\item $\Sigma$ is an extremal. 
\item The equation 
\begin{align}
 & {\mathcal{EL}^{K}}_{\mu } \circ \hat{\sigma }=0, \nonumber \\ 
 & {\mathcal{EL}^{K}}_{\mu }:=\frac{1}{k!}
 \left( \frac{{{\partial }^{2}}K}{\partial {{x}^{\mu }}\partial {{y}^{{{\rho }_{1}}
 \cdots {{\rho }_{k}}}}}d{{x}^{{{\rho }_{1}}}}-kd\left( \frac{\partial K}{\partial {{y}^{\mu {{\rho }_{2}}
 \cdots {{\rho }_{k}}}}} \right) \right)\wedge d{{x}^{{{\rho }_{2}}\cdots {{\rho }_{k}}}}, \label{1st_k_EL_pullback}
\end{align}
holds for arbitrary parameterisation $\sigma$. 
\end{enumerate}
\end{theorem}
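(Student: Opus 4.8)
The plan is to follow the same pattern as the first order mechanics case (Section~\ref{sec_1st_mech}), taking the integral first variation formula~(\ref{variation_1st_field}) as the starting point. That formula already splits the variation $\delta_\xi S^{\mathcal{K}}(\Sigma)$ into two pieces: a \emph{bulk} piece, in which the generator $\xi$ is contracted against the Euler--Lagrange expression, and a \emph{boundary} piece consisting of the exact term $\frac{1}{(k-1)!}d\left(\frac{\partial K}{\partial y^{\rho_1\cdots\rho_k}}\cdot\xi^{\rho_1}\circ\Lambda^k\tau_M\right)\wedge dx^{\rho_2\cdots\rho_k}$. Comparing the bulk piece with the definition of $\mathcal{EL}^K_\mu$ in~(\ref{1st_k_EL_pullback}), one reads off that it is exactly $\xi^\mu\circ\Lambda^k\tau_M\cdot\mathcal{EL}^K_\mu$. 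The whole argument is then to show that, for an extremal, the boundary piece drops out, so the bulk piece alone must vanish for every admissible $\xi$, and finally to invoke the fundamental lemma of the calculus of variations.

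First I would treat the forward direction $(1)\Rightarrow(2)$. Assuming $\Sigma$ is an extremal, by Definition~\ref{def_extremal_k} we have $\delta_\xi S^{\mathcal{K}}(\Sigma)=0$ for every flow $\alpha$ whose group $\alpha_s$ fixes $\partial\Sigma$, equivalently for every $\xi$ vanishing on $\partial\Sigma$. I would then rewrite the boundary term: since $d(dx^{\rho_2}\wedge\cdots\wedge dx^{\rho_k})=0$, it equals $d\Theta$ with the $(k-1)$-form $\Theta=\frac{1}{(k-1)!}\frac{\partial K}{\partial y^{\rho_1\cdots\rho_k}}\cdot\xi^{\rho_1}\circ\Lambda^k\tau_M\,dx^{\rho_2}\wedge\cdots\wedge dx^{\rho_k}$. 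Pulling back along $\hat\sigma$ and applying Stokes' theorem on $\bar P$ turns $\int_{\hat\Sigma}d\Theta$ into a boundary integral over $\partial\hat\Sigma$; since $\Lambda^k\tau_M(\hat\Sigma)=\Sigma$ and $\xi$ vanishes on $\partial\Sigma$, this boundary integral is zero. What remains is $\int_{\hat\Sigma}\xi^\mu\circ\Lambda^k\tau_M\cdot\mathcal{EL}^K_\mu=0$, that is $\int_P\hat\sigma^*\!\left(\xi^\mu\circ\Lambda^k\tau_M\cdot\mathcal{EL}^K_\mu\right)=0$ for all such $\xi$.

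The decisive step, and the one I expect to be the main obstacle, is the passage from this integral identity to the pointwise equations $\mathcal{EL}^K_\mu\circ\hat\sigma=0$. Here $\mu$ runs over $1,\dots,n$ while $P$ is only $k$-dimensional, so each $\hat\sigma^*\mathcal{EL}^K_\mu$ is a top-degree form on $P$; I would choose, for each fixed index $\mu$, vector fields $\xi$ whose only nonvanishing pulled-back component is $\xi^\mu\circ\Lambda^k\tau_M$, supported in arbitrarily small neighbourhoods of $\operatorname{int}\bar P$, so that the fundamental lemma forces $\hat\sigma^*\mathcal{EL}^K_\mu=0$ for every $\mu$ separately. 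The subtlety is that $\xi$ is a vector field on $M$ rather than a free family of functions on $P$; I would need to argue that along the immersed $k$-curve $\Sigma$ the components $\xi^\mu\circ\sigma$ can indeed be prescribed independently and with support away from $\partial\Sigma$, using a partition of unity and the immersion/regularity of $\sigma$. The homogeneity identity obtained by differentiating~(\ref{1st_k_Euler}), already used in deriving~(\ref{variation_1st_field}), is what guarantees that the surviving contribution is precisely $\mathcal{EL}^K_\mu$ with no spurious term. Finally, the converse $(2)\Rightarrow(1)$ follows by reading the same chain of equalities backwards: if $\mathcal{EL}^K_\mu\circ\hat\sigma=0$ then the bulk term vanishes identically, while the boundary term integrates to zero whenever $\xi$ fixes $\partial\Sigma$, so $\delta_\xi S^{\mathcal{K}}(\Sigma)=0$ for every admissible flow, which is extremality.
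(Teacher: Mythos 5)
Your proposal is correct and follows essentially the same route as the paper, which for this theorem simply refers back to the mechanics case: split the integral first variation formula~(\ref{variation_1st_field}) into the bulk term $\xi^\mu\circ\Lambda^k\tau_M\cdot\mathcal{EL}^K_\mu$ and the exact boundary term, discard the latter for boundary-fixing flows, and conclude that the bulk term must vanish for all $\xi$, with the converse obtained by reversing the steps. Your discussion of the fundamental-lemma step (localizing $\xi$ and prescribing its components independently along the immersed $k$-curve) is in fact more careful than the paper's own one-line assertion ``since this relation must be true for all $\xi$,'' so it fills in a detail the paper leaves implicit rather than diverging from it.
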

The proof is given similarly as in the case of mechanics. (see Section \ref{subsec_extremal})

The equations (\ref{1st_k_EL_pullback}) are called the {\it Euler-Lagrange equations} or 
{\it equations of motion} of the Lagrangian of first order fields, $\mathcal{K}$.  

\begin{defn} Symmetry of the dynamical system \\ 
Let $u$ be a vector field over $M$, and $Y$ an induced vector field by $u$ over ${{\Lambda }^{k}}TM$.  
We say that {\it $\mathcal{K}$ is invariant with respect to $u$}, if 
\begin{align}
{{L}_{Y}}\mathcal{K}=0, 
\end{align} 
and $u$ is called a {\it symmetry} of the dynamical system $(M, \mathcal{K})$. 
We also say that $u$ generates the invariant transformations on the Kawaguchi manifold $(M, \mathcal{K})$. 
\end{defn}

Now we will have the following conservation law. 
\begin{theorem}  Noether  (first order field)\\
Suppose we are given a symmetry of $(M, \mathcal{K})$. 
Then there exists a $(k-1)$-form $f$ on ${{\Lambda }^{k}}TM$ 
which along the extremal $\gamma$ of $S^\mathcal{K}$ satisfies, 
\begin{align}
\int_{{\hat{\gamma }}}{df}=0, \label{eq_conserved_1st_field}
\end{align}
for any parameterisation $\sigma$ which parameterise $\gamma$.
\end{theorem}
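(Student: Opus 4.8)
The plan is to mirror the proof of the first order mechanics Noether theorem, replacing the conserved function by a conserved $(k-1)$-form. Let the symmetry be $u = u^\mu \partial/\partial x^\mu$ and let $Y$ be the induced vector field on $\Lambda^k TM$ whose local expression was computed above. The starting point is the symmetry hypothesis $L_Y \mathcal{K} = 0$, which I would integrate over the lift $\hat\gamma$ of the extremal and then compare with the integral first variation formula (\ref{variation_1st_field}) evaluated at $\xi = u$.

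First I would substitute $\xi = u$ into (\ref{variation_1st_field}), which exhibits $\int_{\hat\gamma} L_Y \mathcal{K}$ as the sum of an Euler--Lagrange contribution, namely $\int_{\hat\gamma} u^\mu \circ \Lambda^k \tau_M \cdot {\mathcal{EL}^K}_\mu$ with ${\mathcal{EL}^K}_\mu$ as in (\ref{1st_k_EL_pullback}), and a total-derivative contribution
\[
\int_{\hat\gamma} \frac{1}{(k-1)!} d\left( \frac{\partial K}{\partial y^{\rho_1 \cdots \rho_k}} \cdot u^{\rho_1} \circ \Lambda^k \tau_M \right) \wedge dx^{\rho_2} \wedge \cdots \wedge dx^{\rho_k}.
\]
Since we evaluate along the extremal $\gamma$, the Extremals theorem (Theorem \ref{thm_1st_var}) gives ${\mathcal{EL}^K}_\mu \circ \hat\sigma = 0$, so the Euler--Lagrange contribution vanishes; together with $L_Y \mathcal{K} = 0$ this leaves only the total-derivative term, and it equals zero.

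The key remaining step, and the one place where the field case genuinely differs from mechanics, is to recognise this surviving term as the integral of an exact $(k-1)$-form. Here I would use that $dx^{\rho_2} \wedge \cdots \wedge dx^{\rho_k}$ is closed, so that $dg \wedge dx^{\rho_2} \wedge \cdots \wedge dx^{\rho_k} = d\left( g\, dx^{\rho_2} \wedge \cdots \wedge dx^{\rho_k}\right)$ for any function $g$. Applying this with $g = \frac{\partial K}{\partial y^{\rho_1 \cdots \rho_k}}\, u^{\rho_1}\circ \Lambda^k \tau_M$ allows me to define the $(k-1)$-form
\[
f := \frac{1}{(k-1)!}\frac{\partial K}{\partial y^{\rho_1 \cdots \rho_k}}\, u^{\rho_1}\circ \Lambda^k \tau_M\; dx^{\rho_2} \wedge \cdots \wedge dx^{\rho_k}
\]
on $\Lambda^k TM$, for which the surviving term is exactly $\int_{\hat\gamma} df$. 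Hence $\int_{\hat\gamma} df = 0$, the asserted conservation law; independence of the parameterisation $\sigma$ follows because the whole computation was performed with the globally defined objects $\mathcal{K}$ and $f$ together with the parameterisation-invariance of integration established in Section \ref{sec_integrationofd-forms}. I do not expect any serious obstacle beyond bookkeeping the multi-index factorials and summation conventions; the only conceptual care needed is the closedness argument that promotes the total derivative into a genuine exact $(k-1)$-form.
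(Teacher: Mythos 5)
Your proof is correct and follows essentially the same route as the paper: substitute $\xi=u$ into the integral first variation formula, kill the Euler--Lagrange term along the extremal, and identify the surviving boundary term with $\int_{\hat\gamma}df$ for the same $(k-1)$-form $f=\frac{1}{(k-1)!}\frac{\partial K}{\partial y^{\rho_1\cdots\rho_k}}\,u^{\rho_1}\circ\Lambda^k\tau_M\,dx^{\rho_2}\wedge\cdots\wedge dx^{\rho_k}$. Your explicit remark that $dg\wedge dx^{\rho_2\cdots\rho_k}=d\left(g\,dx^{\rho_2\cdots\rho_k}\right)$ because the coordinate $(k-1)$-form is closed is a step the paper performs silently, so you have merely made the argument slightly more explicit.
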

\begin{proof}
Let the symmetry be $u$, with its local coordinate expression 
$\displaystyle{u={{u}^{\mu }}\frac{\partial }{\partial {{x}^{\mu }}}}$, 
and the induced vector field $Y$. 
Then from (\ref{variation_1st_field}), we have 
\begin{align}
 0&=\int_{{\hat{\gamma }}}{{{L}_{Y}}\mathcal{K}} \nonumber \\ 
 & =\int_{{\hat{\gamma }}}{\left[ \frac{1}{k!}{{u}^{\mu }} 
 \circ {{\Lambda }^{k}}{{\tau }_{M}}\left( \frac{{{\partial }^{2}}K}
 {\partial {{x}^{\mu }}\partial {{y}^{{{\rho }_{1}}\cdots {{\rho }_{k}}}}}d{{x}^{{{\rho }_{1}}}}
 - kd\left( \frac{\partial K}{\partial {{y}^{\mu {{\rho }_{2}} \cdots {{\rho }_{k}}}}} \right) \right) \right.} \nonumber \\ 
 &  \quad \left. +\frac{1}{(k-1)!}d\left( \frac{\partial K}{\partial {{y}^{{{\rho }_{1}} 
 \cdots {{\rho }_{k}}}}}\cdot {{u}^{{{\rho }_{1}}}}\circ {{\Lambda }^{k}}{{\tau }_{M}} \right) \right] 
 \wedge d{{x}^{{{\rho }_{2}}\cdots {{\rho }_{k}}}} \nonumber \\ 
 & =\int_{{\hat{\gamma }}}{d\left( \frac{1}{(k-1)!}
 \frac{\partial K}{\partial {{y}^{{{\rho }_{1}}\cdots {{\rho }_{k}}}}}\cdot {{u}^{{{\rho }_{1}}}} 
 \circ {{\Lambda }^{k}}{{\tau }_{M}}\, d{{x}^{{{\rho }_{2}}\cdots {{\rho }_{k}}}} \right)} 
\end{align}
The third equality comes from the fact we consider along the extremal $\gamma$. 
Therefore we have a $(k-1)$-form on ${{\Lambda }^{k}}TM$, 
\begin{align}
f=\frac{1}{(k-1)!}\frac{\partial K}{\partial {{y}^{{{\rho }_{1}}\cdots {{\rho }_{k}}}}}
\cdot {{u}^{{{\rho }_{1}}}}\circ {{\Lambda }^{k}}{{\tau }_{M}}\,d{{x}^{{{\rho }_{2}}\cdots {{\rho }_{k}}}}
\end{align}
such that satisfies the condition. 
\end{proof}

We call the relation (\ref{eq_conserved_1st_field}), the {\it conservation law}. 

\begin{defn} Noether current \\
The quantity $f$ is called the {\it Noether current of first order field theory, 
associated with $u$}.  
\end{defn}
By the coordinate transformation 
\begin{align}
{{x}^{\mu }}\to {{\tilde{x}}^{\mu }}={{\tilde{x}}^{\mu }}({{x}^{\nu }}),\ {{y}^{{{\mu }_{1}}\cdots {{\mu }_{k}}}}\to {{\tilde{y}}^{{{\mu }_{1}}\cdots {{\mu }_{k}}}}=\frac{\partial {{{\tilde{x}}}^{{{\mu }_{1}}}}}{\partial {{x}^{{{\nu }_{1}}}}}\cdots \frac{\partial {{{\tilde{x}}}^{{{\mu }_{k}}}}}{\partial {{x}^{{{\nu }_{k}}}}}{{y}^{{{\nu }_{1}}\cdots {{\nu }_{k}}}},
\end{align} 
the $k$-form ${{\mathcal{EL}}^{K}}_{\mu }$ in (\ref{1st_k_EL_pullback}) transforms as 
\begin{align}
&\frac{1}{k!}\left( \frac{{{\partial }^{2}}K}{\partial {{{\tilde{x}}}^{\mu }}
\partial {{{\tilde{y}}}^{{{\rho }_{1}}\cdots {{\rho }_{k}}}}}d{{{\tilde{x}}}^{\rho }}
-kd\left( \frac{\partial K}{\partial {{{\tilde{y}}}^{\mu {{\rho }_{2}} \cdots {{\rho }_{k}}}}} \right) 
\right)\wedge d{{\tilde{x}}^{{{\rho }_{2}}\cdots {{\rho }_{k}}}} \nonumber  \\
& \quad \quad \quad =\frac{1}{k!} \left( \frac{\partial {{x}^{\nu }}}{\partial {{{\tilde{x}}}^{\mu }}} \right)
\left( \frac{{{\partial }^{2}}K}{\partial {{x}^{\nu }}\partial {{y}^{{{\rho }_{1}}
\cdots {{\rho }_{k}}}}}d{{x}^{{{\rho }_{1}}}}-kd\left( \frac{\partial K}
{\partial {{y}^{\nu {{\rho }_{2}}\cdots {{\rho }_{k}}}}} \right) \right) 
\wedge d{{x}^{{{\rho }_{2}}\cdots {{\rho }_{k}}}}.
\end{align}
This observation leads us to define a new coordinate invariant form.

\begin{lemma} Euler-Lagrange form \\*
There exist a global $(k+1)$ form on $\Lambda^k TM$, which in local coordinates are expressed by
\begin{align}
\mathcal{E}{{\mathcal{L}}^{K}}=d{{x}^{\mu }}\wedge \mathcal{E}{{\mathcal{L}}^{K}}_{\mu }
= \frac{1}{k!}\left\{ \frac{{{\partial }^{2}}K}{\partial {{x}^{\mu }}\partial {{y}^{{{\rho }_{1}} 
\cdots {{\rho }_{k}}}}}d{{x}^{\mu }}+d\left( \frac{\partial K}{\partial {{y}^{{{\rho }_{1}} 
\cdots {{\rho }_{k}}}}} \right) \right\}\wedge d{{x}^{{{\rho }_{1}}\cdots {{\rho }_{k}}}}. \label{EL_form_1st_k}
\end{align}
\end{lemma}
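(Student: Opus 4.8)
The plan is to prove the lemma in two moves: first establish the stated local coordinate identity $\mathcal{EL}^{K} = dx^{\mu} \wedge \mathcal{EL}^{K}_{\mu}$, and then show that the resulting $(k+1)$-form is chart-independent, so that the local expressions glue to a single global form on $\Lambda^{k}TM$.

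For the local identity I would start from the definition (\ref{1st_k_EL_pullback}) of $\mathcal{EL}^{K}_{\mu}$ and wedge it with $dx^{\mu}$ on the left. The first term is immediate. For the second term, containing $-k\,d\bigl(\partial K/\partial y^{\mu \rho_2 \cdots \rho_k}\bigr)$, I would apply the sign rule $dx^{\mu} \wedge d\alpha = -\,d\alpha \wedge dx^{\mu}$ for a $1$-form $\alpha$ to move $dx^{\mu}$ to the right, and then exploit the total antisymmetry of $\partial K/\partial y^{\mu \rho_2 \cdots \rho_k}$ in its $k$ indices (inherited from the antisymmetry of $y^{\mu_1 \cdots \mu_k}$), together with the matching antisymmetry of $dx^{\mu} \wedge dx^{\rho_2} \wedge \cdots \wedge dx^{\rho_k}$, to relabel $\mu$ as $\rho_1$ and recombine. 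The delicate point here is the bookkeeping of the factors $1/k!$ and the $k$ appearing in the definition, which must be tracked consistently through the relabeling of the exact term.

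The core of the argument is the invariance, and here the key observation is the one established in the display immediately preceding the lemma: $\mathcal{EL}^{K}_{\mu}$ transforms as a covariant-index object, namely $\widetilde{\mathcal{EL}^{K}}_{\mu} = (\partial x^{\nu}/\partial \tilde x^{\mu})\,\mathcal{EL}^{K}_{\nu}$, even though $\mathcal{EL}^{K}_{\mu}$ is not itself a global form (its free lower index transforms nontrivially). Combining this with $d\tilde x^{\mu} = (\partial \tilde x^{\mu}/\partial x^{\lambda})\,dx^{\lambda}$, I would compute
\begin{align}
d\tilde x^{\mu} \wedge \widetilde{\mathcal{EL}^{K}}_{\mu}
= \frac{\partial \tilde x^{\mu}}{\partial x^{\lambda}}\frac{\partial x^{\nu}}{\partial \tilde x^{\mu}}\, dx^{\lambda} \wedge \mathcal{EL}^{K}_{\nu}
= \delta^{\nu}_{\lambda}\, dx^{\lambda} \wedge \mathcal{EL}^{K}_{\nu}
= dx^{\nu} \wedge \mathcal{EL}^{K}_{\nu},
\end{align}
so that the Jacobian factors cancel by the chain rule and the local expression $dx^{\mu} \wedge \mathcal{EL}^{K}_{\mu}$ agrees in any two overlapping charts.

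Since the locally defined $(k+1)$-forms then coincide on every chart overlap, they patch into a well-defined global $(k+1)$-form $\mathcal{EL}^{K}$ on $\Lambda^{k}TM$, which is what the lemma asserts. I expect the main obstacle to lie in the first step rather than in the invariance: correctly carrying the antisymmetrization and the factorial and combinatorial constants through the relabeling of the exact term is where sign and counting errors are most likely, whereas the invariance follows almost formally once the covariant transformation law for $\mathcal{EL}^{K}_{\mu}$ is in hand.
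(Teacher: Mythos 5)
Your invariance argument is exactly the paper's own proof: the paper simply remarks that coordinate independence is ``obvious'' from the transformation law $\widetilde{\mathcal{EL}^{K}}_{\mu}=(\partial x^{\nu}/\partial\tilde x^{\mu})\,\mathcal{EL}^{K}_{\nu}$ displayed just before the lemma, and your Jacobian-cancellation computation is the correct fleshing-out of that remark. That part is fine.

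The gap is in the step you explicitly defer, and it is not merely delicate bookkeeping --- if you carry it out with the paper's conventions (repeated multi-indices summed over \emph{all} values, compensated by explicit factorials), the claimed identity fails for $k\ge 2$. Concretely, the exact term of $dx^{\mu}\wedge\mathcal{EL}^{K}_{\mu}$ is
\begin{align}
-\frac{k}{k!}\,dx^{\mu}\wedge d\!\left(\frac{\partial K}{\partial y^{\mu\rho_{2}\cdots\rho_{k}}}\right)\wedge dx^{\rho_{2}\cdots\rho_{k}}
=\frac{1}{(k-1)!}\,d\!\left(\frac{\partial K}{\partial y^{\rho_{1}\cdots\rho_{k}}}\right)\wedge dx^{\rho_{1}\cdots\rho_{k}}
=\frac{k}{k!}\,d\!\left(\frac{\partial K}{\partial y^{\rho_{1}\cdots\rho_{k}}}\right)\wedge dx^{\rho_{1}\cdots\rho_{k}},
\end{align}
because after anticommuting $dx^{\mu}$ past the $1$-form and relabelling $\mu=\rho_{1}$ you are summing over $k$ free antisymmetric indices with the prefactor $1/(k-1)!$, not $1/k!$. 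This is $k$ times the second term of the displayed formula, while the first terms agree exactly, so no overall rescaling can reconcile the two sides; a check at $k=2$, $n=2$ gives $2\,d(\partial K/\partial y^{12})\wedge dx^{12}$ versus $d(\partial K/\partial y^{12})\wedge dx^{12}$. (For $k=1$ the factor is $1$, which is why the analogous Finsler lemma (\ref{EL_form}) goes through.) Note also that the subsequent relation (\ref{K-EL_rel_1st_field}) for $d\mathcal{K}$ is consistent with the $\tfrac{1}{k!}$ normalisation of the displayed expression, so it is the equality with $dx^{\mu}\wedge\mathcal{EL}^{K}_{\mu}$ that produces the stray $k$. To close your proof you must actually perform this count and then either correct the coefficient of the exact term to $\tfrac{1}{(k-1)!}$ in the displayed formula, or insert a compensating $1/k$ in the definition $\mathcal{EL}^{K}:=dx^{\mu}\wedge\mathcal{EL}^{K}_{\mu}$ restricted to that term; as stated, the identity you set out to prove in your first step is not true for $k\ge 2$.
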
 
From the previous coordinate transformations, this is obviously coordinate independent. 

There is a direct relation between the exterior derivative of $\mathcal{K}$ and $\mathcal{EL}$,   
\begin{align}
d\mathcal{K}=\text{ }\mathcal{E}{{\mathcal{L}}^{K}} 
-\frac{1}{k!}\frac{{{\partial }^{2}}K}{\partial {{x}^{\mu }}\partial {{y}^{{{\rho }_{1}} 
\cdots {{\rho }_{k}}}}}d{{x}^{\mu }}\wedge d{{x}^{{{\rho }_{1}}\cdots {{\rho }_{k}}}}. \label{K-EL_rel_1st_field}
\end{align}
It can be also checked easily that this is also a coordinate invariant relation. 

\begin{remark} \label{convL-Kawaguchi-rel}
For the case for $k=1$ (Finsler), we have shown in Remark \ref{convL-Fins-rel}, 
that once we are given a conventional Lagrangian depending on a parameter, 
we can construct a Finsler function from that and reformulate the theory in parameter invariant style. 
Here we will construct the Kawaguchi function in a similar way. 
As in the case of $k=1$, in general, this cannot be done globally. 
Consider a bundle $(Y,\pi ,{{\mathbb{R}}^{k}})$ and let $(U,\psi )$, 
$\psi =({{t}^{1}},\cdots {{t}^{k}},{{q}^{k+1}},\cdots ,{{q}^{k+m}}),$ 
be the adapted chart on a $(m+k)$-dimensional manifold $Y$, 
and the induced chart on $\mathbb{R}^k$ be $(\pi(U),t^1,\dots,t^k)$. 
We denote the induced chart on ${{J}^{1}}Y$ by 
$({{({{\pi }^{1,0}})}^{-1}}(U),{{\psi }^{1}}),{{\psi }^{1}}=({{t}^{1}},
\cdots ,{{t}^{k}},{{q}^{k+1}},\cdots ,{{q}^{k+m}},{{\dot{q}}^{k+1}},\cdots ,{{\dot{q}}^{k+m}})$. 
Take the induced chart on ${{\Lambda }^{k}}TY$ as $(V,{{\tilde{\psi }}^{1}})$, 
$V={{({{\tau }_{Y}})}^{-1}}(U)$, ${{\tilde{\psi }}^{1}}=({{x}^{\mu }},{{y}^{{{\mu }_{1}}\cdots {{\mu }_{k}}}})$, 
$\mu =1,\cdots ,n,$ with $\ n=k+m$, such that ${{y}^{12\cdots k}}\ne 0$. 
It is always possible to choose such coordinates for a single chart. 
(In order to avoid confusion we use different symbols, 
but clearly ${{x}^{a}}={{t}^{a}},\ {{x}^{i}}={{q}^{i}}$, $a=1,\cdots ,k,\ i=k+1,\cdots ,n$, on $U$.)
Now consider a map $\rho :V \hookrightarrow {{J}^{1}}Y$, $\rho (V)={{({{\pi }^{1,0}})}^{-1}}(U)$, 
which in coordinates are defined by 
\begin{align}
{{t}^{a}}\circ \rho ={{x}^{a}},{{q}^{i}}\circ \rho = {{x}^{i}},q_{a}^{i}\circ \rho = 
\frac{{{y}^{1\cdots \overset{i}{\mathop{\overset{\vee }{\mathop{a}}\,}}\, 
\cdots k}}}{{{y}^{1\cdots k}}},
\end{align}
 for $a=1,\cdots ,k,\ i=k+1,\cdots ,n,$
and \[{{y}^{1\cdots \overset{i}{\mathop{\overset{\vee }{\mathop{a}}\,}}\,\cdots k}}\] 
means to replace the $a$ th index by $i$. 
Then we define the Kawaguchi function $K$ by a function on $V$, 
\begin{align}
K/{{y}^{12\cdots k}}={{\rho }^{*}}{\mathcal{K}}.
\end{align} 
In coordinates, 
\begin{align}
K({{x}^{\mu }},{{y}^{{{\mu }_{1}}\cdots {{\mu }_{k}}}})=\mathrm{}({{t}^{a}}\circ \rho ,{{q}^{i}}
\circ \rho ,q_{a}^{i}\circ \rho ){{y}^{12\cdots k}}={\mathcal{L}}({{x}^{a}},{{x}^{i}},
\frac{{{y}^{1\cdots \overset{i}{\mathop{\overset{\vee }{\mathop{a}}\,}}\, 
\cdots k}}}{{{y}^{12\cdots k}}}){{y}^{12\cdots k}},  \label{convL-FinKawaL}
\end{align}
for $\mu ,{{\mu }_{1}},\cdots ,{{\mu }_{k}}=1,\cdots ,n$, $a=1,\cdots ,k,$ $i=k+1,\cdots ,n$. 
Then on $V$, $K$ satisfies the homogeneity function. 
We now have 
\begin{align}
{\mathcal{K}}=\frac{1}{k!}\frac{\partial K}{\partial {{y}^{{{\mu }_{1}} 
\cdots {{\mu }_{k}}}}}d{{x}^{{{\mu }_{1}}}}\wedge \cdots \wedge d{{x}^{{{\mu }_{k}}}}
\end{align} 
on $V$. 
In this way, for a local coordinate chart, we can construct a local Kawaguchi $k$-form from a Lagrangian, 
which also can be used as an reparameterisation invariant action provided that the $k$-patch 
where the integration is carried out is covered by this single chart. 
\end{remark}

\begin{ex} De Broglie field (Schrodinger field) \\
Here we will give a rather trivial, but instructive example for the case of 
$M = {\mathbb{R}^4}$, with Kawaguchi structure corresponding to the De Broglie field 
on $2$-dimensional spacetime ${\mathbb{R}^2}$. 
The conventional Lagrangian function of the De Broglie field is given by
\begin{align}
{\mathcal{L}} = \frac{i}{2}\left( {\bar \psi \, \cdot {\partial _t}\psi  
- {\partial _t}\bar \psi  \cdot \psi } \right) 
- \frac{1}{{2m}}{\partial _x}\bar \psi  \cdot \,{\partial _x}\psi  
+ e\bar \psi \, \cdot \varphi \, \cdot \psi ,  \label{DeBroglieL}
\end{align}
on ${{J}^{1}}Y={{J}^{1}}{{\mathbb{R}}^{4}}$, 
where ${{J}^{1}}Y$ is the prolongation of the total space regarding the bundle 
$(Y, p{{r}_{1}},{{\mathbb{R}}^{2}})$, 
$Y={{\mathbb{R}}^{2}}\times {{\mathbb{R}}^{2}}$. 
We take for the global fibre bundle coordinates; 
$(t,x)$ for ${\mathbb{R}^2}$, $(t,x,\psi ,\bar \psi )$ for $Y$, 
and $(t,x,\psi ,\bar{\psi },{{\partial }_{t}}\psi ,{{\partial }_{t}}\bar{\psi },
{{\partial }_{x}}\psi ,{{\partial }_{x}}\bar{\psi })$ for ${{J}^{1}}Y$. 
$t,\,\,x$ denotes the $2$-dimensional spacetime, 
$\psi ,\,\,\bar{\psi }$ the fields,  
$\varphi =\varphi (x)$ the external field, and $m,\,\,e$ are constants. 
In the orthodox physics notation, 
the pull-back of ${\mathcal{L}}$ to ${\mathbb{R}^2}$ is also called the Lagrangian, namely 
\begin{align}
&L: = \user2{\mathcal{L}} \circ {J^1}\gamma  
= \frac{i}{2}\left( {\bar \psi \frac{{\partial \psi }}{{\partial t}} 
- \frac{{\partial \bar \psi }}{{\partial t}}\psi } \right) 
- \frac{1}{{2m}}\frac{{\partial \bar \psi }}{{\partial x}}\frac{{\partial \psi }}{{\partial x}} 
+ e\bar \psi \,\varphi \,\psi , 
\end{align}
with $\psi :=\psi \circ \gamma =\psi (t,x)$, 
$\bar{\psi }=\bar{\psi }\circ \gamma =\bar{\psi }(t,x)$, 
where $\gamma $ is a section of the bundle $(Y,p{{r}_{1}},{{\mathbb{R}}^{2}})$, 
and ${{J}^{1}}\gamma $ its prolongation. 

We will try to construct the Kawaguchi manifold that corresponds to such model. 
Let $M={{\mathbb{R}}^{4}}$, and the parameter space $P={{\mathbb{R}}^{2}}$. 
In this case, we have global charts on $M$ and $P$. 
Let the canonical coordinates on $P$ and $M$ be $({{t}^{0}},{{t}^{1}}) = (t,x)$ and 
$({{x}^{1}},{{x}^{2}},{{x}^{3}},{{x}^{4}})$ respectively, 
$\sigma :P \to M$ be a parameterisation, and $\hat{\sigma }:P\to {{\Lambda }^{2}}TM$ its lift. 

The Kawaguchi form is a $2$-form on ${{\Lambda }^{2}}TM$. 
Let $({{x}^{\mu }},{{y}^{{{\nu }_{1}}{{\nu }_{2}}}})$ be the induced global chart on 
${{\Lambda }^{2}}TM={{\Lambda }^{2}}T{{\mathbb{R}}^{4}}$ with $\mu ,{{\nu }_{1}},{{\nu }_{2}}=1,2,3,4$. 
By the formula (\ref{convL-FinKawaL}), we have 
\begin{align}
K=\frac{i}{2}\left( {{x}^{4}}{{y}^{32}}-{{x}^{3}}{{y}^{42}} \right)
-\frac{1}{2m}\frac{{{y}^{14}}{{y}^{13}}}{{{y}^{12}}}
+e\varphi ({{x}^{2}}){{x}^{3}}{{x}^{4}}{{y}^{12}} .
\end{align}
We see that this $K$ satisfies the homogeneity condition. 
From this Kawaguchi function, we can construct the Kawaguchi $2$-form, 
\begin{align}
 {\mathcal{K}} = \left( {\frac{1}{{2m}}\frac{{{y^{14}}{y^{13}}}}{{{{({y^{12}})}^2}}} 
 + e\varphi ({x^2}){x^3}{x^4}} \right)d{x^{12}} 
 - \frac{1}{{2m}}\left( {\frac{{{y^{14}}}}{{{y^{12}}}}d{x^{13}} 
 + \frac{{{y^{13}}}}{{{y^{12}}}}d{x^{14}}} \right) 
 - \frac{i}{2}\left( {{x^4}d{x^{23}} + {x^3}d{x^{42}}} \right). \label{ex_debroglie_f2}
 \end{align}
This is the reparameterisation invariant Lagrangian of the De Broglie field theory. 

The Euler-Lagrange equations obtained from ${\mathcal{K}}$ 
can be calculated by the formula (\ref {1st_k_EL_pullback}), as  
\begin{align}
&   {\mathcal{E}}{{\mathcal{L}}_1} \circ \hat \sigma  = d\left( {\frac{{\partial K}}{{\partial {y^{12}}}}d{x^2} + \frac{{\partial K}}{{\partial {y^{13}}}}d{x^3} + \frac{{\partial K}}{{\partial {y^{14}}}}d{x^4}} \right) \circ \hat \sigma  = 0, \hfill\nonumber  \\
&   {\mathcal{E}}{{\mathcal{L}}_2} \circ \hat \sigma  = d\left( {\frac{{\partial K}}{{\partial {y^{21}}}}d{x^1} + \frac{{\partial K}}{{\partial {y^{23}}}}d{x^3} + \frac{{\partial K}}{{\partial {y^{24}}}}d{x^4}} \right) \circ \hat \sigma  = 0, \hfill \nonumber \\
&   {\mathcal{E}}{{\mathcal{L}}_3} \circ \hat \sigma  = \left( {e\varphi {x^4}d{x^{12}} - id{x^{42}} - \frac{1}{{2m}}\left( { - \frac{{{y^{14}}}}{{{{({y^{12}})}^2}}}d{y^{12}} + \frac{1}{{{y^{12}}}}d{y^{14}}} \right) \wedge d{x^1}} \right) \circ \hat \sigma  = 0, \hfill \nonumber \\
&   {\mathcal{E}}{{\mathcal{L}}_4} \circ \hat \sigma  = \left( {e\varphi {x^3}d{x^{12}} - id{x^{23}} - \frac{1}{{2m}}\left( { - \frac{{{y^{13}}}}{{{{({y^{12}})}^2}}}d{y^{12}} + \frac{1}{{{y^{12}}}}d{y^{13}}} \right) \wedge d{x^1}} \right) \circ \hat \sigma  = 0, \hfill 
\label{ex_debroglie_f3}
\end{align}
which is true for any parameterisation $\sigma $. 
To compare these equations with the conventional De Broglie field equations; 
in other name the Schrodinger equations, choose the parameterisation, 
which in coordinates are given by 
\begin{align}
({{x}^{1}} \circ \sigma ,{{x}^{2}}\circ \sigma ,{{x}^{3}}\circ \sigma ,{{x}^{4}}\circ \sigma )
=(t,x,\psi ,\bar{\psi }).
\end{align} 
We get,
\begin{align}
&  {{\hat \sigma }^*}{\mathcal{EL}_1} \equiv 0, \hfill \nonumber \\
&  {{\hat \sigma }^*}{\mathcal{EL}_2} \equiv 0, \hfill \nonumber \\
&  {{\hat \sigma }^*}{\mathcal{EL}_3} =  - \left( {i{\partial _t}\bar \psi  - \frac{1}{{2m}}{\partial _x}{\partial _x}\bar \psi  - e\varphi \bar \psi } \right) dt \wedge dx = 0, \hfill \nonumber \\
&  {{\hat \sigma }^*}{\mathcal{EL}_4} = \left( {i{\partial _t}\psi  + \frac{1}{{2m}}{\partial _x}{\partial _x}\psi  + e\varphi \psi } \right) dt \wedge dx = 0, \hfill  
\end{align}
which are indeed the well-known Schrodinger equations. 
The first two equations becomes identity, when the latter two are taken into account, 
meaning these equations are indeed dependent. 
\end{ex}

\begin{remark}
The expressions such as (\ref{ex_debroglie_f2}), (\ref{ex_debroglie_f3}) are reparameterisation invariant, 
and there are other possibilities to choose different parameterisations such that 
their  pulled back expressions on the parameter space would not look like the conventional expressions. 
To consider their meaning and applications would be an interesting theme for future research. 
\end{remark}

%%%%%%%%%%%%%%%%%%%%%%%%%%%%%%%%%%%%%%%%%%%%%%%%%%%%%%%%%%%%%%
\chapter{Discussion}
%\phantomsection
%\addcontentsline{toc}{chapter}{Discussion}

In this thesis, we have introduced the foundations needed for the calculus of variation in Finsler and Kawaguchi geometry. 
For Kawaguchi geometry, we especially constructed the second order $1$-dimensional parameter case, and first order $k$-dimensional case. 
For the second order $k$-dimensional case, only local version was presented. 
We have used a less restricted definition for both Finsler manifold and Kawaguchi manifold compared to the standard definition, 
which is considered more applicable to the problems of physics. 
We constructed a global form for the Kawaguchi geometry, 
which has a similar property as the Finsler-Hilbert form in the Finsler geometry case, 
in the sense that they define a reparameterisation invariant $k$-dimensional area on the subset of the base manifold $M$. 
Lagrange formulation was introduced on these structures in a natural way, 
and we obtained the reparameterisation invariant Euler-Lagrange expression. 
We had compared the results with examples such as Newtonian mechanics and De Broglie field, 
and confirmed that with a special choice of parameterisation, the results will reduce to the conventional expression of these theories. 
Throughout the discussion, we only used basic methods in differential geometry, and took the most straightforward path to considering Lagrangian formulation. 

There are many issues in the thesis that remains for further discussions and research. 
The main reason of the difficulty in the case of higher order $k$-dimensional case comes from the fact that our main pillar; the homogeneity conditions in the simplest expression is not coordinate independent. 
However, we believe this difficulty can be solved soon, and global Kawaguchi form could be constructed for this case too. 
Nevertheless, for many concrete problems for physics, our local formalism should also be applicable. 
In this thesis, we only considered the case where the subset $\Sigma$ of $M$ is diffeomorphic to the closed $k$-rectangle in $\mathbb{R}^k$. 
For a more general case, the action of $\Sigma$ associated to the Kawaguchi (Finsler) form is well-defined provided that there exists an inclusion map $\iota: P \to M$, 
where $\Sigma=\iota(P)$, such that $P$ is an oriented compact manifold. 
In the case where $P$ has no boundary, we should have to consider an extension of variational principle, such as Cartan's principle, which is also an interesting problem. 
Since Finsler and Kawaguchi geometry is less restrictive than Riemannian geometry, 
we expect it should embrace wider area of physics where it cannot be expressed by Riemannian geometry. 
For instance, system that is irreversible with time, or shows hysteresis, may be a good non-trivial example to be modelled by our approach. 
However, these problems are for the moment left for the future, 
and would be presented another time.

%%%%%%%%%%%%%%%%%%%%%%%%%%%%%%%%%%%%%%%%%%%%%%%%%%%%%%%%%%%%%%
\chapter*{Appendix}
\phantomsection
\addcontentsline{toc}{chapter}{Appendix}

Bellow we will show some detailed calculations that were omitted in the text. \\
\\
The variation of the action of second order mechanics is calculated as follows.

\begin{align*}
&  {\delta _\xi }{S^ {\mathcal{K}}}(C) = \int_{{\sigma ^2}(I)} {{L_X}{\mathcal{K}}}  = \int_I {({\sigma ^2})^*{L_X}{\mathcal{K}}}  \hfill \\
&  \quad  = \int_I {{{({\sigma ^2})}^*}\left[ {{\xi ^\mu } \circ \tau _M^{2,0}\left\{ {\frac{{{\partial ^2}K}}{{\partial {x^\mu }\partial {y^\rho }}}d{x^\rho } + 2\frac{{{\partial ^2}K}}{{\partial {x^\mu }\partial {z^\rho }}}d{y^\rho } - d\left( {\frac{{\partial K}}{{\partial {y^\mu }}}} \right)} \right\}} \right.\quad \quad }  \hfill \\
&  \quad \quad  + D{\xi ^\mu } \circ \tau _M^{2,1}\left( {\frac{{{\partial ^2}K}}{{\partial {y^\mu }\partial {y^\rho }}}d{x^\rho } + 2\frac{{{\partial ^2}K}}{{\partial {y^\mu }\partial {z^\rho }}}d{y^\rho } - 2d\left( {\frac{{\partial K}}{{\partial {z^\mu }}}} \right)} \right) \hfill \\
&  \quad \quad  + {D_2}(D{\xi ^\mu })\left( {\frac{{{\partial ^2}K}}{{\partial {z^\mu }\partial {y^\rho }}}d{x^\rho } + 2\frac{{{\partial ^2}K}}{{\partial {z^\mu }\partial {z^\rho }}}d{y^\rho }} \right) \hfill \\
&  \left. {\quad \quad  + d\left( {{\xi ^\mu } \circ \tau _M^{2,0} \cdot \frac{{\partial K}}{{\partial {y^\mu }}} + 2D{\xi ^\mu } \circ \tau _M^{2,1} \cdot \frac{{\partial K}}{{\partial {z^\mu }}}} \right)} \right] \hfill \\ 
&  \quad  = \int_I {({\xi ^\mu } \circ \tau _M^{2,0} \circ {\sigma ^2})\left\{ {\frac{{{\partial ^2}K}}{{\partial {x^\mu }\partial {y^\rho }}}d{x^\rho } + 2\frac{{{\partial ^2}K}}{{\partial {x^\mu }\partial {z^\rho }}}d{y^\rho } - d\left( {\frac{{\partial K}}{{\partial {y^\mu }}}} \right)} \right\} \circ {\sigma ^2}\quad \quad }  \hfill \\
&  \quad \quad  + \frac{{d({\xi ^\mu } \circ \sigma )}}{{dt}}\left( {\left( {\frac{{{\partial ^2}K}}{{\partial {y^\mu }\partial {y^\rho }}}{y^\rho } + 2\frac{{{\partial ^2}K}}{{\partial {y^\mu }\partial {z^\rho }}}{z^\rho }} \right) \circ {\sigma ^2} - 2\frac{d}{{dt}}\left( {\frac{{\partial K}}{{\partial {z^\mu }}} \circ {\sigma ^2}} \right)} \right)dt \hfill \\
&  \quad \quad  + \frac{{{d^2}({\xi ^\mu } \circ \sigma )}}{{d{t^2}}}\left( {\left( {\frac{{{\partial ^2}K}}{{\partial {z^\mu }\partial {y^\rho }}}{y^\rho } + 2\frac{{{\partial ^2}K}}{{\partial {z^\mu }\partial {z^\rho }}}{z^\rho }} \right) \circ {\sigma ^2}} \right)dt \hfill \\
&  \left. {\quad \quad  + \frac{d}{{dt}}\left( {\left( {{\xi ^\mu } \circ \tau _M^{2,0} \cdot \frac{{\partial K}}{{\partial {y^\mu }}} + 2D{\xi ^\mu } \circ \tau _M^{2,1} \cdot \frac{{\partial K}}{{\partial {z^\mu }}}} \right) \circ {\sigma ^2}} \right)dt} \right] \hfill \\
&  \quad  = \int_I {({\xi ^\mu } \circ \tau _M^{2,0} \circ {\sigma ^2})\left\{ {\frac{{{\partial ^2}K}}{{\partial {x^\mu }\partial {y^\rho }}}d{x^\rho } + 2\frac{{{\partial ^2}K}}{{\partial {x^\mu }\partial {z^\rho }}}d{y^\rho } - d\left( {\frac{{\partial K}}{{\partial {y^\mu }}}} \right)} \right\} \circ {\sigma ^2}}  \hfill \\
&  \quad \quad  + \frac{{d({\xi ^\mu } \circ \sigma )}}{{dt}}\left( { - 2\frac{d}{{dt}}\left( {\frac{{\partial K}}{{\partial {z^\mu }}} \circ {\sigma ^2}} \right)} \right)dt \hfill \\
&  \quad \quad  + \frac{{{d^2}({\xi ^\mu } \circ \sigma )}}{{d{t^2}}}\left( { - \frac{{\partial K}}{{\partial {z^\mu }}} \circ {\sigma ^2}} \right)dt \hfill \\
&  \left. {\quad \quad  + \frac{d}{{dt}}\left( {\left( {{\xi ^\mu } \circ \tau _M^{2,0} \cdot \frac{{\partial K}}{{\partial {y^\mu }}} + 2D{\xi ^\mu } \circ \tau _M^{2,1} \cdot \frac{{\partial K}}{{\partial {z^\mu }}}} \right) \circ {\sigma ^2}} \right)dt} \right] \hfill \\
&  \quad  = \int_I {({\xi ^\mu } \circ \tau _M^{2,0} \circ {\sigma ^2})\left\{ {\frac{{{\partial ^2}K}}{{\partial {x^\mu }\partial {y^\rho }}}d{x^\rho } + 2\frac{{{\partial ^2}K}}{{\partial {x^\mu }\partial {z^\rho }}}d{y^\rho } - d\left( {\frac{{\partial K}}{{\partial {y^\mu }}}} \right)} \right\} \circ {\sigma ^2}}  \hfill \\
&  \quad \quad  + ({\xi ^\mu } \circ \sigma )\left( {\frac{{{d^2}}}{{d{t^2}}}\left( {\frac{{\partial K}}{{\partial {z^\mu }}} \circ {\sigma ^2}} \right)} \right)dt \hfill \\
&  \quad \quad  - \frac{{{d^2}}}{{d{t^2}}}\left( {\left( {{\xi ^\mu } \circ \tau _M^{2,0} \cdot \frac{{\partial K}}{{\partial {z^\mu }}}} \right) \circ {\sigma ^2}} \right)dt \hfill \\
&  \left. {\quad \quad  + \frac{d}{{dt}}\left( {\left( {{\xi ^\mu } \circ \tau _M^{2,0} \cdot \frac{{\partial K}}{{\partial {y^\mu }}} + 2D{\xi ^\mu } \circ \tau _M^{2,1} \cdot \frac{{\partial K}}{{\partial {z^\mu }}}} \right) \circ {\sigma ^2}} \right)dt} \right] \hfill \\ 
&  \quad  = \int_I {\left[ {({\xi ^\mu } \circ \tau _M^{2,0} \circ {\sigma ^2})\left[ {\left\{ {\frac{{{\partial ^2}K}}{{\partial {x^\mu }\partial {y^\rho }}}d{x^\rho } + 2\frac{{{\partial ^2}K}}{{\partial {x^\mu }\partial {z^\rho }}}d{y^\rho } - d\left( {\frac{{\partial K}}{{\partial {y^\mu }}}} \right)} \right\} \circ {\sigma ^2} + \frac{{{d^2}}}{{d{t^2}}}\left( {\frac{{\partial K}}{{\partial {z^\mu }}} \circ {\sigma ^2}} \right)dt} \right]} \right.}  \hfill \\
&  \left. {\quad \quad  + \frac{d}{{dt}}\left( {\left( {{\xi ^\mu } \circ \tau _M^{2,0} \cdot \frac{{\partial K}}{{\partial {y^\mu }}} + 2D{\xi ^\mu } \circ \tau _M^{2,1} \cdot \frac{{\partial K}}{{\partial {z^\mu }}}} \right) \circ {\sigma ^2} - \frac{d}{{dt}}\left( {\left( {{\xi ^\mu } \circ \tau _M^{2,0} \cdot \frac{{\partial K}}{{\partial {z^\mu }}}} \right) \circ {\sigma ^2}} \right)} \right)dt} \right] \hfill \\
&  \quad  = \int_I {({\xi ^\mu } \circ \tau _M^{2,0} \circ {\sigma ^2})\left[ {\left\{ {\frac{{{\partial ^2}K}}{{\partial {x^\mu }\partial {y^\rho }}}d{x^\rho } + 2\frac{{{\partial ^2}K}}{{\partial {x^\mu }\partial {z^\rho }}}d{y^\rho } - d\left( {\frac{{\partial K}}{{\partial {y^\mu }}}} \right)} \right\} \circ {\sigma ^2} + d\left( {\left( {{D_3}\frac{{\partial K}}{{\partial {z^\mu }}}} \right) \circ {\sigma ^3}} \right)} \right]}  \hfill \\
&  \left. {\quad \quad  + \frac{d}{{dt}}\left( {\left( {{\xi ^\mu } \circ \tau _M^{2,0} \cdot \frac{{\partial K}}{{\partial {y^\mu }}} + 2D{\xi ^\mu } \circ \tau _M^{2,1} \cdot \frac{{\partial K}}{{\partial {z^\mu }}}} \right) \circ \tau _M^{3,2} \circ {\sigma ^3} - \left( {{D_3}\left( {{\xi ^\mu } \circ \tau _M^{2,0} \cdot \frac{{\partial K}}{{\partial {z^\mu }}}} \right)} \right) \circ {\sigma ^3}} \right)dt} \right] \hfill \\
&  \quad  = \int_{{C^3}} {{\xi ^\mu } \circ \tau _M^{3,0}\left[ {\left\{ {\frac{{{\partial ^2}K}}{{\partial {x^\mu }\partial {y^\rho }}}d{x^\rho } - d\left( {\frac{{\partial K}}{{\partial {y^\mu }}}} \right) + 2\frac{{{\partial ^2}K}}{{\partial {x^\mu }\partial {z^\rho }}}d{y^\rho }} \right\} \circ \tau _M^{3,2} + d\left( {{D_3}\frac{{\partial K}}{{\partial {z^\mu }}}} \right)} \right]}  \hfill \\
&  \quad \quad  + \int_{{C^3}} {d\left( {{\xi ^\mu } \circ \tau _M^{3,0} \cdot \frac{{\partial K}}{{\partial {y^\mu }}} \circ \tau _M^{3,2} + 2D{\xi ^\mu } \circ \tau _M^{3,1} \cdot \frac{{\partial K}}{{\partial {z^\mu }}} \circ \tau _M^{3,2} - {D_3}\left( {{\xi ^\mu } \circ \tau _M^{2,0} \cdot \frac{{\partial K}}{{\partial {z^\mu }}}} \right)} \right)}  \hfill \\ 
\end{align*}

which gives (\ref{variation_2nd_mech}). 

\clearpage
\addcontentsline{toc}{chapter}{\bibname}
\phantomsection
%\bibliography{./reference_thesis2012}

\begin{thebibliography}{18}
\expandafter\ifx\csname natexlab\endcsname\relax\def\natexlab#1{#1}\fi
\providecommand{\url}[1]{\texttt{#1}}
\providecommand{\href}[2]{#2}
\providecommand{\path}[1]{#1}
\providecommand{\DOIprefix}{doi:}
\providecommand{\ArXivprefix}{arXiv:}
\providecommand{\URLprefix}{URL: }
\providecommand{\Pubmedprefix}{pmid:}
\providecommand{\doi}[1]{\href{http://dx.doi.org/#1}{\path{#1}}}
\providecommand{\Pubmed}[1]{\href{pmid:#1}{\path{#1}}}
\providecommand{\bibinfo}[2]{#2}
\ifx\xfnm\relax \def\xfnm[#1]{\unskip,\space#1}\fi
%Type = Book
\bibitem[{Chern et~al.(2000)Chern, Chen and Lam}]{ChernChenLam}
\bibinfo{author}{S.S. Chern}, \bibinfo{author}{W.H. Chen},
  \bibinfo{author}{K.S. Lam}, \bibinfo{title}{Lectures on Differential
  Geometry}, \bibinfo{publisher}{World Scientific}, \bibinfo{year}{2000}.
%Type = Book
\bibitem[{Chern and Shen(2005)}]{Chern2}
\bibinfo{author}{S.S. Chern}, \bibinfo{author}{Z.~Shen},
  \bibinfo{title}{RIEMANN-{F}INSLER GEOMETRY}, \bibinfo{publisher}{World
  Scientific}, \bibinfo{year}{2005}.
%Type = Article
\bibitem[{Grigore(1992)}]{Grigore1}
\bibinfo{author}{D.R. Grigore}, \bibinfo{title}{A geometric lagrangian
  formalism for extended objects}, \bibinfo{journal}{Proc. Conf. Opava, Aug
  24-28, (1992)}  (\bibinfo{year}{1992}) \bibinfo{pages}{439--448}.
%Type = Article
\bibitem[{Kawaguchi(1976)}]{AK5}
\bibinfo{author}{A.~Kawaguchi}, \bibinfo{title}{On the concepts and theories of
  higher order spaces}, \bibinfo{journal}{Periodica Mathematica Hungarica}
  \bibinfo{volume}{7} (\bibinfo{year}{1976}) \bibinfo{pages}{291--299}.
%Type = Book
\bibitem[{Krupka(shed)}]{Krupka1}
\bibinfo{author}{D.~Krupka}, \bibinfo{title}{Advanced Analysis on Manifolds},
  \bibinfo{publisher}{Elsevier}, \bibinfo{year}{to be published}.
%Type = Article
\bibitem[{Krupka and Krupka(2010)}]{Krupka-MKrupka}
\bibinfo{author}{D.~Krupka}, \bibinfo{author}{M.~Krupka},
  \bibinfo{title}{Higher order {G}rassmann fibrations and the calculus of
  variations}, \bibinfo{journal}{Balkan J. Geom. Appl.} \bibinfo{volume}{15}
  (\bibinfo{year}{2010}) \bibinfo{pages}{68--79}.
%Type = Book
\bibitem[{Matsumoto(1975)}]{matsumoto2}
\bibinfo{author}{M.~Matsumoto}, \bibinfo{title}{Œv—Ê"÷•ªŠô‰½Šw (Metrical
  differential geometry)}, \bibinfo{publisher}{Shoukabou},
  \bibinfo{year}{1975}.
%Type = Article
\bibitem[{Ootsuka(2012)}]{Oo1}
\bibinfo{author}{T.~Ootsuka}, \bibinfo{title}{New covariant {L}agrange
  formulation for field theories}, \bibinfo{journal}{arXiv:1206.6040v1}
  (\bibinfo{year}{2012}).
%Type = Article
\bibitem[{Ootsuka and Tanaka(2010)}]{OT2}
\bibinfo{author}{T.~Ootsuka}, \bibinfo{author}{E.~Tanaka},
  \bibinfo{title}{{F}insler geometrical path integral}, \bibinfo{journal}{Phys.
  Lett. A} \bibinfo{volume}{374} (\bibinfo{year}{2010})
  \bibinfo{pages}{1917--1921}.
%Type = Book
\bibitem[{Saunders(1989)}]{Saunders2}
\bibinfo{author}{D.~Saunders}, \bibinfo{title}{The Geometry of Jet Bundles},
  \bibinfo{publisher}{Cambridge University Press}, \bibinfo{year}{1989}.
%Type = Article
\bibitem[{Saunders(2010)}]{Saunders1}
\bibinfo{author}{D.~Saunders}, \bibinfo{title}{Some geometric aspects of the
  calculus of variations in several independent variables},
  \bibinfo{journal}{Communications in Mathematics} \bibinfo{volume}{18}
  (\bibinfo{year}{2010}) \bibinfo{pages}{3--19}.
%Type = Book
\bibitem[{Spivak(1999)}]{Spivak1}
\bibinfo{author}{M.~Spivak}, \bibinfo{title}{A Comprehensive Introduction to
  DIFFERENTIAL GEOMETRY}, \bibinfo{publisher}{Publish or Perish, Inc.},
  \bibinfo{address}{HOUSTON, Texas}, \bibinfo{year}{1999}.
%Type = Article
\bibitem[{Tamassy(2007)}]{Tamassy3}
\bibinfo{author}{L.~Tamassy}, \bibinfo{title}{Finsler geometry in the tangent
  bundle}, \bibinfo{journal}{2007 {F}insler Geometry, Sapporo2005 - In Memory
  of Makoto Matsumoto, Advanced Studies in Pure Mathematics}
  \bibinfo{volume}{48} (\bibinfo{year}{2007}) \bibinfo{pages}{163--194}.
%Type = Inproceedings
\bibitem[{Tanaka(2013)}]{Ta2}
\bibinfo{author}{E.~Tanaka}, \bibinfo{title}{General relativity by {K}awaguchi
  geometry}, \bibinfo{journal}{Proceedings for THE TIME MACHINE FACTORY 2012}.
  \bibinfo{note}{To appear in EPJ Web of Conferences}.
%Type = Inproceedings
\bibitem[{Tanaka and Krupka(2013)}]{Ta3}
\bibinfo{author}{E.~Tanaka}, \bibinfo{author}{D.~Krupka}, \bibinfo{title}{On
  the structure of {F}insler and areal spaces}, \bibinfo{journal}{Proceedings
  for AGMP-8 Brno}. \bibinfo{note}{To appear in Miskolc Mathematical Notes}.
%Type = Article
\bibitem[{Tanaka et~al.(2012)Tanaka, Ootsuka and Yahagi}]{OTY3}
\bibinfo{author}{E.~Tanaka}, \bibinfo{author}{T.~Ootsuka},
  \bibinfo{author}{R.~Yahagi}, \bibinfo{title}{{L}agrange formulation of
  {E}instein's general relativity using {K}awaguchi geometry},
  \bibinfo{journal}{Soryuushiron Kenkyu} \bibinfo{volume}{13}
  (\bibinfo{year}{2012}).
%Type = Phdthesis
\bibitem[{Urban(2011)}]{Urban1}
\bibinfo{author}{Z.~Urban}, \bibinfo{title}{VARIATIONAL SEQUENCES IN MECHANICS
  ON {G}RASSMANN FIBRATIONS}, Ph.D. thesis, UNIVERSITY OF OSTRAVA,
  \bibinfo{year}{2011}.
%Type = Article
\bibitem[{Urban and Krupka(2013)}]{Krupka-Urban}
\bibinfo{author}{Z.~Urban}, \bibinfo{author}{D.~Krupka}, \bibinfo{title}{The
  {Z}ermelo conditions and higher order homogeneous functions},
  \bibinfo{journal}{Publ. Math. Debrecen} \bibinfo{volume}{82}
  (\bibinfo{year}{2013}) \bibinfo{pages}{59--76}.
  \DOIprefix\doi{10.5486/PMD.2013.5265}.

\end{thebibliography}
%\bibliographystyle{../h-physrev3}
%\bibliographystyle{../h-elsevier2}
%\bibliographystyle{../ieeetr}
%\bibliographystyle{../elsarticle-num}
%\bibliographystyle{amsplain}
%\bibliographystyle{../spbasicemph}

\end{document}